\newcommand{\argmax}{\operatorname*{arg \ max}}
\newcommand{\argmin}{\operatorname*{arg \ min}}
\newcommand{\EE}{{\mathbb E}}
\newcommand{\PP}{{\mathbb P}}
\newcommand{\LL}{{\mathbb L}}
\newcommand{\GG}{{\mathbb G}}
\newcommand{\RR}{\mathbb{R}}
\newcommand{\FF}{{\mathbb F}}
\newcommand{\NN}{\mathbb{N}}
\newcommand{\indep}{\perp \!\!\! \perp}
\newcommand{\mc}[1]{\mathcal {#1}}
\theoremstyle{plain}
\newtheorem{theorem}{Theorem}
\newtheorem{proposition}{Proposition}
\newtheorem{lemma}{Lemma}
\def\spacingset#1{\renewcommand{\baselinestretch}%
{#1}\small\normalsize} \spacingset{1}
\newenvironment{assumptionp}[1]{
  
  \assumptionalt
}{\endassumptionalt}
\newcommand{\myitem}[1]{%
\item[#1]\protected@edef\@currentlabel{#1}%
}
\title{Doubly robust estimation and inference for a log-concave counterfactual density}
\author{Daeyoung Ham \& Ted Westling \& Charles R. Doss\\
School of Statistics, University of Minnesota\\
Department of Mathematics and Statistics, University of Massachusetts Amherst\\
School of Statistics, University of Minnesota}
\date{}
\begin{document}
\maketitle

\begin{abstract}
We consider the problem of causal inference based on observational data (or the related missing data problem) with a binary or discrete treatment variable. 
In that context, we study inference for the counterfactual density functions and contrasts thereof, which can provide more nuanced information than counterfactual means and the average treatment effect.  
We impose the shape-constraint of log-concavity, a type of unimodality constraint, on the counterfactual densities, and then develop doubly robust estimators of the log-concave counterfactual density based on augmented inverse-probability weighted pseudo-outcomes. We provide conditions under which the estimator is consistent in various global metrics. 
We also develop asymptotically valid pointwise confidence intervals for the counterfactual density functions and differences and ratios thereof, which serve as a building block for more comprehensive analyses of distributional differences.
We also present a method for using our estimator to implement density confidence bands.
\end{abstract}

\doublespacing

\section{Introduction}

A common approach to comparing two distributions is to compare their means. 
In the context of causal inference, this leads to comparing the mean outcome under assignment of all units in a population to specific treatment levels. 
For instance, the average treatment effect (ATE) represents the difference between the mean outcome had all units been assigned to treatment and the mean outcome had all units been assigned to control. 

Since the mean is a coarse summary of a distribution, comparing distributions by comparing their means can fail to capture important information. 
For example, a small difference between means can be due to a small shift of the entire distribution or a larger shift on a small subset of the domain. 
Furthermore, two distributions can have the same mean but be qualitatively distinct. 
Hence, there is value in going beyond comparing means, and instead comparing entire distributions.

In this paper, we consider estimation and inference for the {\it counterfactual density function}, which is defined as the density of the potential outcome in the Neyman-Rubin causal model. 
Comparing the counterfactual densities under treatment and control can provide more nuanced conclusions than can be achieved by focusing on the counterfactual means. 

Nonparametric estimation and inference for density functions is more challenging than for means. 
Nonparametric density estimation usually involves careful selection of one or more tuning parameters, which can be difficult.  
In addition, due to the bias of nonparametric density estimators, forming confidence intervals (CI's) with good coverage can be challenging. 
Direct bias correction requires further tuning parameter choice(s) to estimate higher order derivatives of the density \citep{gasser1984estimating, Seijo.Sen.2011}, and properties of the resulting CI's depend on the local smoothness of the density \citep{Calonico-Cattaneo-Farrell_2018-JASA}.

In many problems, methods of estimation and inference become available that can avoid these challenges if the true function is known to satisfy a shape constraint such as monotonicity or convexity.  
Here, we will focus on the {\it log-concave} shape constraint.
Log-concave densities are unimodal and light-tailed, and so when these two characteristics are satisfied it is reasonable to consider using log-concavity. Furthermore, since many common parametric families, such as the normal family and exponential family, are log-concave, log-concavity can serve as a nonparametric generalization of these parametric families.  
In the non-causal setting, log-concavity allows for fully automatic estimation without the need to rely on or select tuning parameters (see e.g., \citealp{MR2459192}, \citealp{RufibachThesis}, \citealp{DR2009LC}), unlike many other nonparametric estimators.
Shape constraints  also  yield more efficient estimation when the shape assumption holds \citep{Birge:1989jn}.
In addition, using the log-concavity assumption, methods have been developed to form asymptotically valid CI's for densities without explicit bias correction. 
In the non-causal setting, \cite{Doss:2016ux,DosWel2019} and \cite{deng2022inference} developed methods for CI's based on log-concave estimators that do not rely on the selection of tuning parameters or on estimation of unknown limit distribution parameters, so that both the estimation and inference procedure are fully automatic. 
We refer the reader to \cite{recentprosamworth} for an in-depth review of nonparametric inference for log-concave densities and for additional references.

Motivated by these successes, in this paper we consider inference for the counterfactual density function under a log-concavity constraint. 
Estimation and inference for the counterfactual density function is more challenging than that for an ordinary density because in the causal setting, we need to adjust for potential confounding variables. 
As we will discuss in greater detail in Section \ref{section:Est}, our approach is built on a locally efficient, doubly robust estimator of the cumulative distribution function (CDF), which requires estimation of two nuisance functions: the conditional distribution functions and propensity score functions.
To use the log-concave projection operator (defined below) on a CDF, that CDF must be bona-fide, meaning it must be non-decreasing, which is not necessarily true for our initial covariate-adjusted CDF estimator.  
Thus, we isotonize the initial CDF estimator, and then proceed to define the log-concave density estimator.

To show  our asymptotic results for the log-concave density estimator, we need to show that our covariate-adjusted and isotonized CDF estimator is consistent in Wasserstein distance, which requires showing that nuisance function estimation error and the effect of isotonization are negligible.
The latter is done in two key lemmas, Lemmas \ref{lem1} and \ref{lem2} in Section \ref{section:consistency}, which may be of independent interest. 
To our knowledge, this is the first uniform convergence result in the literature on nonparametric counterfactual density estimation.
We further prove that our density estimator has the same pointwise convergence rate as in the non-causal setting \citep{BRW09}, by again showing the asymptotic negligibility of the isotonization and nuisance function estimation for the local behavior. 
However, as opposed to the non-causal limit distributions, we find that  the limit distribution in the causal setting has a scaling
factor that depends on the nuisance functions. 
We provide a doubly robust procedure for estimating the scaling factor. 
In addition, we provide asymptotically valid pointwise CI's based on the ideas in \cite{deng2022inference}. Leveraging the result that the limiting distributions of the counterfactual density estimates for two treatment levels are independent, we further propose asymptotically valid CI's for contrasts such as differences or ratios of counterfactual densities.

There have been several papers discussing uses of log-concavity, concavity, and convexity in a semiparametric setting. \cite{SY.ICA.LC} studied the use of log-concavity in independent component analysis, \cite{chen2015semiparametric} studied the use of log-concavity in time series models, and \cite{kuchibhotla2023semiparametric} studied the use of convexity in the single index regression model.  However, to the best of our knowledge, ours are the first results that either give the limit distribution or form confidence  intervals for a log-concave density in a semiparametric setting, as well as the first doubly-robust asymptotically valid CI's for the counterfactual density function in the causal setting.

Recently, \citet{kim2018causal} and \citet{KBWcounterfactualdensity} considered the problem of counterfactual density estimation.  
\citet{kim2018causal} developed an estimator based on kernel smoothing, and \citet{KBWcounterfactualdensity} developed estimators that are projections of the empirical distribution onto a parametric space based on $L_p$ distances and divergences.  Both approachesrely on careful selection of a tuning parameter: a bandwidth in \citet{kim2018causal}, \citet{KBWcounterfactualdensity}. In both papers, the tuning parameter was selected by minimizing an estimated pseudo-risk, which can be highly sensitive to the estimated risk function. More importantly, the CI's and confidence bands are valid for {\it smoothed} or  {\it projected} versions of the densities  and density effects. As a result, the proposed $(1-\alpha)$-level CI's do not necessarily have finite-sample or asymptotic $1-\alpha$ coverage for the actual target estimand.

As in the non-causal setting, our proposed log-concave estimator offers several advantages over smoothing-based methods. First, the CI's are asymptotically valid for the counterfactual densities and density contrasts rather than smoothed versions thereof. Second, our methods do not require careful tuning parameter selection, and are \textit{tuning-parameter-light} compared to existing methods. Our estimator is automatic up to the black-box nuisance estimators, which are also required for the methods of  \cite{kim2018causal} and \cite{KBWcounterfactualdensity}, and a grid for the monotone correction of the covariate-adjusted CDF estimator. This grid is needed to approximate a continuous isotonic projection \citep{groeneboom2010generalized}, and can be made arbitrarily fine without changing the asymptotic properties of the estimator (see Theorem \ref{prop:CI}), unlike bandwidths and basis functions, which require careful balancing of bias and variance. 
Our CI's require estimation of an additional nuisance parameter, for which we rely on a bandwidth. While the precise behavior of this nuisance estimator depends on the choice of bandwidth, asymptotic validity of the CI only relies on consistency of the estimator. We demonstrate in Lemma \ref{lemma:tuning.consistency} that a broad range of bandwidths yield a consistent estimator, and we show in simulations that our CI's are extremely robust to the choice of the bandwidth. The CI's for density contrasts use the same nuisance estimator as the CI's for the individual densities, so that no additional tuning is required.

The paper proceeds as follows.
In Section~\ref{setupnotation}, we define notation we will use and introduce our causal parameter of interest and its identification in the distribution of the observed data.
In Section~\ref{section:onestepCDF}, we introduce a doubly robust one-step estimator of the counterfactual CDF.  Since this estimator may not be monotonically increasing, in Section~\ref{section:correctedestimator}, we define a monotone correction procedure to ensure that our counterfactual CDF estimator is a proper CDF. In Section~\ref{subsec:LCCDE}, we apply the log-concave projection operator introduced by \citet{dss2011} to the monotonically corrected CDF estimator to obtain a log-concave counterfactual density estimator. 
In Section~\ref{section:consistency}, we provide conditions under which our estimator is uniformly consistent. In Section \ref{subsec:limitdist}, we provide conditions under which our estimator converges pointwise in distribution.
In Section~\ref{subsec:CIconst}, we use these results to develop pointwise and difference/log-ratio CI's and provide conditions under which the CI's are asymptotically valid.
In Section~\ref{subsec:band.const}, we propose a method for constructing a confidence band for a counterfactual density by adapting the approach of \citet{walther2022confidence} to our setting.
Furthermore, to the best of our knowledge, these results provide the first asymptotically valid counterfactual density difference/log-ratio CI's detecting distributional differences at a point.
We also introduce an estimator, pointwise/difference/log-ratio CI's, and confidence bands based on sample splitting, and provide analogous theoretical results for these methods.
Its introduction is postponed to Section~\ref{section:sample.splitting} of the supplementary material.
In Section \ref{Section:Simul} we present simulation studies assessing the finite-sample performance of our methods, and in Section~\ref{section:real.data.analysis} of the supplementary material we illustrate our method using data on the impact of a job training program on future earnings. 
Proofs of all theorems and lemmas, as well as additional simulation results, can be found in supplementary material.

\section{Causal setup}\label{setupnotation}
\subsection{Notation}\label{notation}

We assume that we observe $n$ independent and identically distributed samples $(\mathbf{Z}_1,\dots, \mathbf{Z}_n)$ of the generic tuple $\mathbf{Z}=(\mathbf{\mathbf{X}},A,Y)$ with support
$\mathcal{Z}=\mathcal{X}\times\mathcal{A}\times\mathcal{Y}$. 
Here, $\mathbf{X}\in \mathcal{X} \subseteq \mathbb{R}^d$ denotes a $d$-dimensional vector of potential confounders, $A\in\mathcal{A}=\{0,1\}$ a binary treatment, and $Y\in \mathcal{Y} \subseteq \mathbb{R}$ the outcome of interest. 
We let $P_*$ denote the true distribution of $\mathbf{Z}$. We let $\PP(C)$ and $\EE(\mathbf{X})$ denote the probability of an event $C$ and expectation of a random variable $\mathbf{X}$, which are with respect to $P_*$ unless otherwise noted. 
We then define $\eta_*(y|\mathbf{x},a):=\frac{\partial}{\partial y}\PP(Y\le y|\mathbf{\mathbf{X}}=\mathbf{x},A=a)$ as the true conditional density of $y$ given $X$ and $A$, and $\pi_*(a|\mathbf{x})=\PP(A=a|\mathbf{\mathbf{X}}=\mathbf{x})$ as the true propensity score function. 
We also use the notation $\eta_{a,*}(y|\mathbf{x}):=\eta_*(y|\mathbf{x},a), \pi_{a,*}(\mathbf{x}):=\pi_*(a|\mathbf{x})$ for convenience.

For a real-valued function $f$ and any measure $Q$ on $\mc{Z}$, we let $Q\{f(\mathbf{Z})\} \equiv Q f(\mathbf{Z}) :=\int_{\mathcal{Z}}f(\mathbf{z})dQ(\mathbf{z})$.
We use $\PP_n$ to denote the empirical measure of the data, so that $\PP_n\{f(\mathbf{Z})\}:=n^{-1}\sum_{i=1}^n f(\mathbf{Z}_i)$.
We let $\|f\|_p:=\{\int f(\mathbf{Z})^p dP_*(\mathbf{z})\}^{1/p}$ denote the $L_p(P_*)$ norm and $\|g\|_{\infty}=\sup_{x\in U}|g(x)|$ denote the supremum norm for a generic function $g$ and its domain $U$. 
We set $\|f\| := \| f\|_2$. We define the class multiplication of two classes of functions $\mc F$ and $\mc G$ as $\mc F \cdot \mc G:=\{fg:f\in \mc F, g\in \mc G\}$.
We let $\rightarrow_p$ and $\rightarrow_d$ denote convergence in probability and distribution, respectively, with respect to $P_*$, and we use ``a.s." for almost sure convergence and ``a.e." for almost everywhere with respect to $P_*$. 
We will say that $a \lesssim b$ if there exists a $c\in(0,\infty)$ such that $a\leq cb$, and that $a_n \lesssim b_n$ if there exists a $c\in(0,\infty)$ and $N$ such that $a_n\leq cb_n$ for all $n\geq N$.
For random vectors $U, V, W$, we let $U \indep V | W$ mean that $U$ and $V$ are conditionally independent given $W$. 
We let $I(A)$ denote a generic indicator function with an arbitrary statement $A$ which returns 1 if the statement $A$ is true, and returns 0 otherwise, and we let $I_{A}(x) :=I(x\in A)$.

\subsection{Causal parameter and its identification}\label{identification}

We let $Y^a$ be the Neyman-Rubin counterfactual outcome under assignment to treatment level $a$ \citep{Rubin1974EstimatingCE}. Throughout, we assume the Stable Unit Treatment Value Assumption (SUTVA); i.e., there is a unique version of treatment and control, and each unit's potential outcomes do not depend on any other units' treatment assignments.  
Our causal estimands of interest are the density functions $p_a$ of $Y^a$ for $a\in\{0,1\}$, which are known as the counterfactual density functions. 
Since we do not observe both the potential outcomes for each unit in the population, in order to estimate $p_a$ using the observed data, we first need to identify $p_a$ with a functional of the distribution $P_*$ of the observed data. To do so, we make the following assumptions.

\begin{assumptionp}{I} \label{assm:I} \phantom{blah}
\begin{enumerate} 
\myitem{(I1)} \label{assm:consistency} Consistency: $A=a$ implies that $Y=Y^a$ for $a \in \{0,1\}$.
\myitem{(I2)} \label{assm:noumconf}No unmeasured confounding: $(A\indep Y^a ) |\mathbf{\mathbf{X}}$.
\myitem{(I3)} \label{assm:positivity} Positivity: $\mathbb{P}(A=a|\mathbf{\mathbf{X}})>\epsilon_0$ almost surely for some $\epsilon_0>0$.
\myitem{(I4)} \label{assm:logconcv}Log-concavity: $p_a$ is a log-concave density for $a \in \{0,1\}$; i.e., there exist $\phi_a :\mathbb{R} \to \mathbb{R}$ for $a \in \{0,1\}$ such that $p_a = e^{\phi_a}$ and $\phi_a$ is concave. 
\end{enumerate}          
\end{assumptionp}

Assumptions \ref{assm:consistency}--\ref{assm:positivity} are causal assumptions which are commonly employed in the causal inference literature \citep{robins1986new}. 
Assumption \ref{assm:consistency} links the observed and potential outcomes. 
Assumption \ref{assm:noumconf} says that the potential outcome and the assigned treatment level are conditionally independent within strata of confounders; for this to hold, sufficiently many confounders must be collected. 
The positivity assumption ensures that each subject has a chance of having each treatment level $a$ regardless of confounder values. 
Under the Assumptions \ref{assm:consistency}--\ref{assm:positivity}, we have 
$p_a(y):=\int_{\mathcal{X}}\eta_*(y|\mathbf{x},a)dP_{\mathbf{\mathbf{X}}}(\bf x)$
for $a\in\{0,1\}$ (see Section 2 in \citealp{KBWcounterfactualdensity}).  
Assumption \ref{assm:logconcv} imposes the log-concave constraint on the counterfactual densities.   
We note that assumption \ref{assm:logconcv} is not needed for identification of the counterfactual density. 
However, any continuous log-concave density has a uniformly continuous CDF, since log-concavity guarantees that $p_a$ exists and is unimodal.  
Many commonly employed univariate densities are log-concave, including uniform, normal, logistic, Weibull with shape parameter $\ge 1$, Beta$(a,b)$ with $a,b\ge 1$, Gamma with shape parameter $\ge 1$, $\chi^2$ distribution with degrees of freedom $\ge 2$, and Laplace densities. 
Hence, log-concavity of $p_a$ is a nonparametric generalization of many commonly employed parametric models. 
As noted in the introduction, log-concavity is a reasonable assumption when the density of the outcome is expected to be unimodal and have sub-exponential tails.

We note that we could weaken Assumption~\ref{assm:logconcv} to allow the possibility of model misspecification by instead defining our causal parameter of interest as the Kullback-Leibler projection of the true counterfactual density onto the space of log-concave  densities \citep{dss2011}.  
We expect that our theoretical results would continue to hold for the log-concave projection in this case, but for clarity and simplicity we assume the log-concave assumption holds throughout the paper.

\section{Estimation}\label{section:Est}

\subsection{One-step CDF estimator}\label{section:onestepCDF}

In this section, we define our estimator of the log-concave counterfactual density function $p_a$. 
First, we introduce a doubly-robust one-step estimator of the counterfactual CDF using its efficient influence function. 
This one-step estimator is not guaranteed to be monotonic or contained in $[0,1]$, so we next define a correction procedure to enforce these constraints. 
Finally, we project this CDF estimator onto the space of log-concave distributions.

We define the counterfactual CDF under treatment level $a$ as $F_a(s) :=\int_{-\infty}^s p_a(y) dy =\mathbb{E}[I(Y^a\le s)]$. 
The efficient influence function $B_{a,\theta_{a,*}}$ of $F_a(s)$ relative to a nonparametric model is given by $D_{a,\theta_{a,*}}(s) - F_a(s)$ \citep{KBWcounterfactualdensity} for
\begin{align}
D_{a,\theta_{a,*}}(s)(\mathbf{Z}) ={I(A=a) \over \pi_{a,*}(\mathbf{\mathbf{X}})}\Big[I(Y\le s) -\phi_{a,*}(s|\mathbf{\mathbf{X}}) \Big]+\phi_{a,*}(s|\mathbf{\mathbf{\mathbf{X}}}),\label{term:Dthetastar}
\end{align}
where $\theta_{a,*}=(\phi_{a,*},\pi_{a,*})$ for $\phi_{a,*}(s|\mathbf{x}):=\int_{-\infty}^s \eta_{a,*}(y| \mathbf{x})dy$ the conditional distribution function of $Y^a$ given $\mathbf{X} = \mathbf{x}$ and $\pi_{a,*}(\mathbf{x}) := \PP(A = a| \mathbf{X} = \mathbf{x})$ the propensity score function. 

We use the efficient influence function to construct a \textit{one-step estimator} of $F_a(s)$ for each $s \in \mathbb{R}$ \citep{bickel1982adaptive, pfanzagl1982contributions}.
To do so, we let $\widehat{\phi}_{a}$ and $\widehat{\pi}_{a}$ be estimators of $\phi_{a,*}$ and $\pi_{a,*}$, respectively. 
We do not specify a particular form or method for estimating these nuisance parameters, but rather allow the user to choose their preferred method in a given problem setting.  
We provide high-level conditions on the complexity and rates of convergence of these nuisance estimators needed for our theoretical results in later sections. 
We then define
\begin{align}
   \widehat{F}_{a,n}(s)&=\frac{1}{n}\sum_{i=1}^n \left\{\frac{I(A_i=a)}{\widehat\pi_a(\mathbf{X}_i)}\Big[I(Y_i\le s)-\widehat\phi_a(s|\mathbf{X}_i)\Big]+\widehat\phi_a(s|\mathbf{X}_i) \right\},\label{form:onestepest}
\end{align}
as our one-step estimator of $F_a$ for each $a \in \{0,1\}$ and $s \in \mathbb{R}$.

\subsection{Monotone correction of the one-step estimator}\label{section:correctedestimator}

The one-step estimator of the counterfactual CDF $\widehat{F}_{a,n}(s)$ given in \eqref{form:onestepest} is not guaranteed to be a proper distribution function: it may not be contained in $[0,1]$, it may not be monotone, and it may not converge to 0 as $s \to -\infty$ and 1 as $s \to \infty$. 
This poses a problem because we ultimately aim to project the estimator onto the class of log-concave distributions, but to the best of our knowledge, this projection operation is currently only defined for proper distribution functions. 
Furthermore, even if it could be extended to an appropriate domain, properties of the log-concave projection are currently only known for proper distribution functions. 
In this section, we remedy this problem by defining a corrected version of $\widehat{F}_{a,n}$ that is a proper distribution function so that the log-concave projection can be applied.

Our correction procedure has three steps: projection onto $[0,1]$, projection onto monotone functions over a finite grid, and piecewise constant interpolation. 
For the first step, we define
\begin{equation}
    \widehat{F}^{c_0}_{a,n}(s)=  \widehat{F}_{a,n}(s)I\left(0<\widehat{F}_{a,n}(s)<1\right)+I\left(\widehat{F}_{a,n}(s)\ge 1\right),\label{def:projection01}
\end{equation}
for all $x\in\RR$ as the projection of $\widehat{F}_{a,n}$ onto $[0,1]$. Next, we define a finite and possibly random grid $\mc S_n(a)=\{s_1(a),\dots,s_{m_n}(a)\} \subset \mathbb{R}$ for $a\in\{0,1\}$.
We let $L_n(a)=s_1(a),U_n(a)=s_{m_n}(a)$, and $\delta_n(a)=s_{j+1}(a)-s_j(a)$ for every $j\in\{1,\cdots,m_n-1\}$. 
We require the following conditions for the grid $\mc S_n(a)$.
We suppress $(a)$ for our $L_n$, $U_n$, $\mc S_n$, $\delta_n$ and its elements for notational simplicity.
We denote the support of $p_a$ as $(\ell_{a},u_a)$ where $-\infty\le \ell_a < u_a \le \infty$.
\begin{assumptionp}{G}\label{assm:G} \phantom{blah}
For $a\in\{0,1\}$, the following hold:
\begin{enumerate}
    \myitem{(G1)}: \label{assm:condition.grid} $[s_2,s_{m_n}]\subset (\ell_{a},u_{a})$ for all $n$, and $L_n \to_p \ell_{a}$, $U_n \to_p u_{a}$, 
    \myitem{(G2)}: \label{assm:max.grid} $\delta_n=o_p(1)$.
\end{enumerate}   
\end{assumptionp}
In practice, we suggest setting
\begin{align}\label{eq:grid.setting}
\delta_n=\frac{Y_{a,n,\mathrm{max}}-Y_{a,n,\mathrm{min}}}{n}, \,L_n=Y_{a,\mathrm{min}}-\delta_n,\, \text{and }U_n= Y_{a,n,\mathrm{max}},
\end{align}
where $Y_{a,n,\mathrm{min}}:=\min_{1 \leq i \le n,\,A_i=a} Y_i$ and $Y_{a,n,\mathrm{max}}:=\max_{1 \leq i \leq n\,A_i=a} Y_i$. By \ref{assm:consistency}--\ref{assm:positivity}, this choice satisfies \ref{assm:condition.grid}.
As discussed in Section \ref{section:CI} below, any grid satisfying $\delta_n=O_p(n^{-1/2})$ is sufficient for our distributional results and validity of the corresponding CI's.
Therefore, practitioners can select arbitrarily fine grid lengths--albeit with a computational cost of $O(m_n)$ \citep{grotzinger1984projections}--so that our correction procedure can be viewed as an approximation of a continuous isotonic projection \citep{groeneboom2010generalized}. Given this, the grid is a computational rather than statistical tuning parameter.

We now define  
\begin{align}
    \left(\widehat{F}^{c}_{a,n}(s_2),\dotsc,\widehat{F}^{c}_{a,n}(s_{m_n-1}) \right) :={\rm argmin}_{v\in \mc C^{m_n-2}} \sum_{k=2}^{m_n-1}\left[ v_k -\widehat{F}^{c_0}_{a,n}(s_k) \right]^2,\label{def:isotonicongrid}
\end{align}
where $\mc C^k=\{(c_1,\cdots,c_k) \in \RR^k :c_1\le \cdots \le c_k\}$. 
That is, $\widehat{F}^{c}_{a,n}$  is the isotonic regression of $\widehat{F}^{c_0}_{a,n}$ on $\{s_2, \dotsc, s_{m_n - 1}\}$, which can be obtained using the Pool Adjacent Violators Algorithm \citep{PAVA.Ayer} using the \texttt{isoreg} function in \texttt{R} \citep{R.citation}.

We have now defined an estimator on the grid $\mc S_n$ that is monotone and contained in $[0,1]$. 
For the final step in our correction procedure, we extend this estimator to the entirety of $\mathbb{R}$ by defining the piecewise constant interpolation
\begin{equation}
\label{def:onestepcorrected}
        \widehat{F}^c_{a,n}(s)= \widehat F^c_{a,n}(s_k) I\left(s_2 \leq s < s_{m_n}\right) + I\left( U_n \leq s\right).
\end{equation}
Hence, our corrected estimator $\widehat{F}^c_{a,n}$ is a right-continuous step function with a finite number of jumps contained in the grid $\mc S_n$.

\subsection{Log-concave counterfactual density estimator}
\label{subsec:LCCDE}

We now use the log-concave projection operator defined in \citet{dss2011} to project $\widehat{F}^c_{a,n}$ onto the space of log-concave distributions and thereby obtain our log-concave counterfactual density estimator. 
We let $\mathcal{P}_1$ be the class of probability measures $P$ on $\mathbb{R}$ that are not point masses and that satisfy $\int_{\mathbb{R} } |x| dP(x)<\infty$. 
We also define $\mathcal{F}$ as the class of log-concave probability density functions on $\mathbb{R}$.
For any $Q \in \mc P_1$, the log-concave projection operator $\psi^{*}(Q)$ is then defined as
\begin{equation*}
    \psi^{*}(Q):={\rm argmax}_{f\in \mathcal{F}} \int_{\mathbb{R}} \log f \, dQ.
\end{equation*}
Existence and uniqueness $\psi^*(Q)$ follows from Theorem~2.2 of \citet{dss2011}. 
We slightly abuse notation by writing $\psi^*(F_Q) := \psi^*(Q)$, where $F_Q$ is the CDF according to $Q$.

We now define our log-concave counterfactual density estimator as
$\widehat{p}_{a,n} :=\psi^{*}(\widehat{F}^c_{a,n})$ for each $a=0,1$. 
In words, our estimator is the log-concave projection of the corrected one-step
counterfactual CDF estimator. We can compute $\psi^{*}(\widehat{F}^c_{a,n})$
by applying the active set algorithm of \citet{Rufibach:2007iw} and
\cite{Duembgen:2007vu}, which is implemented in the \texttt{activeSetLogCon} function in the R package \texttt{logcondens} \citep{logcondens.package}. The active set  algorithm takes as input weighted data
  points, so we pass in the points $\mc S_n$ with weights
  $\widehat F^c_{a,n}(s_{k})-\widehat F^c_{a,n}(s_{k-1})$ for each $s_k \in \mc{S}_n$.
 
 We summarize the steps to obtain $\widehat{p}_{a,n}$ as follows.

\begin{enumerate}	\myitem{(S1)}\label{Step1} Define a grid $\mc S_n=\{s_1,\ldots,s_{m_n}\}$  satisfying \ref{assm:condition.grid} and \ref{assm:max.grid}.
\myitem{(S2)}\label{Step2} Using the estimated nuisance functions $(\widehat\pi_a, \widehat{\phi}_{a})$, compute the doubly-robust one step CDF estimate $\widehat{F}_{a,n}(s)$ on $\mc S_n$ using \eqref{form:onestepest}.
\myitem{(S3)}\label{Step3} Compute $\widehat F^{c_0}_{a,n}$ as the projection of $\widehat F_{a,n}$ onto $[0,1]$  as in \eqref{def:projection01}.
\myitem{(S4)}\label{Step4} Apply the Pool Adjacent Violators Algorithm to $(\widehat{F}^{c_0}_{a,n}(s_2),\ldots,\widehat{F}^{c_0}_{a,n}(s_{m_n-1}))$ to obtain  \\
$(\widehat{F}^{c}_{a,n}(s_2),\ldots,\widehat{F}^{c}_{a,n}(s_{m_n-1}))$ as in \eqref{def:isotonicongrid}. Set $\widehat{F}^{c}_{a,n}(L_n)=0$ and $\widehat{F}^{c}_{a,n}(U_n)=1$.
\myitem{(S5)}\label{Step5} Apply the Active Set Algorithm to the points $\{s_k : 2\le k \le m_n\}$ with corresponding weights $\{\widehat F^c_{a,n}(s_{k})-\widehat F^c_{a,n}(s_{k-1}) : 2\le k \le m_n \}$ to obtain $\widehat{p}_{a,n} :=\psi^{*}(\widehat{F}^c_{a,n})$.
\end{enumerate}

\section{Consistency}\label{section:consistency}

In this section, we study double robust consistency of the proposed estimator. 
In Theorem \ref{prop:Consistency}, we prove that the log-concave MLE $\widehat p_{a,n}$ is uniformly consistent for the true counterfactual density $p_a$ with respect to exponentially weighted uniform and $L_1$ global metrics on the real line.  
To the best of our knowledge, Theorem \ref{prop:Consistency} is the first uniform convergence result in the literature on nonparametric counterfactual density estimation.
We begin by stating conditions we will use regarding the nuisance estimators. We discuss these conditions in detail following Theorem \ref{prop:Consistency}.

\begin{assumptionp}{E}\label{assm:E} \phantom{blah}
   There exist functions $\pi_{a,\infty},\phi_{a,\infty}$ such that:
   \begin{enumerate}
    \myitem{(E1)}: \label{assm:nuisanceconvergence}
    For $a\in\{0,1\}$,  the estimated nuisance functions $\widehat{\pi}_a,~\widehat\phi_a$ satisfy
    \begin{align*}
        P_*\left[\widehat{\pi}_a(\mathbf{X})-\pi_{a,\infty}(\mathbf{X}) \right]^2\rightarrow_p 0 \text{ and } P_*\left[\int_{-\infty}^{\infty}\left|\widehat\phi_a(s|\mathbf{X})-{\phi}_{a,\infty}(s|\mathbf{X})\right| \,ds\right]^2\rightarrow_p 0.
    \end{align*}
        \myitem{(E2)}: \label{assm:boundedpropscore} There exists  $K>0$ such that $\|1/{\pi_{a,\infty}}\|_{\infty},\| 1/{\widehat\pi_a}\|_{\infty}\le K$ a.s..
      \myitem{(E3)}: \label{assm:properCDF} $s \mapsto \widehat\phi_{a}(s|\mathbf{X})$ and $s \mapsto \phi_{a,\infty}(s|\mathbf{X})$ are a.s.\ proper conditional CDFs; i.e., for a.e.\ $\mathbf{X}$, they are monotonic in $s$, take values in $[0,1]$, and converge to $0$ and $1$ as $s$ converges to $-\infty$, and $\infty$, respectively. And, there exists $h \in L_2(P_*)$ such that $\int_{\mathbb{R}}|s| \, d\widehat\phi_a(s|\mathbf{x})\leq h(\mathbf{x})$ $P_*$-a.e.\ $\mathbf{x}$ for all $n>N_0$ for sufficiently large $N_0>0$.
     \myitem{(E4)}: \label{assm:DR}   There exist subsets $\mc S_1,\mc S_2$ of $\mc X \times \mc A$ such that $P_*(\mathcal{S}_1\bigcup \mathcal{S}_2)=1$, and 
    \begin{itemize}
    \item ${\pi}_{a,\infty}(\mathbf{x})={\pi}_{a,*}(\mathbf{x})$, for all $(\mathbf{x},a) \in \mathcal{S}_1$,
    \item ${\phi}_{a,\infty}(s|\mathbf{x})={\phi}_{a,*}(s|\mathbf{x})$, for all $(\mathbf{x},a) \in \mathcal{S}_2$ and $s\in\RR$.
    \end{itemize}
    \myitem{(E5)}: \label{assm:phiregularity} There exists $R \in L_2(P_*)$ such that $|\phi_{a,\infty}(s|\mathbf{X})-\phi_{a,\infty}(t|\mathbf{X})|\le |s-t|R(\mathbf{X})$ for all $s,t\in \RR$.

\end{enumerate}   
\end{assumptionp}

We next make estimator complexity assumptions to control the empirical process terms.
A class of functions $\mc R$ is called $P_*$-Glivenko-Cantelli if 
$\sup_{f\in \mc R} |(\PP_n-P_*)f| \rightarrow 0$ a.s..
More detailed description and examples about Glivenko-Cantelli classes can be found in Section 2.4 of \citet{vdvandW}. 
\begin{assumptionp}{EC-I}\label{assm:EC1} The estimators $\widehat{\pi}_a$  and $\widehat{\phi}_a$ belong to classes of measurable functions $\mc F_\pi$ and $\mc F_\phi$, respectively, where:
\begin{enumerate}
    \myitem{(EC1)}: \label{assm:basicGC} $\mc F_\pi$ and $\{\mathbf{x} \mapsto \phi(s|\mathbf{x}) : s\in\RR,\phi\in\mc F_\phi\}$ are $P_*$-Glivenko-Cantelli.
    \myitem{(EC2)}: \label{assm:integratedGC} The class of functions 
    \begin{align*}
        \left\{\int_{-\infty}^{t_1} \phi(s|\cdot) \, ds, \int_{t_2}^{\infty}[1-\phi(s|\cdot)] \, ds: t_1\in(-\infty,0],\,t_2\in[0,\infty),\phi\in\mc F_\phi\right\}
    \end{align*}
    is $P_*$-Glivenko-Cantelli.
\end{enumerate}
\end{assumptionp}

\noindent  We now  state the consistency of our log-concave counterfactual density estimator $\widehat{p}_{a,n}$ in  weighted $L_1$ and uniform metrics.

\begin{theorem}\label{prop:Consistency}
If conditions \ref{assm:consistency}--\ref{assm:logconcv}, \ref{assm:condition.grid}--\ref{assm:max.grid}, 
\ref{assm:nuisanceconvergence}--\ref{assm:phiregularity},
and \ref{assm:basicGC}--\ref{assm:integratedGC} hold, then 
\begin{align}
 \int_{\mathbb{R}} e^{\varepsilon|s|}|\widehat{p}_{a,n}(s)-p_a(s)| \, ds \rightarrow_p 0,\label{prop:consist-1}
\end{align}
as $n\rightarrow \infty$ for $a\in\{0,1\}$ and for all $\varepsilon \in [0,\alpha)$, where $\alpha > 0$ is such that $p_a(s) \le e^{-\alpha|s|+\beta}$  for all $s \in \mathbb{R}$ and some $\beta \in \mathbb{R}$. If in addition $p_a$ is continuous on $\mathbb{R}$, then
\begin{align}
 &\sup_{s\in\mathbb{R}} e^{\varepsilon|s|}|\widehat{p}_{a,n}(s)-p_a(s)| \rightarrow_p 0.\label{prop:consist-2}
\end{align}

\end{theorem}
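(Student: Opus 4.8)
The plan is to transfer the known consistency theory for the log-concave MLE in the i.i.d.\ setting (as in \citealp{dss2011}, \citealp{cule2010theoretical}, and the weighted-$L_1$/uniform results used in \citealp{BRW09}) to our setting via a continuity property of the log-concave projection operator $\psi^*$ with respect to Wasserstein distance. Concretely, I would proceed in three stages. First, establish that the corrected CDF estimator $\widehat F^c_{a,n}$ converges to $F_a$ in the Wasserstein-$1$ distance $W_1$, i.e.\ $\int_{\mathbb R}|\widehat F^c_{a,n}(s) - F_a(s)|\,ds \to_p 0$; this is where Assumption~\ref{assm:E} and the Glivenko–Cantelli conditions \ref{assm:basicGC}--\ref{assm:integratedGC} enter. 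Second, invoke a stability result for $\psi^*$: if $F_k \to F$ in $W_1$ and $F$ is the CDF of a log-concave density, then $\psi^*(F_k) \to \psi^*(F) = p_a$ in the exponentially weighted $L_1$ metric (and uniformly when $p_a$ is continuous). This is precisely the content one expects from the projection being continuous in Wasserstein distance on $\mc P_1$, which is available in the log-concave literature. Third, combine the two via the continuous mapping / subsequence argument to conclude \eqref{prop:consist-1} and \eqref{prop:consist-2}.

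For the first stage, I would decompose
\begin{align*}
\int_{\mathbb R}\bigl|\widehat F^c_{a,n}(s) - F_a(s)\bigr|\,ds
&\le \int_{\mathbb R}\bigl|\widehat F^c_{a,n}(s) - \widehat F^{c_0}_{a,n}(s)\bigr|\,ds
+ \int_{\mathbb R}\bigl|\widehat F^{c_0}_{a,n}(s) - \widetilde F_{a,n}(s)\bigr|\,ds \\
&\quad + \int_{\mathbb R}\bigl|\widetilde F_{a,n}(s) - F_{a,\infty}(s)\bigr|\,ds
+ \int_{\mathbb R}\bigl|F_{a,\infty}(s) - F_a(s)\bigr|\,ds,
\end{align*}
where $\widetilde F_{a,n}$ is the ``population-nuisance'' analogue of the one-step estimator (replacing $\widehat\pi_a,\widehat\phi_a$ by $\pi_{a,\infty},\phi_{a,\infty}$) and $F_{a,\infty}$ is its $P_*$-expectation. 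The last term vanishes by the double-robustness condition \ref{assm:DR} (under \ref{assm:DR}, either the propensity or the outcome limit is correct, and the bias of the AIPW functional is zero in either case, so $F_{a,\infty} = F_a$). The third term is controlled by the Glivenko–Cantelli assumptions \ref{assm:basicGC}--\ref{assm:integratedGC} together with \ref{assm:phiregularity} and the integrability envelope $h\in L_2(P_*)$, which makes the integrated empirical process uniformly small — this is one of the ``key lemmas'' alluded to in the introduction. The second term (nuisance estimation error) is handled by \ref{assm:nuisanceconvergence} and \ref{assm:boundedpropscore}: the difference is bounded by $K\,P_*\int|\widehat\phi_a - \phi_{a,\infty}| + (\text{cross terms involving } \widehat\pi_a - \pi_{a,\infty})$, each $o_p(1)$ by Cauchy–Schwarz and the stated $L_2$ rates. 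The first term (effect of projection onto $[0,1]$ and isotonization on the grid) is the genuinely new piece: here I would use Lemmas~\ref{lem1} and \ref{lem2} referenced in the introduction, which presumably show that, because the target $F_a$ is itself monotone and $[0,1]$-valued, projecting onto $[0,1]$ and then isotonizing on a grid with mesh $\delta_n = o_p(1)$ and endpoints satisfying \ref{assm:condition.grid} can only decrease the $L_1$ distance to $F_a$ up to a $o_p(1)$ interpolation error; the piecewise-constant interpolation error is $O_p(\delta_n)$ on a compact region plus negligible tails controlled via the envelope $h$.

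For the second stage, the main technical input is that $\psi^*$ is continuous from $(\mc P_1, W_1)$ to $(\mc F, \|\cdot\|_{1,\text{exp-wt}})$ at distributions whose log-concave projection equals a log-concave density with tail bound $p_a(s)\le e^{-\alpha|s|+\beta}$. Since $\widehat F^c_{a,n}$ is a proper CDF by construction (after step \ref{Step4}--\ref{Step5}, with total mass $1$ and finite first moment because it is supported on the bounded grid), each $\widehat F^c_{a,n}$ lies in $\mc P_1$ and $\psi^*(\widehat F^c_{a,n}) = \widehat p_{a,n}$ is well-defined. Then $W_1(\widehat F^c_{a,n}, F_a) \to_p 0$ from stage one, and the continuity result gives $\int e^{\eps|s|}|\widehat p_{a,n}(s) - p_a(s)|\,ds \to_p 0$ for all $\eps\in[0,\alpha)$; the supremum-norm statement follows from the same continuity result under the additional continuity of $p_a$, exactly paralleling the i.i.d.\ case where uniform consistency of the log-concave MLE holds on the whole line when the truth is continuous (cf.\ \citealp{cule2010theoretical}, \citealp{BRW09}). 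To make the ``$\to_p$'' rigorous despite the random grid, I would argue along subsequences: every subsequence has a further subsequence along which $W_1(\widehat F^c_{a,n},F_a)\to 0$ almost surely (by the a.s.\ versions of the Glivenko–Cantelli hypotheses and passing to a.s.-convergent subsequences for the $o_p(1)$ nuisance terms), apply the deterministic continuity of $\psi^*$ pathwise, and conclude.

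The main obstacle I anticipate is stage one's first term — quantifying the effect of the isotonization-on-a-grid-plus-interpolation correction and showing it does not inflate the Wasserstein distance asymptotically. Isotonic regression is an $L_2$-projection onto the monotone cone on a finite grid, not an $L_1$- or $W_1$-projection, and $\widehat F^{c_0}_{a,n}$ need not already be close to monotone before PAVA is applied (its non-monotonicity comes from empirical-process fluctuation and nuisance error, which are themselves $o_p(1)$ but only in an averaged sense). The crux is therefore to show that the monotone rearrangement PAVA produces stays $L_1$-close to the ``pre-correction'' estimator whenever the latter is $L_1$-close to a genuinely monotone function; I expect Lemmas~\ref{lem1}--\ref{lem2} to supply exactly this, via a contraction-type property of the isotonic projection in a suitable sense combined with control of the grid endpoints through \ref{assm:condition.grid} so that the mass assigned outside $[L_n,U_n]$ is asymptotically negligible. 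Handling the exponential weight $e^{\eps|s|}$ in the tails — rather than just $L_1$ on compacts — is a secondary difficulty, resolved by the a priori tail bound $p_a(s)\le e^{-\alpha|s|+\beta}$ and the standard fact that log-concave projections of measures with sub-exponential tails inherit matching tail control, so that the tail contribution to $\int e^{\eps|s|}|\widehat p_{a,n} - p_a|$ is uniformly small for $\eps<\alpha$.
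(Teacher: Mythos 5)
Your stages two and three (continuity of $\psi^*$ with respect to $d_1$ via Proposition 2(c) of Cule--Samworth, plus a subsequence argument to upgrade to convergence in probability) are exactly the paper's argument. The divergence is in stage one, and this is where your plan has genuine gaps. The paper never attempts to bound $\int_{\mathbb{R}}|\widehat F^c_{a,n}-F_a|\,ds$ directly by a telescoping decomposition; instead it uses the equivalence that $d_1(\widehat F^c_{a,n},F_a)\to_p 0$ if and only if $\widehat F^c_{a,n}(s)\to_p F_a(s)$ for every $s$ \emph{and} $\int|s|\,d\widehat F^c_{a,n}(s)\to_p\int|s|\,dF_a(s)$. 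Lemmas~\ref{lem1} and~\ref{lem2} deliver precisely these two ingredients (the first via sup-norm convergence of $\widehat F_{a,n}$ and the $\ell_\infty$-contraction of isotonic regression toward monotone vectors, the second via Marshall's inequality and control of $\sup_t\bigl|\int_{-\infty}^t(\widehat F_{a,n}-F_a)\bigr|$ using \ref{assm:integratedGC}); they are not an $L_1$-contraction statement of the kind you hope for, although the underlying grid-level contraction is true. Your direct route founders on the unbounded-support case: on $[L_n,U_n]$ the sup-norm and Glivenko--Cantelli controls you invoke give only $(U_n-L_n)\sup_s|\cdot|$, and $U_n-L_n$ diverges when the support of $p_a$ is all of $\mathbb{R}$ (e.g.\ Gaussian). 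In particular, your third term $\int_{\mathbb{R}}|(\PP_n-P_*)D_{a,\theta_{a,\infty}}(s)|\,ds$ is the $L_1(ds)$ norm of the empirical process, which is strictly stronger than what \ref{assm:integratedGC} controls (a supremum over $t$ of \emph{signed} integrated differences); making it vanish requires a truncation-at-large-$|s|$ argument with tail integrals controlled by the envelope $h$ and log-concavity of $Y^a$, which you do not supply. The paper's choice of the ``pointwise plus first moment'' characterization is exactly what avoids having to prove $L_1(ds)$ convergence of a CDF-type estimator over an unbounded domain.

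A second concrete gap is your treatment of the nuisance-error term. The quantity $\widehat F_{a,n}(s)-\widetilde F_{a,n}(s)=\PP_n\bigl[D_{a,\widehat\theta_a}(s)-D_{a,\theta_{a,\infty}}(s)\bigr]$ is an \emph{empirical} average of functions that depend on $\widehat\pi_a,\widehat\phi_a$, whereas \ref{assm:nuisanceconvergence} only gives $P_*$-norms of the nuisance errors. ``Cauchy--Schwarz and the stated $L_2$ rates'' therefore do not close the argument: you must first replace $\PP_n$ by $P_*$, which requires an additional Glivenko--Cantelli argument over the nuisance classes (this is exactly why the paper routes its sup-norm bound through the classes $\mc F^a$ and $\mc F^a_{\rm int}$ built from $\mc F_\pi$ and $\mc F_\phi$, and why it needs the Lévy-distance device to convert the Wasserstein-type rate in \ref{assm:nuisanceconvergence} into pointwise-in-$s$ control of $\widehat\phi_a-\phi_{a,\infty}$). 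Your decomposition also needs integrated tail control of $\widehat F^{c_0}_{a,n}$ outside $[L_n,U_n]$ (the paper's bound of $\int_{-\infty}^{0}|\widehat F_{a,n}|$ via the envelope in \ref{assm:integratedGC} is the relevant tool), and a Riemann-sum-versus-integral argument using the representation of $\widehat F_{a,n}$ as a difference of two bounded monotone functions. None of these pieces is fatal individually, but as written the proposal asserts the key Wasserstein convergence without the machinery that actually makes it true; filling the gaps essentially forces you back onto the paper's characterization or an equivalent amount of work.
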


\noindent We note that by Lemma 1 of \citet{Cule:2010gv}, \ref{assm:logconcv} implies that there always exist $\alpha >0$ and $\beta \in \mathbb{R}$ such that $p_a(s) \le e^{-\alpha|s|+\beta}$ for all $s \in \mathbb{R}$. 

We now discuss the conditions and result of Theorem~\ref{prop:Consistency}. We define 
\begin{align}\label{metric:Wasser}
    d_1(F,G)=\int_{\mathbb{R}} |F(s)-G(s)| \, ds,
\end{align}
as the Wasserstein distance between two univariate distribution functions $F,G$. Condition \ref{assm:nuisanceconvergence} requires that the $L_2$ norms of $\widehat\pi_a - \pi_{a,\infty}$ and $d_1 \left( \widehat\phi_a(\cdot|\mathbf{X}), {\phi}_{a,\infty}(\cdot|\mathbf{X})\right)$ converge in probability to zero.   Condition~\ref{assm:boundedpropscore} requires that $\widehat\pi_a$ and its limit $\pi_{a,\infty}$ are uniformly bounded below, and Condition~\ref{assm:properCDF} requires that $\widehat\phi_a$ and $\phi_{a,\infty}$ are proper CDFs, and $\widehat\phi_a$ has finite conditional first moment $P_*$-almost everywhere $\mathbf{X}$.  
The proper CDFs condition ensures that our one-step CDF estimator $\widehat F_{a,n}$ has bounded variation and converges uniformly in probability to $F_a$ over the real line, which is used to establish that the isotonized version, $\widehat F_{a,n}^c$, converges to $F_a$ in Wasserstein distance in probability.
Condition \ref{assm:DR} is satisfied if at least one, but not necessarily both, of the two nuisance estimators is consistent, namely, ${\pi}_{a,\infty}={\pi}_{a,*}$ or ${\phi}_{a,\infty}={\phi}_{a,*}$. 
Therefore,  Theorem~\ref{prop:Consistency} implies that $\widehat p_{a,n}$ is \textit{doubly-robust consistent}. 
Condition~\ref{assm:phiregularity} requires that $\phi_{a,\infty}$ is Lipschitz in its first argument, where the Lipschitz constant may depend on $\mathbf{x}$ but must be a square-integrable function of $\mathbf{x}$. 
This assumption is common in the nonparametric conditional distribution estimation literature, including \citet{hall2003order,meinshausen2006quantile,li2013optimal}, among others.
It is satisfied in particular when the conditional density corresponding to $\phi_{a,\infty}(\cdot|\mathbf{x})$ is bounded by $R(\mathbf{x})$.
Conditions \ref{assm:basicGC}--\ref{assm:integratedGC} restrict the complexity of the estimators to control empirical process terms.  
If the support of $Y^a$ is contained in $[-M,M]$ for some $M>0$, the function classes in \ref{assm:basicGC}--\ref{assm:integratedGC} can be constrained to $|s|\le M$. 

Consistency of conditional CDF estimators in expected Wasserstein distance is not presently available for many estimators, which makes verifying~\ref{assm:nuisanceconvergence} challenging. The following lemma gives sufficient conditions for the consistency assumption in~\ref{assm:nuisanceconvergence}.

\begin{lemma}\label{lemma:cond.E1}
Suppose that: (1) $\sup_{s\in[-L,L]}P_*\left|\widehat\phi_a(s|\mathbf{X})-\phi_{a,\infty}(s|\mathbf{X})\right|^2=o_p(1)$ for all $L>0$; (2) there exists a measurable function $\mathcal{U}_n : (0, \infty) \times \mathcal{Z} \to \mathbb{R}$ such that
\begin{align*}
P_*\left\{\int_{-\infty}^{-M} \widehat\phi_a(s|\mathbf{X}) \, ds\right\}^2 + P_*\left\{\int_{M}^{\infty} \left[1-\widehat\phi_a(s|\mathbf{X})\right] \, ds\right\}^2\leq \mathcal{U}_n(M,\mathbf{Z}),
\end{align*}
for every $M \in (0,\infty)$ and a.e.\ $\mathbf{Z}$, and for every $\epsilon, \delta>0$, there exist $N_{\epsilon,\delta}, M_{\epsilon,\delta} \in (0, \infty)$ such that $P_*\left(\mathcal{U}_n(M_{\epsilon,\delta},\mathbf{Z})>\delta\right) \leq \epsilon$ for all $n\geq N_{\epsilon,\delta}$ and $M\geq M_{\epsilon,\delta}$; and (3) there exists $T:\mathcal{X}\to \RR$ such that $T\in L_2(P_*)$, and $\int_\RR |s| \, d\phi_{a,\infty}(s|\mathbf{X})<T(\mathbf{X})$ for $P_*$-a.e.\ $\mathbf{X}$. Then
\begin{align*}
P_*\left[\int_{-\infty}^{\infty}\left|\widehat\phi_a(s|\mathbf{X})-{\phi}_{a,\infty}(s|\mathbf{X})\right| \,ds\right]^2\rightarrow_p 0.
\end{align*}
\end{lemma}
\medskip
The proof of Lemma \ref{lemma:cond.E1} is given in Section \ref{proof:cond.E1} of the supplementary material.
This lemma implies that a uniform light tail condition for $\widehat\phi_a$ and supremum convergence in probability of $L_2(P_*)$ distance between $\widehat\phi_a$ and $\phi_{a,\infty}$ on every compact interval ensures the second condition in Assumption \ref{assm:nuisanceconvergence}.
Many estimators $\widehat\phi_a$ and $\widehat{\pi}_a$, including nonparametric and machine learning estimators, can satisfy our conditions. For example, when the density of $\mathbf{X}$ is positive and bounded from above and below by positive constants on a compact support, quantile regression random forests \citep{meinshausen2006quantile} 
and monotone local linear estimators \citep{das2019nonparametric}
satisfy conditions \ref{assm:nuisanceconvergence} and \ref{assm:properCDF}
under regularity conditions \citep{elie2022random,xie2023uniform}.
We give an additional example of a semiparametric estimator that satisfies conditions \ref{assm:nuisanceconvergence} and \ref{assm:properCDF}.
We define $\mu_a : \mathbf{X} \mapsto \mathbb{E}(Y \mid A = a, \mathbf{X})$ and $\sigma_a^2 : \mathbf{X} \mapsto \mathrm{Var}(Y \mid A = a, \mathbf{X})$, and the class of functions $\mc F_H$ as follows:
\begin{align}\label{F.H.class}
	\mc F_H:=\left\{ \widehat \phi(s|\mathbf{x})=H((s-\hat\mu_a(\mathbf{x}))/\hat\mu_a(\mathbf{x})): \hat\mu_a(\mathbf{x})\in \mc F_1,\, \hat\sigma_a(\mathbf{x})\in \mc F_2\right\}.	
\end{align}
\noindent We provide the conditions under which the function class $\mc F_H$ satisfies assumptions \ref{assm:nuisanceconvergence}, \ref{assm:properCDF}, \ref{assm:basicGC}, and \ref{assm:integratedGC} in Section \ref{FH.Wasserstein.proof} of the supplementary material.

A key element in the proof of Theorem~\ref{prop:Consistency} is showing that certain properties of the one-step estimator $\widehat{F}_{a,n}$ carry over to the corrected one-step estimator $\widehat{F}_{a,n}^c$. While \citet{WvdlC2020} provided general results about monotone corrections using isotonic regression, some of their results assumed compact support, and their results do not address convergence in Wasserstein distance, which we need. We provide  two lemmas  extending the results of \citet{WvdlC2020} to unbounded domains, and to convergence in Wasserstein distance. Since these results may be of independent interest, we state them below. Proofs are given in Sections \ref{appd:lem1} and \ref{appd:lem2}, respectively, of the supplementary material.

\begin{lemma}\label{lem1}
If $F_a$ is uniformly continuous on $\RR$, $F_a(L_n) \to_p 0$, $F_a(U_n) \to_p 1$, $\delta_n \to_p 0$, and ${\rm sup}_{x\in \RR} |\widehat{F}_{a,n}(x)-F_a(x)|\rightarrow_p 0$, then $\widehat{F}^c_{a,n}(x) \rightarrow_p F_a(x)$ for all $x\in \mathbb{R}$.
\end{lemma}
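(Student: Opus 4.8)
The plan is to show that $\widehat{F}^c_{a,n}(x) \to_p F_a(x)$ pointwise by controlling, in sequence, (i) the effect of the $[0,1]$-projection, (ii) the effect of isotonization on the grid, and (iii) the effect of the piecewise-constant interpolation, using uniform closeness of $\widehat F_{a,n}$ to the continuous, monotone limit $F_a$.

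First I would dispose of the $[0,1]$-projection step. Since $F_a$ takes values in $[0,1]$ and $\|\widehat F_{a,n} - F_a\|_\infty \to_p 0$, the projection $x \mapsto \min\{\max\{x,0\},1\}$ is $1$-Lipschitz and fixes $[0,1]$, so $\sup_{x\in\RR}|\widehat F^{c_0}_{a,n}(x) - F_a(x)| \le \sup_{x\in\RR}|\widehat F_{a,n}(x) - F_a(x)| \to_p 0$. Thus it suffices to work with $\widehat F^{c_0}_{a,n}$, which is uniformly close to $F_a$ and takes values in $[0,1]$.

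Next I would handle isotonization on the grid. The key classical fact is that isotonic regression (least-squares projection onto the monotone cone) is a $\|\cdot\|_\infty$-contraction relative to any monotone target: if $(g(s_2),\dots,g(s_{m_n-1}))$ is already nondecreasing, then $\max_k|\widehat F^c_{a,n}(s_k) - g(s_k)| \le \max_k |\widehat F^{c_0}_{a,n}(s_k) - g(s_k)|$. Applying this with $g = F_a$ (which is nondecreasing, so its restriction to the grid lies in $\mc C^{m_n-2}$) gives $\max_{2\le k\le m_n-1}|\widehat F^c_{a,n}(s_k) - F_a(s_k)| \le \max_{2 \le k \le m_n - 1}|\widehat F^{c_0}_{a,n}(s_k) - F_a(s_k)| \le \|\widehat F^{c_0}_{a,n} - F_a\|_\infty \to_p 0$. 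I would also note that at the endpoints, $\widehat F^c_{a,n}(L_n) = 0$ while $F_a(L_n) \to_p F_a(\ell_a) = 0$ by \ref{assm:condition.grid} and continuity of $F_a$ at $\ell_a$ (interpreting $F_a(-\infty)=0$), and similarly $\widehat F^c_{a,n}(U_n) = 1$ with $F_a(U_n) \to_p 1$; so the uniform-on-grid closeness $\max_{1\le k\le m_n}|\widehat F^c_{a,n}(s_k) - F_a(s_k)| \to_p 0$ in fact holds including the endpoints.

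Finally I would handle interpolation. Fix $x\in\RR$. For $n$ large (so that $L_n < x < U_n$ with probability tending to one, using \ref{assm:condition.grid}), $x$ lies in some interval $[s_k, s_{k+1})$ and $\widehat F^c_{a,n}(x) = \widehat F^c_{a,n}(s_k)$. Then
\begin{align*}
|\widehat F^c_{a,n}(x) - F_a(x)| \le |\widehat F^c_{a,n}(s_k) - F_a(s_k)| + |F_a(s_k) - F_a(x)|.
\end{align*}
The first term is at most the uniform-on-grid error, which is $o_p(1)$. For the second term, $|s_k - x| \le \delta_n = o_p(1)$ by \ref{assm:max.grid}, and $F_a$ is uniformly continuous on $\RR$, so $|F_a(s_k) - F_a(x)| \le \omega(\delta_n) \to_p 0$ where $\omega$ is the modulus of continuity of $F_a$. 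Combining the three reductions gives $\widehat F^c_{a,n}(x) \to_p F_a(x)$ for each fixed $x$. (For $x \le \ell_a$ or $x \ge u_a$ the statement is immediate since both sides converge to $0$ or $1$ respectively.)

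The main obstacle is the careful bookkeeping at the endpoints of the grid: the isotonic-regression contraction argument as stated applies to the interior points $s_2,\dots,s_{m_n-1}$, while the definition forces $\widehat F^c_{a,n}(L_n)=0$ and $\widehat F^c_{a,n}(U_n)=1$ by fiat. One must verify, using \ref{assm:condition.grid} together with $F_a(\ell_a+)=0$, $F_a(u_a-)=1$ and uniform continuity, that these boundary assignments are themselves consistent for $F_a$ at the (random, converging) endpoints, and that the interpolation on $(-\infty,s_2)$ and $[U_n,\infty)$ does not introduce non-negligible error at a fixed interior $x$ once $n$ is large enough that $x$ is strictly interior to $[L_n,U_n]$. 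All of this is routine given uniform continuity of $F_a$ and the grid assumptions, but it is the part that genuinely uses hypotheses beyond the $\|\cdot\|_\infty$ contraction property of isotonic regression.
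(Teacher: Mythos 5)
Your proposal is correct and follows essentially the same route as the paper's proof: the $[0,1]$-projection is absorbed by a pointwise contraction bound, the isotonization step is controlled by the sup-norm reduction property of isotonic regression relative to the monotone target $F_a$ on the grid (the fact the paper cites from Theorem~1(i) of \citet{WvdlC2020}), and the interpolation error is handled via uniform continuity of $F_a$ together with $\delta_n \to_p 0$. The only cosmetic difference is at the boundary, where the paper bounds the off-grid contribution directly by $F_a(L_n) + [1-F_a(U_n)] = o_p(1)$ rather than arguing that a fixed $x$ is eventually interior, but this does not change the substance of the argument.
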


\begin{lemma}\label{lem2}
  If $F_a$ is uniformly continuous on $\RR$, $\int_{\RR} |s| \, dF_a(s) < \infty$, $F_a(L_n) \to_p 0$, $F_a(U_n) \to_p 1$, $\delta_n \to_p 0$, ${\rm sup}_{x\in \RR} |\widehat{F}_{a,n}(x)-F_a(x)|\rightarrow_p 0$, and if
  \begin{align*}
     \delta_n\max_{2\le k\le m_n-1}\left|\sum_{j=2}^k \widehat{F}^{c_0}_{a,n}(s_{j})-\sum_{j=2}^k F_a(s_{j})\right|\rightarrow_p 0,  
  \end{align*}
then 
    $\int_{\mathbb{R}}|s| \, d \widehat{F}^c_{a,n}(s) \rightarrow_p  \int_{\mathbb{R}}|s| \, dF_a(s).$
\end{lemma}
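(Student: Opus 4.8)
The plan is to show that $\int_{\RR}|s|\,d\widehat{F}^c_{a,n}(s) \to_p \int_{\RR}|s|\,dF_a(s)$ by combining Lemma \ref{lem1} (which already gives pointwise, hence — with the uniform continuity of $F_a$ — uniform convergence of $\widehat{F}^c_{a,n}$ to $F_a$ on $\RR$) with control of the tails. The key identity I would use is the layer-cake representation $\int_{\RR}|s|\,dG(s) = \int_0^{\infty}[1-G(s)+G(-s)]\,ds$, valid for any distribution function $G$ with finite first moment (and here $\widehat F^c_{a,n}$ has bounded support so its first moment is trivially finite). This reduces the problem to showing $\int_0^{\infty}\bigl|[1-\widehat F^c_{a,n}(s)+\widehat F^c_{a,n}(-s)] - [1-F_a(s)+F_a(-s)]\bigr|\,ds \to_p 0$. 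On any fixed compact interval this follows from the uniform convergence supplied by Lemma \ref{lem1} together with dominated convergence; the real work is the tails, i.e.\ showing that $\int_{|s|>T}|s|\,d\widehat F^c_{a,n}(s)$ is uniformly small in probability for large $T$.

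For the upper tail, since $\widehat F^c_{a,n}(s)=1$ for $s\ge U_n$, the contribution beyond $U_n$ vanishes, and on $[0,U_n]$ I would bound $\int_0^{\infty}[1-\widehat F^c_{a,n}(s)]\,ds$ in terms of the grid sums. By construction $\widehat F^c_{a,n}$ is the piecewise-constant interpolant of the isotonic regression values, so $\int_{s_2}^{U_n}[1-\widehat F^c_{a,n}(s)]\,ds = \delta_n\sum_{k=2}^{m_n-1}[1-\widehat F^c_{a,n}(s_k)]$ (up to boundary grid-cell adjustments of order $\delta_n$, which are negligible since $\delta_n\to_p0$). I would then compare this partial sum against $\delta_n\sum_{k=2}^{m_n-1}[1-F_a(s_k)]$, which is a Riemann sum for $\int[1-F_a]\,ds<\infty$ and hence converges. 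The difference between these two sums is exactly controlled by the hypothesis $\delta_n\max_{2\le k\le m_n-1}\bigl|\sum_{j=2}^{k}\widehat F^{c_0}_{a,n}(s_j) - \sum_{j=2}^{k}F_a(s_j)\bigr|\to_p0$ — this is where that assumption is used. The subtle point is that we have a bound on partial sums of $\widehat F^{c_0}_{a,n}$ (the pre-isotonized, $[0,1]$-clipped estimator) rather than of $\widehat F^c_{a,n}$; here I would invoke the fact, established in \citet{WvdlC2020}-style arguments, that isotonic regression (PAVA) does not increase such partial-sum discrepancies against a monotone target — more precisely, partial sums of the isotonic regression stay between the running min/max envelope of the partial sums of the input, so the displayed hypothesis transfers to $\widehat F^c_{a,n}$. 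The lower tail ($s<0$, using $\widehat F^c_{a,n}(s)=0$ for $s<s_2$ and a symmetric partial-sum comparison for $\int_{s_2}^0 \widehat F^c_{a,n}(s)\,ds$) is handled identically.

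Having reduced everything to a fixed compact interval plus tails, I would finish with an $\eps$-argument: given $\eps>0$, pick $T$ large enough that $\int_{|s|>T}|s|\,dF_a(s)<\eps$ and that the Riemann-sum tail $\delta_n\sum_{|s_k|>T}[1-F_a(s_k)+F_a(-s_k)]$ is eventually $<\eps$ with high probability; on $|s|\le T$ use uniform convergence and a finite-interval integration-by-parts to get $\int_{-T}^{T}|s|\,d\widehat F^c_{a,n}(s)\to_p\int_{-T}^{T}|s|\,dF_a(s)$; and on $|s|>T$ use the partial-sum bound above.

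I expect the main obstacle to be the bookkeeping around the pre- versus post-isotonization partial sums: verifying rigorously that the PAVA output's cumulative sums are sandwiched by cumulative-sum envelopes of the input, so that the hypothesis stated in terms of $\widehat F^{c_0}_{a,n}$ genuinely controls the first moment of $\widehat F^c_{a,n}$. A secondary nuisance is handling the grid endpoints $L_n$ and $U_n$ carefully — converting the Riemann-type sums over $\mathcal S_n$ into honest integrals and absorbing the $O(\delta_n)$ boundary-cell errors and the (possibly random, possibly infinite) locations of $\ell_a,u_a$ without circular dependence on the conclusion.
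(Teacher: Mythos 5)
Your route is essentially the paper's: kill the tails using that $\widehat F^c_{a,n}$ equals $0$ below $L_n$ and $1$ above $U_n$ together with $F_a(L_n)\to_p 0$, $F_a(U_n)\to_p 1$, and $\int_{\RR}|s|\,dF_a(s)<\infty$, then reduce the middle region (via integration by parts / layer cake) to Riemann sums of $\widehat F^c_{a,n}$ and $F_a$ over the grid, control their discrepancy through the assumed bound on partial sums of $\widehat F^{c_0}_{a,n}$, and absorb the $O(\delta_n)$ Riemann-sum errors by monotonicity. The one step to tighten is the pre-/post-isotonization transfer you flag as the main obstacle: the paper closes it with Marshall's inequality — the cusum of the PAVA output is the greatest convex minorant of the input cusum, and the comparison cusum $k\mapsto\sum_{j\le k}F_a(s_j)$ is convex because $F_a$ is non-decreasing — whereas your ``running min/max envelope'' sandwich by itself does not yield $\max_k\bigl|\sum_{j\le k}\{\widehat F^{c}_{a,n}(s_j)-F_a(s_j)\}\bigr|\le\max_k\bigl|\sum_{j\le k}\{\widehat F^{c_0}_{a,n}(s_j)-F_a(s_j)\}\bigr|$; with Marshall's lemma supplying that inequality, your argument matches the paper's (the extra $\epsilon$--$T$ truncation and the appeal to Lemma~\ref{lem1} for the compact part are harmless but unnecessary, since the partial-sum control already handles all of $[L_n,U_n]$).
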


\section{Limit distribution and confidence intervals}\label{section:CI}
\subsection{Limit distribution}\label{subsec:limitdist}

We now derive the joint limit distribution of $(\widehat p_{1,n}(s_0) - p_1(s_0),\widehat p_{0,n}(s_0) - p_0(s_0))$, properly rescaled, at a fixed point $s_0\in \mathbb{R}$. 
We define $\widehat{\varphi}_{a,n} = \log (\widehat{p}_{a,n})$ and $\varphi_a = \log( p_a)$. We first state the following regularity assumptions for the true density.
\begin{assumptionp}{R}\label{assm:R}\phantom{blah}
\begin{enumerate}
    \myitem{(R1)}: \label{assm:pregularity} $p_a(s_0)>0$, and there exists $\omega > 0$ such that $p_a$ is twice continuously differentiable in the neighborhood $I_{s_0,\omega}:=[s_0-\omega,s_0+\omega]$ of $s_0$.
    \myitem{(R2)}: \label{assm:pdiffwithk} If $p_a''(s_0)\neq 0$, then set $k=2$. Otherwise, assume that $k < \infty$ is the smallest positive even integer such that $\varphi_a^{(j)}(s_0)=0$ for $j=2,\dots,k-1$, and $\varphi_a^{(k)}(s_0)< 0$. In addition, $\varphi_a^{(k)}$ is continuous in a neighborhood of $s_0$.
\end{enumerate}
\end{assumptionp}
Assumption \ref{assm:R} is analogous to conditions (A3)-(A4) of \citet{BRW09}. We note that concavity of $\varphi_a$ implies that $k$ is an even integer (see page 7 in \citealp{BRW09}). Next, we state assumptions on the nuisance estimators that we will need. We recall $\mc S_1,~\mc S_2$, and $\mc F_\pi,~\mc F_\phi$ defined in Assumption \ref{assm:DR}, and for a function $f$ and a set $S \subset \mathbf{\mc Z}$, we define $\|f\|_{S}:=\|fI_S\|$, where $\|\cdot\|$ is the $L_2(P_*)$ norm. For any  $S \subset \mathbf{\mc Z}$ and $s, t \in \mathbb{R}$, we also define
\begin{align*}
&\mc G(s,t;\mathbf Z;S) := \|(\widehat\phi_a-\phi_{a,\infty})(t|\cdot)-(\widehat\phi_a-\phi_{a,\infty})(s|\cdot)\|_S.
\end{align*}

\begin{assumptionp}{E (cont.)}\label{assm:E2}\phantom{blah}
\begin{enumerate}
   \myitem{(E6)} \label{assm:L2conditionsforCI} For all $s_1,s_2\in  I_{s_0,\omega}$ defined in condition \ref{assm:pregularity} the following statements hold: 
    \begin{gather*}
    \mc G(s_1,s_2;\mathbf{Z};\mathcal{S}_1\cap\mathcal{S}_2) \|\widehat\pi_a-\pi_{a,\infty}\|_{\mathcal{S}_1\cap\mathcal{S}_2} = |s_2-s_1|M_1,\\
      \|\widehat\pi_a-\pi_{a,\infty}\|_{\mathcal{S}_1\cap\mathcal{S}_2^c} = o_p(n^{-k/(2k+1)}), \,\mc G(s_1,s_2;\mathbf{Z};\mathcal{S}_1\cap\mathcal{S}_2^c) = |s_2-s_1|M_2,\\
      \|\widehat\pi_a-\pi_{a,\infty}\|_{\mathcal{S}_1^c\cap\mathcal{S}_2} = o_p(1),\,\mc G(s_1,s_2;\mathbf{Z};\mathcal{S}_1^c\cap\mathcal{S}_2) = |s_2-s_1|M_3,
    \end{gather*}
    where $M_1$, $M_2$, and $M_3$ are random variables that do not depend on $s_1,s_2$ and such that $M_1$ and $M_3$ are $o_p(n^{-k/(2k+1)})$ and $M_2 = o_p(1)$.

    \myitem{(E7)}: \label{assm:phiestholder} There exists $R_1 \in L_2(P_*)$ and $\alpha \in (1/2,1]$ such that for every $t,s$ in a neighborhood of $s_0$ and $f \in\mc F_{\phi}$, $|f(t|\mathbf{X})-f(s|\mathbf{X})|\le R_1(\mathbf{X})|t-s|^{\alpha}$.
	
    \myitem{(E8)}: \label{assm:phiinfholder} There exists $R_2$ such that $P_*(R_2(\mathbf{X})>0)>0$ and 
    $
        R_2(\mathbf{X})|t-s|\le \left|\phi_{a,\infty}(t|\mathbf{X})-\phi_{a,\infty}(s|\mathbf{X})\right|
    $
    for all $t,s,\in(\ell_a,u_a)$,
\end{enumerate}
\end{assumptionp}

\begin{assumptionp}{EC (cont.)}\label{assm:EC2}\phantom{blah}
\begin{enumerate}
    \myitem{(EC3)}: \label{assm:pi-bracketing} There exists $V\in[0,2)$ such that for all $\varepsilon  > 0$,
        $\sup_Q {\rm log}N(\varepsilon,\mathcal{F}_{\pi},L_2(Q)) \lesssim \varepsilon^{-V}$, $\sup_Q {\rm log}N(\varepsilon,\mathcal{F}_{\phi},L_2(Q)) \lesssim \varepsilon^{-V}$.
\end{enumerate}
\end{assumptionp}

The limit distribution of $\widehat p_{a,n}$ involves the \textit{invelope process} $H_k$ of the integrated Brownian motion with drift $Y_k$ introduced by \citet{Groeneboom:2001fp,Groeneboom:2001jo}, which we define now. We refer the readers to \eqref{integrgaussprocess}--\eqref{H-k:definition} (see the preceding paragraph of Lemma \ref{lem:finallemmaforCI} in the supplementary material) for the definition of the processes $H_k$ and $Y_k$.

The following theorem provides the pointwise joint limit distribution of the estimator.

\begin{theorem}\label{prop:CI}
If \ref{assm:consistency}--\ref{assm:logconcv}, \ref{assm:condition.grid}, \ref{assm:nuisanceconvergence}--\ref{assm:phiinfholder},
\ref{assm:basicGC}--\ref{assm:pi-bracketing}, and
\ref{assm:pregularity}--\ref{assm:pdiffwithk} hold for $a\in\{0,1\}$, and $\delta_n=O_p(n^{-(k+1)/(2k+1)})$, then
\begin{align}\label{jointfororiginal}
  \begin{pmatrix}
    n^{k/(2k+1)}(\widehat p_{1,n}(s_0)-p_1(s_0))\\
n^{(k-1)/(2k+1)}(\widehat p'_{1,n}(s_0)-p'_1(s_0)) \\  
    n^{k/(2k+1)}(\widehat p_{0,n}(s_0)-p_0(s_0))\\
n^{(k-1)/(2k+1)}(\widehat p'_{0,n}(s_0)-p'_0(s_0))
  \end{pmatrix}
  \rightarrow_d 
  \begin{pmatrix}
   c_k(s_0,\varphi_1) H_{1k}^{(2)}(0)\\
d_k(s_0,\varphi_1) H_{1k}^{(3)}(0) \\ 
   c_k(s_0,\varphi_0) H_{0k}^{(2)}(0)\\
d_k(s_0,\varphi_0) H_{0k}^{(3)}(0) \\ 
  \end{pmatrix},
  \end{align}
and
 \begin{align}\label{jointforlogscale}
   \begin{pmatrix}
     n^{k/(2k+1)}(\widehat\varphi_{1,n}(s_0)-\varphi_1(s_0))\\
n^{(k-1)/(2k+1)}(\widehat\varphi'_{1,n}(s_0)-\varphi'_1(s_0))\\   
n^{k/(2k+1)}(\widehat\varphi_{0,n}(s_0)-\varphi_0(s_0))\\
n^{(k-1)/(2k+1)}(\widehat\varphi'_{0,n}(s_0)-\varphi'_0(s_0))  
  \end{pmatrix}
  \rightarrow_d 
  \begin{pmatrix}
   C_k(s_0,\varphi_1) H_{1k}^{(2)}(0)\\
D_k(s_0,\varphi_1) H_{1k}^{(3)}(0)  \\ 
   C_k(s_0,\varphi_0) H_{0k}^{(2)}(0)\\
D_k(s_0,\varphi_0) H_{0k}^{(3)}(0)  \\ 
  \end{pmatrix},
 \end{align}
 where, $\widehat p'_{a,n},\widehat\varphi'_{a,n}$ are the left derivatives of $\widehat p_{a,n},\widehat\varphi_{a,n}$, $H_{1k}$ and $H_{0k}$ are independent copies of the lower invelope $H_k$ of the process $Y_k$, and $c_k,d_k,C_k,D_k$ are given by
 \begin{align}
c_k^{2k+1}(s_0,\varphi_a)&=\frac{|\varphi^{(k)}_a(s_0)|p_a(s_0)}{(k+2)!{\chi_{\theta_a}}^{-k}},\,C_k^{2k+1}(s_0,\varphi_a)=\frac{|\varphi^{(k)}_a(s_0)|p_a(s_0)^{-2k}}{(k+2)!{\chi_{\theta_a}}^{-k}},
\label{ck}\\
d_k^{2k+1}(s_0,\varphi_a)&=\frac{|\varphi^{(k)}_a(s_0)|^3p_a(s_0)^{3}}{[(k+2)!]^3{\chi_{\theta_a}}^{-(k-1)}},\, D_k^{2k+1}(s_0,\varphi_a)=\frac{|\varphi^{(k)}_a(s_0)|^3p_a(s_0)^{-2(k-1)}}{[(k+2)!]^3{\chi_{\theta_a}}^{-(k-1)}},\label{dk}
 \end{align}
for $a\in\{0,1\}$, where $\chi_{\theta_a}=\EE \Big[\frac{\pi_{a,*}(\mathbf{X})}{\pi^2_{a,\infty}(\mathbf{X})}\eta_{a,*}(s_0|\mathbf{X})\Big]$.
\end{theorem}
The proof of Theorem~\ref{prop:CI} is provided in Section \ref{subsec:CI.proof.section} of the supplementary material. To the best of our knowledge, Theorem \ref{prop:CI} is the first convergence in distribution result for a nonparametric estimator of the counterfactual density function. In addition, we expect that distributional results for other nonparametric estimators would be asymptotically biased unless undersmoothing or bias correction were utilized. Furthermore, Theorem \ref{prop:CI} is the first distributional result we are aware of for a log-concave density in the presence of nuisance function estimation, as well as the first doubly robust limit distribution for a counterfactual density estimator. Each limit distribution involves information about the curvature of the true counterfactual density function, $\varphi_a^{(k)}$ for $a\in\{0,1\}$.
A variety of methods can be used to estimate the unknown curvature  to create plug-in CI's \citep{gasser1984estimating,SBW10.inconsist.boot,Seijo.Sen.2011}, but as we discussed in the introduction, such methods require careful selection of tuning parameters. In Section \ref{subsec:CIconst}, we present methods for constructing CI's that do not depend on careful tuning parameter selection.

Theorem \ref{prop:CI} also yields asymptotic results for contrasts of the counterfactual density functions using the delta method. For example, under the stated conditions, we have
\begin{align*}
&n^{k/(2k+1)}[\{\widehat p_{1,n}(s_0)-p_1(s_0)\}-\{\widehat p_{0,n}(s_0)-p_0(s_0)\}] \rightarrow_d c_k(s_0,\varphi_1) H_{1k}^{(2)}(0)-c_k(s_0,\varphi_0) H_{0k}^{(2)}(0),\\
&n^{k/(2k+1)}[\log(\widehat p_{1,n}(s_0)/\widehat p_{1,n}(s_0))-\log(p_1(s_0)/p_0(s_0))]\rightarrow_d  C_k(s_0,\varphi_1) H_{1k}^{(2)}(0)-C_k(s_0,\varphi_0) H_{0k}^{(2)}(0).
\end{align*}
To the best of our knowledge, the previous displays are the first results establishing the distributional convergence for contrasts of counterfactual density functions.

We now discuss the additional conditions required by Theorem~\ref{prop:CI}. Assumption R requires that $p_a$ is $k$ times continuously differentiable in a neighborhood of $s_0$, where $k$ is the smallest even integer such that  $\varphi_a^{(k)}(s_0) < 0$ in a neighborhood of $s_0$. It is assumed that $k < \infty$, so that $p_a$ is not affine at $s_0$. The rate of convergence of $\widehat{p}_{a,n}$ is $n^{-k / (2k+1)}$, so that the closer $p_a$ is to affine at $s_0$, the closer the rate of convergence is to the parametric rate $n^{-1/2}$. The isotonization grid in \eqref{eq:grid.setting} satisfies $\delta_n=O_p(n^{-(k+1)/(2k+1)})$.

Assumption \ref{assm:L2conditionsforCI} requires that the product of the rates of convergence of $\widehat \phi_{a}$ and $\widehat\pi_a$ to their true counterparts is faster than the rate of convergence of $\widehat p_{a,n}(s_0)$, $n^{-k/(2k+1)}$. 
When $k=2$, on the set $\mc S_1 \cap \mc S_2$ where both nuisance functions are consistent, we only require the product of the rates of convergences is faster than $n^{-2/5}$.
In particular, \ref{assm:L2conditionsforCI} permits that one of the nuisance estimators is misspecified, in which case the other nuisance estimator must converge faster than $n^{-k/(2k+1)}$ to the truth. For instance, on the set $\mc S_1\cap S_2^c$, where only $\widehat\pi_a$ is consistent, the condition requires that $\widehat\pi_a$ converges faster than $n^{-k/(2k+1)}$. Hence, Theorem~\ref{prop:CI} is a doubly-robust convergence in distribution result.
It also suggests the possibility of constructing doubly-robust confidence intervals for $p_a(s)$ for each $s\in\RR$, which 
we will do 
in Section \ref{subsec:CIconst} (with simulations in Section~\ref{Section:Simul}).
Assumption \ref{assm:L2conditionsforCI} also requires a Lipschitz type of assumption on $\widehat \phi_{a} - \phi_{a,\infty}$, which is easily satisfied.
For example, when both $\widehat{\phi}$ and $\phi_{\infty}$ are Lipschitz, then the Lipschitz condition directly follows.
In addition, the class $\mc F_H$ defined in \eqref{F.H.class} can be another example of \ref{assm:L2conditionsforCI}, as we will discuss below.
Assumption~\ref{assm:phiestholder} requires that the functions in $\mc F_\phi$ are all H{\"o}lder in their first argument with common exponent greater than $1/2$. 
For example, this condition is met when the 
conditional density of $\phi(\cdot|\mathbf{x}) \in \mc F_\phi$ is uniformly bounded by $R_1(\mathbf{x})$, where $R_1\in L_2(P_*)$.
Assumption \ref{assm:phiinfholder} is an analogue of condition (ii) in \citet{WvdlC2020} (see Section 4.1 therein), and controls the variation in $\phi_{a,\infty}$ from below. 
This assumption requires $R_2$ to be positive on a set of $\mathbf{x}$ with non-zero measure.
Condition \ref{assm:phiinfholder} holds when $\phi_{a,\infty}(s|\mathbf{x})$ has first-order derivatives (with respect to $s$) on $(\ell_a,u_a)$ that are bounded away from zero on a non-null set of $\mathbf{x}$.
It excludes scenarios where $\phi_{a,\infty}(\cdot|\mathbf{x})$ has zero derivatives over an interval of $s$ for $P_*$-almost everywhere $\mathbf{x}$.
This condition ensures that the isotonized one-step counterfactual CDF estimator is uniformly asymptotically equivalent to the one-step CDF estimator in the supremum norm at the rate $n^{-(k+1)/(2k+1)}$ within a shrinking neighborhood of $s_0$.
Consequently, it confirms that the isotonic correction of the one-step CDF estimator has a negligible impact on the limit distribution of the log-concave MLE.

Condition \ref{assm:pi-bracketing} requires that $\widehat\pi_a$ and $\widehat\phi_a$ are contained in function classes with finite uniform entropy integral, which is used to control certain empirical process terms. For example, parametric classes and $p$-dimensional H\"{o}lder classes with smoothness exponent $\gamma$ satisfying $p/\gamma<2$ satisfy this condition. Section 2.6 of \cite{vdvandW} contains these and further examples.
We also provide the conditions under which the function class $\mc F_H$ introduced in \eqref{F.H.class} satisfies assumptions \ref{assm:phiestholder} and \ref{assm:pi-bracketing} in Lemma \ref{lemma: meanvar2} and the following paragraph (see Section \ref{FH.Wasserstein.proof} in the supplementary material).

\subsection{Construction of confidence intervals}\label{subsec:CIconst}

We now propose a confidence interval for $p_{a}(s_0)$ at a fixed point $s_0$. While Theorem~\ref{prop:CI} could be used to construct confidence intervals, doing so would require estimating the asymptotic constants $c_k$ or $C_k$ in addition to $\chi_{\theta_a}$. Since $c_k$ and $C_k$ depend on the $k$th derivative of $\varphi_a$, these constants are difficult to estimate, and such a plug-in approach may result in substantial under-coverage in moderate sample sizes. Instead, we adapt the methods proposed in \cite{deng2022inference} to our setting, which removes the need to estimate $c_k$ or $C_k$, but not the need to estimate $\chi_{\theta_a}$. 

We recall that $\widehat{p}_{a,n}=\psi^*(\widehat{F}^c_{a,n})$.
Recalling that $\mc S_n$ is the grid used to isotonize the one-step counterfactual CDF estimator, we define the set of knots of $\widehat{p}_{a,n}$ as
\begin{align}	
\widehat{\mc L}_{a,n}:=\left\{t\in \mc S_n: \widehat\varphi'_{a,n}(t-)>\widehat\varphi'_{a,n}(t+)\right\} \cup \left\{s_{a_n},s_{b_n}\right\},\label{define.knots}
\end{align}
where $a_n:=\min\{k:1\le k\le m_n,\,\widehat{ F}_{a,n}>0\}$ and $b_n:=\max\{k:1\le k \le m_n,\, \widehat{ F}_{a,n}<1\}$.
The set $\widehat{\mc L}_{a,n}$ is well-defined and has finite cardinality because  $\widehat\varphi_{a,n}$ is piecewise linear and $\widehat\varphi_{a,n}=-\infty$ on $\RR\backslash [s_{a_n},s_{b_n}]$, and the knots only appear in the ordered observations, which is a subset of $\mc S_n$ in our case (see \citealp{DR2009LC} for a detailed justification).
We then define the two adjacent knots to $s_0$ as
\begin{align}\label{knots}
\tau_n^{+}(s_0;a):=\inf\{t\in\widehat{\mc L}_{a,n}:t>s_0\},~{\rm and}~\tau_n^{-}(s_0;a):=\sup\{t\in\widehat{\mc L}_{a,n}:t<s_0\}.
\end{align}
We suppress the dependence of $\tau_n^{+}$ and $\tau_n^{-}$ on $s_0$ and $a$  for notational simplicity.

As in Theorem 2.4 of \cite{deng2022inference}, we define 
\begin{align}
 &\LL_k^{(0)}:= \left(h_{k;-}^*+h_{k;+}^* \right)^{1/2} H_k^{(2)}(0) \text{ and }\LL_k^{(1)}:= \left(h_{k;-}^*+h_{k;+}^*\right)^{3/2}H_k^{(3)}(0),\label{pivots}
\end{align} 
where $h_{k;-}^*$ and $h_{k;+}^*$ are the absolute values of the location of the first touch points of the pair $(H_k,Y_k)$ defined prior to Theorem \ref{prop:CI} to 0 from the left and right, respectively (see the paragraph preceding Lemma \ref{lem:finallemmaforCI} of the supplement for more details). Quantiles of the distributions of $\LL_k^{(0)}$ and $\LL_k^{(1)}$ and their absolute values for $k= 2$ are displayed in Tables 1 and 2 of \cite{deng2022inference}.
Using Theorem \ref{prop:CI} and the method proposed by \citet{deng2022inference}, we define  symmetric $(1-\alpha)$-level CI's for $p_a(s_0)$ and $p_a'(s_0)$ as follows:
\begin{align}
&\mc I_{a,n}^{(0)}(\alpha;s_0) := \left[\widehat p_{a,n}(s_0)\pm   \left( \widehat\chi_{\theta_a} / \{n (\Delta \tau_{a,n}) \} \right)^{1/2} c_\alpha^{(0)}\right],\label{CI.for.density}\\
&\mc I_{a,n}^{(1)}(\alpha;s_0) := \left[\widehat p'_{a,n}(s_0)\pm \left ( \widehat\chi_{\theta_a} /  \{ n(\Delta\tau_{a,n})^3 \} \right)^{1/2} c_\alpha^{(1)}\right],\label{CI.for.density.deriv}
\end{align}
where $c_\alpha^{(j)}$ are the $1-\alpha$ quantiles of the distribution of $|\LL_k^{(j)}|$  for $j=0,1$, $\Delta \tau_{a,n} := \tau_n^{+}(s_0;a)-\tau_n^{-}(s_0;a)$ is the distance between the nearest knots to $s_0$, and $\widehat\chi_{\theta_a}$ is an estimator of $\chi_{\theta_a}$. 
We suppress the dependence of $\Delta \tau_{a,n} $ on $s_0$ for notational simplicity.
We will discuss estimation of  $\chi_{\theta_a}$ in Section~\ref{subsec:tuning.param} below.

The following theorem shows that the  CI's  proposed in~\eqref{CI.for.density} and~\eqref{CI.for.density.deriv} have asymptotically valid coverage as long as $\widehat\chi_{\theta_a}$ is consistent.
\begin{theorem}\label{prop:CI-construction}
Under the same assumptions as in Theorem \ref{prop:CI}, 
\begin{align}\label{joint:CI-construction}
\begin{pmatrix}
\sqrt{n(\Delta\tau_{1,n})}\{\widehat p_{1,n}(s_0)-p_1(s_0)\}\\
\sqrt{n(\Delta\tau_{1,n})^3}\{\widehat p'_{1,n}(s_0)-p'_1(s_0)\}\\
\sqrt{n(\Delta\tau_{0,n})}\{\widehat p_{0,n}(s_0)-p_0(s_0)\}\\
\sqrt{n(\Delta\tau_{0,n})^3}\{\widehat p'_{0,n}(s_0)-p'_0(s_0)\}\\
\end{pmatrix}
\rightarrow_d 
\begin{pmatrix}
-\sqrt{\chi_{\theta_1}}\LL_{1k}^{(0)}\\
-\sqrt{\chi_{\theta_1}}\LL_{1k}^{(1)}\\
-\sqrt{\chi_{\theta_0}}\LL_{0k}^{(0)}\\
-\sqrt{\chi_{\theta_0}}\LL_{0k}^{(1)}\\
\end{pmatrix},
\end{align}
where the processes $\LL_{1k}^{(i)}$ and $\LL_{0k}^{(i)}$ are independent copies of $\LL_{k}^{(i)}$ for $i=0,1$ defined in \eqref{pivots}.
Hence, if  $\widehat\chi_{\theta_a} \to_p \chi_{\theta_a}$, then for any $\alpha\in (0,1)$, and $a\in\{0,1\}$.
\begin{align*}
\lim_{n\rightarrow \infty}P_*(p_a(s_0)\in \mc I_{a,n}^{(0)}(\alpha;s_0))=\lim_{n\rightarrow \infty}P_*(p_a'(s_0)\in \mc I_{a,n}^{(1)}(\alpha;s_0))=1-\alpha.
\end{align*}
\end{theorem}

The proof of Theorem~\ref{prop:CI-construction} is provided in Section \ref{subsec:proof.CI.constr} of the supplementary material.
As noted above, our CI's are preferable to direct plug-in CI's based on Theorem \ref{prop:CI} because they do not require estimation of higher derivatives of $p_a$. The distance between the left and right knots adjacent to $s_0$ is used to standardize the distribution of $\widehat p_{a,n}(s_0)$ and $\widehat p_{a,n}'(s_0)$ instead. However, our CI's still require estimating $\chi_{\theta_a}$, which is the subject of the next section.

Theorem \ref{prop:CI-construction} can also be used to construct asymptotically valid CI's for contrasts of the counterfactual density functions; we get the following distributional approximation:
\begin{align*}
\left\{ \widehat p_{1,n}(s_0)-p_1(s_0)\right\} - \left\{\widehat p_{0,n}(s_0)-p_0(s_0)\right\}  \asymp_d -\sqrt{\frac{\hat \chi_{\theta_1}}{n(\Delta\tau_{1,n})}}\LL_{1k}^{(0)} +\sqrt{\frac{\hat \chi_{\theta_0}}{n(\Delta\tau_{0,n})}}\LL_{0k}^{(0)}.
\end{align*}
A symmetric $(1-\alpha)$-level CI for the density difference $p_1(s_0)-p_0(s_0)$ is then given by
\begin{align}\label{def:contrast.CI}
\mathcal{I}_{{\rm diff},\alpha,n}(\alpha;s_0):=\left\{\widehat p_{1,n}(s_0)-\widehat p_{0,n}(s_0)\right\}\pm c_{{\rm diff},\alpha,n}(s_0),
\end{align}
where $c_{{\rm diff},\alpha,n}(s_0)$ is the $1-\alpha$ quantile of $|\sqrt{\hat \chi_{\theta_0}/(n\Delta\tau_{0,n})}\LL_{0k}^{(0)}-\sqrt{\hat \chi_{\theta_1}/(n\Delta\tau_{1,n})}\LL_{1k}^{(0)}|$, and $v\pm w$ denotes the interval $[v-w,v+w]$ for $v,w\in\RR$.

Recalling that $\widehat\varphi_{a,n}=\log(\widehat p_{a,n})$, for $a\in\{0,1\}$,
Theorem \ref{prop:CI-construction}  also yields
\begin{align*}
\left\{\widehat\varphi_{1,n}(s_0)-\varphi_1(s_0)\right\}-\left\{\widehat\varphi_{0,n}(s_0)-\varphi_0(s_0))\right\} &\asymp_d \hat\alpha_0/\sqrt{n(\Delta\tau_{0,n})} \LL_{0k}^{(0)}- \hat\alpha_1/\sqrt{n(\Delta\tau_{1,n})} \LL_{1k}^{(0)},
\end{align*}
where $\hat\alpha_a:=\sqrt{\hat\chi_{\theta_a}}/\widehat p_{a,n}(s_0)$, for $a\in\{0,1\}$.
A symmetric $(1-\alpha)$-level CI for the log density ratio $\log(p_1(s_0)/p_0(s_0))$ is given by
\begin{align}\label{def:ratio.CI}
\mathcal{I}_{{\rm ratio},\alpha,n}(\alpha;s_0):=\log\left(\widehat p_{1,n}(s_0)/\widehat p_{0,n}(s_0)\right)\pm c_{{\rm ratio},\alpha,n}(s_0),
\end{align}
where $c_{{\rm ratio},\alpha,n}(s_0)$ is the $1-\alpha$ quantile of $|\hat\alpha_0/\sqrt{n(\Delta\tau_{0,n})}\LL_{0k}^{(0)}-\hat\alpha_1/\sqrt{n(\Delta\tau_{1,n})}\LL_{1k}^{(0)}|$. This CI can be exponentiated to provide a CI for the density ratio.
These two CI's, \eqref{def:contrast.CI}--\eqref{def:ratio.CI}, provide methods for identifying differences between the counterfactual densities.


\subsubsection{Doubly robust estimation of $\chi_{\theta_a}$}\label{subsec:tuning.param}

We now provide a doubly-robust estimator of $\chi_{\theta_a}$.  We define the limiting uncentered influence function of $F_a(s)$ as 
\begin{align}
   D_{a,\theta_{a,\infty}}(s)
    &={I(A=a) \over \pi_{a,\infty}(\mathbf{X})}\left[I(Y\le s) -\phi_{a,\infty}(s|\mathbf{X}) \right]+\phi_{a,\infty}(s|\mathbf{X}) ,\label{term:Dthetainfty}
\end{align}
with ${\theta}_{a,\infty}=({\phi}_{a,\infty},{\pi}_{a,\infty})$. 
We also define the estimated influence function as  $D_{a,\widehat\theta_a}$ for $\widehat{\theta}_a=(\widehat{\phi}_a,\widehat{\pi}_a)$, and we note that  $\widehat F_{a,n}(s) = \PP_n D_{a,\widehat\theta_a}(s)$ by \eqref{form:onestepest}.  
We then suggest
\begin{equation}\label{chi.theta.hat}
	\hat\chi_{\theta_{a},n} := \frac{\PP_n\left\{ D_{a,{\widehat\theta_{a}}}(s_0+h_n)-D_{a,{\widehat\theta_{a}}}(s_0) \right\}^2+\PP_n\left\{D_{a,{\widehat\theta_{a}}}(s_0-h_n)-D_{a,{\widehat\theta_{a}}}(s_0)\right\}^2}{2h_n},
\end{equation}
an estimator of $\chi_{\theta_a}$, where $h_n=n^{-b}$ for some $b > 0$ is a tuning parameter. We show in our simulations that the CI's are robust to the choice of this tuning parameter. We recommend setting $h_n = n^{-1/10}$ for simplicity. The following lemma, whose proof is given in Section \ref{subsec:tuning.proof} of the supplementary material,  demonstrates that $\hat\chi_{\theta_a,n}$ is a consistent estimator of $\chi_{\theta_a}$.
\begin{lemma}\label{lemma:tuning.consistency}
If the assumptions of Theorem \ref{prop:CI} hold, $P_* R_1^4 < \infty$, and $h_n^{-1}=O(n^{1/2})$, then $\hat\chi_{\theta_a,n}\rightarrow_p\chi_{\theta_a}$.
\end{lemma}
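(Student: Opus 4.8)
The plan is to show that $\hat\chi_{\theta_a,n}$ converges to $\chi_{\theta_a}$ by first replacing the empirical measure $\PP_n$ with $P_*$ in \eqref{chi.theta.hat}, then replacing the estimated nuisance $\widehat\theta_a$ with its limit $\theta_{a,\infty}$, and finally taking the limit $h_n \to 0$ in the resulting deterministic expression. Write $\hat\chi_{\theta_a,n} = \{A_n^{+} + A_n^{-}\}/(2h_n)$ with $A_n^{\pm} := \PP_n\{D_{a,\widehat\theta_a}(s_0 \pm h_n) - D_{a,\widehat\theta_a}(s_0)\}^2$; by symmetry it suffices to handle $A_n^{+}$. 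First I would decompose
\begin{align*}
\frac{A_n^{+}}{2h_n} &= \frac{(\PP_n - P_*)\{D_{a,\widehat\theta_a}(s_0+h_n) - D_{a,\widehat\theta_a}(s_0)\}^2}{2h_n}\\
&\quad + \frac{P_*\{D_{a,\widehat\theta_a}(s_0+h_n) - D_{a,\widehat\theta_a}(s_0)\}^2 - P_*\{D_{a,\theta_{a,\infty}}(s_0+h_n) - D_{a,\theta_{a,\infty}}(s_0)\}^2}{2h_n}\\
&\quad + \frac{P_*\{D_{a,\theta_{a,\infty}}(s_0+h_n) - D_{a,\theta_{a,\infty}}(s_0)\}^2}{2h_n}.
\end{align*}
For the third (deterministic) term, I would expand $D_{a,\theta_{a,\infty}}(s_0+h_n) - D_{a,\theta_{a,\infty}}(s_0)$ using \eqref{term:Dthetainfty}: it equals $\frac{I(A=a)}{\pi_{a,\infty}(\mathbf X)}[I(s_0 < Y \le s_0+h_n) - \{\phi_{a,\infty}(s_0+h_n|\mathbf X) - \phi_{a,\infty}(s_0|\mathbf X)\}] + \{\phi_{a,\infty}(s_0+h_n|\mathbf X) - \phi_{a,\infty}(s_0|\mathbf X)\}$. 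Squaring and taking $P_*$-expectation, the dominant contribution as $h_n \to 0$ comes from the indicator term: the cross terms and the $\phi_{a,\infty}$-difference-squared terms are $O(h_n^2)$ by the Lipschitz bound \ref{assm:phiregularity} (with $R \in L_2(P_*)$), while $P_*[\frac{I(A=a)}{\pi_{a,\infty}^2(\mathbf X)} I(s_0 < Y \le s_0+h_n)] = \EE[\frac{\pi_{a,*}(\mathbf X)}{\pi_{a,\infty}^2(\mathbf X)} \{\phi_{a,*}(s_0+h_n|\mathbf X) - \phi_{a,*}(s_0|\mathbf X)\}]$ by iterated expectations and \ref{assm:noumconf}/\ref{assm:consistency}. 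Dividing by $2h_n$ and using continuity of $p_a$ near $s_0$ (Assumption \ref{assm:pregularity}) together with dominated convergence (the integrand is bounded by $K^2 R(\mathbf X) \in L_1$), this term converges to $\EE[\frac{\pi_{a,*}(\mathbf X)}{\pi_{a,\infty}^2(\mathbf X)} \eta_{a,*}(s_0|\mathbf X)] = \chi_{\theta_a}$.

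For the second term I would write the difference of squares as a product $\{D_{a,\widehat\theta_a}(\cdot)-D_{a,\theta_{a,\infty}}(\cdot)\}\{D_{a,\widehat\theta_a}(\cdot)+D_{a,\theta_{a,\infty}}(\cdot)\}$ evaluated at the increments, apply Cauchy--Schwarz, and bound each $L_2(P_*)$ factor. The increment of $D_{a,\widehat\theta_a} - D_{a,\theta_{a,\infty}}$ over $[s_0,s_0+h_n]$ is controlled, after expanding via \eqref{term:Dthetainfty}, by terms of the form $\|\widehat\pi_a - \pi_{a,\infty}\|$ times an indicator increment (hence $\lesssim h_n^{1/2}$ in $L_2$ after accounting for the probability $\asymp h_n$) and by $\mc G(s_0, s_0+h_n; \mathbf Z; \cdot)$-type quantities, which by the Lipschitz/H\"older assumptions \ref{assm:phiestholder} and the structure of \ref{assm:L2conditionsforCI} are $\lesssim h_n$ in sup-norm; the increment of $D_{a,\widehat\theta_a} + D_{a,\theta_{a,\infty}}$ over $[s_0,s_0+h_n]$ is $O_p(h_n^{1/2})$ in $L_2(P_*)$ by the same indicator-increment bound. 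Dividing by $h_n$ and using $\|\widehat\pi_a - \pi_{a,\infty}\| = o_p(1)$ (condition \ref{assm:nuisanceconvergence}, or the refined \ref{assm:L2conditionsforCI} on the relevant partition sets), the second term is $o_p(1)$; here the requirement $h_n^{-1} = O(n^{1/2})$ is used to ensure that the $h_n^{1/2}\cdot h_n^{1/2}/h_n$-type ratios do not blow up and that the nuisance rate $o_p(n^{-k/(2k+1)})$ available on $\mc S_1 \cap \mc S_2^c$ dominates $h_n^{1/2}$.

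For the first (empirical process) term, I would argue it is $o_p(1)$ by a maximal inequality. The function $\mathbf z \mapsto \{D_{a,\widehat\theta_a}(s_0+h_n)(\mathbf z) - D_{a,\widehat\theta_a}(s_0)(\mathbf z)\}^2/(2h_n)$ lies in a class indexed by $\widehat\theta_a \in \mc F_\pi \times \mc F_\phi$ and by $h_n$; its $L_2(P_*)$ norm is $O(h_n^{-1/2})$ (the fourth moment of the increment is $O(h_n)$, using $P_* R_1^4 < \infty$ to bound the $\phi$-increment contributions to the fourth moment), so $(\PP_n - P_*)$ applied to it is, by the entropy bound \ref{assm:pi-bracketing} on $\mc F_\pi$ and $\mc F_\phi$, of order $n^{-1/2} h_n^{-1/2}$ up to entropy factors, which is $o_p(1)$ precisely because $h_n^{-1} = O(n^{1/2})$ forces $n^{-1/2} h_n^{-1/2} = O(h_n^{1/2} \cdot n^{-1/2} h_n^{-1}) = o(1)$; more carefully one checks $n^{-1/2} h_n^{-1/2} \to 0$ directly from $h_n = n^{-b}$ with $b < 1$. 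Collecting the three pieces gives $A_n^{+}/(2h_n) \to_p \tfrac12 \chi_{\theta_a}$, and symmetrically for $A_n^{-}$, so $\hat\chi_{\theta_a,n} \to_p \chi_{\theta_a}$.

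The main obstacle I anticipate is the second term: correctly tracking how the doubly-robust structure of \ref{assm:L2conditionsforCI} (which partitions $\mc X \times \mc A$ into the sets $\mc S_1 \cap \mc S_2$, $\mc S_1 \cap \mc S_2^c$, $\mc S_1^c \cap \mc S_2$ where different rates hold) interacts with the $1/h_n$ scaling, so that the product of nuisance errors with the $O(h_n^{1/2})$ increment, divided by $h_n$, is genuinely $o_p(1)$ on every piece of the partition. On $\mc S_1 \cap \mc S_2$ both nuisances are correctly specified and the cross term vanishes identically; on $\mc S_1 \cap \mc S_2^c$ one needs the $o_p(n^{-k/(2k+1)})$ rate for $\widehat\pi_a$ to beat $h_n^{1/2}$; on $\mc S_1^c \cap \mc S_2$ one needs that the $\widehat\phi_a$-increment error $\mc G$ is $o_p(1)$ while $\widehat\pi_a$'s error stays bounded, and the indicator increment still supplies the saving factor $h_n^{1/2}$. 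Verifying these case-by-case estimates, and the fourth-moment bound needed for the empirical-process term, is the technical heart; everything else is routine expansion and dominated convergence.
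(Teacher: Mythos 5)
Your proposal is correct in outline and follows essentially the same route as the paper's proof: the same three-part decomposition (your deterministic, nuisance-replacement, and empirical-process terms correspond exactly to the paper's claims \eqref{WTS.chi.1}, \eqref{WTS.chi.2}, and \eqref{WTS.chi.3}), the same identification of the limit via the indicator/covariance computation, the same use of the doubly robust rate conditions in Assumption \ref{assm:L2conditionsforCI} for the nuisance-replacement term, and the same use of the entropy condition \ref{assm:pi-bracketing} together with the $R_1$-envelope (this is where $P_*R_1^4<\infty$ enters) and $h_n^{-1}=O(\sqrt{n})$ for the empirical-process term. One minor slip worth noting: on $\mathcal{S}_1\cap\mathcal{S}_2$ the cross term does not ``vanish identically'' (correct specification means $\pi_{a,\infty}=\pi_{a,*}$ and $\phi_{a,\infty}=\phi_{a,*}$, not $\widehat\pi_a=\pi_{a,\infty}$ or $\widehat\phi_a=\phi_{a,\infty}$); it is instead controlled by the product-rate condition $M_1$ in \ref{assm:L2conditionsforCI}, exactly as in the paper's remainder decomposition, so this is a repairable imprecision rather than a gap (likewise the cross terms in your deterministic piece are $O(h_n^{3/2})$ by Cauchy--Schwarz, not $O(h_n^2)$, which still suffices).
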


\subsection{Construction of confidence bands}\label{subsec:band.const}
In this section, we propose an approach for constructing confidence bands for the counterfactual density function $p_a$ over the real line, leveraging recent advancements in log-concave density estimation. Our method builds upon the framework developed by \citet{walther2022confidence}, which provides simultaneous confidence bounds for a log-concave density with finite-sample guarantees at the $1-\alpha$ confidence level. Their approach relies on the use of observed order statistics. To extend this method to the causal setting, we utilize order statistics generated from the estimated log-concave counterfactual distribution function, denoted as $\widehat F^{-1}_{a,n}(U_{(i)})$, where $U_{(i)}$ is the $i$th order statistic obtained from an independent sample of size $n$ drawn from a uniform distribution on $(0,1)$, for $i=1,\ldots,n$.
Subsequently, the upper and lower simultaneous confidence bounds $(\hat\ell_{a,n},\hat u_{a,n})$ for $\log(p_a)$ can be obtained by applying Algorithm 1 of \citet{walther2022confidence} for each $a\in\{0,1\}$.
We leave the theoretical validation and further exploration of these bands through empirical studies for future research.

\section{Simulation study}\label{Section:Simul}

In this section, we conduct numerical experiments to assess our proposed estimator's performance.  For a given sample size $n$, we simulate data for each $i=1,\ldots,n$ in the following steps.
First,  we generate $\mathbf{X}_i=(X_{i1},\ldots,X_{i4})$, where $X_{i1},\ldots,X_{i4}$ are i.i.d.\ from $U[0,1]$, i.e., the continuous uniform distribution on $[0,1]$. 
Given $\mathbf{X}_i=\mathbf{x}_i$, we sample $A_i$ from a Bernoulli distribution with probability $p_i=\exp{(v_i)}/[1+\exp{(v_i)}]$ for $v_i=-1.5+0.25x_{i1}+0.5x_{i2}+0.75x_{i3}+x_{i4}$.
Finally, given $\mathbf{X}_i=\mathbf{x}_i$ and $A_i=a_i$, we generate $Y_i$ from $U[(8-4a_i)+(2a_i-2)s_i, (8-4a_i)+2a_is_i]$, where $s_i=\sum_{j=1}^4 x_{ij}$.  
The closed-form  expression of the marginal density $p_a$ of $Y^a$ is given in Section \ref{section:pas}
of the supplementary material, and  $p_0$ and $p_1$ are displayed in Figure~\ref{p1.and.p0}. The means of $p_0$ and $p_1$ are both equal to 6 and the variances are both equal to $16/9$, but the shapes of $p_0$ and $p_1$ are very different. 

For each $n\in\{500,1000,2500,4000,6000,8000\}$, we simulate $1000$ datasets using the above method. For each dataset, we estimate the counterfactual densities using our proposed estimator $\widehat p_a$ and the basis expansion method of \citet{KBWcounterfactualdensity}. We attempted to compare our estimator to \citet{kim2018causal} as well, but the estimator may be negative and so requires truncation of negative values to zero and renormalization, and we experienced numerical instability in this computation. Hence, we omitted this method from our comparisons. In general, we expect that many of the strengths and weaknesses of kernel and shape-constrained density estimators are likely to carry over to the causal setting.  We also compare to the log-concave MLE without covariate adjustment. 
We call this the ``naive log-concave MLE.''

To assess the double-robustness of the estimators, we consider three settings as follows: both $\widehat\pi_{a}$ and $\widehat\phi_{a}$ are well-specified (Case 1); only $\widehat\pi_{a}$ is well-specified (Case 2); only $\widehat\phi_{a}$ is well-specified (Case 3).
The details for estimating the nuisance functions can be found in Section \ref{subsec:est.nuisances} of the supplementary material. 
We also include the  sample splitting version of our estimator studied in Section \ref{section:sample.splitting} of the supplementary material with $K_0 = 5$ folds.

For the basis method of \citet{KBWcounterfactualdensity}, we use their one-step projection estimator with a cosine basis series.   
We select this tuning parameter in an oracle fashion.
Table \ref{table:basis.number} in Section \ref{appd:basis.table} of the supplementary material contains the oracle number of basis functions selected for each case and sample size. 
The details for the projection estimator are provided in Section \ref{subsec:est.KBW} of the supplementary material.

We measure each estimator's performance using the average $L_1$ distance between the estimated density and the truth over the 1000 replications for each sample size. 
We experienced some numerical instability when computing the $L_1$ distance of the basis expansion method, especially when the number of basis functions was more than 25.
In our reported $L_1$ averages, we dropped the instances where we could not compute the  $L_1$ distance. We also compare the empirical coverage of 95\% CI's based on each estimator.
We construct CI's for our estimators using the procedures described in Sections~\ref{subsec:CIconst} and~\ref{subsec:CI-split} of the supplement. We use $h_n = n^{-1/10}$ for the tuning parameter in the estimator of $\chi_{\theta_a}$. We construct 95\% CI's for the naive log-concave estimator using the  procedure of \citet{deng2022inference}. For the basis expansion method, we construct CI's using the \texttt{npcausal} package \citep{npcausal}.

\begin{figure}[ht!]
    \centering
     \begin{subfigure}[$L_1$ error ($p_1$, Case 1)]{\label{Hell.for.p1.c1}\includegraphics[width=40mm]{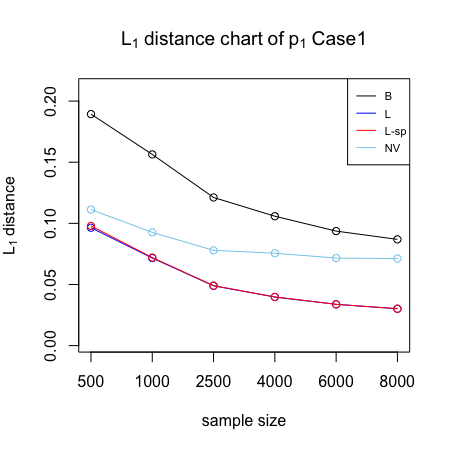}}
    \end{subfigure}
    \hspace{0.2cm}
    \begin{subfigure}[$L_1$ error ($p_1$, Case 2)]{\label{Hell.for.p1.c2}\includegraphics[width=40mm]{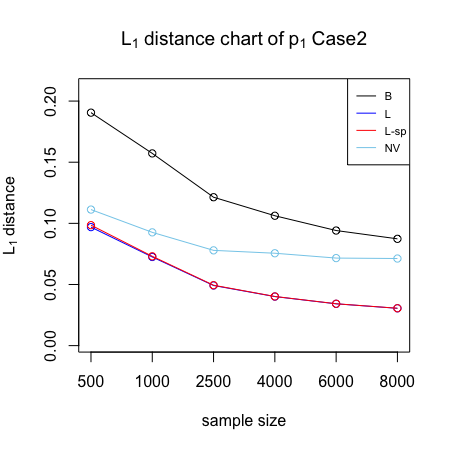}}
    \end{subfigure}
    \hspace{0.2cm}
    \begin{subfigure}[$L_1$ error ($p_1$, Case 3)]{\label{Hell.for.p1.c3}\includegraphics[width=40mm]{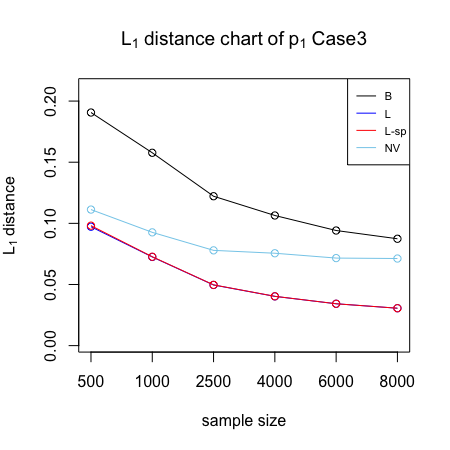}}
    \end{subfigure}\\
   \vspace{0.2cm}
    \begin{subfigure}[CI coverage probability ($p_1$, Case 1, $n=8000$)]
    	{\label{8000P1C1}\includegraphics[width=50mm]{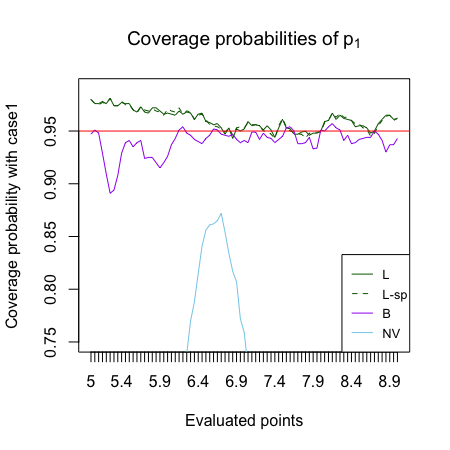}}
    \end{subfigure}
   \hspace{0.2cm}
   \begin{subfigure}[Average CI width ($p_1$, Case 1, $n=8000$)]
   	{\label{8000P1C1w}\includegraphics[width=50mm]{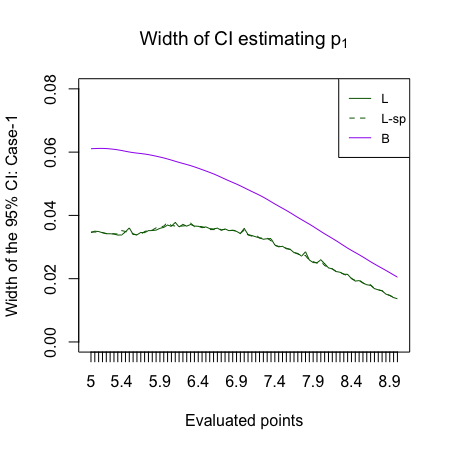}}
   \end{subfigure}\\
   \vspace{0.2cm}
       \begin{subfigure}[True $p_1$ and $p_0$]{\label{p1.and.p0}\includegraphics[width=50mm]{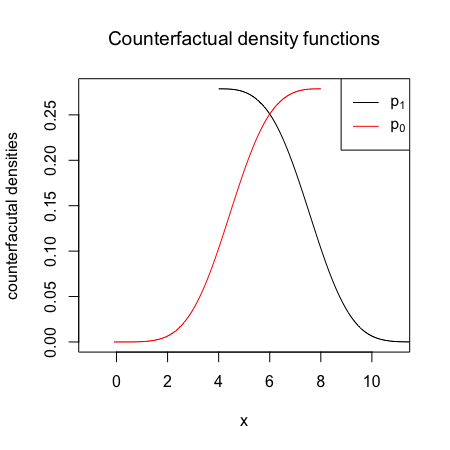}}
   \end{subfigure}
   \hspace{0.2cm}
    \begin{subfigure}[Density estimates ($p_1$, Case 1, $n=4000$)]{\label{pilot}\includegraphics[width=50mm]{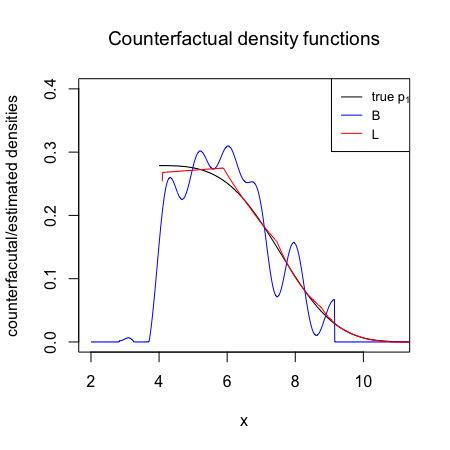}}
   \end{subfigure}
    \caption{\label{fig:Hellinger.linecharts} (a)--(c) Average $L_1$ distance between estimators and true $p_1$. 
    In Cases 1, 2, and 3, both nuisance functions, only the propensity score, and only the conditional CDF are well-specified, respectively.
    (d) Empirical coverage probabilities of 95\% CI's. 
    (e) The corresponding widths for each CI from (d). The lines for L and L-sp cannot be visually distinguished for (a)--(e).
    (f) The true counterfactual density functions used for the simulations.
    ``B", ``L", ``L-sp", and ``NV" stand for the basis expansion, log-concave, log-concave with sample splitting, and naive log-concave estimators, respectively. 
    (g) The true density $p_1$ and the density estimates for $p_1$ by L and B from a single simulation based on $n=4000$ in Case 1.
 }
\end{figure}

Figures \ref{Hell.for.p1.c1}--\ref{Hell.for.p1.c3} display the average $L_1$ distances of the estimators of $p_1$ as a function of $n$ in the three nuisance estimation scenarios. The results for $p_0$ are very similar, and can be found in the supplementary material. 
Our proposed log-concave estimator consistently had the smallest average $L_1$ distance of the three methods for all $n$ values.
The average $L_1$ distance decreased as a function of $n$ for our estimator and the basis expansion estimator, but not for the naive log-concave estimator, which was expected because the counterfactual and marginal densities are different due to confounding.
The average $L_1$ distances for our method with and without sample splitting were very similar. We expect the difference to be more substantial if more complicated nuisance estimators were used.
The average $L_1$ distances were similar across the three nuisance estimator specifications, validating the double-robustness of the log-concave and basis expansion methods.

Figures \ref{8000P1C1}--\ref{8000P1C1w} display the coverage and average width of 95\% CI's for $p_1$ for $n = 8000$ in the well-specified nuisance scenario (Case 1). Figures \ref{fig:kplot2s.case.1.p1}--\ref{fig:kplot2s.case.3.p0} and \ref{fig:wplots.case.1.p1}--\ref{fig:wplots.case.3.p0} in the supplementary material display the results for other cases, $p_0$, and other sample sizes.
Our CI's exhibited some undercoverage where the true density is close to zero when the sample size was smaller (i.e., $n=500,1000$), and exhibited overcoverage where the true density is large.  
\citet{deng2022inference} observed similar phenomena in the non causal setting. Sample splitting did not meaningfully change coverage in this simulation design.
The basis expansion method had substantial fluctuation in the coverage from point to point, including undercoverage for some points even with $n = 8000$.
This is because the basis expansion method is centered around an approximation to the density rather than around the true density, and this bias interferes with constructing CI's with valid coverage.
The average widths of 95\% CI's for our method were also smaller than those of the basis expansion method in all sample sizes and cases.
Furthermore, the average widths of the 95\% CI's for our methods were nearly identical across Cases 1 through 3.
Figure \ref{pilot} displays a single simulation result in Case 1 for estimating $p_1$ with $n=4000$.
The number of basis function was 18, which was selected because it minimized average $L_1$ distance as described above.
As opposed to our estimator, the basis expansion method suffers from boundary issues as well as instability across the domain, which might be exacerbated by the bounded support.
Finally, the naive log-concave did not achieve nominal coverage in any case because it is inconsistent, demonstrating again that covariate adjustment in the causal setting is essential for valid estimation and inference. 

Figures \ref{fig:contplots.ww}--\ref{fig:contplots.mw} illustrate the empirical coverage rates of symmetric 95\% CI's for the difference $p_1-p_0$, and Figures \ref{fig:contplot2s.ww}--\ref{fig:contplot2s.mw} display the empirical coverage rates of symmetric 95\% CI's for the log-ratio $\log(p_1/p_0)$.
The exact procedure for computing quantiles of the reference distributions is provided in Section \ref{subsec:quantile.generation} of the supplementary material.
For $n=500,\,1000$ the average coverage for both diference and log-ratio CI's was around 98\%, indicating a notable degree of conservatism.
As the sample size increases to $n=6000,\,8000$, the coverage decreases to approximately 97\% yet still remains conservative. 
Hence, these CI's were roughly twice as conservative as the individual density CI's.



\spacingset{2} 

\clearpage

\title{Supplementary material for ``Doubly robust estimation and inference for a log-concave counterfactual density"}
\maketitle

\appendix

\section{Sample splitting}\label{section:sample.splitting}

In Theorems \ref{prop:Consistency}, \ref{prop:CI}, and \ref{prop:CI-construction}, we required that the nuisance estimators $\widehat\pi_a$ and $\widehat \phi_a$ fall into classes of functions that satisfy complexity conditions in order to control empirical process terms. In particular, Assumption~\ref{assm:EC1} used for Theorem~\ref{prop:Consistency} required that $\widehat\pi_a$ and certain transformations of  $\widehat \phi_a$ fall in to $P_*$- Glivenko-Cantelli classes, and Assumption~\ref{assm:pi-bracketing} used for Theorems \ref{prop:CI} and \ref{prop:CI-construction} required that $\widehat\pi_a$ and $\widehat \phi_a$ fall into classes that satisfy uniform entropy bounds. The \textit{sample splitting} (also known as \textit{cross-fitting} or \textit{double machine learning}) approach has been shown to avoid such complexity constraints, which yields improved performance in high-complexity regimes \citep{van2011cross,Cherno.double.debiased,belloni2018uniformly,kennedy2019nonparametric}. In this section, we propose a counterfactual density estimator based on sample splitting, and we provide analogues of Theorems \ref{prop:Consistency}, \ref{prop:CI}, and \ref{prop:CI-construction} demonstrating that the asymptotic behavior of this estimator does not rely on complexity constraints on $\widehat\pi_a$ and $\widehat \phi_a$.

We assuming that $N := n/K_0$ is an integer for an integer $K_0\ge 2$ for convenience. We randomly partition the  indices $\{1,\ldots,n\}$ into $\mc V_{n,1},\ldots,\mc V_{n,K}$ where the cardinality of $\mc V_{n,k}$ satisfies $|\mc V_{n,k}|=N$ for each $k=1,\ldots,K_0$.  For our proofs of Theorems~\ref{prop:Cons-split}--\ref{prop:CI-construction-split} below, we require the number of folds satisfies $K_0=O_p(1)$.
For each $k\in\{1,\ldots,K_0\}$, we denote 
$\mc T_{n,k} := \{1,\ldots, n\} \setminus \mc V_{n,k} $ 
as the indices of the training set for the $k$-th fold, and we assume that the nuisance estimators $\widehat\theta_{a,-k}=\{\widehat \pi_{a,-k},\, \widehat\phi_{a,-k}\}$ for the $k$-th fold are functions of the the training set $\{ \mathbf{Z}_i : i \in \mc T_{n,k}\}$.
We then define the cross fitted one-step estimator $\widehat F_{a,n}^{K_0}$ as
\begin{align}\label{crossfitted.onestep}
\widehat F^{K_0}_{a,n}(s)=\frac{1}{K_0}\sum_{k=1}^{K_0}\frac{1}{N}\sum_{i\in\mc V_{n,k}} \left\{\frac{I(A_i=a)}{\widehat\pi_{a,-k}(\mathbf{X}_i)}\left[I(Y_i\le s)-\widehat\phi_{a,-k}(s|\mathbf{X}_i)\right] +\widehat\phi_{a,-k}(s|\mathbf{X}_i)\right\}, 
\end{align}
for each $s\in\RR$.
We then apply Steps \ref{Step3}--\ref{Step5} with $\widehat F^{K_0}_{a,n}$ in place of $\widehat F_{a,n}$ to arrive at our cross-fitted log-concave counterfactual density estimator $\widehat p^{K_0}_{a,n}$.

\subsection{Consistency}

We now provide an analogue of the consistency result Theorem \ref{prop:Consistency} for the sample splitting estimator. We begin by stating conditions we will require.
\begin{assumptionp}{E$'$}\label{assm:E.split} \phantom{blah}
There exist functions $\pi_{a,\infty},\phi_{a,\infty}$ such that:
\begin{enumerate}
    \myitem{(E0$'$)}: \label{assm:samplesplit} The estimators $\widehat\pi_{a,-k}$ and $\widehat\phi_{a,-k}$ are obtained from sample splitting for each $k\in\{1,\ldots,K_0\}$.
        \myitem{(E1$'$)}: \label{assm:nuisanceconvergence.split} For $a \in \{0,1\}$, the estimated nuisance functions $\widehat{\pi}_{a,-k},~\widehat\phi_{a,-k}$ satisfy
    \begin{gather*}
        \max_{1\le k\le K_0}P_*\left[ \widehat{\pi}_{a,-k}(\mathbf{X})-\pi_{a,\infty}(\mathbf{X}) \right]^2\rightarrow_p 0,\\
        \max_{1\le k\le K_0} P_*\left[\int_{-\infty}^{\infty}\left|\widehat\phi_{a,-k}(s|\mathbf{X})-{\phi}_{a,\infty}(s|\mathbf{X}))\right| \, ds\right]^2\rightarrow_p 0.
    \end{gather*}
        \myitem{(E2$'$)}: \label{assm:boundedpropscore.split} There exists $K \in (0, \infty)$ such that $\|1/{\pi_{a,\infty}}\|_{\infty},\| 1/{\widehat\pi_{a,-k}}\|_{\infty}\le K$ a.s.\ for all $k\in\{1,\ldots,K_0\}$.
        \myitem{(E3$'$)}: \label{assm:properCDF.split} $s \mapsto \widehat\phi_{a,-k}(s|\mathbf{X})$ and $s \mapsto \phi_{a,\infty}(s|\mathbf{X})$ are a.s.\ proper conditional CDFs for all $k\in\{1,\ldots,K_0\}$. And, there exists $h \in L_2(P_*)$ such that $\int_{\mathbb{R}}|s| \, d\widehat\phi_{a,-k}(s|\mathbf{x})\leq h(\mathbf{x})$ $P_*$-a.e.\ $\mathbf{x}$, for all $k\in\{1,\ldots,K_0\}$ and all $n>N_0$ for sufficiently large $N_0>0$.
\end{enumerate}   
\end{assumptionp}

We have the following consistency result for the sample splitting estimator, which is an analogue of Theorem \ref{prop:Consistency}.  Our proof is provided in Appendix \ref{subsec:proof.cons.split}.
 
\begin{theorem}\label{prop:Cons-split}
If conditions~\ref{assm:consistency}--\ref{assm:logconcv}, 
\ref{assm:condition.grid}--\ref{assm:max.grid}, \ref{assm:samplesplit}--\ref{assm:properCDF.split}, and \ref{assm:DR}--\ref{assm:phiregularity} hold, then
\begin{align}
 \int_{\mathbb{R}} e^{\varepsilon|s|}|\widehat{p}^{K_0}_{a,n}(s)-p_a(s)| \, ds \rightarrow_p 0,\label{prop:consist-1split}
\end{align}
as $n\rightarrow \infty$ for $a\in\{0,1\}$ and for all $\varepsilon \in (0,\alpha)$, where $\alpha>0$ is such that $p_a(s) \le e^{-\alpha|s|+\beta}$ for all $s \in \mathbb{R}$ and  some $\beta \in \mathbb{R}$. If in addition $p_a$ is continuous on $\mathbb{R}$, then
\begin{align}
 &\sup_{s\in\mathbb{R}} e^{\varepsilon|s|}|\widehat{p}^{K_0}_{a,n}(s)-p_a(s)| \rightarrow_p 0.\label{prop:consist-2split}
\end{align}
\end{theorem}

Conditions \ref{assm:nuisanceconvergence.split}--\ref{assm:properCDF.split} are analogous to \ref{assm:nuisanceconvergence}--\ref{assm:properCDF}, and discussion of the latter  can be found following Theorem \ref{prop:Consistency}. Notably,  the sample splitting scheme enables us to avoid the Glivenko-Cantelli conditions \ref{assm:basicGC}--\ref{assm:integratedGC}, so that the nuisance estimators may be arbitrarily complicated.

\subsection{Limit distribution}

We now demonstrate that the estimator based on sample splitting has the same asymptotic distribution as the original estimator provided in Theorem~\ref{prop:CI}. We first state a condition we will require. For $k \in\{1,\ldots,K_0\}$, we define
\[ \mc G_{-k}(s,t;\mathbf Z;S)=\|(\widehat\phi_{a,-k}-\phi_{a,\infty})(t|\cdot)-(\widehat\phi_{a,-k}-\phi_{a,\infty})(s|\cdot)\|_S.\]

\begin{assumptionp}{E$'$ (cont.)}\label{assm:E2}\phantom{blah}
\begin{enumerate}
   \myitem{(E6$'$)} \label{assm:L2conditionsforCI.split} For all $s_1,s_2\in  I_{s_0,\omega}$, the following statements hold: 
    \begin{gather*}
    \max_{1\le k \le K_0}\mc G_{-k}(s_1,s_2;\mathbf{Z};\mathcal{S}_1\cap\mathcal{S}_2) \|\widehat\pi_{a,-k}-\pi_{a,\infty}\|_{\mathcal{S}_1\cap\mathcal{S}_2} = |s_2-s_1|M_1',\\
     \max_{1\le k \le K_0} \|\widehat\pi_{a,-k}-\pi_{a,\infty}\|_{\mathcal{S}_1\cap\mathcal{S}_2^c} = o_p(n^{-k/(2k+1)}),\,\max_{1\le k \le K_0} \mc G_{-k}(s_1,s_2;\mathbf{Z};\mathcal{S}_1\cap\mathcal{S}_2^c) = |s_2-s_1|M_2',\\
      \max_{1\le k \le K_0}\|\widehat\pi_{a,-k}-\pi_{a,\infty}\|_{\mathcal{S}_1^c\cap\mathcal{S}_2} = o_p(1),\,\max_{1\le k \le K_0}\mc G_{-k}(s_1,s_2;\mathbf{Z};\mathcal{S}_1^c\cap\mathcal{S}_2) = |s_2-s_1|M_3',
    \end{gather*}
    where $M_1'$, $M_2'$, and $M_3'$ are random variables that do not depend on $s_1,s_2$ and such that $M_1'$ and $M_3'$ are $o_p(n^{-k/(2k+1)})$ and $M_2' = o_p(1)$.
\end{enumerate}
\end{assumptionp}

\noindent 
We now state the analogue of Theorem \ref{prop:CI} for the estimator based on sample splitting. The proof is given in Section \ref{subsec:proof.CI.split}.
We define $\widehat\varphi^{K_0}_{a,n}:=\log(\widehat p^{K_0}_{a,n})$ for $a\in\{0,1\}$.

\begin{theorem}\label{prop:CI.split}
If \ref{assm:consistency}-\ref{assm:logconcv}, \ref{assm:condition.grid},  \ref{assm:samplesplit}--\ref{assm:properCDF.split}, \ref{assm:DR}--\ref{assm:phiregularity},  \ref{assm:L2conditionsforCI.split}, \ref{assm:phiinfholder}, and \ref{assm:pregularity}-\ref{assm:pdiffwithk} hold for $a\in\{0,1\}$, $\delta_n=O_p(n^{-(k+1)/(2k+1)})$, then
\begin{align}\label{jointfororiginal:both.0and1.split}
  \begin{pmatrix}
    n^{k/(2k+1)}(\widehat p^{K_0}_{1,n}(s_0)-p_1(s_0))\\
n^{(k-1)/(2k+1)}(\widehat p'^{K_0}_{1,n}(s_0)-p'_1(s_0)) \\  
    n^{k/(2k+1)}(\widehat p^{K_0}_{0,n}(s_0)-p_0(s_0))\\
n^{(k-1)/(2k+1)}(\widehat p'^{K_0}_{0,n}(s_0)-p'_0(s_0))
  \end{pmatrix}
  \rightarrow_d 
  \begin{pmatrix}
   c_k(s_0,\varphi_1) H_{1k}^{(2)}(0)\\
d_k(s_0,\varphi_1) H_{1k}^{(3)}(0) \\ 
   c_k(s_0,\varphi_0) H_{0k}^{(2)}(0)\\
d_k(s_0,\varphi_0) H_{0k}^{(3)}(0) \\ 
  \end{pmatrix},
  \end{align}
and
 \begin{align}\label{jointforlogscale:both.0and1.split}
   \begin{pmatrix}
     n^{k/(2k+1)}(\widehat\varphi^{K_0}_{1,n}(s_0)-\varphi_1(s_0))\\
n^{(k-1)/(2k+1)}(\widehat\varphi'^{K_0}_{1,n}(s_0)-\varphi'_1(s_0))\\   
n^{k/(2k+1)}(\widehat\varphi^{K_0}_{0,n}(s_0)-\varphi_0(s_0))\\
n^{(k-1)/(2k+1)}(\widehat\varphi'^{K_0}_{0,n}(s_0)-\varphi'_0(s_0))  
  \end{pmatrix}
  \rightarrow_d 
  \begin{pmatrix}
   C_k(s_0,\varphi_1) H_{1k}^{(2)}(0)\\
D_k(s_0,\varphi_1) H_{1k}^{(3)}(0)  \\ 
   C_k(s_0,\varphi_0) H_{0k}^{(2)}(0)\\
D_k(s_0,\varphi_0) H_{0k}^{(3)}(0)  \\ 
  \end{pmatrix},
 \end{align}
\end{theorem}
 
Condition \ref{assm:L2conditionsforCI.split} is analogous to \ref{assm:L2conditionsforCI}, which was discussed following Theorem~\ref{prop:CI}. As above, sample splitting allows us to avoid the entropy condition \ref{assm:pi-bracketing} and the H{\"o}lder condition \ref{assm:phiestholder} used to control empirical process terms.

\subsection{Confidence intervals}\label{subsec:CI-split}

Finally, we demonstrate that confidence intervals based on the sample-splitting estimator constructed in the same manner as those defined in Section~\ref{subsec:CIconst} are asymptotically valid.  As with the proof of Theorem \ref{prop:CI-construction}, the proof of Theorem~\ref{prop:CI-construction-split} below is a direct consequence of Theorem \ref{prop:CI.split}, so we omit it. 

Analogously to \eqref{define.knots} and \eqref{knots}, we denote the knots of $\widehat\varphi_{a,n}^{K_0}=\log(\widehat p_{a,n}^{K_0})$ as $\widehat{\mc L}^{K_0}_{a,n}$ and the knots adjacent to $s_0$ as $\tau_n^{+;K_0}(s_0;a)$ and $\tau_n^{-;K_0}(s_0;a)$. 
We then define $\Delta \tau_{a,n}^{K_0} = \tau_n^{+;K_0}(s_0;a)-\tau_n^{-;K_0}(s_0;a)$.
We suppress the dependence of $\Delta \tau_{a,n}^{K_0}$ on $s_0$ for notational simplicity.
As in \eqref{CI.for.density} and \eqref{CI.for.density.deriv}, we define  symmetric $(1-\alpha)$-level CI's for $p_a(s_0)$ and $p_a'(s_0)$ based on the sample splitting estimator as
\begin{align}
	&\mc I_{a,n}^{(0);K_0}(\alpha;s_0) := \left[\widehat p^{K_0}_{a,n}(s_0)\pm   \left( \widehat\chi^{K_0}_{\theta_a} / \{n (\Delta \tau^{K_0}_{a,n}) \} \right)^{1/2} c_\alpha^{(0)}\right],\label{CI.for.density.split}\\
	&\mc I_{a,n}^{(1);K_0}(\alpha;s_0) := \left[(\widehat p^{K_0}_{a,n})'(s_0)\pm \left ( \widehat\chi^{K_0}_{\theta_a} /  \{ n(\Delta\tau^{K_0}_{a,n})^3 \} \right)^{1/2} c_\alpha^{(1)}\right],\label{CI.for.density.deriv.split}
\end{align}
where $\widehat\chi^{K_0}_{\theta_a}$ is an estimator of $\chi_{\theta_a}$. We have the following result regarding asymptotic validity of these CI's.
\begin{theorem}\label{prop:CI-construction-split}
Under the same assumptions as Theorem \ref{prop:CI.split},
\begin{align}\label{joint.split:CI-construction}
\begin{pmatrix}
\sqrt{n(\Delta\tau^{K_0}_{1,n})}(\widehat p^{K_0}_{1,n}(s_0)-p_1(s_0))\\
\sqrt{n(\Delta\tau^{K_0}_{1,n})^3}((\widehat p^{K_0}_{1,n})'(s_0)-p'_1(s_0))\\
\sqrt{n(\Delta\tau^{K_0}_{0,n})}(\widehat p^{K_0}_{0,n}(s_0)-p_0(s_0))\\
\sqrt{n(\Delta\tau^{K_0}_{0,n})^3}((\widehat p^{K_0}_{0,n})'(s_0)-p'_0(s_0))\\
\end{pmatrix}
\rightarrow_d 
\begin{pmatrix}
-\sqrt{\chi_{\theta_1}}\LL_{1k}^{(0)}\\
-\sqrt{\chi_{\theta_1}}\LL_{1k}^{(1)}\\
-\sqrt{\chi_{\theta_0}}\LL_{0k}^{(0)}\\
-\sqrt{\chi_{\theta_0}}\LL_{0k}^{(1)}\\
\end{pmatrix},
\end{align}
where the processes $\LL_{1k}^{(i)}$ and $\LL_{0k}^{(i)}$ are independent copies of $\LL_{k}^{(i)}$ for $i=0,1$ defined in \eqref{pivots}.
Hence, if $\widehat\chi^{K_0}_{\theta_a} \to_p \chi_{\theta_a}$, then for any $\alpha > 0$ and $a \in \{0,1\}$,
\begin{align*}
\lim_{n\rightarrow \infty}P_*(p_a(s_0)\in \mc I_{a,n}^{(0);K_0}(\alpha;s_0))=\lim_{n\rightarrow \infty}P_*(p_a'(s_0)\in \mc I_{a,n}^{(1);K_0}(\alpha;s_0))=1-\alpha.
\end{align*}
\end{theorem}
As in \eqref{def:contrast.CI}, a symmetric $(1-\alpha)$-level CI for $p_1(s_0)-p_0(s_0)$ based on the sample splitting estimator is 
\begin{align}\label{def:contrast.CI.split}
\mathcal{I}^{K_0}_{{\rm diff},n}(\alpha;s_0):=\left\{\widehat p^{K_0}_{1,n}(s_0)-\widehat p^{K_0}_{0,n}(s_0)\right\}\pm c^{K_0}_{{\rm diff},n}(s_0),
\end{align}
where $c^{K_0}_{{\rm diff},n}(s_0)$ is the $(1-\alpha)$ quantile of $\left|\sqrt{\hat \chi^{K_0}_{\theta_{0}}/(n\Delta\tau_{0,n})}\LL_{0k}^{(0)}-\sqrt{\hat \chi^{K_0}_{\theta_{1}}/(n\Delta\tau_{1,n})}\LL_{1k}^{(0)}\right|$.
As in \eqref{def:ratio.CI}, a symmetric $(1-\alpha)$-level CI for $\log(p_1(s_0)/p_0(s_0))$ based on the sample splitting estimator is
\begin{align}\label{def:ratio.CI.split}
\mathcal{I}^{K_0}_{{\rm ratio},\alpha,n}(\alpha;s_0):=\log\left(\widehat p^{K_0}_{1,n}(s_0)/\widehat p^{K_0}_{0,n}(s_0)\right)\pm c^{K_0}_{{\rm ratio},\alpha,n}(s_0),
\end{align}
where $c^{K_0}_{{\rm ratio},\alpha,n}(s_0)$ is the $1-\alpha$ quantile of $\left|\hat\alpha^{K_0}_0/\sqrt{n(\Delta\tau^{K_0}_{0,n})}\LL_{0k}^{(0)}-\hat\alpha^{K_0}_1/\sqrt{n(\Delta\tau^{K_0}_{1,n})}\LL_{1k}^{(0)}\right|$, and $\hat\alpha^{K_0}_a:=\sqrt{\hat \chi^{K_0}_{\theta_{a}}}/\widehat p^{K_0}_{a,n}(s_0)$, for $a\in\{0,1\}$.

To estimate $\chi_{\theta_a}$ using sample splitting, we propose the estimator $\widehat\chi^{K_0}_{\theta_{a},n}$ defined as 
\begin{align}
\begin{split}
 &\frac{1}{2h_nK_0} \sum_{k=1}^{K_0}
  \left[ \PP^k_n\left\{ D_{a,{\widehat\theta_{a,-k}}}(s_0+h_n) - D_{a,{\widehat\theta_{a,-k}}}(s_0)\right\}^2 \right. \\
  &\qquad\qquad\qquad\qquad  \left. +  \PP^k_n\left\{ D_{a,{\widehat\theta_{a,-k}}}(s_0-h_n) - D_{a,{\widehat\theta_{a,-k}}}(s_0)\right\}^2
  \right],
  \end{split}
    \label{chi.theta.hat.split} 
\end{align}
where $\PP_n^k$ is the empirical distribution of the data in the $k$-th fold $\mc V_{n,k}$, and $D_{a,{\widehat\theta_{a,-k}}}$ is the estimated influence function for $\widehat{\theta}_{a,-k}=(\widehat{\phi}_{a,-k},\widehat{\pi}_{a,-k})$ obtained from $\mathcal{T}_{n,k}$.
We again suggest $h_n=n^{-1/10}$ as a tuning parameter. As in Lemma \ref{lemma:tuning.consistency}, consistency of $\widehat\chi^{K_0}_{\theta_{a},n}$ holds under the same conditions as Theorem \ref{prop:CI.split} as long as $h_n^{-1}=O(\sqrt{n})$. Hence, this estimator does not require conditions controlling the complexity of the nuisance estimators.

The sample splitting estimator can be used to construct a uniform confidence band using the approach discussed in Section \ref{subsec:band.const}.

\section{Data analysis}\label{section:real.data.analysis}

We use our method to analyze the \texttt{lalonde} dataset from the R package \texttt{cobalt} \citep{cobalt.pack}.
The data was used by \citet{dehejia.Wahba} to study the effectiveness of a job training program on earnings \citep{lalonde1986evaluating}.
The treatment $A$ is an indicator of participation in the National Supported Work Demonstration job training program.
The data contains 185 people with $A = 1$ and a comparison sample of 429 people with $A = 0$ from the Population Survey of Income Dynamics.
The outcome $Y$ is the real earnings in US dollars measured in 1978, several years after completion of the program. We scaled the outcome by $10^4$.
The covariates include demographic variables (age, race, education, and marital status), and two previous earnings levels measured in 1974 and 1975, which we also scaled by $10^4$.

\begin{figure}
    \centering
    \begin{subfigure}[95\% CI's]  {\label{LC.real}\includegraphics[width=42mm]{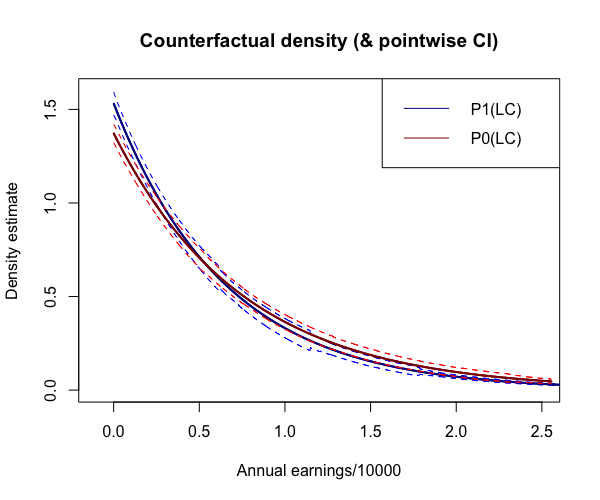}}
   \end{subfigure}
    \hspace{0.2cm}
    \begin{subfigure}[95\% CI's via sample splitting]{\label{LC.realsp}\includegraphics[width=42mm]{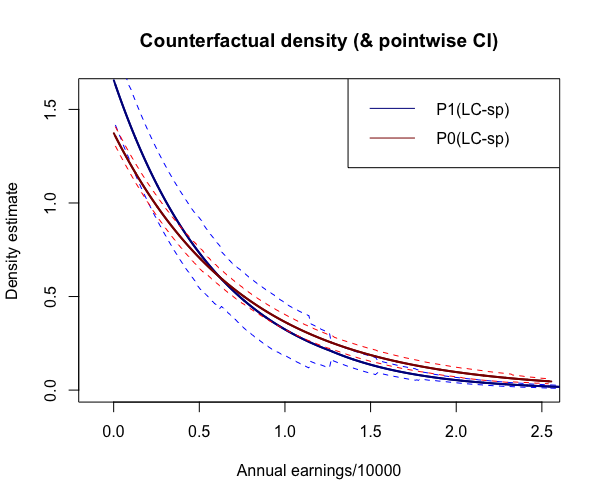}}
   \end{subfigure}
    \hspace{0.2cm}
    \begin{subfigure}[95\% CI's via projection method]    {\label{realbasis}\includegraphics[width=40mm]{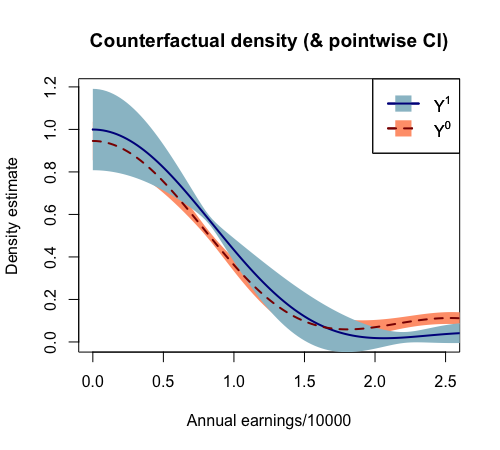}}
    \end{subfigure}\\
   \begin{subfigure}[95\% diference CI's]  {\label{LC.real.contr}\includegraphics[width=50mm]{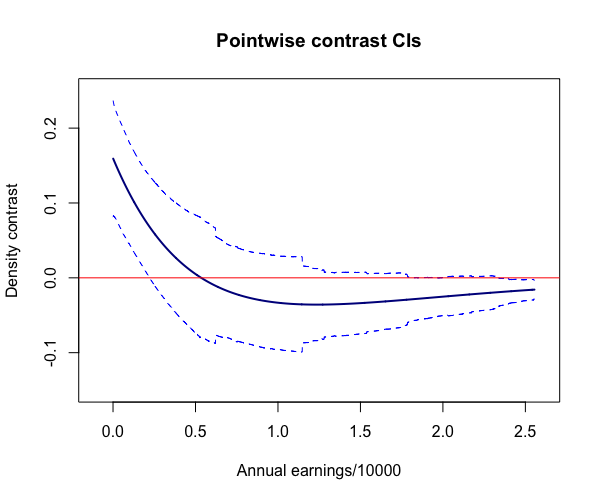}}
   \end{subfigure}
    \hspace{0.2cm}
    \begin{subfigure}[95\% difference CI's via sample splitting]{\label{LC.realsp.contr}\includegraphics[width=50mm]{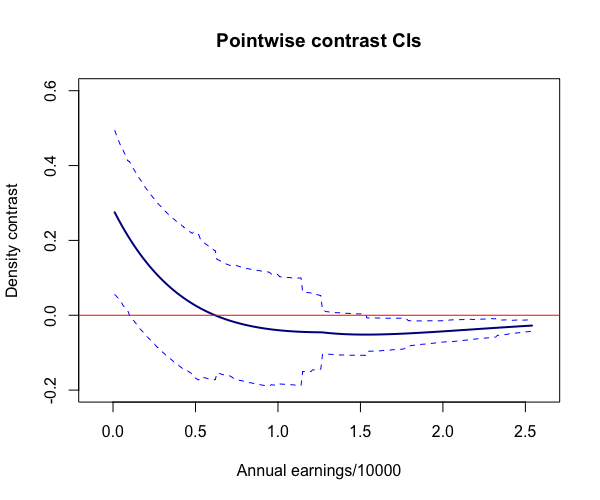}}
   \end{subfigure}\\
    \begin{subfigure}[95\% log-ratio CI's]  {\label{LC.real.ratio}\includegraphics[width=50mm]{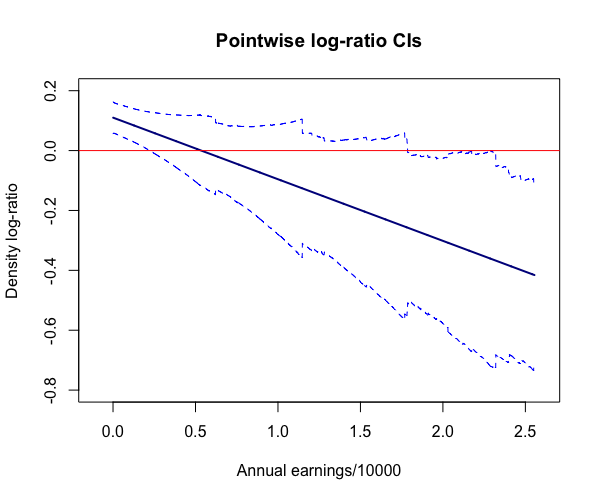}}
   \end{subfigure}
    \hspace{0.2cm}
    \begin{subfigure}[95\% log-ratio CI's via sample splitting]{\label{LC.realsp.ratio}\includegraphics[width=50mm]{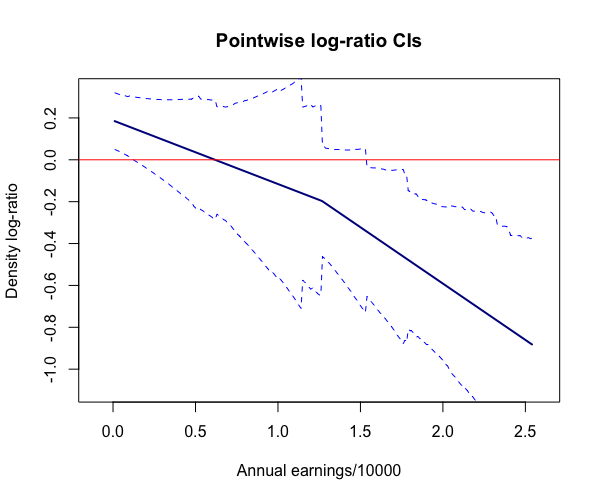}}
   \end{subfigure}\\
    \caption{\label{fig:Real.data} (a--c): Density estimates and corresponding 95\% pointwise CI's for the Lalonde data.  P1 and P0 stand for the counterfactual density estimates for each treatment and control group, respectively. (a) Our log-concave estimator. (b) Our estimator with five-fold sample splitting. (c) Basis expansion estimator \citep{KBWcounterfactualdensity}.
    (d--e): 95\% pointwise difference (P1$-$P0) CI's for the Lalonde data. (d) Our log-concave estimator. (e) Our estimator with five-fold sample splitting.
    (f--g): 95\% pointwise log-ratio ($\log({\rm P1}/{\rm P0})$) CI's for the Lalonde data. (f) Our log-concave estimator. (g) Our estimator with five-fold sample splitting.}
  \end{figure}

We estimated the density of real earnings in 1978 for treatment and control  using our proposed method adjusting for the covariates listed above.
We estimated the conditional distribution function using the location-scale estimator with a gamma distribution as in Lemma~\ref{lemma: meanvar}.
We estimated the propensity score and the conditional mean of the outcome using random forests via the R package \texttt{ranger} \citep{ranger.pack}. 
For the conditional mean, we included interactions between the treatment and covariates as additional predictors in the random forest.
To estimate the conditional variance, we used another random forest with these same predictors and outcome $(Y_i-\hat \mu_i)^2$, where $\hat\mu_i$ is the fitted value of the conditional mean for the $i$th observation. 
We considered other several other conditional distribution estimators, including location-scale estimators with exponential and uniform CDFs, as well as quantile regression random forest \citep{meinshausen2006quantile,elie2022random}. We did not find significant differences between the results, so we only present the results from the gamma conditional distribution function.
We also computed the log-concave estimator with sample splitting with $K_0 = 5$ folds and the same nuisance estimators as above (see Section \ref{section:sample.splitting}).
We set the tuning parameter for estimating $\chi_{\theta_a}$ to be $n^{-1/10}$. 
For comparison, we estimated the counterfactual density and corresponding 95\% CI's  using the basis expansion method \citep{KBWcounterfactualdensity} using the \texttt{cdensity} function in the R package \texttt{npcausal} \citep{npcausal}.
We again used 5 fold sample splitting and random forests for the nuisance estimators.

Figure \ref{fig:Real.data} displays the estimated densities.
The densities under treatment and control are similar in shape, but perhaps surprisingly there are regions where the CI's in the upper and lower tails do not overlap.
By contrast, a 95\% CI for the ATE (using the \texttt{ate} function in the
\texttt{npcausal} package with default settings) was $[-0.10, 0.56]$,
indicating there was not a statistically significant difference in the
counterfactual means.  Our proposed log-concave estimator does not have the
same boundary issue that the kernel density estimator and/or the projection
method encounter on this dataset.

Figures \ref{LC.real.contr}--\ref{LC.realsp.contr} display the pointwise difference CI's derived from our estimator, both with and without the application of sample splitting.
Figures \ref{LC.real.ratio}--\ref{LC.realsp.ratio} illustrate the pointwise density log-ratio CI's derived from both of our estimators.
Both proposed methods reveal significant differences, specifically at the upper and lower extremes.
Therefore, if we have accounted for all sources of confounding and the true densities are log-concave, then there is evidence of an effect of treatment on the upper and lower tails of the distribution of income.

\section{Empirical processes tools}\label{appd:A}

Given classes $\mathcal{F}_1,\dots,\mc F_k$ of functions $\mc F_i:\mc X\mapsto \mc R$ and a function $\varphi:\RR^k \mapsto \RR$, let $\varphi(\mathcal{F}_1,\dots,\mc F_k)$ be the class of functions $x\mapsto \varphi(f_1(x),\dots,f_k(x))$, where $f_i\in \mc F_i$, for $i=1,\dots,k$. The following proposition is Theorem 3 in \citet{preserveGC}.
\begin{proposition}\label{gcpreserve}
Suppose that $\mathcal{F}_1,\dots,\mc F_k$ are $P$-Glivenko-Cantelli classes of functions, and that $\varphi:\RR^k\mapsto \RR$ is continuous. Then $\mc H:=\varphi(\mc F_1,\dots, \mc F_k)$ is a $P$-Glivenko-Cantelli class given that it has an integrable envelope.
\end{proposition}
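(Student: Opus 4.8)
The plan is to reduce, by truncation, to the case of uniformly bounded classes, and then to bound the random $L_1(\PP_n)$-covering number of $\mc H$ by a product of those of the $\mc F_i$, exploiting the uniform continuity of $\varphi$ on a compact box. \emph{Truncation.} Let $H$ be the integrable envelope of $\mc H$ and, as is standard for $P$-Glivenko-Cantelli classes in this framework, let $F_i$ be an integrable envelope of $\mc F_i$. Fix $\eps>0$ and choose $M<\infty$ with $P\{H\,I(H>M)\}<\eps$ and $P\{F_i\,I(F_i>M)\}<\eps$ for each $i$. Writing $\pi_M(t):=(t\wedge M)\vee(-M)$, I would replace each $\mc F_i$ by $\mc F_i^M:=\{\pi_M\circ f:f\in\mc F_i\}$ and $\mc H$ by $\varphi^M(\mc F_1^M,\dots,\mc F_k^M)$, where $\varphi^M:=\pi_M\circ\varphi$ remains continuous. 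A routine computation bounds $\sup_{h\in\mc H}|(\PP_n-P)h|$ by the analogous supremum over $\varphi^M(\mc F_1^M,\dots,\mc F_k^M)$ plus terms dominated on a.e.\ sample path by $\PP_n\{H\,I(H>M)\}+\sum_i\PP_n\{F_i\,I(F_i>M)\}$ and their $P$-expectations, which by the strong law converge a.s.\ to at most $(k+1)\eps$. Since clipping by the $1$-Lipschitz map $\pi_M$ only contracts $L_1$ distances and envelopes, the classes $\mc F_i^M$ remain $P$-Glivenko-Cantelli, so it suffices to prove the result under the extra assumptions that every $f\in\mc F_i$ takes values in $[-M,M]$ and that $\varphi$ is bounded.

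For such uniformly bounded (and suitably measurable) classes I would use the random-entropy characterization of the $P$-Glivenko-Cantelli property (cf.\ Section~2.4 of \citealp{vdvandW}): a class $\mc G$ is $P$-Glivenko-Cantelli if and only if $n^{-1}\log N(\delta,\mc G,L_1(\PP_n))\to_p 0$ for every $\delta>0$. This holds for each $\mc F_i^M$. Every vector $(f_1(x),\dots,f_k(x))$ with $f_i\in\mc F_i^M$ lies in the compact box $B:=[-M,M]^k$, on which $\varphi$ is uniformly continuous; let $\omega$ be a nondecreasing concave modulus of continuity of $\varphi$ on $B$ with respect to $\ell^\infty$ (take the least concave majorant of the $\ell^\infty$-modulus), so that $\omega(0+)=0$. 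Given $\delta>0$, choose $\delta'>0$ with $\omega(k\delta')<\delta$ and pick minimal $\delta'$-nets of the $\mc F_i^M$ in $L_1(\PP_n)$; if $f_i$ is within $\delta'$ of a net element $g_i$, then Jensen's inequality for the concave $\omega$ yields
\[
  \PP_n\bigl|\varphi^M(f_1,\dots,f_k)-\varphi^M(g_1,\dots,g_k)\bigr|
  \;\le\; \PP_n\,\omega\!\Bigl(\max_i|f_i-g_i|\Bigr)
  \;\le\; \omega\!\Bigl(\textstyle\sum_i\PP_n|f_i-g_i|\Bigr)\;\le\;\omega(k\delta')\;<\;\delta,
\]
where the first inequality uses that $\pi_M$ is $1$-Lipschitz. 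Hence $N(\delta,\varphi^M(\mc F_1^M,\dots,\mc F_k^M),L_1(\PP_n))\le\prod_{i=1}^k N(\delta',\mc F_i^M,L_1(\PP_n))$, so $n^{-1}\log N(\delta,\varphi^M(\mc F_1^M,\dots,\mc F_k^M),L_1(\PP_n))\le\sum_{i=1}^k n^{-1}\log N(\delta',\mc F_i^M,L_1(\PP_n))\to_p0$. Applying the characterization once more gives that $\varphi^M(\mc F_1^M,\dots,\mc F_k^M)$ is $P$-Glivenko-Cantelli, and letting $\eps\downarrow0$ finishes the argument.

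I expect the main obstacles to be, first, the measurability hypotheses needed both to render the relevant suprema measurable and to invoke the random-entropy characterization in both directions, and second, the bookkeeping in the truncation step: one must verify that truncating the \emph{inner} classes $\mc F_i$---not merely the output class $\mc H$---perturbs the empirical process by a uniformly negligible amount, which is precisely where the joint integrable envelopes are needed. An alternative route that avoids covering numbers is to approximate $\varphi$ uniformly on $B$ by Lipschitz maps, treat the Lipschitz case via symmetrization and a multivariate Rademacher contraction inequality, and pass to the limit; the obstacles there are essentially the same.
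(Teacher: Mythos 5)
Your overall route---truncate so that the component functions take values in a compact box, use uniform continuity of $\varphi$ on that box to transfer random $L_1(\mathbb{P}_n)$-covering-number bounds from the $\mathcal{F}_i$ to the composite class, and then apply the random-entropy characterization of the Glivenko--Cantelli property in both directions---is the standard proof of this preservation result; the paper itself offers no argument and simply cites Theorem~3 of \citet{preserveGC}, whose proof is of exactly this type. The measurability caveats you raise are real but routine, and the concave-modulus/Jensen step together with the product bound on covering numbers is fine. (The assertion that the clipped classes $\mathcal{F}_i^M$ ``remain Glivenko--Cantelli'' is itself a small preservation claim; it is cleanest to bypass it by noting that clipping contracts $L_1(\mathbb{P}_n)$ distances, so the entropy bounds you need for $\mathcal{F}_i^M$ follow directly from those for $\mathcal{F}_i$.)

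The genuine gap is in the truncation step: you posit an integrable envelope $F_i$ for each $\mathcal{F}_i$, which is neither among the proposition's hypotheses nor a consequence of the Glivenko--Cantelli property (the class of all constant functions is $P$-GC and has no integrable envelope). This is not a cosmetic omission, because without some control of this kind the statement as literally written fails. Take $P$ Lebesgue on $[0,1]$, $\mathcal{F}_1$ the class of all constant functions, $\mathcal{F}_2=\{x\mapsto x\}$, and $\varphi(u,v)=\sin(uv)$: both classes are $P$-GC and $\varphi$ is continuous, while $\mathcal{H}=\{x\mapsto \sin(cx):c\in\RR\}$ has envelope $1$; yet for almost every sample $X_1,\dots,X_n$ the points are rationally independent, so by Kronecker--Weyl there exist arbitrarily large $c$ with $\sin(cX_i)\approx 1$ for every $i\le n$ while $\int_0^1 \sin(cx)\,dx\to 0$, so $\sup_c|(\mathbb{P}_n-P)\sin(c\,\cdot\,)|$ stays near $1$ and $\mathcal{H}$ is not $P$-GC. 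Careful statements of the preservation theorem (e.g., the textbook version of Theorem~3 of \citet{preserveGC}) add the hypothesis $\max_i\sup_{f\in\mathcal{F}_i}|Pf|<\infty$; combined with the standard necessary condition $P^*\sup_{f\in\mathcal{F}_i}|f-Pf|<\infty$ for strong GC classes, this delivers exactly the integrable envelopes your truncation needs (and it holds trivially in this paper's applications, where $\mathcal{F}_\pi$, $\mathcal{F}_\phi$, and the indicator classes are uniformly bounded). So either add that hypothesis and derive the envelopes from it, or restrict to uniformly bounded $\mathcal{F}_i$; with that amendment, and the measurability conditions needed for the converse direction of the entropy characterization, your argument goes through.
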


The following lemma that controls empirical process terms under sample splitting scheme and its proof were provided in Lemma C.3 of \citet{kim2018causal}.
\begin{lemma}\label{kkklem}
(Sample-splitting). Let $\PP_n$ denote the empirical measure over a set $D_{1,n} = (Z_1, \cdots , Z_n)$, which is i.i.d. from $\PP$. Let $\hat f$ be a sample operator (e.g., estimator) constructed in a separate, independent sample set $D_{2,m}$ with m observations. Then we have
\begin{align*}
    \PP \Big((\PP_n-\PP)\hat f\Big)^2 \le \frac{1}{n}\PP \hat{f}^2
\end{align*}
\end{lemma}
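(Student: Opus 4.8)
The plan is to prove the bound by conditioning on the auxiliary sample $D_{2,m}$ and exploiting the independence created by sample splitting. Conditionally on $D_{2,m}$, the operator $\hat f$ is a fixed measurable function, say $f$; and since $D_{1,n}$ is drawn independently of $D_{2,m}$, the observations $Z_1,\dots,Z_n$ are still i.i.d.\ from $\PP$ given $D_{2,m}$. Writing $g(z) := f(z) - \PP f$, we have $(\PP_n - \PP)\hat f = \tfrac1n \sum_{i=1}^n g(Z_i)$, which conditionally is an average of i.i.d.\ terms with $\EE[g(Z_i)\mid D_{2,m}] = \PP f - \PP f = 0$. If $\PP \hat f^2 = \infty$ the inequality is trivial, so we may assume $\PP \hat f^2 < \infty$.

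First I would expand the square and take the conditional expectation term by term:
\begin{align*}
\EE\!\left[\left\{(\PP_n-\PP)\hat f\right\}^2 \,\Big|\, D_{2,m}\right]
 = \frac{1}{n^2}\sum_{i=1}^n\sum_{j=1}^n \EE\!\left[g(Z_i)g(Z_j)\,\Big|\,D_{2,m}\right].
\end{align*}
By conditional independence of $Z_i$ and $Z_j$ for $i\neq j$ together with the mean-zero property, every off-diagonal term vanishes, so the double sum collapses to the diagonal, leaving $\tfrac1n \Var\big(f(Z)\big)$, where the variance is over $Z\sim\PP$ with $f$ held fixed.

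Next I would bound the variance crudely by the second moment, $\Var\big(f(Z)\big) = \PP f^2 - (\PP f)^2 \le \PP f^2$, which yields $\EE[\{(\PP_n-\PP)\hat f\}^2 \mid D_{2,m}] \le \tfrac1n \PP \hat f^2$, with $\PP \hat f^2 = \int \hat f^2 \, d\PP$ evaluated with the training sample held fixed. The stated bound then follows either directly in this conditional form or, after taking expectations over $D_{2,m}$ by the tower property, as the unconditional inequality $\PP\{(\PP_n-\PP)\hat f\}^2 \le \tfrac1n \PP \hat f^2$.

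There is no substantive obstacle; the lemma is elementary once the conditioning is correctly set up. The only point requiring care is the bookkeeping of what is random and what is conditioned upon --- specifically, that sample splitting makes $\hat f$ independent of the evaluation data, so that the cross-covariance terms are \emph{exactly} zero rather than merely small. This is precisely the structural feature that allows one to dispense with Glivenko--Cantelli/Donsker-type entropy control on the class containing $\hat f$, which is why the lemma is the workhorse of the sample-splitting results in Section~\ref{section:sample.splitting}.
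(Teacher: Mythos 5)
Your proposal is correct and is essentially the argument behind this lemma: the paper does not reprove it but cites Lemma C.3 of \citet{kim2018causal}, whose proof is exactly your conditioning-on-the-training-sample computation, in which $(\PP_n-\PP)\hat f$ becomes a centered i.i.d.\ average with conditional second moment $n^{-1}\Var\{\hat f(Z)\mid D_{2,m}\}\le n^{-1}\PP\hat f^2$. The only bookkeeping point is that the displayed inequality is to be read conditionally on $D_{2,m}$ (with $\PP$ integrating over a fresh observation only), which you note explicitly, so nothing is missing.
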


We state the following proposition that controls empirical processes indexed by classes of functions that change with $n$. It is Theorem 2.11.24 in \citet{vdvandW}. For each $n$, suppose that a class of measurable functions $\mc F_n=\{f_{n,t}:t\in T\}$ indexed by a totally bounded semimetric space $(T,\rho)$ admits an envelope function (sequence) $F_n$. And, further suppose that the classes $\mc F_{n,\delta}=\{f_{n,s}-f_{n,t}:\rho(s,t)<\delta\}$ and $\mc F^2_{n,\delta}$ are measurable under the probability measure $P$. 
\begin{proposition}\label{prop:vdv.2.11.24}
Suppose that the following holds,
\begin{equation}
\begin{split}\label{vdv.display}
&PF_n^2=O(1),\\
&PF_n^2I(F_n>\eta\sqrt{n})\rightarrow 0, \qquad{\rm for~every~}\eta>0,\\
\sup_{\rho(s,t)<\delta_n}&P(f_{n,s}-f_{n,t})^2\rightarrow 0, \qquad{\rm for~every~}\delta_n \downarrow 0,\\
\end{split}
\end{equation}
and,
\begin{align}\label{vdv.UEI}
 \sup_Q\int_0^{\delta_n} \sqrt{\log N\left(\epsilon\|F_n\|_{Q,2},\mc F_n, L_2(Q)\right)}d\epsilon \rightarrow 0,   \qquad{\rm for~every~}\delta_n \downarrow 0.
\end{align}
Then the sequence $\{\sqrt{n}(\PP_n-P)f_{n,t}:t\in T\}$ is asymptotically tight in $\ell^{\infty}(T)$. Moreover, given that the sequence of covariance functions $Pf_{n,s}f_{n,t}-Pf_{n,s}Pf_{n,t}$ converges pointwise on $T\times T$, the sequence $\{\sqrt{n}(\PP_n-P)f_{n,t}:t\in T\}$ converges in distribution to a Gaussian process.
\end{proposition}

Here we state the following propositions that give bounds for each $L_p$-norm (resp. $L_1$-norm) of $\GG_n=\sqrt{n}(\PP_n-P)$ for function classes $\mc F$ that admit a finite uniform entropy (resp. bracketing) integral.
\begin{proposition}\label{prop:vdv.2.14.1}
Let $\mc F$ be a $P$-measurable class of measurable functions with measurable envelope function $F$. Then, 
\begin{align*}
\left\|\|\GG_n\|_{\mc F}\right\|_{P,p}\lesssim J(1,\mc F)\|F\|_{P,2\vee p} ,\qquad 1\le p,   
\end{align*}
where the entropy integral $J(1,\mc F)$ is given by
\begin{align*}
J(1,\mc F)=\sup_Q \int_0^\delta \sqrt{1+\log N\left(\epsilon\|F\|_{Q,2},\mc F,L_2(Q)\right)}d\epsilon,
\end{align*}
where the supremum is taken over all discrete probability measures $Q$ with $\|F\|_{Q,2}>0$.
\end{proposition}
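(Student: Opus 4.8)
The plan is the classical two-step route of symmetrization followed by chaining, adapted to track $L_p$-moments. Throughout one may assume $\|F\|_{\PP_n}>0$ (otherwise $\GG_n f=0$ on the data for every $f\in\mc F$ and there is nothing to prove) and that $F$ is not $P$-a.s.\ zero. First I would pass to the Rademacher-symmetrized process: using that $\mc F$ is $P$-measurable with an integrable envelope, Lemma~2.3.1 of \citet{vdvandW} applied with the convex increasing map $x\mapsto x^p$ gives
\[
\EE\bigl[\|\GG_n\|_{\mc F}^p\bigr]^{1/p}\;\le\;2\,\EE\bigl[\|\GG_n^{\circ}\|_{\mc F}^p\bigr]^{1/p},
\qquad \GG_n^{\circ}f:=n^{-1/2}\sum_{i=1}^n\xi_i\,f(\mbX_i),
\]
where $\xi_1,\dots,\xi_n$ are i.i.d.\ Rademacher signs independent of $\mbX_1,\dots,\mbX_n$.

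Next I would condition on $\mbX_1,\dots,\mbX_n$. Conditionally, $\{\GG_n^{\circ}f:f\in\mc F\}$ is subgaussian for the random semimetric $d_n(f,g):=\|f-g\|_{L_2(\PP_n)}$, whose diameter is at most $2\|F\|_{\PP_n}$ since $|f-g|\le 2F$. Fixing any $f_0\in\mc F$, a chaining (Dudley) maximal inequality for subgaussian processes (e.g., Corollary~2.2.8 of \citet{vdvandW}) applied to $\GG_n^{\circ}-\GG_n^{\circ}f_0$, together with the elementary bound $\EE_\xi|\GG_n^{\circ}f_0|\le\|f_0\|_{\PP_n}\le\|F\|_{\PP_n}$, yields
\[
\EE_\xi\,\|\GG_n^{\circ}\|_{\mc F}\;\lesssim\;\|F\|_{\PP_n}+\int_0^{2\|F\|_{\PP_n}}\sqrt{\log N\bigl(\epsilon,\mc F,L_2(\PP_n)\bigr)}\,d\epsilon .
\]
Substituting $\epsilon=u\|F\|_{\PP_n}$, using $N(u\|F\|_{\PP_n},\mc F,L_2(\PP_n))\le\sup_Q N(u\|F\|_{Q,2},\mc F,L_2(Q))$, and splitting the $u$-integral at $1$ (the piece on $[1,2]$ is dominated by the piece on $[0,1]$ by monotonicity of covering numbers in the radius), the right side is at most a constant times $\|F\|_{\PP_n}\,J(1,\mc F)$; here I use $J(1,\mc F)\ge 1$, which holds because the integrand $\sqrt{1+\log N}$ is always $\ge 1$.

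Finally I would upgrade from the conditional mean to conditional $L_p$-moments and then take the outer expectation. Changing one sign $\xi_i$ perturbs $\|\GG_n^{\circ}\|_{\mc F}$ by at most $2n^{-1/2}F(\mbX_i)$, so the bounded-differences inequality gives $\bigl(\EE_\xi\bigl|\|\GG_n^{\circ}\|_{\mc F}-\EE_\xi\|\GG_n^{\circ}\|_{\mc F}\bigr|^p\bigr)^{1/p}\lesssim\sqrt p\,\|F\|_{\PP_n}$, and hence $\bigl(\EE_\xi\|\GG_n^{\circ}\|_{\mc F}^p\bigr)^{1/p}\lesssim_p J(1,\mc F)\|F\|_{\PP_n}$ pointwise in the data, the $\sqrt p$ being absorbed into a $p$-dependent constant via $J(1,\mc F)\ge 1$. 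Taking the $L_p(P)$-norm and using $\EE\bigl[\|F\|_{\PP_n}^p\bigr]^{1/p}\le\|F\|_{P,2\vee p}$ — which for $p\ge 2$ is Minkowski's inequality applied to $\|n^{-1}\sum_i F(\mbX_i)^2\|_{L_{p/2}(P)}\le\|F^2\|_{L_{p/2}(P)}=\|F\|_{P,p}^2$, and for $1\le p<2$ follows from monotonicity of $L_p(P)$-norms together with $\EE_P\|F\|_{\PP_n}^2=\|F\|_{P,2}^2$ — combines with the displays above to give $\EE\bigl[\|\GG_n\|_{\mc F}^p\bigr]^{1/p}\lesssim_p J(1,\mc F)\|F\|_{P,2\vee p}$, as claimed. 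The main obstacle is this last step: the conditional chaining bound only controls a first moment, so one must invoke concentration to reach $L_p(\xi)$, confirm the resulting $\sqrt p$ loss is harmless, and correctly replace the data-dependent quantity $\|F\|_{\PP_n}$ by $\|F\|_{P,2\vee p}$ in both regimes $p\ge 2$ and $1\le p<2$; the substitution bookkeeping that converts the entropy integral into $\|F\|_{\PP_n}J(1,\mc F)$ is routine once the entropy is measured relative to $\|F\|_{Q,2}$.
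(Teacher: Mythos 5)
The paper does not prove this proposition at all: it is quoted verbatim as a known tool (Theorem 2.14.1 of \citet{vdvandW}), so there is no in-paper argument to compare against. Judged on its own, your reconstruction is essentially correct and in fact follows the same architecture as the textbook proof: symmetrize with the convex map $x\mapsto x^p$ (Lemma 2.3.1 of \citet{vdvandW}, using $P$-measurability to justify the conditional/Fubini manipulations), chain the conditionally sub-Gaussian Rademacher process in the $L_2(\PP_n)$ pseudometric, rescale by $\|F\|_{\PP_n}$ to pass to the uniform entropy integral (absorbing the single-function term and the $[1,2]$ piece via $J(1,\mc F)\ge 1$), and finally bound $\EE\big[\|F\|_{\PP_n}^p\big]^{1/p}$ by $\|F\|_{P,2\vee p}$ separately for $p\ge 2$ (Minkowski in $L_{p/2}$) and $1\le p<2$ (Jensen). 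The one genuine departure is how you upgrade the conditional bound from a first moment to an $L_p(\xi)$ moment: the textbook proof gets this for free because Corollary 2.2.8 controls the conditional $\psi_2$-Orlicz norm (hence all $p$-th moments), whereas you re-derive it via a bounded-differences/McDiarmid concentration step with increments $2n^{-1/2}F(\mathbf{X}_i)$, paying a $\sqrt{p}$ factor that is harmless since the implied constant is allowed to depend on $p$ (as it does in the cited theorem). Both routes work; the Orlicz-norm route is slightly cleaner, your route is more self-contained if one only remembers Dudley's bound in expectation.

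Two small blemishes, neither fatal. First, your opening reduction is misstated: on the event $\|F\|_{\PP_n}=0$ one has $\PP_n f=0$ for all $f$, but $\GG_n f=-\sqrt{n}\,P f$ need not vanish, so it is not true that "there is nothing to prove" for the unsymmetrized process. The case split is only needed (and only valid) after symmetrization, where $\|\GG_n^{\circ}\|_{\mc F}=0$ on that event and the conditional chaining bound holds trivially; since you symmetrize first, the argument survives, but the justification should be placed there. Second, be explicit that the conditional application of the sub-Gaussian maximal inequality uses that $\GG_n^{\circ}f$ depends on $f$ only through the vector $(f(\mathbf{X}_1),\dots,f(\mathbf{X}_n))$, so the supremum can be realized over a countable $L_2(\PP_n)$-dense subset; this, together with the assumed $P$-measurability, is what licenses both the chaining step and the Fubini step $\EE=\EE_{\mathbf{X}}\EE_{\xi}$.
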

\begin{proposition}\label{prop:vdv.2.14.16}
Let $\mc F$ be a $P$-measurable class of measurable functions with measurable envelope function $F$. Then, 
\begin{align*}
\left\|\|\GG_n\|_{\mc F}\right\|_{P,1}\lesssim J_{[]}(1,\mc F,L_2(P))\|F\|_{P,2},
\end{align*}
where the entropy integral $J_{[]}(\delta,\mc F,L_2(P))$ (for $\delta>0$) is given by
\begin{align*}
J_{[]}(\delta,\mc F,L_2(P))=\int_0^\delta \sqrt{1+\log N_{[]}\left(\epsilon\|F\|_{P,2},\mc F,L_2(P)\right)}d\epsilon.
\end{align*}
\end{proposition}

\section{Proofs of theorems, lemmas}

\subsection{Proof of Lemma \ref{lemma:cond.E1}}\label{proof:cond.E1}
Fix $\varepsilon, \delta > 0$. For any fixed $M>0$, we have
\begin{align*}
d_1 \left( \widehat\phi_a(\cdot|\mathbf{X}), {\phi}_{a,\infty}(\cdot|\mathbf{X})\right)&=\int_{-\infty}^{\infty}\left|\widehat\phi_a(s|\mathbf{X})-{\phi}_{a,\infty}(s|\mathbf{X})\right| \, ds \\
&= \int_{-M}^{M}\left|\widehat\phi_a(s|\mathbf{X})-{\phi}_{a,\infty}(s|\mathbf{X})\right|\, ds \\
&\indent + \int_{(-\infty,-M]\cup [M,\infty)}\left|\widehat\phi_a(s|\mathbf{X})-{\phi}_{a,\infty}(s|\mathbf{X})\right|\, ds\\
&\leq \int_{-M}^{M}\left|\widehat\phi_a(s|\mathbf{X})-{\phi}_{a,\infty}(s|\mathbf{X})\right| \, ds \\
&\indent +\int_{-\infty}^{-M} \widehat\phi_a(s|\mathbf{X}) ds + \int_{-\infty}^{-M}\phi_{a,\infty}(s|\mathbf{X}) \, ds\\
&\indent +\int_{M}^{\infty} \left[1-\widehat\phi_a(s|\mathbf{X})\right] \, ds + \int_{M}^{\infty}\left[1-\phi_{a,\infty}(s|\mathbf{X})\right] \, ds.
\end{align*}
Thus,
\begin{align*}
P_*\left\{d_1 \left( \widehat\phi_a(\cdot|\mathbf{X}), {\phi}_{a,\infty}(\cdot|\mathbf{X})\right)\right\}^2&\leq P_*\left\{\int_{-M}^{M}\left|\widehat\phi_a(s|\mathbf{X})-{\phi}_{a,\infty}(s|\mathbf{X})\right|ds\right\}^2 \\
&\indent +P_*\left\{\int_{-\infty}^{-M} \widehat\phi_a(s|\mathbf{X}) ds\right\}^2 + P_*\left\{\int_{M}^{\infty} \left[1-\widehat\phi_a(s|\mathbf{X})\right] \, ds\right\}^2\\
&\indent + P_*\left\{\int_{-\infty}^{-M}\phi_{a,\infty}(s|\mathbf{X})\, ds\right\}^2 + P_*\left\{\int_{M}^{\infty}\left[1-\phi_{a,\infty}(s|\mathbf{X})\right] ds\right\}^2
\end{align*}
Define
\begin{align*}
    D_{n,M}^{(1)} &:= P_*\left\{\int_{-M}^{M}\left|\widehat\phi_a(s|\mathbf{X})-{\phi}_{a,\infty}(s|\mathbf{X})\right|ds\right\}^2 \\
    D_{n,M}^{(2)} &:=    P_*\left\{\int_{-\infty}^{-M} \widehat\phi_a(s|\mathbf{X}) ds\right\}^2 + P_*\left\{\int_{M}^{\infty} \left[1-\widehat\phi_a(s|\mathbf{X})\right] \, ds\right\}^2\\
    d_{M}^{(3)} &:=  P_*\left\{\int_{-\infty}^{-M}\phi_{a,\infty}(s|\mathbf{X})\, ds\right\}^2 + P_*\left\{\int_{M}^{\infty}\left[1-\phi_{a,\infty}(s|\mathbf{X})\right] ds\right\}^2.
\end{align*} 
Then, our goal is to show that for all $\varepsilon, \delta > 0$, there exists $N = N_{\varepsilon, \delta}$ and $M = M_{\varepsilon, \delta}$ such that for all $n \geq N$,
\begin{align}\label{Lemma1.WTS}
P_* \left( D_{n,M}^{(1)} > \delta\right) < \varepsilon, \quad  P_* \left( D_{n,M}^{(2)} > \delta\right) < \varepsilon, \quad d_{M}^{(3)} < \delta.
\end{align}

Since $ \int_\RR |s| \, d\phi_{a,\infty}(s|\mathbf{X})<T(\mathbf{X})$ and $T\in L_2(P_*)$,  $\int_{-\infty}^{-M} \phi_{a,\infty}(s|\mathbf{X}) \, ds$ and $\int_{M}^{\infty} \left[1-\phi_{a,\infty}(s|\mathbf{X})\right] \, ds$ both decrease to 0 as $M \to \infty$  for $P_*$-a.e.\ $\mathbf{X}$. Therefore, by the monotone convergence theorem, 
\begin{align*}
\lim_{M\rightarrow\infty}P_*\left\{\int_{-\infty}^{-M}  \phi_{a,\infty}(s|\mathbf{X})\, ds\right\}^2=0, {~\rm and ~} \lim_{M\rightarrow\infty}P_*\left\{\int_{M}^{\infty}  \left[1-\phi_{a,\infty}(s|\mathbf{X})\right] \, ds\right\}^2=0,
\end{align*}
Hence, we have that for all $\delta > 0$, there exists $M_3(\delta)$ such that for all $M \geq M_3(\delta)$, $d_{M}^{(3)} < \delta$. 

\begin{align*}
P_*\left\{\int_{-\infty}^{-M} \widehat\phi_a(s|\mathbf{X}) ds\right\}^2 + P_*\left\{\int_{M}^{\infty} \left[1-\widehat\phi_a(s|\mathbf{X})\right] ds\right\}^2\leq \mathcal{U}_n(M;\mathbf{Z}),
\end{align*}
$M_2(\varepsilon, \delta)$ and $N_2(\varepsilon, \delta)$ such that for all $M \geq M_2(\varepsilon, \delta)$ and $n \geq N_2(\varepsilon, \delta)$, $P_*\left(\mathcal{U}_n(M;\mathbf{Z})>\delta\right)  <\epsilon$.

Finally, by the Cauchy-Schwartz inequality,
\begin{align*}
P_*\left\{\int_{-M}^{M}\left|\widehat\phi_a(s|\mathbf{X})-{\phi}_{a,\infty}(s|\mathbf{X})\right|\, ds\right\}^2 &\leq  P_*\left\{2M\int_{-M}^M \left|\widehat\phi_a(s|\mathbf{X})-{\phi}_{a,\infty}(s|\mathbf{X})\right|^2\, ds\right\}\\
&\leq 2M\int_{-M}^M P_*\left|\widehat\phi_a(s|\mathbf{X})-{\phi}_{a,\infty}(s|\mathbf{X})\right|^2 \, ds\\
&\leq 4M^2 \sup_{s\in[-M,M]} P_*\left|\widehat\phi_a(s|\mathbf{X})-{\phi}_{a,\infty}(s|\mathbf{X})\right|^2.
\end{align*}
Since $\sup_{s\in[-M,M]} P_*\left|\widehat\phi_a(s|\mathbf{X})-{\phi}_{a,\infty}(s|\mathbf{X})\right|^2=o_p(1)$ for any $M>0$ by assumption, for all $\varepsilon, \delta, M > 0$, there exists $N_1(\varepsilon, \delta, M)$ such that for all $n \geq N_1(\varepsilon, \delta, M)$, $P_* \left( D_{n,M}^{(1)} > \delta\right) < \varepsilon$.

For any $\varepsilon, \delta > 0$, we let $M_{\epsilon,\delta} := \max\{ M_2(\varepsilon, \delta), M_3(\delta)\}$ and $N_{\epsilon,\delta} := \max\{ N_1(\varepsilon, \delta, M_{\epsilon,\delta}), N_2(\varepsilon, \delta)\}$. This implies that, for any $n \geq N_{\epsilon,\delta}$, \eqref{Lemma1.WTS} holds.
This completes the proof.

\subsection{A semi-parametric class of functions $\mc F_H$}\label{FH.Wasserstein.proof}

Recall the class of functions $\mc F_H$ defined in \eqref{F.H.class}.
We provide a lemma under which $\mc F_H$ satisfies conditions  \ref{assm:nuisanceconvergence}, \ref{assm:properCDF}, \ref{assm:basicGC}, and \ref{assm:integratedGC}.

\begin{lemma} \label{lemma: meanvar}
Suppose $\widehat\phi_a(s | \mathbf{x}) = H((s-\hat\mu_a(\mathbf{x}))/\hat\sigma_a(\mathbf{x}))$ for a fixed CDF $H$ with mean 0 and variance 1 and estimators $\hat\mu_a$ and $\hat\sigma_a$ such that  $\hat\mu_{a} \in \mc{F}_1$ and $\hat\sigma_{a} \in \mc{F}_2$, where $\mc{F}_1$ and $\mc F_2$ are classes of measurable functions uniformly bounded by $K>0$, and $\mc{F}_2$ is also uniformly bounded away from $0$ by $1/K$, the function $L:\RR \mapsto [0,\infty)$ defined as
\begin{align}\label{function.L}
L(s):=\sup_{K(s-K)<s_1\neq s_2\le K(s+K)} \frac{|H(s_2)-H(s_1)|}{|s_2-s_1|},
\end{align}
satisfies $\int_{\RR}(|s|+1)L(s)ds<\infty$, and $\| \hat\mu_{a}-\mu_{a,\infty} \| \rightarrow_p 0$ and $\|\hat\sigma_{a} -\sigma_{a,\infty} \| \rightarrow_p 0$.
Then conditions  \ref{assm:nuisanceconvergence} and \ref{assm:properCDF} are satisfied.

Assume that $\mc F_1$ and $\mc F_2$ satisfy the aforementioned conditions.
Suppose that $\widehat\phi_a \in \mc F_H$ and $\mc F_1 \times \mc F_2$ is endowed with $L_1(P_*)$ distance, where
\begin{align*}
	L_1((f_1,g_1),(f_2,g_2))(P_*)=P_*\left[\left|f_1-f_2\right|+\left|g_1-g_2\right|\right].	
\end{align*}
Without loss of generality, assume that the support of $Y^a$ $(a\in\{0,1\})$ is $\RR$.
If the $L_1(P_*)$ bracketing number satisfies $N_{[]}(\varepsilon,\mc F_1\times \mc F_2,L_1(P_*))<\infty$ for every $\varepsilon>0$ and $H^{-1}$ is Lipschitz continuous on any compact interval contained in $(0,1)$, then condition \ref{assm:basicGC} is satisfied. If in addition $J_1(t) := \int_{-\infty}^t H(s)ds$ and $J_2(t)=\int_{t}^{\infty} (1-H(s))ds$ satisfy that $J_1^{-1}$ and $J_2^{-1}$ are Lipschitz continuous on every compact interval $[t,J_1(0)]$ and $[t,J_2(0)]$, respectively, for every $0<t$, then condition \ref{assm:integratedGC} is satisfied. 
\end{lemma}

\begin{proof}
First, we show \ref{assm:nuisanceconvergence}, since \ref{assm:properCDF} is trivial.
Under the conditions in Lemma \ref{lemma: meanvar}, we have
\begin{align*}
&P_*\int_{\RR}\left|H\left(\frac{s-\hat\mu_a(\mathbf{X})}{\hat\sigma_{a}(\mathbf{X})}\right)-H\left(\frac{s-\mu_{a,\infty}(\mathbf{X})}{\sigma_{a,\infty}(\mathbf{X})}\right)\right|ds\\
&\indent \leq P_*\int_{\RR} L(s)\left|\frac{s-\hat\mu_a(\mathbf{X})}{\hat\sigma_{a}(\mathbf{X})}-\frac{s-\mu_{a,\infty}(\mathbf{X})}{\sigma_{a,\infty}(\mathbf{X})}\right|ds\\
&\indent \leq P_*\int_{\RR} L(s)\left|\frac{s(\hat\sigma_{a}(\mathbf{X})- \sigma_{a,\infty}(\mathbf{X}))}{\hat\sigma_{a}(\mathbf{X})\sigma_{a,\infty}(\mathbf{X})}+\frac{\hat\sigma_{a}(\mathbf{X})(\mu_{a,\infty}(\mathbf{X})-\hat\mu_{a}(\mathbf{X}))+\hat\mu_{a}(\mathbf{X})(\hat\sigma_{a}(\mathbf{X})-\sigma_{a,\infty}(\mathbf{X}))}{\hat\sigma_{a}(\mathbf{X})\sigma_{a,\infty}(\mathbf{X})}\right|ds\\
&\indent \leq P_*\int_{\RR} L(s)\left[(K^2|s|+K^3)|\hat\sigma_{a}(\mathbf{X})-\sigma_{a,\infty}(\mathbf{X})|+K|\hat\mu_{a}(\mathbf{X})-\mu_{a,\infty}(\mathbf{X})|\right]ds\rightarrow 0.
\end{align*}

Now, we show \ref{assm:basicGC} and \ref{assm:integratedGC}.
For each $\epsilon>0$, there exists $0<\epsilon_0<1$ which satisfies $H(s)<\epsilon$ when $s\le K(H^{-1}(\epsilon_0)+K)$, and $1-H(s)<\epsilon$ when $s\ge K(H^{-1}(1-\epsilon_0)-K)$.
Define $L:=\max_{s\in\RR} L(s)$ and $L_{\epsilon_0}$ as a Lipschitz constant for $H^{-1}$ on the interval $[\epsilon_0,1-\epsilon_0]$.
We further let $K_{H,\epsilon_0}$ as $\max\{|H^{-1}(\epsilon_0)|,|H^{-1}(1-\epsilon_0)|\}$.
By the bracketing number condition on the class $\mc F_1\times \mc F_2$, we have finite number of brackets $(f_{l,i},g_{l,i}),\,(f_{u,i},g_{u,i})$ where $L_1((f_{l,i},g_{l,i}),(f_{u,i},g_{u,i}))(P_*)\le \epsilon/(2L(K^3+K_{H,\epsilon_0}))$ for $1\le i\le N_{\epsilon}$ where $N_{\epsilon}=N_{\epsilon,L,K,K_{H,\epsilon_0}}=N_{[]}(\epsilon/(2L(K^3+K_{H,\epsilon_0})),\mc F_1\times \mc F_2,L_1(P_*))$.
Now we construct the brackets for the class $\mc F_H$ at $H((s-f)/g)$ by $H((H^{-1}(\epsilon_j)-f_{u,i})/g_{u,i})$, $H((H^{-1}(\epsilon_{j+i})-f_{l,i})/g_{l,i})$, where $(f,g)$ has lower and upper brackets $(f_{l,i},g_{l,i}),(f_{u,i},g_{u,i})$ and $H^{-1}(\epsilon_j) \le s< H^{-1}(\epsilon_{j+1})$, and $0\le(\epsilon_{j+1}-\epsilon_j)\le \epsilon /(2LL_{\epsilon_0}K)$ for $1\le i\le N'_\epsilon-1$, $\epsilon_1=\epsilon_0,\,\epsilon_{N'_\epsilon}=1-\epsilon_0$.
As in Section \ref{FH.Wasserstein.proof}, assuming that $K>1$ without loss of generality, for $\epsilon_0\le s\le 1-\epsilon_0$, we have
\begin{align*}
	&P_*\left|H\left(\frac{H^{-1}(\epsilon_{j})-f_{u,i}}{g_{u,i}}\right)-H\left(\frac{H^{-1}(\epsilon_{j+1})-f_{l,i}}{g_{i,i}}\right)\right|\\
	&\indent \le L\left[P_*\left|\frac{g_{u,i}(f_{u,i}-f_{l,i})+f_{u,i}(g_{u,i}-g_{l,i})}{g_{u,i}g_{i,i}}\right|\right.\\
	&\qquad\qquad\left.+P_*\left|\frac{g_{l,i}(H^{-1}(\epsilon_{j})-H^{-1}(\epsilon_{j+1}))+H^{-1}(\epsilon_{j+1})(g_{l,i}-g_{u,i})}{g_{u,i}g_{i,i}}\right|\right]\\
	&\indent \le L(K^3+K_{H,\epsilon_0}) L_1((f_{l,i},g_{l,i}),(f_{u,i},g_{u,i}))(P_*) + LL_{\epsilon_0}K(\epsilon_{j+i}-\epsilon_j)\\
	&\indent \le \epsilon.
\end{align*}
In addition, since $H(s)<\epsilon$ on $s\in(-\infty,K(H^{-1}(\epsilon_0)+K)]$, $0$ and $H(K(H^{-1}(\epsilon_0)+K)$ can serve as lower and upper bracket for any $H((s-f)/g)\in\mc F_H$ with $s\in(-\infty,K(H^{-1}(\epsilon_0)+K)]$.
Analogous brackets $1$ and $H(K(H^{-1}(1-\epsilon_0)-K)$ work for the case $s\ge K(H^{-1}(1-\epsilon_0)-K)$.
Thus, we have $N_{[]}(\epsilon, \mc F_H,L_1(P_*))\le N_\epsilon N'_{\epsilon}+2<\infty$, and this implies the condition \ref{assm:basicGC}. 

To show \ref{assm:integratedGC}, first we notice that, the primitive $\int_{-\infty}^{t}H((s-f)/g)ds$ is $gJ_1((t-f)/g)$ in $t\in(-\infty,0]$, similarly one can see the primitive $\int_0^{\infty} \left(1-H((s-f)/g)\right)ds$ is $gJ_2((t-f)/g)$ in $t\in[0,\infty)$.
$J_1$ and $J_2$ are uniformly bounded on $(-\infty,0],\,[0,\infty)$ by sufficiently large $K_J>0$, respectively, since $H$ has finite first moment, which implies $\int_{-\infty}^0 H(s)ds+\int_{0}^{\infty} \left(1-H(s)\right)ds <\infty$.
Recalling that $\mc F_2$ is uniformly bounded by $K$, as in the derivation steps to show \ref{assm:basicGC}, there exists $M_{J_1}<0$ such that $J(s)<\epsilon$ when $s<K(J_1^{-1}(M_{J_1})+K)$, for each $\epsilon>0$.
Now, for general bracket pairs, we have,
\begin{align*}
	&P_*\left|g_{l,i}J_1\left(\frac{J_1^{-1}(\epsilon_{j})-f_{u,i}}{g_{u,i}}\right)-g_{u,i}J_1\left(\frac{J_1^{-1}(\epsilon_{j+1})-f_{l,i}}{g_{i,i}}\right)\right|\\
	&\le P_*\left|\left(g_{l,i}-g_{u,i}\right)J_1\left(\frac{J_1^{-1}(\epsilon_{j})-f_{u,i}}{g_{u,i}}\right)\right|\\
	&\indent\qquad+P_*\left|g_{u,i}\left[J_1\left(\frac{J_1^{-1}(\epsilon_{j})-f_{u,i}}{g_{u,i}}\right)-J_1\left(\frac{J_1^{-1}(\epsilon_{j+1})-f_{l,i}}{g_{i,i}}\right)\right]\right|\\
	&\indent \le K_JP_*\left|g_{u,i}-g_{l,i}\right|+ L_{J_1}KP_*\left|\frac{g_{u,i}(f_{u,i}-f_{l,i})+f_{u,i}(g_{u,i}-g_{l,i})}{g_{u,i}g_{i,i}}\right|\\
	&\qquad\qquad+L_{J_1}KP_*\left|\frac{g_{l,i}(J_1^{-1}(\epsilon_{j})-J_1^{-1}(\epsilon_{j+1}))+J_1^{-1}(\epsilon_{j+1})(g_{l,i}-g_{u,i})}{g_{u,i}g_{i,i}}\right|,
\end{align*}
where $L_{J_1}$ is a Lipschitz constant of $J_1^{-1}$ on $[M_{J_1},J_1(0)]$. 
Analogous derivation to steps for proving \ref{assm:basicGC} can be directly applied to control the last term above, since $J_1$ has global Lipschitz constant $1$ because of condition \ref{assm:properCDF}.
In addition, similar reasoning can be applied to $J_2$ on $t\ge 0$. So we omit the proof.
\end{proof}

Many conditional distributions can satisfy the conditions of Lemma~\ref{lemma: meanvar}, such as log-concave CDFs.
Moreover, any union of classes of $\mc F_H$ over finitely many  $H$
satisfies the conditions as well.

We next provide a follow-up to  Lemma~\ref{lemma: meanvar} under which \ref{assm:phiestholder} and \ref{assm:pi-bracketing} hold for the special case $\widehat\phi_a(s | \mathbf{X}) = H( (s - \hat\mu_a(\mathbf{X})) / \hat\sigma_a(\mathbf{X})) \in \mc F_H$.

\begin{lemma}\label{lemma: meanvar2}
Suppose the setup and conditions of Lemma~\ref{lemma: meanvar} hold.  Define the Lipschitz constant of $H^{-1}$ on $[\epsilon, 1-\epsilon]$ as
\begin{align}\label{L.eps}
L_\varepsilon:=\sup_{\epsilon \le s_1\neq s_2\le 1-\epsilon} \frac{|H^{-1}(s_2)-H^{-1}(s_1)|}{|s_2-s_1|}.
\end{align}
If $\log L_\varepsilon\lesssim (1/\varepsilon)^V$, $\max\{|H^{-1}(\varepsilon)|,\, |H^{-1}(1-\varepsilon)|\} \lesssim \varepsilon^{1 - V /V_0}$, and $ \log N_{[]}(\varepsilon,\mc F_1\times \mc F_2,L_1(P_*))\lesssim (1/\varepsilon)^{V_0}$ for $0\le V_0<V<2$, then \ref{assm:phiestholder} and the second statement of~\ref{assm:pi-bracketing} hold. 
\end{lemma}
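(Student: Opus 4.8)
The plan is to read off \ref{assm:phiestholder} from a global Lipschitz property of $H$ that is already implied by the hypotheses of Lemma~\ref{lemma: meanvar}, and to establish the entropy bound by truncating the range of the shift $s$ and pushing the assumed $L_1(P_*)$ bracketing bound for $\mc F_1\times\mc F_2$ through the transformation $H$. First I would note that the function $L$ of~\eqref{function.L} is, for each $s$, precisely the Lipschitz constant of $H$ on the window $[K(s-K),K(s+K)]$; since this window has fixed length $2K^2$, a point at which $H$ had an arbitrarily large local slope would force $L$ to be arbitrarily large on a set of $s$ of length $\gtrsim K$, which the integrability hypothesis $\int_\RR(|s|+1)L(s)\,ds<\infty$ rules out. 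Hence $H$ has a bounded density and a global Lipschitz constant $\kappa<\infty$. Then \ref{assm:phiestholder} is immediate: for $s,t\in I_{s_0,\omega}$ and any $\mathbf{x}$, the bounds $|\hat\mu_a|\le K$, $1/K\le\hat\sigma_a\le K$ give
\begin{align*}
|f(t|\mathbf{x})-f(s|\mathbf{x})|=\left|H\!\left(\frac{t-\hat\mu_a(\mathbf{x})}{\hat\sigma_a(\mathbf{x})}\right)-H\!\left(\frac{s-\hat\mu_a(\mathbf{x})}{\hat\sigma_a(\mathbf{x})}\right)\right|\le\frac{\kappa}{\hat\sigma_a(\mathbf{x})}\,|t-s|\le\kappa K\,|t-s|
\end{align*}
for every $f\in\mc F_H$, so \ref{assm:phiestholder} holds with $\alpha=1$ and $R_1\equiv\kappa K\in L_2(P_*)$.

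For the second statement of \ref{assm:pi-bracketing}, namely $\sup_Q\log N(\varepsilon,\mc F_H,L_2(Q))\lesssim\varepsilon^{-V}$, I would proceed in three steps. \emph{(i) Truncation in $s$.} Set $T_\varepsilon:=K\bigl(|H^{-1}(\varepsilon)|\vee|H^{-1}(1-\varepsilon)|\bigr)+K$, so $T_\varepsilon\lesssim\varepsilon^{1-V/V_0}$ by hypothesis. If $s>T_\varepsilon$ then $(s-\hat\mu_a(\mathbf{x}))/\hat\sigma_a(\mathbf{x})\ge(s-K)/K>H^{-1}(1-\varepsilon)$ for every $\mathbf{x}$, so $\mathbf{x}\mapsto H((s-\hat\mu_a(\mathbf{x}))/\hat\sigma_a(\mathbf{x}))$ is within $\varepsilon$ of the constant $1$ in $\|\cdot\|_\infty$, and symmetrically for $s<-T_\varepsilon$; hence two functions $\varepsilon$-cover $\{f(s|\cdot):|s|>T_\varepsilon,f\in\mc F_H\}$ in $L_2(Q)$, uniformly in $Q$. \emph{(ii) Transfer for $|s|\le T_\varepsilon$.} Since $H$ is globally $\kappa$-Lipschitz, an $(\varepsilon/\kappa)$-cover in $L_2(Q)$ of $\mathcal G_{T_\varepsilon}:=\{\mathbf{x}\mapsto(s-\mu(\mathbf{x}))/\sigma(\mathbf{x}):|s|\le T_\varepsilon,\mu\in\mc F_1,\sigma\in\mc F_2\}$ induces an $\varepsilon$-cover of $\{f(s|\cdot):|s|\le T_\varepsilon,f\in\mc F_H\}$; and from
\begin{align*}
\left\|\frac{s_1-\mu_1(\cdot)}{\sigma_1(\cdot)}-\frac{s_2-\mu_2(\cdot)}{\sigma_2(\cdot)}\right\|_{L_2(Q)}\le K|s_1-s_2|+K\|\mu_1-\mu_2\|_{L_2(Q)}+(T_\varepsilon+K)K^2\|\sigma_1-\sigma_2\|_{L_2(Q)}
\end{align*}
a cover of $\mathcal G_{T_\varepsilon}$ is assembled from a grid of $[-T_\varepsilon,T_\varepsilon]$ of mesh $\asymp\varepsilon$, an $L_2(Q)$ cover of $\mc F_1$ at scale $\asymp\varepsilon$, and an $L_2(Q)$ cover of $\mc F_2$ at scale $\asymp\varepsilon/T_\varepsilon$. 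Converting the assumed $L_1(P_*)$ bracketing bound for $\mc F_1\times\mc F_2$ into these $\sup_Q$ $L_2$ covering bounds uses the uniform boundedness of $\mc F_1,\mc F_2$ and standard bracketing--entropy comparisons, and it is here that the hypothesis $\log L_\varepsilon\lesssim\varepsilon^{-V}$ on the modulus of $H^{-1}$ enters, quantifying---through the reparametrization $t\mapsto H^{-1}(t)$ of the thresholds in the associated sub-level-set classes---the complexity added in that conversion; the upshot is
\begin{align*}
\sup_Q\log N(\varepsilon,\mc F_H,L_2(Q))\lesssim(T_\varepsilon/\varepsilon)^{V_0}+\varepsilon^{-V_0}+\log(T_\varepsilon/\varepsilon)+\log L_{c\varepsilon}
\end{align*}
for a universal constant $c>0$.

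\emph{(iii) Balancing, and the main obstacle.} Since $T_\varepsilon\lesssim\varepsilon^{1-V/V_0}$ we get $(T_\varepsilon/\varepsilon)^{V_0}\lesssim\varepsilon^{-V}$; since $V_0<V$, $\varepsilon^{-V_0}\le\varepsilon^{-V}$; trivially $\log(T_\varepsilon/\varepsilon)\lesssim\log(1/\varepsilon)\lesssim\varepsilon^{-V}$; and $\log L_{c\varepsilon}\lesssim\varepsilon^{-V}$ by hypothesis, while $V\in[0,2)$ comes from $V<2$. Hence $\sup_Q\log N(\varepsilon,\mc F_H,L_2(Q))\lesssim\varepsilon^{-V}$, which is the second statement of \ref{assm:pi-bracketing}. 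The delicate point is step (ii): the truncation radius $T_\varepsilon$ and the moduli $L_\varepsilon$ and $H^{-1}(1-\varepsilon)$ all diverge as $\varepsilon\downarrow0$, so $\mc F_2$ must be resolved at the magnified scale $\asymp\varepsilon/T_\varepsilon$ and the sub-level-set conversion costs $\log L_{c\varepsilon}$; the three quantitative hypotheses of Lemma~\ref{lemma: meanvar2} are calibrated precisely so that, after this magnification, the entropy exponent stays strictly below $2$---indeed equals $V$.
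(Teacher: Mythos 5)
Your treatment of \ref{assm:phiestholder} is correct and in fact more explicit than the paper's: since any difference quotient $M$ of $H$ forces $L(s)\ge M$ for all $s$ in an interval of length at least $K$, the hypothesis $\int_{\RR}(|s|+1)L(s)\,ds<\infty$ from Lemma~\ref{lemma: meanvar} (with $L$ as in \eqref{function.L}) does yield a finite global Lipschitz constant $\kappa$ for $H$, and then every $f\in\mc F_H$ is $\kappa K$-Lipschitz in $s$, giving \ref{assm:phiestholder} with $\alpha=1$ and constant $R_1$.

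The entropy half, however, has a genuine gap at exactly the point you flag as delicate. Your steps (i)--(ii) reduce the problem to covering $\mc F_1\times\mc F_2$ (and a grid in $s$) in $L_2(Q)$ \emph{uniformly over all discrete probability measures} $Q$, but the only complexity hypothesis available is an $L_1(P_*)$ \emph{bracketing} bound with respect to the single fixed measure $P_*$. There is no ``standard bracketing--entropy comparison'' that converts the latter into the former: a class can have small $L_1(P_*)$ bracketing numbers yet be arbitrarily complex under measures $Q$ charging $P_*$-null sets, and uniform boundedness does not repair this. The appeal to $\log L_\varepsilon\lesssim\varepsilon^{-V}$ via ``sub-level-set reparametrization'' names no mechanism, so the displayed entropy bound in step (ii), and hence the balancing in step (iii), is asserted rather than proved. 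The paper's argument is structured differently and uses each hypothesis at a concrete spot: it recycles the explicit brackets for $\mc F_H$ built in the proof of Lemma~\ref{lemma: meanvar}, indexed by (a) a grid of quantile levels $\epsilon_j$ with knots $H^{-1}(\epsilon_j)$, of cardinality of order $L_\varepsilon/\varepsilon$ --- this is where $\log L_\varepsilon\lesssim\varepsilon^{-V}$ enters, counting grid points of $H^{-1}$ rather than mediating any change of measure --- and (b) $L_1(P_*)$ brackets of $\mc F_1\times\mc F_2$ taken at the rescaled level $\varepsilon/K_{H,\varepsilon}$ with $K_{H,\varepsilon}=\max\{|H^{-1}(\varepsilon)|,|H^{-1}(1-\varepsilon)|\}$, so that $\log N_\varepsilon\lesssim(K_{H,\varepsilon}/\varepsilon)^{V_0}\lesssim\varepsilon^{-V}$ by the hypotheses $K_{H,\varepsilon}\lesssim\varepsilon^{1-V/V_0}$ and $\log N_{[]}(\varepsilon,\mc F_1\times\mc F_2,L_1(P_*))\lesssim\varepsilon^{-V_0}$. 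The whole count thus stays at the level of bracketing entropy of $\mc F_H$ with respect to $P_*$ (the paper is admittedly terse in its final passage from this count to the stated condition), whereas your route requires uniform-entropy control of $\mc F_1$ and $\mc F_2$ themselves, which the lemma does not assume; as written, the proposal does not close this hole.
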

 
A wide range of conditional distributions  satisfy this condition, such as normals, exponentials, and gammas.
In addition, we note that the convergence rate of $\mc G(s,t;\mathbf Z;S)$ is controlled by 
\begin{align*}
\mc G(s,t;\mathbf Z;S) \lesssim |t-s|\left(\|\widehat\mu_{a}-\mu_{a,\infty}\|+\|\widehat\sigma_{a}-\sigma_{a,\infty}\|\right).
\end{align*}
Hence, when $\widehat\phi_a \in \mc F_H$, \ref{assm:L2conditionsforCI} is satisfied under sufficient rates of convergence of $\|\widehat\pi_a-\pi_{a,\infty}\|$ and $\|\widehat\mu_{a}-\mu_{a,\infty}\|+\|\widehat\sigma_{a}-\sigma_{a,\infty}\|$.

\begin{proof}
\ref{assm:phiestholder} is trivial.
We showed that $N_{[]}(\epsilon, \mc F_H,L_1(P_*))$ is controlled by $N_\epsilon N'_\epsilon$ in Section \ref{FH.Wasserstein.proof}. If $H(K(H^{-1}(\epsilon_0)+K))=\epsilon$, then one can approximate $\epsilon_0$ with $H\left(\frac{H^{-1}(\epsilon)-K}{K}\right)$. Similarly, if $1-H(s)=H(K(H^{-1}(1-\epsilon_0)-K))$, then one can approximate $1-\epsilon_0$ with $H\left(\frac{H^{-1}(1-\epsilon)+K}{K}\right)$.
Since we assumed $K>>1$, one can approxiamtely control $H^{-1}(\epsilon)\le (H^{-1}(\epsilon)-K)/K$ and $H^{-1}(1-\epsilon)\ge (H^{-1}(1-\epsilon)+K)/K$.
This implies that one can approximately control $[\epsilon_0,1-\epsilon_0]$ by $[\epsilon,1-\epsilon]$.
Thus, we have $N'_\epsilon \lesssim \frac{L_{\epsilon}}{\epsilon}$ approximately.
Hence, one could control $\log N'_\epsilon$ approximately by $\log (L_{\epsilon}/\epsilon)\lesssim (1/\epsilon)^V-\log(\epsilon)$.
Furthermore, $K_{H,\epsilon_0}=\max\{|H^{-1}(\epsilon_0)|,\, |H^{-1}(1-\epsilon_0)|\}$ can be further approximately bounded by $K_{H,\epsilon}:=\max\{|H^{-1}(\epsilon)|,|H^{-1}(1-\epsilon)|\}$.
This yields
\begin{align*}
\log N_{\epsilon}&\lesssim \log N_{[]}(\epsilon/K_{H,\epsilon},\mc F_1\times \mc F_2,L_1(P_*))\\
&\le (K_{H,\epsilon}/\epsilon)^{V_0} \lesssim (1/\epsilon)^V,
\end{align*}
since $K_{H,\epsilon}^{V_0}\lesssim (1/\epsilon)^{V-V_0}$. 
Thus, $\log N_\epsilon + \log N'_\epsilon\lesssim (1/\epsilon)^V-\log(\epsilon)$, and this implies condition \ref{assm:pi-bracketing}.
\end{proof}

\subsection{Proof of Theorem \ref{prop:Consistency}}\label{appd:consistency}
Recall that $d_1$ denotes the Wasserstein distance; see \eqref{metric:Wasser}. In the proof of Theorem \ref{prop:Consistency} (see Section \ref{sec:consistency.proof} below), we will check that $d_1(\widehat{F}^c_{a,n},F_a)\rightarrow_p 0$ implies the claimed convergence, and $d_1(\widehat{F}^c_{a,n},F_a)\rightarrow_p 0$ if and only if $\widehat{F}^c_{a,n}(s) \rightarrow_p F_a(s)$ for all $s \in \mathbb{R}$ and $\int_{\mathbb{R}}|s|d\widehat{F}^c_{a,n}(s)\rightarrow_p \int_{\mathbb{R}}|s| \, dF_a(s)$.
Lemmas \ref{lem1} and \ref{lem2} provide conditions on $\widehat F_{a,n}\,\widehat F_{a,n}^{c_0}$ that are used for establishing $\widehat{F}^c_{a,n}(s) \rightarrow_p F_a(s)$ for all $s \in \mathbb{R}$ and $\int_{\mathbb{R}}|s|d\widehat{F}^c_{a,n}(s)\rightarrow_p \int_{\mathbb{R}}|s| \, dF_a(s)$.
We first provide proofs of Lemmas \ref{lem1} and \ref{lem2}.

\subsubsection{Proof of Lemma \ref{lem1}}\label{appd:lem1}
\begin{proof}
For any  $x\in\mathbb{R}$, $|\widehat{F}^{c_0}_{a,n}(x) - F_a(x)|\le  |\widehat{F}_{a,n}(x) - F_a(x)|$. Moreover,
\begin{align*}
   |\widehat{F}^c_{a,n}(x) - F_a(x)| &= I(x<L_n)F_a(x) + I(x\ge U_n)[1-F_a(x)]\\
   &\qquad + I(L_n\le x <U_n)|\widehat{F}^c_{a,n}(x)-F_a(x)|\\
   &\le F_a(L_n)+ [1-F_a(U_n)]+I(L_n\le x <U_n)|\widehat{F}^c_{a,n}(x)-F_a(x)|\\
   &= o_p(1)+ I(L_n\le x <U_n)|\widehat{F}^c_{a,n}(x)-F_a(x)|.
\end{align*}
If $L_n\le x <U_n$, then there exists $k\in\{1,\dotsc,m_n-1\}$ such that $x\in [s_k,s_{k+1})$. Then,
\begin{align}
    |\widehat{F}^c_{a,n}(x)-F_a(x)| &= |\widehat{F}^c_{a,n}(s_{k})-F_a(x)|\nonumber\\
    &\le |\widehat{F}^c_{a,n}(s_{k})-F_a(s_{k})| + |F_a(s_{k})-F_a(x)|,
\end{align}
and $|F_a(s_{k})-F_a(x)| = o_p(1)$ because $F_a$ is uniformly continuous and $|s_k - x| \leq \delta_n \to_p 0$. We also have
\begin{align*}
    \max_k|\widehat{F}^c_{a,n}(s_k)-F_a(s_k)|&\le \max_k|\widehat{F}^{c_0}_{a,n}(s_k)-F_a(s_k)|\le \max_k |\widehat{F}_{a,n}(s_k)-F_a(s_k)|\rightarrow_p 0.
\end{align*}
where the first inequality is from Theorem~1(i) of~\citet{WvdlC2020}.
\end{proof}

\subsubsection{Proof of Lemma \ref{lem2}}\label{appd:lem2}
\begin{proof}
   We can write
   \begin{align*}
    \int_{\mathbb{R}}|s| \, d\left\{\widehat{F}^c_{a,n}(s)-F_a(s)\right\}&= \int_{-\infty}^{L_n}|s| \, d\left\{\widehat{F}^c_{a,n}(s)-F_a(s)\right\}+ \int_{L_n}^{U_n} |s| \, d\left\{\widehat{F}^c_{a,n}(s)-F_a(s)\right\}\\ 
    &\indent+ \int_{U_n}^{\infty}|s| \, d\left\{\widehat{F}^c_{a,n}(s)-F_a(s)\right\}.
   \end{align*}
  Since $\widehat{F}^c_{a,n}(s) = 0$ for $s \leq L_n$ and $\widehat{F}^c_{a,n}(s) = 1$ for $s \geq U_n$,
   \begin{align*}
     \int_{-\infty}^{L_n}|s| \, d\left\{\widehat{F}^c_{a,n}(s) - F_a(s)\right\} &=  -\int_{-\infty}^{L_n} |s| \, dF_a(s), \text{ and} \\
      \int_{U_n}^{\infty}|s| \, d\left\{\widehat{F}^c_{a,n}(s)-F_a(s)\right\}&= -\int_{U_n}^{\infty}|s| \, dF_a(s)
   \end{align*}
   Since $\int_{\mathbb{R}}|s| \, dF_a(s)<\infty$, $F_a(L_n)\rightarrow_p 0$, and $F_a(U_n)\rightarrow_p 1$ by assumption, $\int_{-\infty}^{L_n} |s| \, dF_a(s) = o_p(1)$ and $\int_{U_n}^{\infty}|s|dF_a(s) = o_p(1)$. 

Now, it suffices to study the limiting behavior of $\int_{L_n}^{U_n} |s| \, d\left\{\widehat{F}^c_{a,n}(s)-F_a(s)\right\}$.
First, we have
\begin{align*}
\int_{L_n}^{U_n}|s|d\left\{\widehat{F}^c_{a,n}(s)-F_a(s)\right\} &=  \int_{L_n \wedge 0}^{U_n \wedge 0}|s| \, d\left\{\widehat{F}^c_{a,n}(s)-F_a(s)\right\} +  \int_{L_n \vee 0}^{U_n \vee 
 0}|s| \, d\left\{\widehat{F}^c_{a,n}(s)-F_a(s)\right\} \\ 
 &= -\int_{L_n \wedge 0}^{U_n \wedge 0}sd\left\{\widehat{F}^c_{a,n}(s)-F_a(s)\right\} +  \int_{L_n \vee 0}^{U_n \vee 
 0}s\, d\left\{\widehat{F}^c_{a,n}(s)-F_a(s)\right\}.
   \end{align*}
Furthermore, we obtain
\begin{align*}
    \int_{L_n \wedge 0}^{U_n \wedge 0}s \, d\left\{\widehat{F}^c_{a,n}(s)-F_a(s)\right\}  &= (U_n \wedge  0)\left\{ \widehat{F}^c_{a,n}(U_n \wedge  0)-F_a(U_n \wedge 0) \right\} \\
    &\qquad - (L_n \wedge 0)\left\{ \widehat{F}^c_{a,n}(L_n \wedge  0)  - F_a(L_n \wedge 0) \right\}\\
    &\qquad- \int_{L_n \wedge 0}^{U_n \wedge 0} \left\{ \widehat{F}^c_{a,n}(s)-F_a(s) \right\} \, ds.
\end{align*}
Since $\widehat{F}^c_{a,n}(U_n) = 1$, 
\begin{align*}
    \left| (U_n \wedge  0)\left\{ \widehat{F}^c_{a,n}(U_n \wedge  0)-F_a(U_n \wedge 0) \right\} \right| \leq |U_n| [1 - F_a(U_n)],
\end{align*}
which is $o_p(1)$ by assumption. Similarly, since $\widehat{F}^c_{a,n}(L_n) = 0$, 
\begin{align*}
\left| (L_n \wedge 0)\left\{ \widehat{F}^c_{a,n}(L_n \wedge  0)  - F_a(L_n \wedge 0) \right\} \right| &\le |L_n| F_a(L_n),
\end{align*}
which is $o_p(1)$ by assumption. By a similar derivation for $\int_{L_n \vee 0}^{U_n \vee 0}s \, d\left\{\widehat{F}^c_{a,n}(s)-F_a(s)\right\}$, we then have
\begin{align*}
\int_{L_n}^{U_n}|s| \, d\left\{\widehat{F}^c_{a,n}(s)-F_a(s)\right\}  &=  - \int_{L_n \wedge 0}^{U_n \wedge 0} \left\{ \widehat{F}^c_{a,n}(s)-F_a(s) \right\} \,ds \\
&\qquad+ \int_{L_n \vee 0}^{U_n \vee 0} \left\{ \widehat{F}^c_{a,n}(s)-F_a(s) \right\}\,ds+o_p(1).
\end{align*}
Let $J_1:=\{k:1\le k\le m_n-1, L_n \vee 0 \le s_k < U_n \vee 0 \}$. Then,
\begin{align*}
\int_{L_n\vee 0}^{U_n\vee 0} \left\{\widehat{F}^c_{a,n}(s)-F_a(s)\right\}\, ds &= -\int_{L_n\vee 0}^{U_n\vee 0}\left\{ \left[1-\widehat{F}^c_{a,n}(s)\right]-\left[1-F_a(s)\right] \right\} \, ds\\ 
&= -\delta_n \sum_{k\in J_1} \left\{ \left[1-\widehat{F}^c_{a,n}(s_{k})\right]-\left[1- F_a(s_{k})\right]\right\}\\
&\qquad-\delta_n \sum_{k\in J_1}\left[1- F_a(s_{k})\right] +\int_{L_n\vee 0}^{U_n\vee 0} \left[1 -F_a(s)\right] \, ds\\
&= \delta_n \sum_{k\in J_1}\left\{ \widehat{F}^c_{a,n}(s_{k})-F_a(s_{k})\right\}\\
&\qquad -\delta_n \sum_{k\in J_1} \left[1- F_a(s_{k})\right] +\int_{L_n\vee 0}^{U_n\vee 0}\left[1 -F_a(s)\right] \, ds.
\end{align*}
We then have
\begin{align*}
\sum_{k\in J_1} \left\{ \widehat{F}^c_{a,n}(s_{k})- F_a(s_{k}) \right\} &=\sum_{1 \le k,\, s_k< U_n\vee 0} \left\{ \widehat{F}^c_{a,n}(s_{k})- F_a(s_{k})\right\}-\sum_{1 \le k,\, s_k< L_n\vee 0} \left\{\widehat{F}^c_{a,n}(s_{k})-F_a(s_{k})\right\}\\
&=\sum_{2 \le k,\, s_k< U_n\vee 0} \left\{ \widehat{F}^c_{a,n}(s_{k})- F_a(s_{k})\right\}-\sum_{2 \le k,\, s_k< L_n\vee 0} \left\{\widehat{F}^c_{a,n}(s_{k})- F_a(s_{k})\right\}.
\end{align*}
Thus, 
\begin{align*}
\left|\delta_n \sum_{k\in J_1}\left\{ \widehat{F}^c_{a,n}(s_{k})-F_a(s_{k})\right\}\right|&\le 2\delta_n\max_{2\le k\le m_n-1}\left|\sum_{j=2}^k \left\{ \widehat{F}^c_{a,n}(s_{j})- F_a(s_{j})\right\}\right|.
\end{align*}
Now, by Marshall's inequality (see, e.g., ExerCI'se 3.1-c in \citealp{groeneboom2014nonparametric}), we have
\begin{align*}
2\delta_n\max_{2\le k\le m_n-1}\left|\sum_{j=2}^k \left\{ \widehat{F}^c_{a,n}(s_{j})- F_a(s_{j})\right\}\right| &\le 2\delta_n\max_{2\le k\le m_n-1}\left|\sum_{j=2}^k \left\{ \widehat{F}^{c_0}_{a,n}(s_{j})- F_a(s_{j})\right\}\right|,
\end{align*}
which is $o_p(1)$ by assumption. Next, since $1-F_a$ is non-increasing, we have
\begin{align*}
\delta_n\sum_{k\in J_1} \left[ 1-F_a(s_{k+1})\right] \le \int_{L_n\vee 0}^{U_n\vee 0}\left[ 1- F_a(s)\right] \, ds \le \delta_n\sum_{k\in J_1} \left[1-F_a(s_{k})\right].
\end{align*}
Hence, 
\begin{align*}
&\left|\delta_n\sum_{k\in J_1} \left[1-F_a(s_{k})\right] - \int_{L_n\vee 0}^{U_n\vee 0}\left[ 1- F_a(s)\right] \, ds \right| \\
&\qquad \leq \delta_n\left| \sum_{k\in J_1} \left[1-F_a(s_{k})\right] - \sum_{k\in J_1} \left[ 1-F_a(s_{k+1})\right]\right| \\
&\qquad = \delta_n\left|\left[1-F_a(s_{k_2 + 1})\right] - \left[ 1-F_a(s_{k_1})\right]\right| \\
&\qquad \leq 2 \delta_n = o_p(1)
\end{align*}
where $k_1$ and $k_2$ are the minimal and maximal elements of $J_1$, respectively. Therefore, 
\[\int_{L_n\vee 0}^{U_n\vee 0} \left\{\widehat{F}^c_{a,n}(s)-F_a(s)\right\}\, ds = o_p(1).\]
An analogous argument shows that
\begin{align*}
\int_{L_n \wedge 0 }^{U_n \wedge 0} \left\{\widehat{F}^c_{a,n}(s)-F_a(s) \right\} \, ds = o_p(1), 
\end{align*}
which completes the proof.
\end{proof}

\subsubsection{Proof of the main consistency result}\label{sec:consistency.proof}
For two cumulative distribution functions $F,G:{\mathbb{R}} \mapsto [0,1]$, the Lévy distance between $F$ and $G$ is defined as
\begin{align}\label{deflevy}
    {\displaystyle L(F,G):=\inf\{\varepsilon >0:F(x-\varepsilon )-\varepsilon \leq G(x)\leq F(x+\varepsilon )+\varepsilon ,\;\forall x\in \mathbb {R} \}.}
\end{align}
We state and prove the following lemma which relates the Levy distance and the Wasserstein distance.
\begin{lemma}\label{lem:levy}
For any CDFs $F$ and $G$ on $\mathbb{R}$, $L^2(F,G)\le d_1(F,G)$.
\end{lemma}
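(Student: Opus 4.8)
The plan is to unwind the definition of the Lévy distance and to exploit the monotonicity of $F$ and $G$. Recall that
\[
L(F,G)=\inf\bigl\{h>0:\ G(x-h)-h\le F(x)\le G(x+h)+h \text{ for all } x\in\RR\bigr\},
\]
so that for any $\delta$ with $0<\delta<L(F,G)$ there exists a point $x\in\RR$ at which one of the two defining inequalities fails, i.e., either $F(x)>G(x+\delta)+\delta$ or $G(x-\delta)>F(x)+\delta$. I would work with an arbitrary such $\delta$ and pass to the limit $\delta\uparrow L(F,G)$ at the end, which avoids having to worry about whether the infimum is attained.

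First I would treat the case $F(x)>G(x+\delta)+\delta$. Since $G$ is nondecreasing, for every $s\in[x,x+\delta]$ we have $G(s)\le G(x+\delta)<F(x)-\delta\le F(s)-\delta$, where the last inequality uses that $F$ is nondecreasing. Hence $F(s)-G(s)>\delta$ on the interval $[x,x+\delta]$, which has length $\delta$. The other case $G(x-\delta)>F(x)+\delta$ is symmetric: monotonicity gives $G(s)\ge G(x-\delta)>F(x)+\delta\ge F(s)+\delta$ for every $s\in[x-\delta,x]$, so again $|F(s)-G(s)|>\delta$ on an interval of length $\delta$.

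In either case, denoting by $I$ the length-$\delta$ interval produced above, I would then bound
\[
d_1(F,G)=\int_{\RR}|F(s)-G(s)|\,ds\ \ge\ \int_{I}|F(s)-G(s)|\,ds\ \ge\ \delta\cdot\delta=\delta^2.
\]
Since this holds for every $\delta<L(F,G)$, letting $\delta\uparrow L(F,G)$ yields $L(F,G)^2\le d_1(F,G)$, which is the claim $L^2(F,G)\le d_1(F,G)$.

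I do not expect a serious obstacle here, as the argument is elementary. The only points needing a little care are: (i) the infimum defining $L(F,G)$ need not be attained, which is why I argue for all $\delta<L(F,G)$ and take a limit rather than working at $\delta=L(F,G)$ directly; and (ii) the two cases in the definition are genuinely symmetric (interchanging the roles of $F$ and $G$), so it suffices to carry out one of them carefully. One should also just note that the interval $I$ is contained in $\RR$ and has positive length $\delta$, so the integral bound is legitimate.
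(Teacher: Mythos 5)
Your proposal is correct and follows essentially the same argument as the paper: for any $\delta < L(F,G)$ the defining inequality fails at some $x$, monotonicity of $F$ and $G$ then forces $|F-G|>\delta$ on an interval of length $\delta$, giving $d_1(F,G)\ge \delta^2$, and letting $\delta \uparrow L(F,G)$ concludes. The only cosmetic difference is that the paper states the (trivial) case $L(F,G)=0$ separately, which your limiting argument also covers.
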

\begin{proof}
    If $L(F, G) = 0$ then the result is trivial. If $L(F,G) > 0$, then for any $h \in \mathbb{R}$ such that $0 < h < L(F,G)$, by definition of the Levy distance, either $F(x - h) - h > G(x)$ or $G(x) > F(x + h) + h$ for some $x \in \mathbb{R}$. If $G(x) > F(x + h) + h$, then by monotonicity of $F$ and $G$, $G(y) > F(y) + h$ for all $y \in (x, x + h)$. Hence,
\begin{align*}
    d_1(F,G)\ge \int_x^{x+h} \left| F(y)-G(y) \right| dy \ge h^2.
\end{align*}
If $F(x - h) - h > G(x)$, then an identical calculation yields the same result. Hence, $d_1(F,G) \ge h^2$ for all  $0 < h < L(F,G)$, and taking the limit as $h\uparrow L(F,G)$ yields the result.
\end{proof}
\noindent\textbf{Proof of Theorem \ref{prop:Consistency}}\\
Now, we give the main proof of Theorem \ref{prop:Consistency}.
\begin{proof}
We first show that the result follows if $d_1(\widehat{F}^c_{a,n},F_a)\rightarrow_p 0$. 
If $d_1(\widehat{F}^c_{a,n},F_a)\rightarrow_p 0$,
then any subsequence $\{n_k\}$ of $\mathbb{N}$ has a further subsequence $\{n_{k_l}\}$ for which $d_1(\widehat{F}^c_{a,n_{k_l}},F_a)\rightarrow_{a.s.} 0$. Proposition 2-(c) of~\citet{Cule:2010gv}, this implies that for any $\varepsilon \in (0, \alpha)$,
\begin{align}\label{dssreferral}
\int_{\mathbb{R}} e^{\varepsilon|s|}| \widehat{p}_{a,n_{k_l}}(s)-p_a(s)| \, ds \rightarrow_{a.s.} 0,
\end{align}
since $p_a(s) \leq e^{-\alpha|s| + \beta}$ by assumption. Hence, for any subsequence $\{n_k\}$ of $\mathbb{N}$, there exists a further subsequence $\{n_{k_l}\}$ such that $\int_{\mathbb{R}} e^{\varepsilon|s|}| \widehat{p}_{a,n_{k_l}}(s)-p_a(s)| \, ds \rightarrow_{a.s.} 0$ holds, which implies that $\int_{\mathbb{R}} e^{\varepsilon|s|}| \widehat{p}_{a,n_{k_l}}(s)-p_a(s)| \, ds \rightarrow_{p} 0$. Since this implies the total variation distance converges in probability to zero, and convergence in total variation implies convergence in distribution, by a similar subsequence argument and the second statement of Proposition 2-(c) of~\citet{Cule:2010gv}, continuity of $p_a$ further implies that $\sup_{s\in\mathbb{R}} e^{\varepsilon|s|}|\widehat{p}_{a,n}(s)-p_a(s)| \rightarrow_p 0$.  Therefore, the result follows if $d_1(\widehat{F}^c_{a,n},F_a)\rightarrow_p 0$. Furthermore, $d_1(\widehat{F}^c_{a,n},F_a)\rightarrow_p 0$ if and only if  $\widehat{F}^c_{a,n}(s) \rightarrow_p F_a(s)$ for all $s \in \mathbb{R}$ and  $\int_{\mathbb{R}}|s|d\widehat{F}^c_{a,n}(s)\rightarrow_p \int_{\mathbb{R}}|s| \, dF_a(s)$ (see, e.g., page 407 of \citealp{wasserreview}), so it suffices to show these two statements.

We start by showing that $\widehat{F}^c_{a,n}(s) \rightarrow_p F_a(s)$ for all $s \in \mathbb{R}$. If $\sup_{s\in\RR} | \widehat{F}_{a,n}(s) - F_a(s)| \to_p 0$, then since the other conditions of Lemma~\ref{lem1} hold by assumption, it follows that $\widehat{F}^c_{a,n}(s) \rightarrow_p F_a(s)$ for all $s \in \mathbb{R}$. Hence, it is sufficient to show that $\sup_{s\in\RR} | \widehat{F}_{a,n}(s) - F_a(s)| \to_p 0$.

For each $s \in \mathbb{R}$, by Assumptions \ref{assm:consistency}, \ref{assm:noumconf}, and the tower property, we have
\begin{align}
P_*D_{a,{\theta_{a,\infty}}}(s) &=  P_*\left[{I(A=a) \over \pi_{a,\infty}(\mathbf{X})}\left\{I(Y\le s) -\phi_{a,\infty}(s|\mathbf{X}) \right\}+\phi_{a,\infty}(s|\mathbf{X})\right]\nonumber\\
&=P_*\left[\frac{\pi_{a,*}(\mathbf{X})}{\pi_{a,\infty}(\mathbf{X})}\left\{ \phi_{a,*}(s|\mathbf{X})-\phi_{a,\infty}(s|\mathbf{X}) \right\}+\phi_{a,\infty}(s|\mathbf{X})\right]\nonumber.
\end{align}
Furthermore, by~\ref{assm:DR},
\begin{align*}
   &P_*\left[\frac{\pi_{a,*}(\mathbf{X})}{\pi_{a,\infty}(\mathbf{X})}\left\{\phi_{a,*}(s|\mathbf{X})-\phi_{a,\infty}(s|\mathbf{X} )\right\}+\phi_{a,\infty}(s|\mathbf{X})\right] \\
   &\qquad= P_*\left[I(\mathbf{X} \in \mc S_1)\frac{\pi_{a,*}(\mathbf{X})}{\pi_{a,\infty}(\mathbf{X})} \left\{\phi_{a,*}(s|\mathbf{X})-\phi_{a,\infty}(s|\mathbf{X})\right\}+I(\mathbf{X} \in \mc S_1)\phi_{a,\infty}(s|\mathbf{X})\right]  \\
   &\qquad\qquad  + P_*\left[I(\mathbf{X} \in \mc S_2)\frac{\pi_{a,*}(\mathbf{X})}{\pi_{a,\infty}\mathbf{X})}\left\{\phi_{a,*}(s|\mathbf{X})-\phi_{a,\infty}(s|\mathbf{X})\right\}+I(\mathbf{X} \in \mc S_2)\phi_{a,\infty}(s|\mathbf{X})\right] \\
   &\qquad= P_*\left[I(\mathbf{X} \in \mc S_1)\frac{\pi_{a,*}(\mathbf{X})}{\pi_{a,*}(\mathbf{X})} \left\{\phi_{a,*}(s|\mathbf{X})-\phi_{a,\infty}(s|\mathbf{X})\right\}+I(\mathbf{X} \in \mc S_1)\phi_{a,\infty}(s|\mathbf{X})\right]  \\
   &\qquad\qquad  + P_*\left[I(\mathbf{X} \in \mc S_2)\frac{\pi_{a,*}(\mathbf{X})}{\pi_{a,\infty}\mathbf{X})}\left\{\phi_{a,*}(s|\mathbf{X})-\phi_{a,*}(s|\mathbf{X})\right\}+I(\mathbf{X} \in \mc S_2)\phi_{a,*}(s|\mathbf{X})\right] \\
   &\qquad= P_*\left[I(\mathbf{X} \in \mc S_1) \phi_{a,*}(s|\mathbf{X})\right]   + P_*\left[I(\mathbf{X} \in \mc S_2)\phi_{a,*}(s|\mathbf{X})\right] \\
   &= P_*\left[\phi_{a,*}(s|\mathbf{X})\right] = F_a(s).
\end{align*}
Hence, $P_*D_{a,\theta_{a,\infty}}(s)=F_a(s)$.

Now, since $\widehat{F}_{a,n}(s) = \mathbb{P}_nD_{a,\widehat{\theta}_a}(s)$ and $P_*D_{a,\theta_{a,\infty}}(s)=F_a(s)$, by adding and subtracting terms we can write
\begin{equation}
\label{eq:consistency-Fhat-basic-decomposition}
 \widehat{F}_{a,n}(s)- F_a(s) = (\mathbb{P}_n- P_*)D_{a,\widehat{\theta}_a}(s) + P_*\left[D_{a,\widehat{\theta}_a}(s)-D_{a,{\theta_{a,\infty}}}(s)\right].
\end{equation}

For the first term of~\eqref{eq:consistency-Fhat-basic-decomposition}, we note that by conditions~\ref{assm:boundedpropscore} and~\ref{assm:basicGC}, for all $s \in \mathbb{R}$,
\begin{equation}\label{eq:GCproof.by.function.class}
\begin{split}
 D_{a,\widehat{\theta}_a}(s) &: (Y, A, \mathbf{X}) \mapsto  \frac{I(A=a)}{\widehat\pi_{a}(\mathbf{X})}\left[I(Y\le s)-\widehat\phi_{a}(s|\mathbf{X}) \right]+\widehat\phi_{a}(s|\mathbf{X}) \\
&\in \mc F^a := \left\{  g ( f_1, \pi, f_2, \phi_{a}) : f_1 \in \mc F_1, \pi \in \mc F_\pi, f_2 \in \mc F_2, \phi_a \in \mc F_\phi \right\},
\end{split}
\end{equation}
where $g : \{0,1\} \times [1,K] \times [0,1] \times [0,1] \to \mathbb{R}$ is given by $g(a,b,c,d) = \frac{a}{b} (c - d) + d$, $\mc F_1$ is the singleton class containing the function $A \mapsto I(A = a)$, and $\mc F_2 := \{ Y \mapsto I(Y \leq s) : s \in \mathbb{R} \}$.  Here, $g$ is a continuous function, and $\mc F_1$,  $\mc F_\pi$, $\mc F_2$, and $\mc F_\phi$ are $P_*$-Glivenko-Cantelli by~\ref{assm:basicGC} and Example 2.6.1 of~\cite{vdvandW}. Thus, by Theorem 3 of~\cite{preserveGC}, $\mc F^a$ is $P_*$-Glivenko-Cantelli as well. Hence,
\begin{align*}
   \sup_{s\in\RR} \left|(\mathbb{P}_n- P_*)D_{a,\widehat{\theta}_a}(s)\right|&\leq \sup_{f \in \mc F^a} \left|(\mathbb{P}_n- P_*) f\right| \to_{a.s.} 0.
\end{align*}

Using total expectation, we can write the second summand in~\eqref{eq:consistency-Fhat-basic-decomposition} as
\begin{align}
    P_*&\left[D_{a,\widehat{\theta}_a}(s)-D_{a,{\theta_{a,\infty}}}(s) \right]\nonumber \\
    &=P_*\left[I(A=a) \left\{ I(Y\le s) -\widehat{\phi}_a(s|\mathbf{X})\right\} \left\{ \frac{1}{\widehat{\pi}_a(\mathbf{X})} - \frac{1}{{\pi}_{a,\infty}(\mathbf{X})} \right\}  \nonumber \right.\\
    &\left.\qquad\qquad + \left\{ 1 - \frac{I(A = a)}{{\pi}_{a,\infty}(\mathbf{X})} \right\}\left\{\widehat{\phi}_a(s|\mathbf{X}) - \phi_{a,\infty}(s|\mathbf{X})  \right\} \right]\nonumber\\
    &=P_*\left[\frac{\pi_{a,*}(\mathbf{X})}{\widehat{\pi}_a(\mathbf{X})\pi_{a,\infty}(\mathbf{X})} \left\{ \widehat{\phi}_a(s|\mathbf{X}) - \phi_{a,\infty}(s|\mathbf{X}) \right\} \left\{ \widehat{\pi}_a(\mathbf{X}) - \pi_{a,\infty}(\mathbf{X}) \right\}   \right.\label{indist.rem.pt1}\\
    &\qquad\qquad+ \frac{\pi_{a,*}(\mathbf{X})}{\widehat{\pi}_{a}(\mathbf{X})\pi_{a,\infty}(\mathbf{X})} \left\{ {\phi}_{a,\infty}(s|\mathbf{X}) - \phi_{a,*}(s|\mathbf{X}) \right\} \left\{ \widehat{\pi}_a(\mathbf{X}) - \pi_{a,\infty}(\mathbf{X}) \right\} \label{indist.rem.pt2}\\
    &\left.\qquad\qquad + \left\{ 1 - \frac{\pi_{a,*}(\mathbf{X})}{{\pi}_{a,\infty}(\mathbf{X})} \right\}\left\{\widehat{\phi}_a(s|\mathbf{X}) - \phi_{a,\infty}(s|\mathbf{X})  \right\} \right]\label{indist.rem.pt3},
\end{align} 
Hence, by \ref{assm:boundedpropscore} and the Cauchy-Schwartz inequality, we have
\begin{align*}
     \sup_{s\in\RR}\left| P_*\left[D_{a,\widehat{\theta}_a}(s)-D_{a,{\theta_{a,\infty}}}(s) \right]\right| &\lesssim K^2 \left\{  P_*\left[ \widehat{\pi}_a(\mathbf{X})-{\pi}_{a,\infty}(\mathbf{X}) \right]^2 \right\}^{1/2} \\
     &\qquad + K \left\{\sup_{s\in\RR}P_*\left[\widehat{\phi}_a(s|\mathbf{X})-{\phi}_{a,\infty}(s|\mathbf{X})\right]^2\right\}^{1/2},
\end{align*}
up to a constant, since $\widehat\theta_a,\theta_{a,\infty},\theta_{a,*}$ are less than a universal upper bound 1.
By~\ref{assm:nuisanceconvergence}, the first term is $o_P(1)$. To show that~\ref{assm:nuisanceconvergence} also implies the  second term is $o_P(1)$, we introduce the Lévy metric, which is associated with the Wasserstein distance (see Lemma~\ref{lem:levy}). 
We define
\[\hat{L}(\mathbf{X}):=L\left(\phi_{a,\infty}(\cdot|\mathbf{X}), \widehat\phi_a(\cdot|\mathbf{X})\right)\]
for all $\mathbf{X} \in \mc X$. Then, by the definition of the Lévy metric, for all $s \in \mathbb{R}$ and $\mathbf{X} \in \mc X$ we have
\begin{align*}
 {\phi}_{a,\infty}(s-\hat L(\mathbf{X})|\mathbf{X}) -\hat L(\mathbf{X})  \le \widehat{\phi}_a(s|\mathbf{X}) \le {\phi}_{a,\infty}(s+\hat L(\mathbf{X})|\mathbf{X})+\hat L(\mathbf{X}).
\end{align*}
Hence,
\begin{align*}
     &|\widehat{\phi}_a(s|\mathbf{X})-{\phi}_{a,\infty}(s|\mathbf{X})| \\
     &\quad =  \max\left\{ \widehat\phi_a(s|\mathbf{X})-{\phi}_{a,\infty}(s|\mathbf{X}), {\phi}_{a,\infty}(s|\mathbf{X}) - \widehat\phi_a(s|\mathbf{X}) \right\}\\
     &\quad = \max\left\{ \left[ \widehat\phi_a(s|\mathbf{X})- {\phi}_{a,\infty}(s + \hat{L}(\mathbf{X})|\mathbf{X}) \right] +\left[{\phi}_{a,\infty}(s + \hat{L}(\mathbf{X})|\mathbf{X}) - {\phi}_{a,\infty}(s |\mathbf{X})\right] , \right. \\
     &\quad\qquad\qquad \left. \left[ {\phi}_{a,\infty}(s|\mathbf{X})  -  {\phi}_{a,\infty}(s - \hat{L}(\mathbf{X})|\mathbf{X})\right] + \left[{\phi}_{a,\infty}(s - \hat{L}(\mathbf{X})|\mathbf{X}) - \widehat\phi_a(s|\mathbf{X}) \right]\right\} \\
     &\quad\le \max\left\{ \hat{L}(\mathbf{X}) +\left[{\phi}_{a,\infty}(s + \hat{L}(\mathbf{X})|\mathbf{X}) - {\phi}_{a,\infty}(s |\mathbf{X})\right] , \right. \\
     &\quad\qquad\qquad \left. \left[ {\phi}_{a,\infty}(s|\mathbf{X})  -  {\phi}_{a,\infty}(s - \hat{L}(\mathbf{X})|\mathbf{X})\right] + \hat{L}(\mathbf{X}) \right\} \\
     &\quad=   w\left(s, \hat L(\mathbf{X}) | \mathbf{X}\right)+\hat L(\mathbf{X}),
\end{align*}
where we define
\begin{align*}
    w(s,k | \mathbf{X}) :=\max \{{\phi}_{a,\infty}(s+k|\mathbf{X})-{\phi}_{a,\infty}(s|\mathbf{X}),{\phi}_{a,\infty}(s|\mathbf{X})-{\phi}_{a,\infty}(s-k|\mathbf{X})\}
\end{align*}
for all $k,s \in \mathbb{R}$ and $\mathbf{X} \in \mc X$. Hence, 
\begin{align}
     \sup_{s\in\RR}P_*\left\{ \widehat{\phi}_a(s|\mathbf{X})-{\phi}_{a,\infty}(s|\mathbf{X})\right\}^2 &\le 2\sup_{s\in\RR}P_* w(s, \hat L(\mathbf{X}) | \mathbf{X})^2 + 2P_* \hat L(\mathbf{X})^2. \label{aftersup}
\end{align}
By Lemma~\ref{lem:levy} (and Assumption~\ref{assm:properCDF}), we have
\[ P_*\hat L(\mathbf{X})^2 \leq P_*\int_{-\infty}^{\infty}|\widehat\phi_a(s|\mathbf{X})-{\phi}_{a,\infty}(s|\mathbf{X}))| \, ds, \]
so by~\ref{assm:nuisanceconvergence}, we have $P_*\hat L(\mathbf{X})^2=o_p(1)$. By Jensen's inequality, the fact that $|w(\cdot,\cdot|\cdot)|\le 1$, Assumption \ref{assm:phiregularity}, and the Cauchy-Schwartz inequality, we have 
\begin{align*}
    \sup_{s\in\RR}P_* w(s, \hat L(\mathbf{X}) | \mathbf{X})^2  
    &\le \sup_{s\in\RR}P_* \left|w(s, \hat L(\mathbf{X}) | \mathbf{X}) \right|\\
    &\le P_* \sup_{s\in\RR} \left|w(s, \hat L(\mathbf{X}) | \mathbf{X})\right| \\
    &\leq P_* \left\{ \hat L(\mathbf{X}) R(\mathbf{X})\right\}\\
    &\leq \left[ P_* \left\{ |\hat L(\mathbf{X})| ^{2} \right\}\right]^{1/2} \left[ P_* \left\{ |R(\mathbf{X})|^2 \right\}\right]^{1/2},
\end{align*}
which is $o_p(1)$. 

We have now shown that both summands of~\eqref{eq:consistency-Fhat-basic-decomposition} are $o_p(1)$ uniformly on $s\in\RR$, so we conclude that $\sup_{s\in\RR} |\widehat{F}_{a,n}(s)- F_a(s)| = o_p(1)$, and hence $\widehat{F}_{a,n}^c(s) \to_p F_a(s)$ for all $s \in \mathbb{R}$.

We now show that $\int_{\RR} |s| \, d\widehat{F}_{a,n}^c(s) \to_p \int_{\RR} |s| \, dF_a(s)$. If we can show that
\begin{align}
    \delta_n\max_{2\le k\le m_n-1}\left|\sum_{j=2}^k \widehat{F}^{c_0}_{a,n}(s_{j})-\sum_{j=2}^k F_a(s_{j})\right|\rightarrow_p 0,\label{want.to.show}
\end{align}
then the result follows by  Lemma~\ref{lem2}, since the other conditions of Lemma~\ref{lem2} hold by assumption and by the derivation above. We have
\begin{align}
&\delta_n\max_{2\le k\le m_n-1}\left|\sum_{j=2}^k \widehat{F}^{c_0}_{a,n}(s_{j})-\sum_{j=2}^k F_a(s_{j})\right|\nonumber\\
&\le \delta_n \max_{2\le k\le m_n-1} \left\{\left|\sum_{j=2}^{k} \left[ \widehat{F}_{a,n}^{c_0}(s_{j}) - \widehat{F}_{a,n}(s_{j})\right] \right|  + \left|\sum_{j=2}^{k} \widehat{F}_{a,n}(s_{j}) - \sum_{j=2}^{k} F_a(s_{j})\right| \right\}\nonumber\\
&\le \delta_n \max_{2\le k\le m_n-1} \left|\sum_{j=2}^{k} I\left\{ \widehat{F}_{a,n}(s_{j}) < 0\right\} \widehat{F}_{a,n}(s_{j})\right| \label{First.moment.term1}\\
&\qquad + \delta_n \max_{2\le k\le m_n-1} \left|\sum_{k=2}^{j} I\left\{ \widehat{F}_{a,n}(s_{j}) > 1\right\} \left[ 1-\widehat{F}_{a,n}(s_{j})\right]\right|\label{First.moment.term2}\\
&\qquad + \delta_n \max_{2\le k\le m_n-1} \left|\sum_{j=2}^k \widehat{F}_{a,n}(s_{j})-\sum_{j=2}^k 
 F_a(s_{j})\right|.\label{First.moment.term3}
\end{align}
We show that each of these three summands is $o_p(1)$ in turn. First,
we study \eqref{First.moment.term1}. We define $A_n$ as $\max\{s_k:\widehat F_{a,n}(s_k)<0, 1\le k \le m_n-1\}$.  We note that only the case where $A_n > L_n$ matters, since if $A_n = L_n$ then the expression is zero. We then have
\begin{align*}
&\delta_n \max_{2\le k\le m_n-1} \left|\sum_{j=2}^{k} I\left\{ \widehat{F}_{a,n}(s_{j}) < 0\right\} \widehat{F}_{a,n}(s_{j})\right| \\
&\qquad= -\delta_n \sum_{k=2}^{m_n-1} I\left\{ \widehat{F}_{a,n}(s_{k}) < 0\right\} \widehat{F}_{a,n}(s_{k})  \\
&\qquad\leq -\delta_n \sum_{k=1, s_k \leq A_n} I\left\{ \widehat{F}_{a,n}(s_{k}) < 0\right\} \widehat{F}_{a,n}(s_{k}) \\
&\qquad \leq \delta_n \sum_{k=1, s_k \leq A_n} \left|\widehat{F}_{a,n}(s_{k}) \right| \\
&\qquad = \left[ \delta_n \sum_{k=1, s_k \leq A_n} \left|\widehat{F}_{a,n}(s_{k}) \right|  - \int_{L_n}^{A_n} \left|\widehat{F}_{a,n}(s) \right| \, ds\right] +  \int_{L_n}^{A_n} \left|\widehat{F}_{a,n}(s) \right| \, ds.
\end{align*}
We can write $\widehat F_{a,n}(s) = \widehat F_{a,n,1}(s) - \widehat F_{a,n,2}(s)$ for 
\begin{align*}
\widehat F_{a,n,1}(s)&= \frac{1}{n} \sum_{i=1}^n\left[\frac{I(A_i=a)}{\widehat \pi_a(\mathbf{X}_i)}I(Y_i\le s)+\widehat\phi_a(s|\mathbf{X}_i)\right] \\
\widehat F_{a,n,2}(s) &=  \frac{1}{n} \sum_{i=1}^n\frac{I(A_i=a)}{\widehat \pi_a(\mathbf{X}_i)}\widehat\phi_a(s|\mathbf{X}_i).
\end{align*}
By Assumptions~\ref{assm:boundedpropscore} and~\ref{assm:properCDF},  $\widehat F_{a,n,1}(s)$ and $\widehat F_{a,n,2}(s)$ are uniformly bounded monotone functions. Therefore, there exists a constant $C < \infty$, independent of $n$, such that
\begin{align*}
\left|\int_{L_n}^{A_n} \left|\widehat F_{a,n}(s)\right| \, ds - \delta_n \sum_{k=1,s_k\le A_n} \left|\widehat F_{a,n}(s_k)\right|\right|\leq C\delta_n,
\end{align*}
which is $o_p(1)$. To show that $\int_{L_n}^{A_n} \left|\widehat{F}_{a,n}(s) \right| \, ds \to_p 0$, we  define $\varepsilon_n:=\sup_{s\in \RR}|\widehat F_{a,n}(s)-F_a(s)|$. 
Since $\widehat F_{a,n}(A_n) < 0$ if $A_n > L_n$, we have $F_a(A_n) \leq \widehat F_{a,n}(A_n) + \varepsilon_n < \varepsilon_n$, so $A_n \leq F_a^{-1}(\varepsilon_n)$. If $u_{a}> -\infty$, then since $s_2 > u_{a}$ by assumption, we have $A_n \geq L_n = s_2 - \delta_n > u_{a}- \delta_n$. Then, since $|\widehat{F}_{a,n}|$ is uniformly bounded, $\int_{L_n}^{A_n} \left|\widehat{F}_{a,n}(s) \right| \, ds$ is bounded up to a constant by $A_n - L_n \leq F_a^{-1}(\varepsilon_n) - u_{a}+ \delta_n$. Since $\varepsilon_n \to_p 0$, $F_a^{-1}(\varepsilon_n) \to_p u_{a}$, so this is $o_p(1)$. If $u_{a}=-\infty$, then $A_n\le F_a^{-1}(\varepsilon_n)\rightarrow_p -\infty$ by uniform continuity of $F_a$. Hence, $\int_{L_n}^{A_n}\left|\widehat F_{a,n}(s)\right|ds \leq \int_{-\infty}^{A_n} \left|\widehat F_{a,n}(s)\right|ds \rightarrow_p 0$,
as long as  $\int_{-\infty}^0 \left|\widehat F_{a,n}(s)\right|ds<\infty$, which we now show. By definition and Assumption \ref{assm:boundedpropscore},
\begin{align}
 \mathbb{E}\int_{-\infty}^0 \left|\widehat F_{a,n}(s)\right| \, ds &=\mathbb{E}\int_{-\infty}^0\left| \PP_n D_{a,\widehat{\theta}_a}(s) \right| \, ds\ \\
  &\le \mathbb{E} \left\{\PP_n\int_{-\infty}^0  \left|D_{a,\widehat{\theta}_a}(s) \right|ds\right\}\nonumber\\
  &\le  \mathbb{E}\left\{\frac{1}{n}\sum_{i=1}^n  \frac{I(A_i=a)}{\widehat\pi_a(\mathbf{X}_i)}I(Y_i\le 0)(-Y_i) \right\}\nonumber\\
  &\qquad+\mathbb{E}\left\{\frac{1}{n}\sum_{i=1}^n \left| \left[1-\frac{I(A_i=a)}{\widehat\pi_a(\mathbf{X}_i)}\right] \int_{-\infty}^0 \widehat\phi_a(s|\mathbf{X}_i) \,ds \right| \right\}\nonumber\\
  &\le K\mathbb{E}\left|Y^a\right|+(1+K)\mathbb{E}\left\{\int_{-\infty}^0 \widehat\phi_a(s|\mathbf{X}) \, ds \right\},\label{f}
\end{align}
which is finite by log-concavity of the distriution of $Y^a$ and Assumption \ref{assm:properCDF}. Hence, $\int_{-\infty}^0 \left|\widehat F_{a,n}(s)\right| \, ds < \infty$ almost surely. Therefore,~\eqref{First.moment.term1} is $o_p(1)$. Similar reasoning can be applied to show that~\eqref{First.moment.term2} is $o_p(1)$, so we omit the details.

Finally, we prove that
\begin{align}\label{WTS:integratedGC}
\delta_n \max_{2\le k\le m_n-1} \left|\sum_{j=2}^k \widehat{F}_{a,n}(s_{j})-\sum_{j=2}^k F_a(s_{j})\right|=o_p(1),   
\end{align}
from which it will follow that $\int_{\mathbb{R}} |s| \, d\widehat{F}_{a,n}(s) \to_p \int_{\mathbb{R}} |s| \, dF_{a}(s)$. 
Then, where $\Tilde{F}_{a,n}(x) := \widehat{F}_{a,n}(x)I(x\le 0)+[\widehat{F}_{a,n}(x)-1]I(x>0)$ and $\Tilde{F}_{a}(x) := F_{a}(x)I(x\le 0)+[F_{a}(x)-1]I(x>0)$, we have 
\begin{align*}
&\delta_n\left|\sum_{j=2}^k \widehat{F}_{a,n}(s_{j})-\sum_{j=2}^k F_a(s_{j})\right|\\
&\qquad=\delta_n\left|\sum_{j=2}^k \Tilde{F}_{a,n}(s_{j})-\sum_{j=2}^k \Tilde F_a(s_{j})\right|\\
&\qquad\leq\left|\int_{L_n}^{s_k} \left[\Tilde{F}_{a,n}(s) - \Tilde F_a(s)\right] \, ds \right|\\
&\qquad\qquad +\left|\int_{L_n}^{s_k} \Tilde{ F}_{a,n}(s) \, ds-\delta_n\sum_{j=2}^k \Tilde{F}_{a,n}(s_{j})\right| + \left|\int_{L_n}^{s_k} \Tilde F_a(s) \, ds-\delta_n\sum_{j=2}^k \Tilde F_a(s_{j}) \right|.
\end{align*}
As above, we can write $\widehat{F}_{a,n}$ as the difference of two monotone functions, and $F_a$ is monotone by definition, so that
\begin{align*}
\max_{2\le k\le m_n-1}\left|\int_{L_n}^{s_k} \Tilde{F}_{a,n}(s) \, ds-\delta_n\sum_{j=2}^k \Tilde{ F}_{a,n}(s_{j})\right| &= o_p(1) \text{ and} \\
\max_{2\le k\le m_n-1}\left|\int_{L_n}^{s_k} \Tilde F_a(s) \, ds-\delta_n\sum_{j=2}^k \Tilde F_a(s_{j})\right| &=o_p(1).
\end{align*}
Furthermore,
\begin{align*}
&\max_{2\le k\le m_n-1} \left|\int_{L_n}^{s_k} \left[\Tilde{F}_{a,n} - \Tilde F_a\right]  \right| \\
&\qquad\leq \sup_{L_n< t\le U_n} \left|\int_{L_n}^{s_k} \left[\Tilde{F}_{a,n} - \Tilde F_a\right]  \right|\\
&\qquad=\sup_{L_n< t\le U_n} \left|\int_{-\infty}^t \left[\Tilde{F}_{a,n} - \Tilde F_a\right]  -\int_{-\infty}^{L_n} \left[\Tilde{F}_{a,n} - \Tilde F_a\right]  \right|\\
&\qquad\le \left|\int_{-\infty}^{L_n}\left[\Tilde{F}_{a,n} - \Tilde F_a\right]  \right| +\sup_{L_n< t\le U_n} \left|\int_{-\infty}^t \left[\Tilde{F}_{a,n} - \Tilde F_a\right]  \right| \\
&\qquad \le 2 \sup_{t \in \mathbb{R}} \left|\int_{-\infty}^t \left[\Tilde{F}_{a,n} - \Tilde F_a\right]  \right| \\
&\qquad\le 2\sup_{t \leq 0} \left|\int_{-\infty}^t \left[ \Tilde{ F}_{a,n} - \Tilde F_a \right]\right|+ 2\sup_{ 0 < t} \left|\int_{-\infty}^t \left[ \Tilde{ F}_{a,n}- \Tilde F_a \right]\right|\\
&\qquad\le  2\sup_{t \leq 0} \left|\int_{-\infty}^t \left[\widehat{ F}_{a,n} -  F_a \right]\right|+2\left|\int_{-\infty}^0 \left[ \widehat F_{a,n}- F_a\right] \right| +  2\sup_{0< t } \left|\int_0^t \left[ \left\{1-\widehat F_{a,n}\right\} -  \left\{1- F_a\right\} \right] \right|\\
&\qquad\le  4\sup_{t \leq 0} \left|\int_{-\infty}^t \left[\widehat{ F}_{a,n} -  F_a \right]\right|+ 2\sup_{0< t } \left|\int_0^t \left[ \left\{1-\widehat F_{a,n}\right\} -  \left\{1- F_a\right\} \right] \right|,
\end{align*}
where the fifth inequality holds due to $\int_{-\infty}^t=\int_{-\infty}^0+\int_0^t$. We also have
\begin{align*}
&\sup_{0< t } \left|\int_0^t \left[ \left\{1-\widehat F_{a,n}\right\} -  \left\{1- F_a\right\} \right] \right|\\
&\qquad \le\sup_{0< t } \left|\int_t^\infty \left[ \left\{1-\widehat F_{a,n}\right\} -  \left\{1- F_a\right\} \right] \right|+\left|\int_{0}^{\infty}  \left[ \left\{1-\widehat F_{a,n}\right\} -  \left\{1- F_a\right\} \right] \right|\\
&\qquad \le 2\sup_{0\leq  t }\left|\int_t^\infty \left[ \left\{1-\widehat F_{a,n}\right\} -  \left\{1- F_a\right\} \right] \right|. 
\end{align*}
Thus, \eqref{WTS:integratedGC} follows if
\begin{align*}
\sup_{t\le 0}\left|\int_{-\infty}^t \left[\widehat{ F}_{a,n} -  F_a \right]\right|=o_p(1) \text{ and}\\
\sup_{0\leq  t }\left|\int_t^\infty \left[ \left\{1-\widehat F_{a,n}\right\} -  \left\{1- F_a\right\} \right] \right|=o_p(1).
\end{align*}
We only check the first statement in the preceding display, since similar reasoning can be applied to the second statement.
We use the decomposition established in \eqref{eq:consistency-Fhat-basic-decomposition}:
\begin{align}
 \widehat F_{a,n}(s)-F_a(s) &= \PP_n D_{a,\widehat{\theta}_a}(s)-P_*D_{a,{\theta_{a,\infty}}}(s) \nonumber \\
 &= (\mathbb{P}_n- P_*)D_{a,\widehat{\theta}_a}(s)+P_*\left[D_{a,\widehat{\theta}_a}(s)-D_{a,{\theta_{a,\infty}}}(s)\right].\label{firstmomentdecomp}
\end{align}
For the second term in \eqref{firstmomentdecomp}, as in \eqref{indist.rem.pt1}--\eqref{indist.rem.pt3}, we can write 
\[\int_{-\infty}^t  P_* \left[D_{a,\widehat{\theta}_a}(s)-D_{a,{\theta_{a,\infty}}}(s) \right] \, ds = P_* \left[ Q_1(t,\mathbf{Z})+Q_2(t,\mathbf{Z})+Q_3(t,\mathbf{Z})\right]\]
for
\begin{align}
    Q_1(t,\mathbf{Z})&= \frac{\widehat{\pi}_a(\mathbf{X})-{\pi}_{a,\infty}(\mathbf{X})}{\widehat{\pi}_a(\mathbf{X})}\int_{-\infty}^t\left\{\widehat{\phi}_a(s|\mathbf{X})-{\phi}_{a,\infty}(s|\mathbf{X})\right\} \, ds,\label{first-second-pt2}\\
    Q_2(t,\mathbf{Z})&= \frac{{\pi}_{a,*}(\mathbf{X})\left\{{\widehat{\pi}_a(\mathbf{X})}-{\pi}_{a,\infty}(\mathbf{X})\right\}}{{\pi}_{a,\infty}(\mathbf{X}){\widehat{\pi}_a(\mathbf{X})}} \int_{-\infty}^t \left\{{\phi}_{a,\infty}(s|\mathbf{X})-{\phi}_{a,*}(s|\mathbf{X})\right\}\, ds, \label{second-second-pt2}\\
     Q_3(t,\mathbf{Z})&= \frac{ {\pi}_{a,\infty}(\mathbf{X})-{\pi}_{a,*}(\mathbf{X})}{\pi_{a,\infty}(\mathbf{X})}\int_{-\infty}^t\left\{\widehat{\phi}_a(s|\mathbf{X})-{\phi}_{a,\infty}(s|\mathbf{X})\right\} \, ds.\label{third-second-pt2}
\end{align}
By the Cauchy-Schwarz inequality, we have
\begin{align*}
\sup_{t\leq 0}\left|P_*Q_1(t,\mathbf{Z})\right|^2 &\le K^2\sup_{t\le 0} P_*\left[\int_{-\infty}^t\left\{\widehat{\phi}_a(s|\mathbf{X})-{\phi}_{a,\infty}(s|\mathbf{X})\right\} \, ds\right]^2\\
&\qquad \qquad \times P_*\left[\widehat{\pi}_a(\mathbf{X})-{\pi}_{a,\infty}(\mathbf{X})\right]^2\\
\sup_{t\leq 0}\left|P_*Q_2(t,\mathbf{Z})\right|^2 &\le K^4C_1P_*\left[\widehat{\pi}_a(\mathbf{X})-{\pi}_{a,\infty}(\mathbf{X})\right]^2\\
\sup_{t\leq 0}\left|P_*Q_3(t,\mathbf{Z})\right|^2 &\le K^2\sup_{t\leq 0} P_*\left[\int_{-\infty}^t\left\{\widehat{\phi}_a(s|\mathbf{X})-{\phi}_{a,\infty}(s|\mathbf{X})\right\}\, ds\right]^2
\end{align*}
where $C_1=2P_*\left[\int_{-\infty}^0 {\phi}_{a,\infty}(s|\mathbf{X})ds\right]^2 + 2P_*\left[\int_{-\infty}^0 {\phi}_{a,*}(s|\mathbf{X})ds\right]^2$. 
And, $P_*\left[\widehat{\pi}_a(\mathbf{X})-{\pi}_{a,\infty}(\mathbf{X})\right]^2 = o_p(1)$ holds by assumption~\ref{assm:nuisanceconvergence}.
We also have
\begin{align*}
    \sup_{t\leq 0}P_*\left[\int_{-\infty}^t\left\{\widehat{\phi}_{a}(s|\mathbf{X})-{\phi}_{a,\infty}(s|\mathbf{X})\right\} \, ds \right]^2 &\le  P_*\left[\int_{-\infty}^{\infty} \left|\widehat{\phi}_{a}(s|\mathbf{X})-{\phi}_{a,\infty}(s|\mathbf{X})\right|\, ds\right]^2,
\end{align*}
which is also $o_p(1)$ by~\ref{assm:nuisanceconvergence}. 
This further implies that $\int_{-\infty}^0 {\phi}_{a,\infty}(s|\mathbf{X}) \, ds\in L_2(P_*)$ by Assumption \ref{assm:properCDF}. Additionally, by~\ref{assm:consistency} and~\ref{assm:noumconf}, Jensen's inequality, and the tower property,
\begin{align*}
    P_*\left[\int_{-\infty}^0 {\phi}_{a,*}(s|\mathbf{X})\, ds \, \right]^2 &\leq P_* \left[\int_{-\infty}^{\infty} |s| \, d{\phi}_{a,*}(s|\mathbf{X})\right]^2 \\
    &=\mathbb{E}\left[\mathbb{E}(|Y^a||\mathbf{X})\right]^2\\
    &\le\mathbb{E}\mathbb{E}\left[(Y^a)^2|\mathbf{X}\right]\\
    &=\mathbb{E}\left[(Y^a)^2 \right],
\end{align*}
which is finite by~\ref{assm:logconcv} since log-concave distributions have finite moments. Thus, $C_1 < \infty$. We conclude that $\sup_{t \leq 0} \left| P_* Q_j(t, \mathbf{X}) \right| = o_p(1)$ for $j \in \{1,2,3\}$, which implies that
\begin{align*}
    \sup_{t\in(-\infty,0]}\left|\int_{-\infty}^t P_*\left[ D_{a,\widehat{\theta}_a}(s)-D_{a,{\theta_{a,\infty}}}(s)\right] \, ds\right| \rightarrow_p 0.
\end{align*}


Finally, we show that the first term in \eqref{firstmomentdecomp} is $o_p(1)$. We have
\begin{align}\label{firstmoment.emp.proc.term}
&\int_{-\infty}^t (\mathbb{P}_n- P_*)D_{a,\widehat{\theta}_a}(s) \, ds \nonumber\\
&\quad = (\mathbb{P}_n- P_*)\left[\frac{I(A=a)}{\widehat\pi_a(\mathbf{X})}\left\{I(Y\le t)(t-Y)-\int_{-\infty}^t\widehat{\phi}_a(s|\mathbf{X}) \, ds\right\}+\int_{-\infty}^t\widehat{\phi}_a(s|\mathbf{X}) \, ds\right].
\end{align}
Hence, similar to~\eqref{eq:GCproof.by.function.class}, we can write
\begin{align*}
    \sup_{t \leq 0} \left| \int_{-\infty}^t (\mathbb{P}_n- P_*)D_{a,\widehat{\theta}_a}(s) \, ds  \right| &\leq \sup_{f \in \mc \mc F^a_{\rm int}} \left| (\mathbb{P}_n- P_*) f \right|,
\end{align*}
where 
\begin{align*}
    \mc F^a_{\rm int} := \left\{ g(f_1, \pi, f_3, f_4) : f_1 \in \mc F_1, \pi \in \mc F_\pi, f_3 \in \mc F_3, f_4 \in \mc F_4\right\}
\end{align*}
for $g$ and $\mc F_1$ defined following~\eqref{eq:GCproof.by.function.class}, $\mc F_3:= \left\{Y\mapsto I(Y\le t)(t-Y):t\in(-\infty,0]\right\}$, and
$$\mc F_4:=\left\{\mathbf{X}\mapsto\int_{-\infty}^t \phi_a(s|\mathbf{X})\, ds :  \phi_a: \in\mc F_\phi,\, t\in(-\infty,0]\right\}.$$
The classes $\mc F_1$ and $\mc F_\pi$ are $P_*$-Glivenko Cantelli as noted above. 
For $\mc F_3$, we note that $\sup_{t \leq 0} |(\PP_n-P_*)(Y-t)| =|(\PP_n-P_*)Y| \to_{a.s.}  0$ because $P_* |Y| < \infty$, so the class $\{Y \mapsto Y - t : t \leq 0\}$ is $P_*$-Glivenko Cantelli. 
Hence, since $\{Y \mapsto I(Y \leq t) : t \leq 0\}$ is also $P_*$-Glivenko Cantelli and $Y \mapsto |Y|$ is an envelope for $\mc F_3$, $\mc F_3$ is $P_*$-Glivenko Cantelli by Proposition \ref{gcpreserve}. 
Finally, $\mc F_4$ is $P_*$-Glivenko-Cantelli by Assumption \ref{assm:integratedGC}. 
Also by~\ref{assm:properCDF}, an envelope function for $\mc F^a_{\rm int}$ is given by
\[ K I(A = a)\left\{ |Y| + h(\mathbf{X}) \right\} + h(\mathbf{X}) \]
Since $\EE[I(A=a) |Y|] \le \EE |Y^a|<\infty$, and $h(\mathbf{X})\in L_2(P_*)$ by assumption \ref{assm:properCDF}, this envelope function is integrable. 
Thus, by Theorem 3 in \citet{preserveGC} (see Proposition \ref{gcpreserve}), $\mc F^a_{\rm int}$ is $P_*$-Glivenko-Cantelli.
This implies \eqref{firstmoment.emp.proc.term} is $o_p(1)$, so that 
\begin{align*}
\sup_{t\le 0} \left|\int_{-\infty}^{t} \left[ \widehat F_{a,n}(s)-F_a(s)\right] \, ds\right|=o_p(1).   
\end{align*}
This further yields that \eqref{First.moment.term3} is $o_p(1)$, which concludes the proof.
\end{proof}

\subsection{Proof of Theorem \ref{prop:CI}}
\subsubsection{Lemmas needed for proof Theorem \ref{prop:CI}}
\label{subsec:lemmas.for.CI}
We define three basic processes,
\begin{align*}
    &\widehat G_{a,n}(x) = \int_{-\infty}^x  \widehat{p}_{a,n}(t) dt = \int_{s_1}^x {\rm exp} (\widehat{\varphi}_{a,n})(t) dt,\\
    &\widehat H_{a,n}(x)=\int_{s_1}^x \widehat G_{a,n}(t) dt,\\
    &\widehat H^c_{a,n}(x)=\int_{s_1}^x \widehat{F}^c_{a,n}(t) dt.
\end{align*}
The following lemma is a slight modification of Lemma A.1 in \citet{BW07}, which is itself an extension of the pioneering work of \citet{Kim:1990ue}.
\begin{lemma}\label{emptermlemma}
Let $\mc F$ be a collection of functions defined on $[s_0-\delta,s_0+\delta]^2\times \mathbb{R}^m$ with small $\delta>0$ and arbitrary positive integer $m$. Suppose that for a fixed $s_1\in [s_0-\delta,s_0+\delta]$ and $R>0$, such that $s_0-\delta \le s_1 \le s_2 \le s_1+R\le s_0+\delta$, the collection 
\begin{align*}
    \mathcal{F}_{s_0,R}=\{f_{s_1,s_2}: f_{s_1,s_2}(\mathbf{x})=f(s_1,s_2,\mathbf{x})\in \mathcal{F},s_0-\delta \le s_1 \le s_2 \le s_1+R\le s_0+\delta\}
\end{align*}
admits an envelope $F_{s_0,,R}$, such that
\begin{align*}
    \mathbb{E} F^2_{s_0,R}(\mathbf{X})\le K_0R^{2t-1}, ~~R\le R_0
\end{align*}
for some $t\ge 1/2$ and $K_0>0$, depending only on $s_0$ and $\delta$. Moreover, suppose that 
\begin{align*}
    \sup_{Q}\int_0^1\sqrt{{\rm log}N(\eta\|F_{s_1,R}\|_{Q,2},\mathcal{F}_{s_0,R},L_2(Q))}d\eta <\infty.
\end{align*}
Then, for each $\epsilon>0$, there exist random variables $M_n$ of order $O_P(1)$ which does not depend on $s_1,s_2$ and $R_0>0$, such that
\begin{align*}
    |(\PP_n-\PP)f_{s_1,s_2}|\le \epsilon|s_2-s_1|^{l+t}+n^{-(l+t)/(2l+1)}M_n~~~~{\rm for}~|s_2-s_1| \le R_0
\end{align*}
for $f\in \mathcal{F}_{s_0,R}$ and $l>0$.
\end{lemma}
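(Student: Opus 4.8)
The plan is to run the scale-peeling (chaining-over-shells) argument of \citet{Kim:1990ue} and Lemma~A.1 of \citet{BW07}, adapted to the envelope scaling $\EE F_{s_0,R}^2\le K_0 R^{2t-1}$ and to general exponents $l>0$ and $t\ge 1/2$. Write $\gamma_n:=n^{-1/(2l+1)}$ for the natural localization rate and record the two identities that drive all the bookkeeping, $\gamma_n^{-(l+1/2)}=n^{1/2}$ and $n^{-1/2}\gamma_n^{t-1/2}=n^{-(l+t)/(2l+1)}$. It is convenient to prove the equivalent statement: for every $\epsilon>0$ and $\eta>0$ there exist constants $C<\infty$ and $R_0>0$ (with $R_0$ depending only on $\delta$) such that, for all $n$ large,
\[
\PP\!\left(\,|(\PP_n-\PP)f_{s_1,s_2}|\le\epsilon|s_2-s_1|^{l+t}+C\,n^{-(l+t)/(2l+1)}\ \text{ for all }|s_2-s_1|\le R_0\,\right)\ge 1-\eta.
\]
Taking $M_n$ to be the least constant for which this uniform inequality holds then yields $M_n=O_P(1)$, which is the assertion of the lemma.

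\textbf{The maximal inequality.}
The only analytic input is Proposition~\ref{prop:vdv.2.14.1}: for $R\le R_0$, the index class of pairs with gap $\le R$ is a subclass of $\mc F_{s_0,R}$, so, using the uniform-in-$R$ entropy bound $J(1,\mc F_{s_0,R})\le J_0$ assumed in the hypothesis and the envelope bound,
\[
\EE\sup_{|s_2-s_1|\le R}|(\PP_n-\PP)f_{s_1,s_2}|=n^{-1/2}\,\EE\|\GG_n\|_{\mc F_{s_0,R}}\ \lesssim\ n^{-1/2}J_0\|F_{s_0,R}\|_{P,2}\ \le\ D_1\,n^{-1/2}R^{t-1/2},
\]
with $D_1:=J_0K_0^{1/2}$. (Measurability of the sub-classes, needed for Proposition~\ref{prop:vdv.2.14.1}, is tacitly assumed, as is $R_0\le 2\delta$ small enough to stay inside the range restriction of the hypotheses.)

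\textbf{The split and the two regions.}
Given $\epsilon,\eta>0$, set $D_2:=D_1 2^{t-1/2}/(1-2^{-(l+1/2)})$ and choose the cutoff scale $C_\ast\gamma_n$ with $C_\ast:=(2D_2/(\eta\epsilon))^{2/(2l+1)}$, i.e.\ $C_\ast^{-(l+1/2)}=\eta\epsilon/(2D_2)$. Split $\{|s_2-s_1|\le R_0\}$ into the inner region $\mathrm A=\{|s_2-s_1|\le C_\ast\gamma_n\}$ and the outer region $\mathrm B=\{C_\ast\gamma_n<|s_2-s_1|\le R_0\}$. On $\mathrm A$, the maximal inequality gives $\EE\sup_{\mathrm A}|(\PP_n-\PP)f|\le D_1 C_\ast^{t-1/2}n^{-(l+t)/(2l+1)}$, so Markov's inequality supplies a constant $C_A:=2D_1 C_\ast^{t-1/2}/\eta$ (depending on $\epsilon,\eta$ only) with $\PP(\sup_{\mathrm A}|(\PP_n-\PP)f|>C_A n^{-(l+t)/(2l+1)})\le\eta/2$, uniformly in $n$; this is absorbed into $C n^{-(l+t)/(2l+1)}$. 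On $\mathrm B$, peel into the dyadic shells $S_j=\{2^jC_\ast\gamma_n<|s_2-s_1|\le 2^{j+1}C_\ast\gamma_n\}$, $j=0,\dots,J_n$ ($O(\log n)$ of them). Since $\{f_{s_1,s_2}:(s_1,s_2)\in S_j\}\subseteq\mc F_{s_0,2^{j+1}C_\ast\gamma_n}$, the maximal inequality together with the two $\gamma_n$ identities gives, after substitution,
\[
\EE\sup_{S_j}|(\PP_n-\PP)f|\ \le\ D_1 2^{t-1/2}\,C_\ast^{-(l+1/2)}\,2^{-j(l+1/2)}\,(2^jC_\ast\gamma_n)^{l+t}.
\]
A shell-wise Markov bound at the level $\epsilon(2^jC_\ast\gamma_n)^{l+t}$ and a union bound over $j$ then give total failure probability $\le D_2 C_\ast^{-(l+1/2)}/\epsilon=\eta/2$, where convergence of $\sum_{j\ge 0}2^{-j(l+1/2)}$ is where $l>0$ enters. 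On the complement, for every $j$ and every $(s_1,s_2)\in S_j$ one has $|(\PP_n-\PP)f|\le\epsilon(2^jC_\ast\gamma_n)^{l+t}\le\epsilon|s_2-s_1|^{l+t}$, using $|s_2-s_1|>2^jC_\ast\gamma_n$ on $S_j$. Intersecting the two good events (total probability $\ge 1-\eta$) yields the boxed assertion with $C=C_A$.

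\textbf{Main obstacle.}
The step I expect to be the crux is the calibration of the cutoff scale $C_\ast\gamma_n$. A one-scale peeling (splitting at $\gamma_n$) with the Markov bound controls the aggregate shell-failure probability only by $\sum_j n^{-1/2}(2^j\gamma_n)^{-(l+1/2)}\asymp$~const, which does not tend to $0$, so one cannot obtain a high-probability bound for an arbitrarily prescribed $\epsilon$ in this way. Choosing the inner cutoff out at $C_\ast\gamma_n$ with $C_\ast$ large shrinks the outer-region sum by the factor $C_\ast^{-(l+1/2)}$, hence makes it $\le\eta/2$, at the cost of enlarging the inner-region constant $C_A$ by the factor $C_\ast^{t-1/2}$; this trade is harmless precisely because $M_n$ is only required to be $O_P(1)$, not to have a constant independent of $\epsilon,\eta$. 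The remaining items — the two algebraic identities for $\gamma_n$, the uniform-in-$R$ entropy bound, the class inclusion $\{f_{s_1,s_2}:(s_1,s_2)\in S_j\}\subseteq\mc F_{s_0,2^{j+1}C_\ast\gamma_n}$, and the reduction to the displayed probabilistic form — are routine.
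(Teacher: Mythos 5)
Your argument is correct and is essentially the proof the paper relies on: the paper simply defers to Lemma A.1 of Balabdaoui and Wellner (2007), i.e.\ the Kim--Pollard peeling scheme, which is exactly the maximal-inequality-plus-dyadic-shell argument you carry out, your only deviation being the cosmetic choice of placing the inner cutoff at $C_*\gamma_n$ (calibrated by $\epsilon,\eta$) instead of bounding $\PP(M_n>m)$ directly over shells starting at $\gamma_n$. Both bookkeeping schemes deliver $M_n=O_P(1)$ uniformly in $(s_1,s_2)$, so nothing further is needed.
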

\begin{proof}
The proof is identical to the proof of Lemma A.1 of \citet{BW07} where one can prove the same result for $\mathcal{F}_{s_0,R}$ in a similar fashion to $\mathcal{F}_{x,R}$ in their proof.   
\end{proof}

The following lemma about analytical properties of $p_a,\varphi_a$ is identical to Lemma 4.2 in \citet{BRW09}.
\begin{lemma}\label{techtermlemma}
If \ref{assm:pregularity} and \ref{assm:pdiffwithk} hold, then, for $a\in\{0,1\}$, we have
\begin{align}
    p_a^{(j)}(s_0)=[\varphi'_a(s_0)]^jp_a(s_0)~~~~{\rm for}~j=1,\dots,k-1
\end{align}
and, for $j=k$,
\begin{align}
    p_a^{(k)}(s_0)=(\varphi_a^{(k)}(s_0)+[\varphi'_a(s_0)]^k)p_a(s_0).
\end{align}
\end{lemma}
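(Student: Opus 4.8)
The plan is to reduce the statement to a routine application of the chain rule, using that $p_a=e^{\varphi_a}$ together with the fact that the intermediate derivatives $\varphi_a^{(2)}(s_0),\dots,\varphi_a^{(k-1)}(s_0)$ all vanish by Assumption~\ref{assm:pdiffwithk}. First I would note that conditions~\ref{assm:pregularity}--\ref{assm:pdiffwithk} ensure that on a neighborhood of $s_0$ the density $p_a$ is strictly positive and $k$ times continuously differentiable, so that $\varphi_a=\log p_a$ is likewise $k$ times continuously differentiable there and the identity $p_a=e^{\varphi_a}$ may be differentiated $j$ times for every $j\le k$. (When $p_a''(s_0)\neq 0$ we have $k=2$ and only the plain product rule is needed.)

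Next I would differentiate $p_a=(\exp)\circ\varphi_a$ by Fa\`{a} di Bruno's formula; since every derivative of $\exp$ is again $\exp$, this gives
\[
p_a^{(j)}=B_j\!\left(\varphi_a',\varphi_a'',\dots,\varphi_a^{(j)}\right)p_a,\qquad 1\le j\le k,
\]
where $B_j$ is the $j$th complete Bell polynomial, i.e.\ a sum, over partitions of $\{1,\dots,j\}$, of the products $\prod_{B}\varphi_a^{(|B|)}$ with positive integer coefficients. (Equivalently, this identity can be obtained by induction on $j$ starting from $p_a'=\varphi_a'\,p_a$ and applying the Leibniz rule, which avoids naming Bell polynomials.) Then I would evaluate at $s_0$. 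For $1\le j\le k-1$, Assumption~\ref{assm:pdiffwithk} kills every product containing a factor $\varphi_a^{(i)}(s_0)$ with $2\le i\le k-1$, so only the all-singletons partition survives and $B_j(\varphi_a'(s_0),0,\dots,0)=[\varphi_a'(s_0)]^j$, giving $p_a^{(j)}(s_0)=[\varphi_a'(s_0)]^j p_a(s_0)$. For $j=k$ the same cancellation leaves only the all-singletons partition and the single-block partition, so $B_k(\varphi_a'(s_0),0,\dots,0,\varphi_a^{(k)}(s_0))=[\varphi_a'(s_0)]^k+\varphi_a^{(k)}(s_0)$, which yields $p_a^{(k)}(s_0)=\left(\varphi_a^{(k)}(s_0)+[\varphi_a'(s_0)]^k\right)p_a(s_0)$.

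The argument is essentially bookkeeping, and no analytic subtlety arises beyond the smoothness already supplied by Assumption~\ref{assm:R}. The only step requiring a moment's care is verifying that the monomial $[\varphi_a'(s_0)]^j$ occurs with coefficient $1$ for each $j\le k$ and that $\varphi_a^{(k)}(s_0)$ occurs with coefficient $1$ in $B_k$ — both immediate from the partition description of the complete Bell polynomial (a single all-singletons partition, and, for $j=k$, a single one-block partition), or, along the inductive route, from tracking that the Leibniz step introduces the new top-order term $\varphi_a^{(j+1)}p_a$ with coefficient $1$ while the previously-vanishing terms stay vanishing until $j$ reaches $k$.
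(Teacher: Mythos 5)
Your proof is correct, and it is essentially the standard argument behind this statement: the paper itself gives no proof but simply notes the lemma is identical to Lemma 4.2 of \citet{BRW09}, whose verification is exactly the chain-rule/Fa\`a di Bruno (or inductive Leibniz) computation you describe, with the intermediate derivatives $\varphi_a^{(2)}(s_0),\dots,\varphi_a^{(k-1)}(s_0)$ vanishing by \ref{assm:pdiffwithk}. The only small point to phrase carefully is that the $k$-fold differentiability near $s_0$ comes from \ref{assm:pdiffwithk} (existence of $\varphi_a^{(k)}$ continuous in a neighborhood) rather than from the twice continuous differentiability of $p_a$ in \ref{assm:pregularity}, but this does not affect the validity of your bookkeeping.
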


The following lemma shows that the distance between two adjacent knots around the target point $s_0$ follows the same rate of convergence with the non-causal log-concave MLE \citep{BRW09}. Recall that $\tau_n^{+},\tau_n^{-}$ are defined in \eqref{knots} and $\delta_n$ denotes the regular grid length for the isotonic regression grid (see \eqref{def:isotonicongrid} and the preceding texts).
\begin{lemma}\label{successiveknots}
If \ref{assm:consistency}--\ref{assm:logconcv}, \ref{assm:condition.grid}, \ref{assm:nuisanceconvergence}--\ref{assm:phiinfholder},
\ref{assm:basicGC}--\ref{assm:pi-bracketing}, and
\ref{assm:pregularity}--\ref{assm:pdiffwithk} hold, and if $\delta_n=O_p(n^{-(k+1)/(2k+1)})$, then, for $a \in \{0,1\}$, 
    \begin{align*}
        \tau_n^{+}-\tau_n^{-}=O_p(n^{-1/(2k+1)}).
    \end{align*}
\end{lemma}
\begin{proof}
The proof is aligned with the proof of Theorem 4.3 and Lemma 4.4 from \cite{BRW09}. We similarly define
\begin{align*}
    R_{1n}&:=\int \Delta (y)(\widehat p_{a,n}-p_a)(y) dy,\\
    R_{2n}&:=\int \Delta (y)d(\widehat F^c_{a,n}-F_a)(y),
\end{align*}
where 
\begin{align*}
  \Delta (y)=(y-\tau_n^{-})I_{[\tau_n^{-},\Bar{\tau}]}(y)+(\tau_n^{+}-y)I_{[\Bar{\tau},\tau_n^{+}]}(y)-\frac{1}{4}(\tau_n^{+}-\tau_n^{-})I_{[\tau_n^{-},\tau_n^{+}]}(y),
\end{align*}
and $\Bar{\tau}=\frac{\tau_n^{-}+\tau_n^{+}}{2}$. Following the same steps in \cite{BRW09} for the log-concave MLE $\widehat p_{a,n}$, one can easily verify that
\begin{align}
    &-R_{1n}\le \frac{\tau_n^{+}-\tau_n^{-}}{2n}-R_{2n} \label{ineqforR1nandR2n}\\
    &R_{1n}=M_kp_a(s_0)\varphi_a^{(k)}(s_0)(\tau_n^{+}-\tau_n^{-})^{k+2}+o_p((\tau_n^{+}-\tau_n^{-})^{k+2}),\label{R1ndescription}
\end{align}
where $M_k>0$ depends only on $k$. Furthermore, since uniform convergence of $\widehat F_{a,n}$ to $F_a$ (see proof of Theorem \ref{prop:Consistency}) holds with $F_a(s_0)>0$, and consistency from Theorem \ref{prop:Consistency} implies $\tau_n^{+}-\tau_n^{-}=o_p(1)$, we can identify $\widehat F_{a,n}$ with $\widehat F^{c_0}_{a,n}$ locally for the term $R_{2n}$.
Thus, we identify $\widehat F_{a,n}$ with $\widehat F^{c_0}_{a,n}$ throughout the following proof.
We further decompose $R_{2n}$ as follows:
\begin{align}
   \int \Delta (y)d(\widehat F^c_{a,n}-F_a)(y)= \int \Delta (y)d(\widehat F^c_{a,n}-\widehat F_{a,n})(y) + \int \Delta (y)d(\widehat F_{a,n}-F_a)(y). \label{deltadecompose}
\end{align}
We start with analyzing the second term in \eqref{deltadecompose}. With integration by parts, one can check
\begin{align}
\int \Delta (y)d(\widehat F_{a,n}-F_a)(y)&=\int_{\Bar{\tau}}^{\tau_n^{+}} \Big[(\widehat F_{a,n}-F_a)(y)-(\widehat F_{a,n}-F_a)(2\Bar{\tau}-y)\Big]dy\label{onestepdecomp.pt1}\\
&\indent-\frac{\tau_n^{+}-\tau_n^{-}}{4}\Big[(\widehat F_{a,n}-F_a)(\tau_n^{+})-(\widehat F_{a,n}-F_a)(\tau_n^{-})\Big].\label{onestepdecomp.pt2}    
\end{align}
To examine terms \eqref{onestepdecomp.pt1} and \eqref{onestepdecomp.pt2}, we exploit the decomposition $\widehat F_{a,n}(s)-F_a(s)=\PP_n D_{a,\widehat\theta_a}(s)-P_* D_{a,{\theta_{a,\infty}}}(s)=(\PP_n-P_*)D_{a,\widehat\theta_a}(s) + P_*(D_{a,\widehat\theta_a}(s)-D_{a,{\theta_{a,\infty}}}(s))$. Then \eqref{onestepdecomp.pt2} can be expressed as
\begin{align*}
 \frac{\tau_n^{+}-\tau_n^{-}}{4} \Big[(\PP_n-P_*)(D_{a,\widehat\theta_a}(\tau_n^{+})&-D_{a,\widehat\theta_a}(\tau_n^{-})) \\
 &+ P_*\Big((D_{a,\widehat\theta_a}-D_{a,{\theta_{a,\infty}}})(\tau_n^{+})-(D_{a,\widehat\theta_a}-D_{a,{\theta_{a,\infty}}})(\tau_n^{-})\Big)\Big]. 
\end{align*}
For $[s_1,s_2]\subseteq [s_0-\delta,s_0+\delta]$, 
\begin{equation}\label{empterm1}
\begin{split}
  (\PP_n-P_*)(D_{a,\widehat\theta_a}(s_2)-D_{a,\widehat\theta_a}(s_1))&= (\PP_n-P_*)\Big[\frac{I(A=a)}{\widehat\pi_a(\mathbf{X})}I(s_1<Y\le s_2)\\
  &\indent+(\widehat\phi_a(s_2|\mathbf{X})-\widehat\phi_a(s_1|\mathbf{X}))\Big(1-\frac{I(A=a)}{\widehat\pi_a(\mathbf{X})}\Big)\Big].
\end{split}
\end{equation}
To apply Lemma \ref{emptermlemma}, we line up with the conditions for two terms in \eqref{empterm1}. Firstly, $\{(a,b]:a<b\}$ has VC dimension of 2 (see Example 2.6.1 in \citealp{vdvandW}). Thus, since the function class $\mathcal{F}_1=\{I_{(s_1,s_2](\cdot)}:[s_1,s_2]\subseteq I_{s_0,\omega}, s_1\le s_2\le s_1+R\}$ allows an envelope function of constant 1,
\begin{align*}
   N(\epsilon,\mathcal{F}_1,L_2(Q))\le C_3\Big(\frac{1}{\epsilon}\Big)^2
\end{align*}
for any probability measure $Q$ and a constant $C_3>0$ (see Theorem 2.6.7 in \citealp{vdvandW}). Similar reasoning can be applied to the indicator function on a single point. In addition, with Assumption \ref{assm:boundedpropscore}, \ref{assm:pi-bracketing}, and Lemma 5.1 in \citet{vdvvl06}, a function class $\mathcal{F}_2:=\Big\{\frac{I_{\{a\}}(x_1)}{\pi_a(\mathbf{x_2})}: \pi_a \in \mathcal{F}_{\pi}, x_1\in \{0,1\}, \mathbf{x_2}\in \mathbb{R}^d\Big\}$ satisfies, for arbitrary probability measure $Q$,
\begin{align*}
    \sup_Q N(\epsilon K, \mathcal{F}_2, L_2(Q)) \le  \sup_Q N(\epsilon , \mathcal{F}_{\pi}, L_2(Q)) 
\end{align*}
by Assumption \ref{assm:boundedpropscore} and knowing that the function class $\mathcal{F}_{\pi}$ allows the same envelope as $\mathcal{F}_1$.
Thus, the whole function class $\mathcal{F}_{t1}:=\mathcal{F}_1\cdot \mathcal{F}_2$ for the first term in \eqref{empterm1} has finite uniform entropy integral, by Assumption \ref{assm:pi-bracketing} and Lemma 5.1 from \citet{vdvvl06},
\begin{align*} 
 \sup_Q {\rm log} N(\epsilon , \mathcal{F}_{t1}, L_2(Q)) \lesssim \epsilon^{-V}-{\rm log}(\epsilon) 
\end{align*}
up to a constant, with $V\in[0,2)$. Furthermore, the class allows an envelope $KI(s_0-\omega<Y\le s_0+\omega)I(A=a)$ which satisfies
\begin{align*}
    \mathbb{E}\Big(KI(s_0-\omega<Y\le s_0+\omega)I(A=a)\Big)^2 &= K^2P_*(s_0-\omega<Y^a\le s_0+\omega)\\
    &\le C_4\omega K^2
\end{align*}
for some constant $C_4>0$. Hence, the first term in \eqref{empterm1} satisfies the assumptions of Lemma \ref{emptermlemma} with $t=1,l=k$, and therefore, there exist a random variable $B_1$ which has order of $O_p(n^{-(k+1)/(2k+1)})$ and is independent of $s_1,s_2$ such that
\begin{align}\label{empterm:identity.part}
    \sup_{[s_1,s_2]\subset I_{s_0,\omega},s_1\le s_2\le s_1+R} \Big|(\PP_n-P_*)\Big[\frac{I(A=a)}{\widehat\pi_a(\mathbf{X})}I(s_1<Y\le s_2)\Big|\le \epsilon|s_2-s_1|^{k+1}+B_1,
\end{align}
for arbitrary $\epsilon>0$.
Next, we set up a similar reasoning for the second term in \eqref{empterm1}. With Assumption \ref{assm:phiestholder} and Lemma 5.1 from \citet{vdvvl06}, a function class $\mathcal{F}_3:=\{\phi_a(s_2|\cdot)-\phi_a(s_1|\cdot): [s_1,s_2]\subseteq [s_0-\delta,s_0+\delta], s_1\le s_2\le s_1+R, \phi_a \in \mathcal{F}_{\phi}\}$ satisfies
\begin{align*}
 \sup_Q{\rm log} N(\epsilon, \mathcal{F}_3, L_2(Q)) \lesssim  \epsilon^{-V} 
\end{align*}
up to a constant with $V\in[0,2)$, and supremum taken over any probability measure $Q$. Thus, similar to the reasoning above combined with Assumption \ref{assm:boundedpropscore} and Lemma 5.1 in \citet{vdvvl06}, one can easily show that a function class $\mathcal{F}_{t2}:=\mathcal{F}_3\cdot(1-\mathcal{F}_2)$ for the second term satisfies
\begin{align*}
 \sup_Q {\rm log} N(\epsilon, \mathcal{F}_3, L_2(Q)) \lesssim  \epsilon^{-V}.  
\end{align*}
This yields the finite uniform entropy integral of this function class. From Assumption \ref{assm:phiestholder}, this class has an envelope $(1+K)R_1(\mathbf{X})\omega^{\alpha}$, and, the envelope satisfies
\begin{align*}
\mathbb{E}\Big((1+K)R_1(\mathbf{X})R^{\alpha}\Big)^2=\omega^{2\alpha}(1+K)^2\mathbb{E}R_1^2(\mathbf{X}).
\end{align*}
Hence, this further implies that the second term satisfies the conditions in Lemma \ref{emptermlemma} with some $t>1$ and $l=k$. In other words, for each $\epsilon>0$, there exist a random variable $B_2$ which has order of $O_p(n^{-(k+t)/(2k+1)})$ and is independent of $s_1,s_2$ such that
\begin{align}\label{empterm:phi.part}
    \sup_{[s_1,s_2]\subset I_{s_0,\omega},s_1\le s_2\le s_1+R} \Big|(\widehat\phi_a(s_2|\mathbf{X})-\widehat\phi_a(s_1|\mathbf{X}))\Big(1-\frac{I(A=a)}{\widehat\pi_a(\mathbf{X})}\Big)\Big|\le \epsilon|s_2-s_1|^{k+t}+B_2.   
\end{align}
Combining the two empirical process terms \eqref{empterm:identity.part} and \eqref{empterm:phi.part},  for the empirical process part of \eqref{onestepdecomp.pt2}, we obtain
\begin{equation}\label{pt1.decom.emp}
\begin{split}
\Bigg|(\PP_n-P_*)\Big\{\frac{\tau_n^{+}-\tau_n^{-}}{4}&\Big[(\widehat F_{a,n}-F_a)(\tau_n^{+})-(\widehat F_{a,n}-F_a)(\tau_n^{-})\Big]\Big\}\Bigg|\\
&\le 2\epsilon(\tau_n^{+}-\tau_n^{-})^{k+2}+(\tau_n^{+}-\tau_n^{-})O_p\Big(n^{-(k+1)/(2k+1)}\Big),    
\end{split}    
\end{equation}
for each $\epsilon>0$.
We now check the empirical process term of \eqref{onestepdecomp.pt1}. Since, in \eqref{empterm:identity.part} and \eqref{empterm:phi.part}, $B_1$ and $B_2$ do not depend on $s_1,s_2$, we have
\begin{equation}\label{pt2.decom.emp}
\begin{split}
 \Big|\int_{\Bar{\tau}}^{\tau_n^{+}} &(\PP_n-P_*) (D_{a,\widehat\theta_a}(y)-D_{a,\widehat\theta_a}(2\Bar{\tau}-y)) dy\Big|\\
 &\le \int_{\Bar{\tau}}^{\tau_n^{+}} \Big|(\PP_n-P_*) (D_{a,\widehat\theta_a}(y)-D_{a,\widehat\theta_a}(2\Bar{\tau}-y))\Big| dy\\
 &\le 2\epsilon(\tau_n^{+}-\tau_n^{-})^{k+2}+(\tau_n^{+}-\tau_n^{-})O_p\Big(n^{-(k+1)/(2k+1)}\Big),  
\end{split}    
\end{equation}
due to $2y-2{\Bar{\tau}}\le(\tau_n^{+}-\tau_n^{-})$ in $y\in[{\Bar{\tau}},\tau_n^{+}]$.

On the other hand, for the remainder term analyses for \eqref{onestepdecomp.pt1} and \eqref{onestepdecomp.pt2}, we exploit a decomposition as follows
\begin{align}
  P_*&\Big[(D_{a,\widehat\theta_a}-D_{a,{\theta_{a,\infty}}})(\tau_n^{+})-(D_{a,\widehat\theta_a}-D_{a,{\theta_{a,\infty}}})(\tau_n^{-})\Big]\nonumber\\
  &=P_*\Big[\frac{\widehat\pi_a(\mathbf{X})-\pi_{\infty}(\mathbf{X})}{\widehat\pi_a(\mathbf{X})}\Big((\widehat\phi_a-\phi_{a,\infty})(\tau_n^{+}|\mathbf{X})-(\widehat\phi_a-\phi_{a,\infty})(\tau_n^{-}|\mathbf{X})\Big)\label{remainder1}\\
  &\indent+\frac{\pi_{a,*}(\mathbf{X})(\widehat\pi_a(\mathbf{X})-\pi_{a,\infty}(\mathbf{X}))}{\widehat\pi_a(\mathbf{X})\pi_{a,\infty}(\mathbf{X})}\Big((\phi_{a,\infty}-\phi_{a,*})(\tau_n^{+}|\mathbf{X})-(\phi_{a,\infty}-\phi_{a,*})(\tau_n^{-}|\mathbf{X})\Big)\label{remainder2}\\
  &\indent+\frac{\pi_{a,\infty}(\mathbf{X})-\pi_{a,*}(\mathbf{X})}{\widehat\pi_a(\mathbf{X})}\Big((\widehat\phi_a-\phi_{a,\infty})(\tau_n^{+}|\mathbf{X})-(\widehat\phi_a-\phi_{a,\infty})(\tau_n^{-}|\mathbf{X})\Big)\Big].\label{remainder3}
\end{align}
Then, the absolute values of three terms in \eqref{remainder1}, \eqref{remainder2}, \eqref{remainder3} are order of $(\tau_n^{+}-\tau_n^{-})o_p(n^{-k/(2k+1)})$ from Assumption \ref{assm:boundedpropscore}, \ref{assm:phiregularity}, and \ref{assm:L2conditionsforCI}, and since true $p_a$ is unimodal and $P_*[\phi_{a,*}(\tau_n^{+}|\mathbf{X})-\phi_{a,*}(\tau_n^{-}|\mathbf{X})]=F_a(\tau_n^{+})-F_a(\tau_n^{-})$ which is $(\tau_n^{+}-\tau_n^{-})O_p(1)$. Thus, we have the following result for the remainder term of \eqref{onestepdecomp.pt2}.
\begin{equation}\label{reference.Remainder.1}
\begin{split}
(\tau_n^{+}&-\tau_n^{-}) P_*\Big((D_{a,\widehat\theta_a}-D_{a,{\theta_{a,\infty}}})(\tau_n^{+})-(D_{a,\widehat\theta_a}-D_{a,{\theta_{a,\infty}}})(\tau_n^{-})\Big)\\
    &\asymp (\tau_n^{+}-\tau_n^{-})^2o_p(n^{-k/(2k+1)}),
\end{split}
\end{equation}
up to a constant. Since, similarly to \eqref{pt2.decom.emp}, the random variables $M_i$s ($i=1,2,3)$ in Assumption \ref{assm:L2conditionsforCI} does not depend on $s_1,\,s_2$, the following holds for the remainder term of \eqref{onestepdecomp.pt2}.
\begin{equation}\label{reference.Remainder.2}
\begin{split}
P_*&\Big\{\int_{\Bar{\tau}}^{\tau_n^{+}}\Big[(D_{a,\widehat\theta_a}-D_{a,{\theta_{a,\infty}}})(y)-(D_{a,\widehat\theta_a}-D_{a,{\theta_{a,\infty}}})(2\Bar{\tau}-y)\Big] \Big\}\\
&\asymp (\tau_n^{+}-\tau_n^{-})^2o_p(n^{-k/(2k+1)}),
\end{split}
\end{equation}
up to a constant.
Hence, we obtain a result for one step estimator $\widehat F_{a,n}$ as follows.
\begin{align}\label{onestepresult}
  \Big|\int \Delta (y)d(\widehat F_{a,n}-F_a)(y)\Big| &\le4\epsilon(\tau_n^{+}-\tau_n^{-})^{k+2}+(\tau_n^{+}-\tau_n^{-})O_p(n^{-(k+1)/(2k+1)})\nonumber\\
  &\indent+(\tau_n^{+}-\tau_n^{-})^2o_p(n^{-k/(2k+1)}). 
\end{align}
Now it remains to study the remaining term from \eqref{deltadecompose},
\begin{align*}
    \Big|\int \Delta (y)d(\widehat F^c_{a,n}-\widehat F_{a,n})(y)\Big|.
\end{align*}
Analogously to the former expansion for $\widehat F_{a,n}-F_a$ (see \eqref{onestepdecomp.pt1} and \eqref{onestepdecomp.pt2}), we have
\begin{align*}
   \int \Delta (y)d(\widehat F^c_{a,n}-\widehat F_{a,n})(y)&=\int_{\Bar{\tau}}^{\tau_n^{+}}\Big((\widehat F^c_{a,n}-\widehat F_{a,n})(y)-(\widehat F^c_{a,n}-\widehat F_{a,n})(2\Bar{\tau}-y)\Big)dy\nonumber\\
   &\indent-\frac{\tau_n^{+}-\tau_n^{-}}{4}\Big[(\widehat F^c_{a,n}-\widehat F_{a,n})(\tau_n^{+})-(\widehat F^c_{a,n}-\widehat F_{a,n})(\tau_n^{-})\Big].
\end{align*}
By a straightforward application of Lemma \ref{lem:monotone.correction.ext}, we have
\begin{align}
 \frac{\tau_n^{+}-\tau_n^{-}}{4}\Big[(\widehat F^c_{a,n}-\widehat F_{a,n})(\tau_n^{+})-(\widehat F^c_{a,n}-\widehat F_{a,n})(\tau_n^{-})\Big]=(\tau_n^{+}-\tau_n^{-})O_p(n^{-(k+1)/(2k+1)}).\label{term:didremainder}
\end{align}
Similarly, one can easily check
\begin{align}
    \Big|\int_{\Bar{\tau}}^{\tau_n^{+}}\Big((\widehat F^c_{a,n}-\widehat F_{a,n})(y)-(\widehat F^c_{a,n}-\widehat F_{a,n})(2\Bar{\tau}-y)\Big)dy\Big|=(\tau_n^{+}-\tau_n^{-})O_p(n^{-(k+1)/(2k+1)})\label{term:integratedremainder}.
\end{align}

Hence, combining the above results \eqref{onestepresult}, \eqref{term:didremainder} and \eqref{term:integratedremainder} with \eqref{R1ndescription}, \eqref{deltadecompose}, and plugging them altogether into \eqref{ineqforR1nandR2n}, this yields
\begin{align*}
M_kp_a(s_0)&|\varphi_a^{(k)}(s_0)|(\tau_n^{+}-\tau_n^{-})^{k+2}+o_p((\tau_n^{+}-\tau_n^{-})^{k+2})\\
&\le \frac{3(\tau_n^{+}-\tau_n^{-})}{2n}+ 4\epsilon (\tau_n^{+}-\tau_n^{-})^{k+2}+(\tau_n^{+}-\tau_n^{-})O_p(n^{-(k+1)/(2k+1)})\\
  &\indent+(\tau_n^{+}-\tau_n^{-})^2o_p(n^{-k/(2k+1)}). 
\end{align*}
Similarly to the proof of Lemma 4.4 in \citet{BRW09}, since $\epsilon>0$ is arbitrary and uniform consistency from Theorem \ref{prop:Consistency} with $\varphi_a^{(k)}(s_0)<0$ implies $(\tau_n^{+}-\tau_n^{-})=o_p(1)$, this proves the claim.
\end{proof}

The following lemma is a straightforward extension of Theorem 2 in \citet{WvdlC2020}.
\begin{lemma}\label{lem:monotone.correction.ext}
Under the same assumptions for Lemma \ref{successiveknots}, if an interval $[l_n,u_n]$ that contains $s_0$ satisfies $|u_n-l_n|=o_p(1)$, we have
\begin{align*}
    \sup_{s\in[l_n,u_n]}|\widehat F^c_{a,n}(s)-\widehat F_{a,n}(s)| = O_p(n^{-(k+1)/(2k+1)})
\end{align*}
for $a \in \{0,1\}$.
\end{lemma}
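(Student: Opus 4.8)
The plan is to follow the proof of Theorem~2 of \citet{WvdlC2020}, adapting its isotonic-regression argument to the one-step estimator and to the grid/interpolation structure of $\widehat F^c_{a,n}$. I would first reduce the claim to grid points. Since $|u_n-l_n|=o_p(1)$ and $s_0\in[l_n,u_n]$ is interior to the support of $p_a$ (so that $F_a(s_0)\in(0,1)$ and, after shrinking $\omega$, $p_a\ge c_0>0$ and $I_{s_0,\omega}\subset(\ell_a,u_a)$ by \ref{assm:pregularity}), with probability tending to one $[l_n,u_n]\subset I_{s_0,\omega}\subset(L_n,U_n)$, so every grid point in $[l_n,u_n]$ is an interior node of $\mc S_n$; moreover, by the proof of Theorem~\ref{prop:Consistency}, $\sup_{s}|\widehat F_{a,n}(s)-F_a(s)|=o_p(1)$, so $\widehat F_{a,n}$ takes values in $(0,1)$ throughout $I_{s_0,\omega}$ and hence $\widehat F^{c_0}_{a,n}=\widehat F_{a,n}$ there, i.e.\ the clamping to $[0,1]$ is inactive near $s_0$. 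For $s\in[s_j,s_{j+1})$ the interpolation \eqref{def:onestepcorrected} gives $\widehat F^c_{a,n}(s)=\widehat F^c_{a,n}(s_j)$, while $|\widehat F_{a,n}(s)-\widehat F_{a,n}(s_j)|$ is at most $\|p_a\|_{\infty,I_{s_0,\omega}}\delta_n$ plus the local modulus of $\widehat F_{a,n}-F_a$ over an interval of length $\delta_n$, which is $O_p(\delta_n)=O_p(n^{-(k+1)/(2k+1)})$ by the modulus bound below and the hypothesis $\delta_n=O_p(n^{-(k+1)/(2k+1)})$. Hence it suffices to bound $\max_{j:\,s_j\in[l_n,u_n]}|\widehat F^c_{a,n}(s_j)-\widehat F^{c_0}_{a,n}(s_j)|$.

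The key input is a local modulus-of-continuity bound for $\widehat F_{a,n}-F_a$: for every $\epsilon>0$ there are random variables $M_{1,n}=O_p(n^{-(k+1)/(2k+1)})$ and $M_{2,n}=o_p(1)$, not depending on $s,t$, such that
\[
\sup_{s,t\in I_{s_0,\omega},\ |s-t|\le r}\bigl|(\widehat F_{a,n}-F_a)(s)-(\widehat F_{a,n}-F_a)(t)\bigr|\ \le\ \epsilon\, r^{k+1}+M_{1,n}+r\,M_{2,n}.
\]
This follows from the decomposition $\widehat F_{a,n}(s)-F_a(s)=(\PP_n-P_*)D_{a,\widehat\theta_a}(s)+P_*[D_{a,\widehat\theta_a}(s)-D_{a,\theta_{a,\infty}}(s)]$ (using $P_*D_{a,\theta_{a,\infty}}(s)=F_a(s)$): the empirical-process increments are bounded by $\epsilon r^{k+1}+O_p(n^{-(k+1)/(2k+1)})$ via Lemma~\ref{emptermlemma} applied to the classes in \eqref{empterm:identity.part}--\eqref{empterm:phi.part}, using \ref{assm:boundedpropscore}, \ref{assm:pi-bracketing} and \ref{assm:phiestholder}; and the remainder increments are $r\cdot o_p(n^{-k/(2k+1)})$ by the decomposition \eqref{indist.rem.pt1}--\eqref{indist.rem.pt3} together with \ref{assm:nuisanceconvergence}, \ref{assm:phiinfholder} and \ref{assm:L2conditionsforCI}. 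These are exactly the estimates underlying \eqref{empterm:identity.part}--\eqref{empterm:phi.part} and \eqref{reference.Remainder.1}, and they do not rely on the present lemma.

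I would then use the min-max (``switching'') representation of isotonic regression: for a grid point $s_j\in[l_n,u_n]$ (so $2\le j\le m_n-1$ with probability tending to one),
\[
\widehat F^c_{a,n}(s_j)-\widehat F^{c_0}_{a,n}(s_j)\ =\ \max_{u\le j}\ \min_{v\ge j}\ \frac{1}{v-u+1}\sum_{k=u}^{v}\bigl[\widehat F^{c_0}_{a,n}(s_k)-\widehat F^{c_0}_{a,n}(s_j)\bigr].
\]
Taking $v=j$ for an upper bound and $u=j$ for a lower bound, and splitting each increment into an $F_a$-part and a $(\widehat F_{a,n}-F_a)$-part (legitimate since $\widehat F^{c_0}_{a,n}=\widehat F_{a,n}$ on $I_{s_0,\omega}$ w.h.p.), the $F_a$-part contributes, by $p_a\ge c_0$ on $I_{s_0,\omega}$ and regularity of the grid, an average of absolute value at least $\frac{c_0}{2}|s_u-s_j|$ with the sign working against the correction, while the $(\widehat F_{a,n}-F_a)$-part is dominated by the modulus above; for index ranges reaching outside $I_{s_0,\omega}$, the $F_a$-part exceeds a fixed positive constant whereas the $(\widehat F_{a,n}-F_a)$-part is $o_p(1)$, so such ranges are never selected. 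Writing $r=|s_u-s_j|$, this gives
\[
\bigl|\widehat F^c_{a,n}(s_j)-\widehat F^{c_0}_{a,n}(s_j)\bigr|\ \le\ \sup_{0\le r\le\omega}\ r\Bigl(-\tfrac{c_0}{2}+\epsilon\, r^{k}+M_{2,n}\Bigr)\ +\ M_{1,n}.
\]
Choosing $\epsilon<c_0/(4\omega^{k})$ and using $M_{2,n}=o_p(1)$, with probability tending to one the factor $-\frac{c_0}{2}+\epsilon r^{k}+M_{2,n}$ is negative on all of $[0,\omega]$, so the supremum is attained at $r=0$ and equals $0$; hence $\max_{j:\,s_j\in[l_n,u_n]}|\widehat F^c_{a,n}(s_j)-\widehat F^{c_0}_{a,n}(s_j)|\le M_{1,n}=O_p(n^{-(k+1)/(2k+1)})$. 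Combined with the reduction in the first paragraph, this yields the claim.

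I expect the main obstacle to be the localization step — verifying that the min-max is effectively restricted to index ranges staying near $s_0$ (so that only local, clamping-free values of the estimator enter the formula), and then balancing the curvature term $-\frac{c_0}{2}r$ against the modulus $\epsilon r^{k+1}+M_{1,n}+rM_{2,n}$ so that the supremum collapses to $r=0$ and leaves only the $O_p(n^{-(k+1)/(2k+1)})$ term. The remaining pieces — the empirical-process and remainder bounds, and the handling of the $[0,1]$-clamping and the piecewise-constant interpolation — are routine given the estimates already assembled for Theorem~\ref{prop:CI}.
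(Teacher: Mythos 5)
Your proposal is correct, and it rests on exactly the same two inputs as the paper's argument: the local modulus bound for $\widehat F_{a,n}-F_a$ of the form $\epsilon r^{k+1}+O_p(n^{-(k+1)/(2k+1)})+r\,o_p(n^{-k/(2k+1)})$, which is established for the \emph{uncorrected} one-step estimator inside the proof of Lemma \ref{successiveknots} (see \eqref{empterm:identity.part}, \eqref{empterm:phi.part}, \eqref{reference.Remainder.1}) and, as you correctly note, does not depend on the present lemma (so there is no circularity even though Lemma \ref{successiveknots} later invokes this result); and positivity of $p_a$ on a neighborhood of $s_0$ from \ref{assm:pregularity}. The difference is organizational: the paper verifies conditions (A)--(C) of \citet{WvdlC2020} and then outsources the isotonization step to their Lemmas 1--3, first bounding $\kappa_n$, the maximal length of an interval on which $\widehat F_{a,n}$ is decreasing, by $O_p(n^{-(k+1)/(2k+1)})$ and then bounding the correction by the oscillation of $\widehat F_{a,n}$ over $\kappa_n$-length intervals, whereas you work directly with the min--max representation of the isotonic fit and balance the drift term $-\tfrac{c_0}{2}r$ against the modulus so that the optimizing block collapses to length zero, bypassing the explicit $\kappa_n$ step. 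Your version is more self-contained and also makes explicit two points the paper's sketch leaves implicit, namely that the $[0,1]$-clamping is inactive near $s_0$ (and that for blocks reaching outside $I_{s_0,\omega}$ one falls back on $\sup_s|\widehat F^{c_0}_{a,n}(s)-F_a(s)|\le\sup_s|\widehat F_{a,n}(s)-F_a(s)|=o_p(1)$) and that passing from grid points to arbitrary $s\in[l_n,u_n]$ costs only $O_p(\delta_n)=O_p(n^{-(k+1)/(2k+1)})$; the paper's route, in exchange, reuses an off-the-shelf result and so requires less verification on the isotonic side. Only cosmetic caveats remain: restrict the min--max indices to $2\le u\le j\le v\le m_n-1$ to match \eqref{def:isotonicongrid}, and note that for far blocks the favorable $F_a$-contribution is a fixed-sign constant (negative for the max over $u$, positive for the min over $v$) rather than literally ``positive'' in both cases.
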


\begin{proof}
The proof follows the steps in Lemma 1, 2, and 3 of
\citet{WvdlC2020}.
We give a sketch here. 
First, Assumption \ref{assm:positivity} and \ref{assm:phiinfholder} line up with the condition (i) and (ii) in Section 4.1 (see page 16) of \citet{WvdlC2020} which correspond to condition (B) and (C) therein (see page 8). Since $\delta_n=O_p(n^{-(k+1)/(2k+1)})$ and we verified that 
\begin{align*}
 |(\widehat F_{a,n}(t)-\widehat F_{a,n}(s))-(F_a(t)-F_a(s))|&\le\epsilon|t-s|^{k+1}+O_p(n^{-(k+1)/(2k+1)})\\
 &\indent+|t-s|o_p(n^{-k/(2k+1)}),  
\end{align*}
for arbitrary small $\epsilon>0$ in the proof of Lemma \ref{successiveknots}, this yields
\begin{align*}
    \sup_{|t-s|<cw_n} w_n^{-1}|(\widehat F_{a.n}(t)-\widehat F_{a.n}(s))-(F_a(t)-F_a(s))|=O_p(1),
\end{align*}
which meets the condition (A) in \citet{WvdlC2020} (see page 8), where $w_n=n^{-(k+1)/(2k+1)}$.
When we define $\kappa_n:=\sup\{|t-s|:t,s\in[L_n,U_n],s\le t, \widehat F_{a,n}(t)\le \widehat F_{a,n}(s)\}$, then by the same procedure in Lemma 2 of \citet{WvdlC2020}, we obtain $\kappa_n=O(n^{-(k+1)/(2k+1)})$. This yields the conclusion with Lemma 3 in \citet{WvdlC2020}.
\end{proof}

The following lemma is analogous to Lemma 4.5 in \citet{BRW09}.
\begin{lemma}
\label{lem:local-tightness-phi}
For any $T>0$, under the same assumptions as in Lemma \ref{successiveknots}, for $a \in \{0,1\}$, we have
\begin{align}
    &\sup_{|t|\le T} |\widehat\varphi'_{a,n}(s_0+v_n t)-\varphi'_a(s_0)|=O_p(n^{-(k-1)/(2k+1)})\label{taylorlemma1}\\
    &\sup_{|t|\le T} |\widehat\varphi_{a,n}(s_0+v_n t)-\varphi_a(s_0)-v_n t\varphi'_a(s_0)|=O_p(n^{-k/(2k+1)}),\label{taylorlemma2}
\end{align}
where $v_n=n^{-1/(2k+1)}$. This implies that for $ \widehat\varpi_{a,n,1}(u):=\sum_{j=k+1}^{\infty}\frac{1}{j!}(\widehat\varphi_{a,n}(u)-\varphi_a(s_0))^j$,
\begin{align}
 \widehat\varpi_{a,n,1}(u)=o_p(n^{-k/(2k+1)}),   \label{taylorlemma3} 
\end{align}
uniformly in $u\in[s_0-tn^{-1/(2k+1)},s_0+tn^{-1/(2k+1)}]$, where $|t|\le T$.
Furthermore, if we define, for any $u\in\mathbb{R}$,
\begin{align*}
    \widehat e_{a,n}(u)=\widehat p_{a,n}-\sum_{j=0}^{k-1}\frac{p_a^{(j)}(s_0)}{j!}(u-s_0)^j-p_a(s_0)\frac{[\varphi'_a(s_0)]^k}{k!}(u-s_0)^k,
\end{align*}
then
\begin{align}
 \sup_{|t|\le T}&|\widehat e_{a,n}(s_0+v_n t)-p_a(s_0)(\widehat\varphi_{a,n}(s_0+v_n t)-\varphi_a(s_0)-v_n t\varphi'_a(s_0))| \label{taylorlemma4} \\
 &=o_p(n^{-k/(2k+1)}).
\end{align}
\end{lemma}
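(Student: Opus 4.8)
The plan is to mimic the structure of the proof of Lemma~4.5 in \citet{BRW09}, but to propagate our covariate-adjusted, isotonized, and log-concave-projected estimator through the same analytic identities rather than the non-causal log-concave MLE. First I would establish \eqref{taylorlemma1} and \eqref{taylorlemma2}. The derivative bound \eqref{taylorlemma1} follows from the two-knot spacing estimate $\tau_n^+ - \tau_n^- = O_p(n^{-1/(2k+1)})$ proved in Lemma~\ref{successiveknots} together with the fact that $\widehat\varphi_{a,n}$ is piecewise linear on its knot set $\widehat{\mc L}_{a,n}$: on a window of width $O(v_n)$ around $s_0$ there are only $O_p(1)$ knots, and the slopes on adjacent pieces differ by the amounts controlled by the pointwise consistency of $\widehat p_{a,n}$ (Theorem~\ref{prop:Consistency}) and by the mean-value/convexity arguments of \citet{BRW09} applied to $\widehat G_{a,n}$ and $\widehat H_{a,n}$. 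Specifically, one compares $\widehat\varphi'_{a,n}$ at the nearest knots to $\varphi'_a(s_0)$ by expanding $\widehat H_{a,n} - \widehat H^c_{a,n}$ and using Lemma~\ref{lem:monotone.correction.ext} to absorb the isotonization gap at rate $O_p(n^{-(k+1)/(2k+1)})$, which is negligible at the $n^{-(k-1)/(2k+1)}$ scale; the empirical-process and remainder terms are controlled exactly as in the proof of Lemma~\ref{successiveknots} (via Lemma~\ref{emptermlemma} and Assumptions \ref{assm:nuisanceconvergence}, \ref{assm:phiestholder}, \ref{assm:phiinfholder}, \ref{assm:L2conditionsforCI}, \ref{assm:pi-bracketing}). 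The function bound \eqref{taylorlemma2} then follows by integrating \eqref{taylorlemma1} over a window of width $O(v_n)$: $\widehat\varphi_{a,n}(s_0 + v_n t) - \widehat\varphi_{a,n}(s_0) - v_n t\,\varphi'_a(s_0) = \int_0^{v_n t}[\widehat\varphi'_{a,n}(s_0+u) - \varphi'_a(s_0)]\,du = O_p(v_n \cdot n^{-(k-1)/(2k+1)}) = O_p(n^{-k/(2k+1)})$, and $\widehat\varphi_{a,n}(s_0) - \varphi_a(s_0) = O_p(n^{-k/(2k+1)})$ by the same expansion applied at $s_0$ itself (or by Theorem~\ref{prop:CI}'s consistency consequences together with the knot-spacing rate).

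Next I would deduce \eqref{taylorlemma3}. Writing $b_n(u) := \widehat\varphi_{a,n}(u) - \varphi_a(s_0)$, which is $O_p(n^{-k/(2k+1)}) = o_p(1)$ uniformly over $u$ in the window by \eqref{taylorlemma2}, the tail $\widehat\varpi_{a,n,1}(u) = \sum_{j \ge k+1} b_n(u)^j/j!$ is bounded in absolute value by $|b_n(u)|^{k+1}\sum_{j\ge 0}|b_n(u)|^j/(k+1)! \le 2|b_n(u)|^{k+1}/(k+1)!$ once $|b_n(u)| < 1/2$, hence is $O_p(n^{-(k+1)k/(2k+1)})$ uniformly, which is $o_p(n^{-k/(2k+1)})$ since $k \ge 1$ forces $(k+1)k/(2k+1) > k/(2k+1)$. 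Finally, for \eqref{taylorlemma4} I would write $\widehat p_{a,n}(u) = \exp(\widehat\varphi_{a,n}(u)) = p_a(s_0)\exp(b_n(u))$ and $p_a(s_0)\exp(b_n(u)) = p_a(s_0)\sum_{j=0}^{\infty} b_n(u)^j/j!$. Using Lemma~\ref{techtermlemma} to identify $p_a^{(j)}(s_0) = [\varphi'_a(s_0)]^j p_a(s_0)$ for $j \le k-1$ and the definition of $\widehat e_{a,n}$, I would expand $b_n(u) = (\widehat\varphi_{a,n}(u) - \varphi_a(s_0))$, substitute $b_n(u) = v_n t\,\varphi'_a(s_0) + r_n(u)$ with $r_n(u) := \widehat\varphi_{a,n}(s_0+v_n t) - \varphi_a(s_0) - v_n t\,\varphi'_a(s_0) = O_p(n^{-k/(2k+1)})$ by \eqref{taylorlemma2}, and match the polynomial-in-$(u-s_0)$ terms of degree $\le k$ against the subtracted terms in $\widehat e_{a,n}$. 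The degree-$k$ term $p_a(s_0)[\varphi'_a(s_0)]^k(u-s_0)^k/k!$ cancels exactly; the cross terms involving $r_n(u)$ multiplied by powers of $v_n t$ of order $\le k-1$, together with the higher-order tail handled in \eqref{taylorlemma3}, all contribute at most $O_p(n^{-k/(2k+1)})$; and what remains is $p_a(s_0)\,r_n(u) + o_p(n^{-k/(2k+1)}) = p_a(s_0)(\widehat\varphi_{a,n}(s_0+v_n t) - \varphi_a(s_0) - v_n t\,\varphi'_a(s_0)) + o_p(n^{-k/(2k+1)})$, which is precisely \eqref{taylorlemma4}.

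The main obstacle, and the place where our setting genuinely departs from \citet{BRW09}, is obtaining the derivative bound \eqref{taylorlemma1} at the sharp rate $n^{-(k-1)/(2k+1)}$: one must show that neither the nuisance-estimation error in $\widehat F_{a,n}$ nor the isotonization correction $\widehat F_{a,n}^c - \widehat F_{a,n}$ degrades the local slope behavior of the log-concave projection below this rate. This is handled by the already-proven Lemma~\ref{successiveknots} (knot spacing $O_p(n^{-1/(2k+1)})$) and Lemma~\ref{lem:monotone.correction.ext} (correction gap $O_p(n^{-(k+1)/(2k+1)})$, strictly faster than $v_n\cdot n^{-(k-1)/(2k+1)} = n^{-k/(2k+1)}$ which itself dominates $n^{-(k+1)/(2k+1)}$), so that the difference-quotient characterization of $\widehat\varphi'_{a,n}$ at adjacent knots differs from $\varphi'_a(s_0)$ only through $\widehat H_{a,n}^c - \widehat H_{a,n}$ and the curvature of $H_a$, exactly as in the non-causal argument; the remaining empirical-process and doubly-robust remainder terms are exactly those bounded in the proof of Lemma~\ref{successiveknots}. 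Everything downstream of \eqref{taylorlemma1}--\eqref{taylorlemma2} is the routine Taylor-expansion bookkeeping sketched above.
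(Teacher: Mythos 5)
Your overall route is the same as the paper's: the paper proves \eqref{taylorlemma1}--\eqref{taylorlemma2} as a direct extension of Lemmas 4.3--4.4 of \citet{Groeneboom:2001jo}, using Lemma \ref{successiveknots} for the knot spacing, and then notes that \eqref{taylorlemma3}--\eqref{taylorlemma4} follow exactly as in Lemma 4.5 of \citet{BRW09}; your ingredients (knot spacing, the isotonization gap from Lemma \ref{lem:monotone.correction.ext}, the empirical-process and doubly robust remainder bounds already established in the proof of Lemma \ref{successiveknots}, and then Taylor bookkeeping) coincide with that plan. Two points, however, need repair. First, you cannot anchor \eqref{taylorlemma2} by appealing to Theorem \ref{prop:CI}: that theorem is proved via Lemma \ref{lem:finallemmaforCI}, which relies on the present lemma, so that fallback is circular. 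More substantively, deducing \eqref{taylorlemma2} by ``integrating \eqref{taylorlemma1}'' requires the anchor $\widehat\varphi_{a,n}(s_0)-\varphi_a(s_0)=O_p(n^{-k/(2k+1)})$, and obtaining that pointwise rate is precisely the nontrivial content of the Groeneboom--Jongbloed--Wellner-type argument (the characterization of the log-concave MLE through the touching of $\widehat H_{a,n}$ and $\widehat H^c_{a,n}$ at knots, combined with the local fluctuation bounds on $\widehat F_{a,n}-F_a$ and on $\widehat F^c_{a,n}-\widehat F_{a,n}$ over intervals of length $O_p(v_n)$). Your sketch invokes this machinery for the derivative bound but never actually supplies the value bound, so as written \eqref{taylorlemma2} is asserted rather than proved; this is the step the paper covers by its citation to Lemmas 4.3--4.4 of \citet{Groeneboom:2001jo}.

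Second, in your argument for \eqref{taylorlemma3} you claim $b_n(u)=\widehat\varphi_{a,n}(u)-\varphi_a(s_0)=O_p(n^{-k/(2k+1)})$ uniformly over the local window. This is false in general: \eqref{taylorlemma2} only gives $b_n(u)=(u-s_0)\varphi_a'(s_0)+O_p(n^{-k/(2k+1)})=O_p(v_n)=O_p(n^{-1/(2k+1)})$, since the linear term dominates unless $\varphi_a'(s_0)=0$. The conclusion of \eqref{taylorlemma3} survives with the corrected rate, because $|b_n(u)|^{k+1}=O_p(v_n^{k+1})=O_p(n^{-(k+1)/(2k+1)})=o_p(n^{-k/(2k+1)})$, but your stated exponent $n^{-(k+1)k/(2k+1)}$ is not what the argument yields. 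The same correction should be carried into the cross-term accounting for \eqref{taylorlemma4}: the terms of the form $r_n(u)\cdot O(v_n^{j-1})$ with $2\le j\le k$ are $o_p(n^{-k/(2k+1)})$, not merely $O_p(n^{-k/(2k+1)})$ as you wrote, and this strict smallness is exactly what is needed so that only $p_a(s_0)\,r_n(u)$ survives at the leading order.
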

\begin{proof}
The proof for \eqref{taylorlemma1}, \eqref{taylorlemma2} is straightforward extension from the proof for Lemma 4.3, and Lemma 4.4 in \citet{Groeneboom:2001jo} with (our) Lemma \ref{successiveknots}. Then, the proof for \eqref{taylorlemma3}, \eqref{taylorlemma4} is identical to the proof for Lemma 4.5 in \citet{BRW09}, relying on \eqref{taylorlemma1} and \eqref{taylorlemma2}.
\end{proof}
We construct local processes
\begin{align*}
    \widehat G^{\rm loc}_{a,n}(t)&=r_n\int_{s_0}^{s_n(t)}\Big(\widehat F^c_{a,n}(v)-\widehat F^c_{a,n}(s_0)-\int_{s_0}^v\sum_{j=0}^{k-1}\frac{p_a^{(j)}(s_0)}{j!}(u-s_0)^j du\Big)dv,
\end{align*}
and
\begin{align*}
    \widehat H^{\rm loc}_{a,n}(t)&=r_n\int_{s_0}^{s_n(t)} \int_{s_0}^v \Big(\widehat p_{a,n}(u)-\sum_{j=0}^{k-1}\frac{p_a^{(j)}(s_0)}{j!}(u-s_0)^j\Big) du dv+\widehat A_{a,n}t+\widehat B_{a,n},
\end{align*}
where 
\begin{align*}
    \widehat A_{a,n}&=r_nv_n(\widehat G_{a,n}(s_0)-\widehat F^c_{a,n}(s_0)),\\
    \widehat B_{a,n}&=r_n(\widehat H_{a,n}(s_0)-\widehat H^c_{a,n}(s_0)),
\end{align*}
where $r_n=n^{(k+2)/(2k+1)},v_n=n^{-1/(2k+1)},s_n(t)=s_0+v_nt$.
We further define modified processes of each $\widehat G^{\rm loc}_{a,n}$ and $\widehat H^{\rm loc}_{a,n}$ as follows.
\begin{align*}
    \widehat G^{\rm locmod}_{a,n}(t)&=\frac{r_n}{p_a(s_0)}\int_{s_0}^{s_n(t)}\Big(\widehat F^c_{a,n}(v)-\widehat F^c_{a,n}(s_0)-\int_{s_0}^v\sum_{j=0}^{k-1}\frac{p_a^{(j)}(s_0)}{j!}(u-s_0)^j du\Big)dv\\
    &\indent-r_n\int_{s_0}^{s_n(t)} \int_{s_0}^v \widehat\varpi_{a,n}(u)dudv\\
    &=\frac{1}{p_a(s_0)}\widehat G^{\rm loc}_{a,n}(t)-r_n\int_{s_0}^{s_n(t)} \int_{s_0}^v \widehat\varpi_{a,n}(u)dudv,
\end{align*}
and
\begin{align*}    
    \widehat H^{\rm locmod}_{a,n}(t)&=r_n\int_{s_0}^{s_n(t)} \int_{s_0}^v(\widehat\varphi_{a,n}(u)-\varphi_a(s_0)-(u-s_0)\varphi'_a(s_0))dudv+\frac{\widehat A_{a,n}t+\widehat B_{a,n}}{p_a(s_0)}\\
    &=\frac{1}{p_a(s_0)}\widehat H^{\rm loc}_{a,n}(t)-r_n\int_{s_0}^{s_n(t)} \int_{s_0}^v \widehat\varpi_{a,n}(u)dudv,
\end{align*}
and
\begin{align*}
    \widehat\varpi_{a,n}(u)&=\sum_{j=2}^{\infty}\frac{1}{j!}[\widehat\varphi_{a,n}(u)-\varphi_a(s_0)]^j-\sum_{j=2}^{k-1} \frac{[\varphi'_a(s_0)]^j}{j!}(u-s_0)^j\\
    &=\widehat\varpi_{a,n,1}+\sum_{j=2}^{k}\frac{1}{j!}[\widehat\varphi_{a,n}(u)-\varphi_a(s_0)]^j-\sum_{j=2}^{k-1} \frac{[\varphi'_a(s_0)]^j}{j!}(u-s_0)^j.
\end{align*}
We let $W$ denote a standard two-sided Brownian motion starting at $0$. For each $t\in \RR, k\in \NN$, we defined the integrated Gaussian process with drift $Y_k$ as
\begin{align}\label{integrgaussprocess}
   Y_k(t):=\begin{cases}
       \int_0^t W(s)ds-t^{k+2}, &{\rm if~} t\ge 0,\\
       \int_t^0 W(s)ds-t^{k+2}, &{\rm if~} t<0.
   \end{cases} 
\end{align}
The invelope process $H_k$ of $Y_k$ is then the unique process satisfying:
\begin{equation}\label{H-k:definition}
 \begin{gathered}
 	H_k(t)\le Y_k(t) \text{ for all } t\in\RR, \\
         \int \left[ H_{k}(t)-Y_{k}(t)\right] \, dH^{(3)}_{k}(t) = 0, \text{ and}  \\
         H_k^{(2)} \text{ is concave}.
 \end{gathered}
 \end{equation}
We study the asymptotic behavior of localized processes (at the `log density level') in Lemma~\ref{lem:finallemmaforCI}.

\begin{lemma}\label{lem:finallemmaforCI}
Let $T>0$. Under the same assumptions as in Lemma \ref{successiveknots}, the following holds for $a \in \{0,1\}$. 
\begin{enumerate}
    \myitem{(a)} \label{CIlem:a}$\widehat G^{\rm locmod}_{a,n}$ converges weakly in $C[-T,T]$ to the driving process $G_{\alpha_a,k,\sigma_a}$, where
    \begin{align}
     G_{\alpha_a,k,\sigma_a}(t):=\alpha_a\int_0^t W(s)ds-\sigma_a t^{k+2},   
    \end{align}
    for $\alpha_a=\frac{1}{p_a(s_0)}\sqrt{\EE \Big[\frac{\pi_{a,*}(\mathbf{X})}{\pi^2_{a,\infty}(\mathbf{X})}\eta_{a,*}(s_0|\mathbf{X})\Big]}$, $\sigma_a=|\varphi_a^{(k)}(s_0)|/(k+2)!$.
    \myitem{(b)} \label{CIlem:b}The following inequality holds.
    \begin{align*}
     \widehat G^{\rm locmod}_{a,n}(t)-\widehat H^{\rm locmod}_{a,n}(t)\ge 0,  
    \end{align*}
    for all $t\in\RR$. And, equality holds for all $t$ such that $s_n(t)\in\widehat{\mc L}_a$.
    \myitem{(c)} \label{CIlem:c} Both $\widehat A_{a,n},\widehat B_{a,n}$ are tight.
    \myitem{(d)} \label{CIlem:d} The vector of processes
    \begin{align*}
        (\widehat H^{\rm locmod}_{a,n},(\widehat H^{\rm locmod}_{a,n})^{(1)},(\widehat H^{\rm locmod}_{a,n})^{(2)},  \widehat G^{\rm locmod}_{a,n}, (\widehat H^{\rm locmod}_{a,n})^{(3)}, ( \widehat G^{\rm locmod}_{a,n})^{(1)})
    \end{align*}
    converges weakly in $(C[-T,T])^4\times (D[-T,T])^2$, endowed with the product topology induced by the uniform topology on the spaces $C[-T,T]$ and the $M_1$ Skorohod topology on the spaces $D[-T,T]$, to the process
    \begin{align*}
        (H_{\alpha_a,k,\sigma_a}, (H_{\alpha_a,k,\sigma_a})^{(1)}, H_{\alpha_a,k,\sigma_a}^{(2)}, Y_{\alpha_a,k,\sigma_a}, (H_{\alpha_a,k,\sigma_a})^{(3)}, (Y_{\alpha_a,k,\sigma_a})^{(1)}),
    \end{align*}
    where $H_{\alpha_a,k,\sigma_a}$ is the unique process on $\RR$ that satisfies
    \begin{align}\label{def:scaled.HandY.processes}
        \begin{cases}
            H_{\alpha_a,k,\sigma_a}(t) \le Y_{\alpha_a,k,\sigma_a}(t) & for~all~t\in\RR,\\
            \int(H_{\alpha_a,k,\sigma_a}(t)-Y_{\alpha_a,k,\sigma_a}(t))dH^{(3)}_{\alpha_a,k,\sigma_a}(t)=0, & \\
            H^{(2)}_{\alpha_a,k,\sigma_a} & is~concave.
        \end{cases}
    \end{align}
\end{enumerate}
\end{lemma}
To allow multiple jumps to approximate a single jump in $(\widehat H^{\rm locmod}_{a,n})^{(3)}$ and $ ( \widehat G^{\rm locmod}_{a,n})^{(1)}$, we use the $M_1$ topology on the space $D[-T,T]$ instead of $J_1$ topology (see Chapter 12 of \citealp{billingsley2013convergence} for the definition of the $J_1$ topology) which was used in Theorem 4.6 of \citet{BRW09}. The $M_1$ topology is defined in Section 12.3 of \citet{whitt2002stochastic}, and its separability and completeness are thoroughly proved in Lemma 8.22 and Proposition 8.23 of \citet{Doss.Wellner.2019}.
\begin{proof}
For the first part \ref{CIlem:a}, by a straightforward application of the proof of Theorem 4.6 in \citet{BRW09} in combination with the results from Lemma~\ref{lem:local-tightness-phi}, we have
\begin{align*}
    r_n \int_{s_0}^{s_n(t)} \int_{s_0}^v \widehat\varpi_{a,n}(u) du dv = \frac{[\varphi_a(s_0)]^k}{(k+2)!}t^{k+2}+o_p(1).
\end{align*}
Then, by Lemma \ref{lem:local-tightness-phi}, we have
\begin{align*}
    \widehat G^{\rm locmod}_{a,n}(t)&=\frac{r_n}{p_a(s_0)}\int_{s_0}^{s_n(t)}\Big(\widehat F^c_{a,n}(v)-\widehat F^c_{a,n}(s_0)-\int_{s_0}^v\sum_{j=0}^{k-1}\frac{p_a^{(j)}(s_0)}{j!}(u-s_0)^j du\Big)dv\\
    &\indent-\frac{[\varphi_a(s_0)]^k}{(k+2)!}t^{k+2}+o_p(1).
\end{align*}
In addition, the display above further yields
\begin{equation}\label{Glocmod} 
\begin{split}
\widehat G^{\rm locmod}_{a,n}(t) &=\frac{r_n}{p_a(s_0)}\int_{s_0}^{s_n(t)}\Big(\widehat F_{a,n}(v)-\widehat F_{a,n}(s_0)-\int_{s_0}^v\sum_{j=0}^{k-1}\frac{p_a^{(j)}(s_0)}{j!}(u-s_0)^j du\Big)dv\\
&\indent+\frac{r_n}{p_a(s_0)}\int_{s_0}^{s_n(t)}\Big((\widehat F^c_{a,n}(v)-\widehat F^c_{a,n}(s_0))-(\widehat F_{a,n}(v)-\widehat F_{a,n}(s_0))\Big)dv\\
 &\indent-\frac{[\varphi_a(s_0)]^k}{(k+2)!}t^{k+2}+o_p(1)\\
&=\frac{r_n}{p_a(s_0)}\int_{s_0}^{s_n(t)}\Big((\widehat F_{a,n}(v)-\widehat F_{a,n}(s_0))-(F_a(v)-F_a(s_0))\Big)dv\\
&\indent+\frac{p_a^{(k)}(s_0)}{(k+2)!p_a(s_0)}t^{k+2}-\frac{[\varphi_a(s_0)]^k}{(k+2)!}t^{k+2}+o_p(1),
\end{split}
\end{equation}
where the last equality arises from Lemma \ref{lem:integratedDiD}. Moreover, by Lemma \ref{techtermlemma}, we have
\begin{align*}
\frac{p_a^{(k)}(s_0)}{(k+2)!p_a(s_0)}t^{k+2}-\frac{[\varphi_a(s_0)]^k}{(k+2)!}t^{k+2}= \frac{\varphi_a^{(k)}(s_0)}{(k+2)!}t^{k+2}.  
\end{align*}
With the decomposition $\widehat F_{a,n}(s)-F_a(s)=\PP_nD_{a,\widehat{\theta}_a}(s)-P_*D_{a,\theta_{a,\infty}}(s)$, the preceding displays yield
\begin{align*}
&\widehat G^{\rm locmod}_{a,n}(t)\\
&=(\PP_n-P_*)\Big[\frac{r_n}{p_a(s_0)}\int_{s_0}^{s_n(t)}(D_{a,\widehat\theta_a}(v)-D_{a,\widehat\theta_a}(s_0))dv\Big]+\frac{\varphi_a^{(k)}(s_0)}{(k+2)!}t^{k+2}+o_p(1)\\
&\indent+P_*\Big[\frac{r_n}{p_a(s_0)}\int_{s_0}^{s_n(t)}\Big((D_{a,\widehat\theta_a}-D_{a,{\theta_{a,\infty}}})(v)-(D_{a,\hat\theta_a}-D_{a,{\theta_{a,\infty}}})(s_0)\Big)dv\Big].
\end{align*}
In addition, we verified
\begin{align*}
 P_*\Big[\frac{r_n}{p_a(s_0)}\int_{s_0}^{s_n(t)}\Big((D_{a,\hat\theta_a}-D_{a,{\theta_{a,\infty}}})(v)-(D_{a,\hat\theta_a}-D_{a,{\theta_{a,\infty}}})(s_0)\Big)dv\Big]=o_p(1),  
\end{align*}
in the proof of Lemma \ref{successiveknots} (see \eqref{reference.Remainder.1}, \eqref{reference.Remainder.2}). Thus, we finally obtain
\begin{align*}
&\widehat G^{\rm locmod}_{a,n}(t)\\
&=(\PP_n-P_*)\Big[\frac{r_n}{p_a(s_0)}\int_{s_0}^{s_n(t)}(D_{a,\widehat\theta_a}(v)-D_{a,\widehat\theta_a}(s_0))dv\Big]+\frac{\varphi_a^{(k)}(s_0)}{(k+2)!}t^{k+2}+o_p(1)\\
&\indent+P_*\Big[\frac{r_n}{p_a(s_0)}\int_{s_0}^{s_n(t)}\Big((D_{a,\widehat\theta_a}-D_{a,{\theta_{a,\infty}}})(v)-(D_{a,\hat\theta_a}-D_{a,{\theta_{a,\infty}}})(s_0)\Big)dv\Big]
\end{align*}
which is equal to 
\begin{equation*}
\begin{split}
&(\PP_n-P_*)\Big[\frac{r_n}{p_a(s_0)}\int_{s_0}^{s_n(t)}(D_{a,{\theta_{a,\infty}}}(v)-D_{a,{\theta_{a,\infty}}}(s_0))dv\Big]+\frac{\varphi_a^{(k)}(s_0)}{(k+2)!}t^{k+2}+o_p(1)\\
&\indent+(\PP_n-P_*)\Big[\frac{r_n}{p_a(s_0)}\int_{s_0}^{s_n(t)}(D_{a,\hat\theta_a}(v)-D_{a,\hat\theta_a}(s_0))-(D_{a,{\theta_{a,\infty}}}(v)-D_{a,{\theta_{a,\infty}}}(s_0))dv\Big] \\
&=(\PP_n-P_*)\Big[\frac{r_n}{p_a(s_0)}\int_{s_0}^{s_n(t)}(D_{a,{\theta_{a,\infty}}}(v)-D_{a,{\theta_{a,\infty}}}(s_0))dv\Big]+\frac{\varphi_a^{(k)}(s_0)}{(k+2)!}t^{k+2}+o_p(1),
\end{split}
\end{equation*}
where we used Lemma \ref{lem:empiricalprocDID} in the last equality.

Now, we study the convergence of the term
\begin{align*}
  (\PP_n-P_*)\Big[\frac{r_n}{p_a(s_0)}\int_{s_0}^{s_n(t)}(D_{a,{\theta_{a,\infty}}}(v)-D_{a,{\theta_{a,\infty}}}(s_0))dv\Big],  
\end{align*}
by Theorem 2.11.24 in \citet{vdvandW} (see Proposition \ref{prop:vdv.2.11.24}). To line up with the assumption that Proposition \ref{prop:vdv.2.11.24} requires, assuming $t\ge0$ without loss of generality, we first define the function class 
\begin{align*}
    \mc F_{n}:=\Big\{\sqrt{v_n^{-1}}\Big(D_{a,{\theta_{a,\infty}}}(s_0+v_nt)-D_{a,{\theta_{a,\infty}}}(s_0)\Big):t\in[-T,T]\Big\}.
\end{align*}
Since we have 
\begin{align*}
 &\sqrt{v_n^{-1}}\Big(D_{a,{\theta_{a,\infty}}}(s_0+v_nt)-D_{a,\theta_{a,\infty}}(s_0)\Big)\\
 &=\frac{I(A=a)}{\pi_{a,\infty}(\mathbf{X})}\left(I(Y\le s_0+v_nt)-I(Y\le s_0)\right)\\
 &\indent +\Big(1-\frac{I(A=a)}{\pi_{a,\infty}(\mathbf{X})}\Big)(\phi_{a,\infty}(s_0+v_nt|\mathbf{X})-\phi_{a,\infty}(s_0|\mathbf{X})),
\end{align*}
the class $\mc F_{n}$ has an envelope which is given by
\begin{align*}
F_{\rm env,n}:&=\sqrt{v_n^{-1}}\Big(KI(A=a)I(s_0-v_nT<Y\le s_0+v_nT)\\
&\indent +(K+1)(\phi_{a,\infty}(s_0+v_nT|\mathbf{X})-\phi_{a,\infty}(s_0-v_nT|\mathbf{X}))\Big),
\end{align*}
and so 
\begin{align*}
    &\EE F_{\rm env,n}^2=O(1),\\
    &\EE F_{\rm env,n}^2I(F_{\rm env}\ge \eta\sqrt{n})\rightarrow 0,
\end{align*}
for arbitrary $\eta>0$. In addition, we have
\begin{align*}
 &\Big(D_{a,{\theta_{a,\infty}}}(s_0+v_nt_1)-D_{a,{\theta_{a,\infty}}}(s_0)\Big)-\Big(D_{a,{\theta_{a,\infty}}}(s_0+v_nt_2)-D_{a,{\theta_{a,\infty}}}(s_0)\Big)\\
 &=D_{a,{\theta_{a,\infty}}}(s_0+v_nt_1)-D_{a,{\theta_{a,\infty}}}(s_0+v_nt_2)\\
 &=\Big(1-\frac{I(A=a)}{\pi_{a,\infty}(\mathbf{X})}\Big)(\phi_{a,\infty}(u_1|\mathbf{X})-\phi_{a,\infty}(u_2|\mathbf{X}))\\
 &\indent+\frac{I(A=a)}{\pi_{a,\infty}(\mathbf{X})}I(u_2<Y\le u_1),
\end{align*}
where $u_i=s_0+v_nt_i$ for $i\in\{1,2\}$. Then, for some $C_2>0$,
\begin{align*}
    &P_*\Big[\Big(D_{a,{\theta_{a,\infty}}}(s_0+v_nt_1)-D_{a,{\theta_{a,\infty}}}(s_0)\Big)-\Big(D_{a,{\theta_{a,\infty}}}(s_0+v_nt_2)-D_{a,{\theta_{a,\infty}}}(s_0)\Big)\Big]^2\\
    &\le 2P_* \Big[\frac{I(A=a)}{\pi^2_{a,\infty}(\mathbf{X})}I(u_2<Y\le u_1)\Big]\\
    &\indent+2P_* \Big[\Big(1-\frac{I(A=a)}{\pi_{a,\infty}(\mathbf{X})}\Big)^2(\phi_{a,\infty}(u_1|\mathbf{X})-\phi_{a,\infty}(u_2|\mathbf{X}))^2\Big]\\
    &\le C_2K^2v_n(t_1-t_2)+(K+1)^2(t_2-t_1)^{2}v_n^{2},
\end{align*}
where a similar reasoning to the proof of Theorem \ref{prop:Consistency} is applied with Assumption \ref{assm:phiregularity}. Thus, we have
\begin{align*}
    \sup_{|t_2-t_1|<\delta_n}v_n^{-1} &P_*\Big[\Big(D_{a,{\theta_{a,\infty}}}(s_0+v_nt_1)-D_{a,{\theta_{a,\infty}}}(s_0)\Big)-\Big(D_{a,{\theta_{a,\infty}}}(s_0+v_nt_2)-D_{a,{\theta_{a,\infty}}}(s_0)\Big)\Big]^2\\
    &\rightarrow 0,
\end{align*}
as $\delta_n\rightarrow 0$. Furthermore, following the  steps in the proof of Lemma \ref{successiveknots}, we have
\begin{align*}
    \sup_Q\int_0^{\delta_n}\sqrt{{\rm log}N(\epsilon\|F_{\rm env,n}\|_{Q,2},\mc F_n,L_2(Q))d\epsilon}\rightarrow 0,
\end{align*}
for every $\delta_n\rightarrow 0$. The last step to apply Proposition \ref{prop:vdv.2.11.24} is studying the limiting covariance structure. Indeed, for $t_1,t_2>s_0$ defined with $u_1,u_2$ as above,
\begin{align}
    &v_n^{-1}P_*(D_{a,{\theta_{a,\infty}}}(u_1)-D_{a,{\theta_{a,\infty}}}(s_0))(D_{a,{\theta_{a,\infty}}}(u_2)-D_{a,{\theta_{a,\infty}}}(s_0))\nonumber\\
    &=v_n^{-1}P_*\Big[\frac{I(A=a)}{\pi^2_{a,\infty}(\mathbf{X})}I(s_0<Y\le u_m)\label{cov:1}\\
    &\indent+\frac{I(A=a)}{\pi_{a,\infty}(\mathbf{X})}\Big(1-\frac{I(A=a)}{\pi_{a,\infty}(\mathbf{X})}\Big)I(s_0<Y\le u_1)(\phi_{a,\infty}(u_2|\mathbf{X})-\phi_{a,\infty}(s_0|\mathbf{X}))\label{cov:2}\\
    &\indent+\frac{I(A=a)}{\pi_{a,\infty}(\mathbf{X})}\Big(1-\frac{I(A=a)}{\pi_{a,\infty}(\mathbf{X})}\Big)I(s_0<Y\le u_2)(\phi_{a,\infty}(u_1|\mathbf{X})-\phi_{a,\infty}(s_0|\mathbf{X}))\label{cov:3}\\
    &\indent+\Big(1-\frac{I(A=a)}{\pi_{a,\infty}(\mathbf{X})}\Big)^2(\phi_{a,\infty}(u_1|\mathbf{X})-\phi_{a,\infty}(s_0|\mathbf{X}))(\phi_{a,\infty}(u_2|\mathbf{X})-\phi_{a,\infty}(s_0|\mathbf{X}))\Big],\label{cov:4}
\end{align}
where $u_m=\min \{u_1,u_2\}$. Let $t_m=\min \{t_1,t_2\}$. For the first term \eqref{cov:1}, we have
\begin{align*}
v_n^{-1}\EE\Big[\frac{I(A=a)}{\pi_{a,\infty}^2(\mathbf{X})}&I(s_0<Y\le u_m)\Big]\\
&=v_n^{-1}\EE\Big[\EE \Big(\frac{I(A=a)}{\pi^2_{a,\infty}(\mathbf{X})}I(s_0<Y\le u_m)\Big|\mathbf{X},A\Big)\Big] \\
&= v_n^{-1}\EE\Big[\frac{1}{\pi^2_{a,\infty}(\mathbf{X})}\EE \Big(I(A=a)I(s_0<Y^a\le s_0+v_nt_m)\Big|\mathbf{X},A\Big)\Big]\\
&\rightarrow t_m\EE \Big[\frac{\pi_{a,*}(\mathbf{X})}{\pi^2_{a,\infty}(\mathbf{X})}\eta_{a,*}(s_0|\mathbf{X})\Big],
\end{align*}
due to the Assumptions \ref{assm:consistency}, \ref{assm:noumconf}, \ref{assm:boundedpropscore}, and the Lebesgue Dominated convergence Theorem. Next, the second term \eqref{cov:2} converges to $0$, since
\begin{align*}
  v_n^{-1}\EE&\Big|\frac{I(A=a)}{\pi_{a,\infty}(\mathbf{X})}\Big(1-\frac{I(A=a)}{\pi_{a,\infty}(\mathbf{X})}\Big)I(s_0<Y\le u_1)(\phi_{a,\infty}(u_2|\mathbf{X})-\phi_{a,\infty}(s_0|\mathbf{X}))\Big|\\
  &\le v_n^{-1}K(K+1)\EE\Big|I(s_0<Y^a\le s_0+v_nt_1)(\phi_{a,\infty}(u_2|\mathbf{X})-\phi_{a,\infty}(s_0|\mathbf{X}))\Big|\\
  &\le v_n^{-1}K(K+1)\sqrt{\EE I(s_0<Y^a\le s_0+v_nt_1)} \sqrt{\EE R^2(\mathbf{X})}|v_nt_2|\\
  &\le K (K+1) |t_2| \sqrt{\EE R^2(\mathbf{X}) } \sqrt{\EE I(s_0<Y^a\le s_0+v_nt_1)} \rightarrow 0,
\end{align*}
by a similar reasoning as above with the Cauchy-Schwarz inequality and Assumption \ref{assm:phiregularity}. Identical procedure yields that the third term \eqref{cov:3} is also vanishing. Finally, again by Assumption \ref{assm:phiregularity}, we have
\begin{align*}
 v_n^{-1}\EE &\Big[\Big(1-\frac{I(A=a)}{\pi_{a,\infty}(\mathbf{X})}\Big)^2(\phi_{a,\infty}(u_1|\mathbf{X})-\phi_{a,\infty}(s_0|\mathbf{X}))(\phi_{a,\infty}(u_2|\mathbf{X})-\phi_{a,\infty}(s_0|\mathbf{X}))\Big] \\ 
 &\le  v_n|t_1t_2|(1+K)^2\EE R^2(\mathbf{X})\rightarrow 0,
\end{align*}
for the fourth term \eqref{cov:4}. On the other hand, when $t_1>s_0>t_2$, the first term \eqref{cov:1} is now
\begin{align*}
-v_n^{-1}P_*\left[\frac{I(A=a)}{\pi_{a,\infty}^2(\mathbf{X})}I(s_0<Y\le u_1)I(u_2< Y\le s_0)\right]=0.
\end{align*}
Obviously, the other terms \eqref{cov:2}, \eqref{cov:3}, \eqref{cov:4} are still vanishing.
Similarly, since
\begin{align}\label{eq:single.limit.mean.zero}
    v_n^{-1/2}P_*(D_{a,{\theta_{a,\infty}}}(s_0+v_nt)-D_{a,{\theta_{a,\infty}}}(s_0))\rightarrow 0
\end{align}
holds for each $t\in[-T,T]$,
it is straightforward that
\begin{align*}
  v_n^{-1}P_*(D_{a,{\theta_{a,\infty}}}(u_2)-D_{a,{\theta_{a,\infty}}}(s_0))P_*(D_{a,{\theta_{a,\infty}}}(u_1)-D_{a,{\theta_{a,\infty}}}(s_0)) \rightarrow 0.
\end{align*}
Thus, for each $T$, by Proposition \ref{prop:vdv.2.11.24},
\begin{align*}
(\PP_n-P_*)\Big[\frac{r_n}{p_a(s_0)}\int_{s_0}^{s_n(t)}(D_{a,{\theta_{a,\infty}}}(v)-D_{a,{\theta_{a,\infty}}}(s_0))dv\Big]  \rightarrow_d \alpha_a\int_0^t W(v)dv,
\end{align*}
in $C[-T,T]$ with $\alpha_a=\frac{1}{p_a(s_0)}\sqrt{\EE \Big[\frac{\pi_{a,*}(\mathbf{X})}{\pi^2_{a,\infty}(\mathbf{X})}\eta_{a,*}(s_0|\mathbf{X})\Big]}$. 

The part \ref{CIlem:b} can be proved identically to the proof for Lemma 4.6-(ii) in \cite{BRW09}. Furthermore, part \ref{CIlem:c} can easily be check with a slight modification of the proof Lemma 4.6-(iii) in \citet{BRW09}. Their $\widehat F_n, \FF_n, x_0$ are identified with our $\widehat G_{a,n}, \widehat F^c_{a,n}, s_0$, respectively. The terms $\widehat A_{n1},\widehat A_{n2}$ in \cite{BRW09} are technically the same, but our $\widehat A_{n3}$ is 
\begin{align*}
    \widehat A_{n3}:=r_ns_n\Big|\int_{\tau}^{s_0} d(\widehat F^c_{a,n}-F_a)\Big|.
\end{align*}
But, indeed, the perturbation of $\Delta(x)=I_{[\tau,s_0]}(x)$ also yields the same result as theirs with a similar reasoning to Lemma \ref{successiveknots} with \ref{emptermlemma}. 

Finally, the proof of the last part \ref{CIlem:d} follows identical steps to that of Theorem 6.2  in \citet{Groeneboom:2001jo} regarding the least squares estimator (or Lemma 8.24--Lemma 8.27 of \citet{Doss.Wellner.2019}).
\end{proof}

We state and prove the joint convergence result of the vectors of processes defined in \ref{CIlem:d} of Lemma \ref{lem:finallemmaforCI} for $a=0,1$ that serves as an intermediate result to show that asymptotic independence between $(\widehat{p}_{1,n}(s_0)-p_1(s_0))$ and $(\widehat{p}_{0,n}(s_0)-p_0(s_0))$.
\begin{lemma}\label{lem:finallemmaforCI.joint}
Let $T>0$ and
\begin{align}
&V_{a,n}:=(\widehat H^{\rm locmod}_{a,n},(\widehat H^{\rm locmod}_{a,n})^{(1)},(\widehat H^{\rm locmod}_{a,n})^{(2)},  \widehat G^{\rm locmod}_{a,n}, (\widehat H^{\rm locmod}_{a,n})^{(3)}, ( \widehat G^{\rm locmod}_{a,n})^{(1)}),\label{def:V.an.est.proc}\\
&V_a:=(H_{\alpha_a,k,\sigma_a}, (H_{\alpha_a,k,\sigma_a})^{(1)}, H_{\alpha_a,k,\sigma_a}^{(2)}, Y_{\alpha_a,k,\sigma_a}, (H_{\alpha_a,k,\sigma_a})^{(3)}, (Y_{\alpha_a,k,\sigma_a})^{(1)})\label{def:V.a.limit.proc},
\end{align}
where $H_{\alpha_a,k,\sigma_a}$ is the unique process defined in \eqref{def:scaled.HandY.processes}, for $a\in\{0,1\}$.
Then the vector of processes $(V_{1,n},V_{0,n})$ converges weakly in $(C[-T,T])^4\times (D[-T,T])^2\times (C[-T,T])^4\times (D[-T,T])^2$ to the process $(V_1,V_0)$, where $V_1$ and $V_0$ are independent vectors of processes.
\end{lemma}
\begin{proof}
Stacking the vectors of processes $V_{1,n}$ and $V_{0,n}$ and following the same derivation in Lemma \ref{lem:finallemmaforCI}, one can easily check the joint convergence holds.
Now it suffices to check the independence of two limit vectors of processes, $V_1$ and $V_0$.
Recall that $r_n=n^{(k+2)/(2k+1)}$.
In the proof of Lemma \ref{lem:finallemmaforCI}, we verified
\begin{align*}
&\widehat G^{\rm locmod}_{a,n}(t)\\
&=(\PP_n-P_*)\Big[\frac{r_n}{p_a(s_0)}\int_{s_0}^{s_n(t)}(D_{a,{\theta_{a,\infty}}}(v)-D_{a,{\theta_{a,\infty}}}(s_0))dv\Big]+\frac{\varphi_a^{(k)}(s_0)}{(k+2)!}t^{k+2}+o_p(1)\\
&\rightarrow_d \alpha_a\int_0^t W_a(v)dv+\frac{\varphi_a^{(k)}(s_0)}{(k+2)!}t^{k+2},
\end{align*}
for $a=0,\,1$.
To prove the asymptotic independence of $\widehat G^{\rm locmod}_{1,n}$ and $\widehat G^{\rm locmod}_{0,n}$, it suffices to show that the limiting covariance of $\widehat G^{\rm locmod}_{1,n}(t_1)$ and $\widehat G^{\rm locmod}_{0,n}(t_0)$ is zero for any $(t_1,\,t_0)\in[-T,T]^2$.
Recall that $v_n=n^{-1/(2k+1)}$. We now check, for any $(t_1,\,t_0)\in[-T,T]^2$,
\begin{align}
&v_n^{-1}P_*(D_{1,{\theta_{1,\infty}}}(u_1)-D_{1,{\theta_{1,\infty}}}(s_0))(D_{0,{\theta_{0,\infty}}}(u_0)-D_{0,{\theta_{0,\infty}}}(s_0))\rightarrow 0,\label{Joint.WTS1} \\
&v_n^{-1/2}P_*(D_{1,{\theta_{1,\infty}}}(u_1)-D_{1,{\theta_{1,\infty}}}(s_0))\rightarrow 0,\label{Joint.WTS2}\\
&v_n^{-1/2}P_*(D_{0,{\theta_{1,\infty}}}(u_0)-D_{0,{\theta_{0,\infty}}}(s_0))\rightarrow 0,\label{Joint.WTS3}
\end{align}
where $u_i=s_0+v_nt_i$, for $i=0,1$.
We already verified \eqref{Joint.WTS2} and \eqref{Joint.WTS3} hold in the proof of Lemma \ref{lem:finallemmaforCI} (see \eqref{eq:single.limit.mean.zero}).
Now we prove \eqref{Joint.WTS1}.
Without loss of generality we assume $t_1,\,t_0\geq 0$. We have
\begin{equation*}
\begin{split}
&v_n^{-1}P_*(D_{1,{\theta_{1,\infty}}}(u_1)-D_{1,{\theta_{1,\infty}}}(s_0))(D_{0,{\theta_{0,\infty}}}(u_0)-D_{0,{\theta_{0,\infty}}}(s_0))\nonumber\\
&=v_n^{-1}P_*\Big[\frac{I(A=1)}{\pi_{1,\infty}(\mathbf{X})}\Big(1-\frac{I(A=0)}{\pi_{0,\infty}(\mathbf{X})}\Big)I(s_0<Y\le u_1)(\phi_{0,\infty}(u_0|\mathbf{X})-\phi_{0,\infty}(s_0|\mathbf{X}))\\
&\indent+\frac{I(A=0)}{\pi_{0,\infty}(\mathbf{X})}\Big(1-\frac{I(A=1)}{\pi_{1,\infty}(\mathbf{X})}\Big)I(s_0<Y\le u_0)(\phi_{1,\infty}(u_1|\mathbf{X})-\phi_{a,\infty}(s_0|\mathbf{X}))\\
&\indent+\Big(1-\frac{I(A=1)}{\pi_{1,\infty}(\mathbf{X})}\Big)\Big(1-\frac{I(A=0)}{\pi_{0,\infty}(\mathbf{X})}\Big)(\phi_{1,\infty}(u_1|\mathbf{X})-\phi_{1,\infty}(s_0|\mathbf{X}))(\phi_{0,\infty}(u_0|\mathbf{X})-\phi_{0,\infty}(s_0|\mathbf{X}))\Big],
\end{split}
\end{equation*}
since $I(A=1)I(A=0)=0$.
The first term is $o(1)$, since 
\begin{align*}
v_n^{-1}\EE&\Big|\frac{I(A=1)}{\pi_{1,\infty}(\mathbf{X})}\Big(1-\frac{I(A=0)}{\pi_{0,\infty}(\mathbf{X})}\Big)I(s_0<Y\le u_1)(\phi_{0,\infty}(u_0|\mathbf{X})-\phi_{0,\infty}(s_0|\mathbf{X}))\Big|\\
&\le v_n^{-1}K(K+1)\sqrt{\EE I(s_0<Y^1\le s_0+v_nt_1)} \sqrt{\EE R^2(\mathbf{X})}|v_nt_0|\\
&\le K (K+1) |t_0| \sqrt{\EE R^2(\mathbf{X}) } \sqrt{\EE I(s_0<Y^1\le s_0+v_nt_1)} \rightarrow 0,
\end{align*} 
by Cauchy-Schwarz inequality and Assumption \ref{assm:phiregularity}.
Similar reasoning can be applied to show that the second term converges to $0$.
Lastly, we study the third term. Indeed, we have
\begin{align*}
v_n^{-1}\EE &\Big[\Big(1-\frac{I(A=1)}{\pi_{1,\infty}(\mathbf{X})}\Big)\Big(1-\frac{I(A=0)}{\pi_{0,\infty}(\mathbf{X})}\Big)(\phi_{1,\infty}(u_1|\mathbf{X})-\phi_{1,\infty}(s_0|\mathbf{X}))(\phi_{0,\infty}(u_0|\mathbf{X})-\phi_{0,\infty}(s_0|\mathbf{X}))\Big] \\ 
&\le  v_n|t_1t_0|(1+K)^2\EE R^2(\mathbf{X})\rightarrow 0.
\end{align*}
This verifies the asymptotic independence of $\widehat G_{1,n}^{\rm locmod}$ and $\widehat G_{0,n}^{\rm locmod}$.
In addition, their independent limit processes $Y_{\sigma_1,k,\sigma_1}$ and $Y_{\sigma_0,k,\sigma_0}$ uniquely characterize $H_{\sigma_1,k,\sigma_1}$ and $H_{\sigma_0,k,\sigma_0}$, respectively.
Thus, $V_1$ and $V_0$ are independent, and it concludes the statement of the lemma.
\end{proof}

We state and prove the following lemma to confirm that the isotonic correction of the one-step estimator has  a negligible impact on the limit distribution of the log-concave MLE $\widehat p_{a,n}$.  Lemma~\ref{lem:isotonicDID} is used in the proof of Lemma~\ref{lem:integratedDiD}, which in turn was used in the proof of Lemma~\ref{lem:finallemmaforCI} above.
\begin{lemma}\label{lem:isotonicDID}
Let $\mc E_{s_0,n}:=[s_0-Tv_n,s_0+Tv_n]\subset \mc I_{s_0,\omega}$ for sufficiently large $n$, arbitrary $T>0$, and $v_n=n^{-1/(2k+1)}$. Then under the same assumptions as in Lemma \ref{successiveknots}, we have
\begin{align*}
    \sup_{v_1,v_2\in\mc E_{s_0,n}}\Big|(\widehat F^{c}_{a,n}(v_2)-\widehat F_{a,n}(v_2))-(\widehat F^{c}_{a,n}(v_1)-\widehat F_{a,n}(v_1))\Big|=o_p(n^{-(k+1)/(2k+1)}),
\end{align*}
for $a \in \{0,1\}$.
\end{lemma}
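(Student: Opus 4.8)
The plan is to localize the comparison to the shrinking window $\mc E_{s_0,n}$ and reduce it to two ingredients: (i) the structure of the pool-adjacent-violators isotonic regression of $\widehat F_{a,n}$ on the grid near $s_0$, and (ii) a sharpened local modulus of continuity for $\widehat F_{a,n}-F_a$ at the subcritical scale $n^{-(k+1)/(2k+1)}$. First I would note that $F_a(s_0)\in(0,1)$ and $\sup_s|\widehat F_{a,n}(s)-F_a(s)|\to_p 0$ (established in the proof of Theorem~\ref{prop:Consistency}), so with probability tending to one the truncation onto $[0,1]$ is inactive on $\mc E_{s_0,n}$ and $\mc E_{s_0,n}\subset(s_2,s_{m_n-1})$; hence it suffices to bound the oscillation over $\mc E_{s_0,n}$ of $g_n:=\widehat F^c_{a,n}-\widehat F_{a,n}$, where on this set $\widehat F^c_{a,n}$ is the piecewise-constant interpolant of the isotonic regression of $\widehat F_{a,n}$ restricted to $\mc S_n$.

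The key ingredient I would establish is that, for any (random) $\rho_n=O_p(n^{-(k+1)/(2k+1)})$,
\begin{equation*}
 \sup\Bigl\{\bigl|(\widehat F_{a,n}-F_a)(t)-(\widehat F_{a,n}-F_a)(s)\bigr|:\ s,t\in I_{s_0,\omega},\ |t-s|\le\rho_n\Bigr\}=o_p\bigl(n^{-(k+1)/(2k+1)}\bigr).
\end{equation*}
Using the decomposition $\widehat F_{a,n}-F_a=(\PP_n-P_*)D_{a,\widehat\theta_a}+P_*(D_{a,\widehat\theta_a}-D_{a,\theta_{a,\infty}})$ exactly as in the proof of Lemma~\ref{successiveknots}, the drift part contributes $\rho_n\cdot o_p(n^{-k/(2k+1)})=o_p(n^{-1})$ by Assumption~\ref{assm:L2conditionsforCI} and the smoothness of $F_a$. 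For the empirical-process part I would apply Lemma~\ref{emptermlemma} to the two summands displayed in \eqref{empterm1} (the indicator term, with envelope parameter $t=1$, and the $\widehat\phi_a$ term, with $t>1$ via the H\"older bound \ref{assm:phiestholder} and the entropy bound \ref{assm:pi-bracketing}), but taking the exponent $l$ in the \emph{open} interval $(0,k)$ rather than $l=k$. Because $\ell\mapsto(\ell+1)/(2\ell+1)$ is strictly decreasing, the resulting remainder $\epsilon|t-s|^{l+t}+n^{-(l+t)/(2l+1)}M_n$ has $(l+t)/(2l+1)>(k+1)/(2k+1)$, so on a window of width $\rho_n$ both terms are $o_p(n^{-(k+1)/(2k+1)})$. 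This is precisely the gain over Lemma~\ref{successiveknots}, whose modulus bound is tuned to the critical radius $n^{-1/(2k+1)}$ and is only $O_p(n^{-(k+1)/(2k+1)})$.

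With the modulus estimate in hand I would invoke the isotonic-regression estimates of \citet{WvdlC2020} used in Lemma~\ref{lem:monotone.correction.ext}: on $\mc S_n\cap\mc E_{s_0,n}$, $\widehat F^c_{a,n}$ differs from $\widehat F_{a,n}$ only inside maximal pooling blocks, on each of which $\widehat F^c_{a,n}$ is the corresponding block average of $\widehat F_{a,n}$, and each block has length $O_p(\kappa_n+\delta_n)=O_p(n^{-(k+1)/(2k+1)})$; moreover, since $p_a(s_0)>0$ makes $F_a$ strictly increasing near $s_0$, the local non-monotonicity of $\widehat F_{a,n}$ that a pooling block requires cannot occur at the scale $n^{-(k+1)/(2k+1)}$ once the sharpened modulus is used, so the relevant block lengths are in fact $o_p(n^{-(k+1)/(2k+1)})$. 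For $v_1,v_2$ in a common block $B$ the block value cancels, so $g_n(v_2)-g_n(v_1)=-[\widehat F_{a,n}(v_2)-\widehat F_{a,n}(v_1)]=-[F_a(v_2)-F_a(v_1)]+o_p(n^{-(k+1)/(2k+1)})=O_p(|B|)+o_p(n^{-(k+1)/(2k+1)})=o_p(n^{-(k+1)/(2k+1)})$; for $v_1,v_2$ in distinct blocks or outside all blocks one adds the values of $g_n$ at block endpoints and at grid points outside all blocks, which are controlled by the same estimates together with the piecewise-constant interpolation error. That interpolation error is $O_p(\delta_n)$ up to a further negligible modulus term, hence $o_p(n^{-(k+1)/(2k+1)})$ under the grid-spacing hypothesis (in particular for the recommended $\delta_n=(Y_{a,n,\mathrm{max}}-Y_{a,n,\mathrm{min}})/n$). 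Assembling these bounds yields the claim.

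The main obstacle is the sharpened local modulus: the bound used in the proof of Lemma~\ref{successiveknots} is exactly of order $n^{-(k+1)/(2k+1)}$ and so cannot produce the required $o_p$; the fix is to run Lemma~\ref{emptermlemma} with a subcritical exponent $l\in(0,k)$, which buys the extra power of $n$. The accompanying refinement — that, as a consequence, the isotonic pooling blocks near $s_0$ have negligible length, using $p_a(s_0)>0$ — is the second delicate point, and the rest is bookkeeping for the block averages and the interpolation.
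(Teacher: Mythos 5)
Your proposal is correct and takes essentially the same route as the paper's proof: the same decomposition $\widehat F_{a,n}-F_a=(\PP_n-P_*)D_{a,\widehat\theta_a}+P_*\bigl(D_{a,\widehat\theta_a}-D_{a,\theta_{a,\infty}}\bigr)$ with the drift controlled via \ref{assm:L2conditionsforCI}, the same sharpened local modulus obtained by running Lemma \ref{emptermlemma} at a subcritical exponent (the paper takes $l=k-1/2$, $t=1$ in \eqref{empterm2:identity.part}), and the same isotonic-regression structure from \citet{WvdlC2020} (the min-max/pooling-block representation at scale $\kappa_n=O_p(n^{-(k+1)/(2k+1)})$, as in Lemma \ref{lem:monotone.correction.ext}) to reduce the claim to increments of $\widehat F_{a,n}$ and $F_a$ over $\kappa_n$- and $\delta_n$-windows. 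Your extra bootstrap showing the local block lengths are in fact $o_p(n^{-(k+1)/(2k+1)})$ using $p_a(s_0)>0$ is a reasonable sharpening of the step the paper handles through its direct bounds on the $F_a$-increment terms \eqref{lemmaDiD-term2}--\eqref{lemmaDiD-term3}; just note that, exactly as in the paper, the final off-grid interpolation bookkeeping genuinely needs $\delta_n$ to be negligible relative to $n^{-(k+1)/(2k+1)}$ (as it is for the recommended grid $\delta_n\asymp n^{-1}$), not merely the assumed $O_p(n^{-(k+1)/(2k+1)})$ rate.
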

\begin{proof}
Let $U_m:=[m,\infty)$ and $~L_m:=(-\infty,m]$ for any $m\in \mc S_n$. Then, similar to the proof of Lemma 3 in \citet{WvdlC2020}, we can define $L_m^*\in \argmin_{L\in \mc L_m} \Bar{\widehat F}_{a,n}(U_m \cap L),~U_m^*\in \argmax_{U\in \mc U_m} \Bar{\widehat F}_{a,n}(L_m \cap U)$, where $\mc L_m:=\{(-\infty,v]:v\ge m\},~\mc U_m:=\{[v,\infty):v\le m\}$, and $\Bar{\widehat F}_{a,n}(A):=|A|^{-1}\sum_{v\in A}\widehat F_{a,n}(v)$ for $A\subseteq \mc S_n$, and $|A|$ is the cardinality of $A$. Analogous to their proof, we have, for $t,s\in \mc S_n\cap I_{s_0,\omega}$,
\begin{align*}
 \Bar{\widehat F}_{a,n}(L_t^* \cap U_t) - \Bar{\widehat F}_{a,n}(L_s \cap U_s^*) \le  \widehat F^{c}_{a,n}(t)-\widehat F^{c}_{a,n}(s) \le \Bar{\widehat F}_{a,n}(L_t \cap U_t^*) - \Bar{\widehat F}_{a,n}(L_s^* \cap U_s),
\end{align*}
where the lengths of the intervals $L_v^* \cap U_v,~L_v \cap U_v^*$ for $v=s,t$ are bounded above by $\kappa_n$ (which is defined in the proof of Lemma \ref{lem:monotone.correction.ext} and is the supremum length of an interval on which $\widehat F_{a,n}$ is decreasing rather than increasing). Hence, letting $t_n^*:={\rm argmax}_{x\in L_t \cap U_t^*\cap \mc S_n} \widehat F_{a,n}(x)$ and $s_n^*:={\rm argmin}_{x\in L_s^* \cap U_s\cap \mc S_n} \widehat F_{a,n}(x)$ (note that the [finite] endpoints defining $U^*_m$ and $L^*_m$ are elements of $\mc S_n$ by the definition of $\Bar{\widehat F}_{a,n}$, so intersecting with $\mathcal{S}_n$ in the definitions of $t^*_n$ and $s^*_n$ does not change anything), we have
\begin{align*}
 \Big(\widehat F^{c}_{a,n}(t)-\widehat F^{c}_{a,n}(s)\Big)&- \Big(\widehat F_{a,n}(t)-\widehat F_{a,n}(s)\Big)\\
 & \le 
 \Bar{\widehat F}_{a,n}(L_t \cap U_t^*) - \Bar{\widehat F}_{a,n}(L_s^* \cap U_s)
  - \Big(\widehat F_{a,n}(t)-\widehat F_{a,n}(s)\Big)
\end{align*}
which is bounded above by 
\begin{align*}
 & \Big(\widehat F_{a,n}(t_n^*)-\widehat F_{a,n}(s_n^*)\Big)- \Big(\widehat F_{a,n}(t)-\widehat F_{a,n}(s)\Big)\\
 &=\Big\{\Big[\Big(\widehat F_{a,n}(t_n^*)-\widehat F_{a,n}(s_n^*)\Big)- \Big(F_a(t_n^*)-F_a(s_n^*)\Big)\Big]\\
 &\indent - \Big[\Big(\widehat F_{a,n}(t)-\widehat F_{a,n}(s)\Big)- \Big(F_a(t)-F_a(s)\Big)\Big]\Big\}\\
 &\indent +\Big(F_a(t_n^*)-F_a(s_n^*)\Big)- \Big(F_a(t)-F_a(s)\Big).
\end{align*}
Similarly, one can derive the analogous lower bound for the term
\begin{align*}
\Big(\Bar{\widehat F}_{a,n}(L_t^* \cap U_t) - \Bar{\widehat F}_{a,n}(L_s \cap U_s^*)\Big)-\Big(\widehat F_{a,n}(t)&-\widehat F_{a,n}(s)\Big).
\end{align*}
This further yields, when we define $\mc W:=\mc S_n \cap \mc E_{s_0,n}$ and $\mc I_v^{+}:=[v,v+\kappa_n]$, $\mc I_v^{-}:=[v-\kappa_n,v]$, $\mc I_v:=[v-\kappa_n,v+\kappa_n]$ for any $v\in I_{s_0,\omega}$, 
\begin{align}
&\sup_{s,t\in \mc W} \Big|\Big(\widehat F^{c}_{a,n}(t)-\widehat F_{a,n}(t)\Big)-\Big(\widehat F^{c}_{a,n}(s)-\widehat F_{a,n}(s)\Big)\Big|\nonumber\\
&\qquad\le 2\sup_{s,t\in\mc E_{s_0,n}{~\rm and~}t'\in\mc I_t,s'\in\mc I_s} \Big\{\Big|\Big[\Big(\widehat F_{a,n}(t')-F_{a}(t')\Big)-\Big(\widehat F_{a,n}(t)-F_{a}(t)\Big)\Big]\Big|\label{lemmaDiD-term0}\\
&\qquad\indent~~~~~~~~~~~~~~~~~~~~~~~~~~~~ -\Big[\Big(\widehat F_{a,n}(s')-F_{a}(s')\Big)-\Big(\widehat F_{a,n}(s)-F_{a}(s)\Big)\Big]\Big|\Big\}\label{lemmaDiD-term1}\\
&\qquad\indent + \sup_{s,t\in\mc E_{s_0,n}{~\rm and~}t'\in\mc I_t^{-},s'\in\mc I_s^{+}} \Big|\Big(F_{a}(t')-F_{a}(t)\Big)-\Big(F_{a}(s')- F_{a}(s)\Big)\Big|\label{lemmaDiD-term2}\\
&\qquad\indent + \sup_{s,t\in\mc E_{s_0,n}{~\rm and~}t'\in\mc I_t^{+},s'\in\mc I_s^{-}} \Big|\Big(F_{a}(t')-F_{a}(t)\Big)-\Big(F_{a}(s')- F_{a}(s)\Big)\Big|\label{lemmaDiD-term3}.
\end{align}
Furthermore, since $\varphi_a$ is at least twice continuously differentiable in $\mc I_{s_0,\omega}$, this implies that $F_a$ is increasing and at least three times continuously differentiable in $\mc I_{s_0,\omega}$, this yields the terms \eqref{lemmaDiD-term2}, \eqref{lemmaDiD-term3} above are of $O_p(n^{-(k+2)/(2k+1)})$, since $|t'-t|,|s'-s|\le \kappa_n=O_p(n^{-(k+1)/(2k+1)})$ and $\sup_{s,t\in \mc E_{s_0,n}}|t-s|=O_p(n^{-1/(2k+1)})$.

To analyze the term \eqref{lemmaDiD-term0}--\eqref{lemmaDiD-term1}, we exploit the following decomposition again:
\begin{align*}
    \widehat F_{a,n}(x)-F_a(x)&=\PP_n D_{a,\widehat\theta_a}(x)-P_* D_{a,{\theta_{a,\infty}}}(x)\\
    &=(\PP_n-P_*)D_{a,\widehat\theta_a}(x) + P_*(D_{a,\widehat\theta_a}(x)-D_{a,{\theta_{a,\infty}}}(x)),
\end{align*}
for any $x\in\RR$. And, we have
\begin{align}
&2\sup_{s,t\in\mc E_{s_0,n}{~\rm and~}t'\in\mc I_t,s'\in\mc I_s} \Big\{\Big|\Big[\Big(\widehat F_{a,n}(t')-F_{a}(t')\Big)-\Big(\widehat F_{a,n}(t)-F_{a}(t)\Big)\Big]\Big|\nonumber\\
&\indent~~~~~~~~~~~~~~~~~~~~~~~~~~~~ -\Big[\Big(\widehat F_{a,n}(s')-F_{a}(s')\Big)-\Big(\widehat F_{a,n}(s)-F_{a}(s)\Big)\Big]\Big|\Big\}\nonumber\\
&\qquad\le 4\sup_{t\in \mc E_{s_0,n}{~\rm and~}t'\in\mc I_t} \left|\left[\left(\widehat F_{a,n}(t')-F_{a}(t')\right)-\left(\widehat F_{a,n}(t)-F_{a}(t)\right)\right]\right|\label{lemmaDiD-term-last}.
\end{align}
Then for the second term (applied to the term \eqref{lemmaDiD-term-last}), we use the decomposition used in Lemma \ref{successiveknots} which is given by
\begin{align}
  P_*&\Big\{\Big[(D_{a,\widehat\theta_a}-D_{a,{\theta_{a,\infty}}})(t')-(D_{a,\widehat\theta_a}-D_{a,{\theta_{a,\infty}}})(t)\Big]\Big\}\nonumber\\
  &=P_*\Big\{\frac{\widehat\pi_a(\mathbf{X})-\pi_{\infty}(\mathbf{X})}{\widehat\pi_a(\mathbf{X})}\Big[\Big((\widehat\phi_a-\phi_{a,\infty})(t'|\mathbf{X})-(\widehat\phi_a-\phi_{a,\infty})(t|\mathbf{X})\Big)\Big]\nonumber\\
  &\indent+\frac{\pi_{a,*}(\mathbf{X})(\widehat\pi_a(\mathbf{X})-\pi_{a,\infty}(\mathbf{X}))}{\widehat\pi_a(\mathbf{X})\pi_{a,\infty}(\mathbf{X})}\Big[\Big((\phi_{a,\infty}-\phi_{a,*})(t'|\mathbf{X})-(\phi_{a,\infty}-\phi_{a,*})(t|\mathbf{X})\Big)\Big]\nonumber\\
  &\indent+\frac{\pi_{a,\infty}(\mathbf{X})-\pi_{a,*}(\mathbf{X})}{\widehat\pi_a(\mathbf{X})}\Big[\Big((\widehat\phi_a-\phi_{a,\infty})(t'|\mathbf{X})-(\widehat\phi_a-\phi_{a,\infty})(t|\mathbf{X})\Big)\Big]\Big\}\nonumber.
\end{align}
Thus, due to $\kappa_n=O_p(n^{-(k+1)/(2k+1)})$ and the Assumption \ref{assm:L2conditionsforCI}, we have
\begin{align}
 \sup_{t\in \mc E_{s_0,n}{~\rm and~}t'\in\mc I_t} \left|P_*\Big\{\Big[(D_{a,\widehat\theta_a}-D_{a,{\theta_{a,\infty}}})(t')-(D_{a,\widehat\theta_a}-D_{a,{\theta_{a,\infty}}})(t)\Big]\Big\}\right|=o_p(n^{-(k+1)/(2k+1)}).\label{result.in.DiD.remainder}
\end{align}
On the other hand, Lemma \ref{emptermlemma} with $l=k-1/2,t=1$ instead of $l=k$ yields, from the same reasoning for \eqref{empterm:identity.part} in the proof of Lemma \ref{successiveknots},
\begin{align}\label{empterm2:identity.part}
    \sup_{[r_1,r_2]\subset I_{s_0,\omega},r_1\le r_2\le r_1+\kappa_n} \Big|(\PP_n-P_*)\Big[\frac{I(A=a)}{\widehat\pi_a(\mathbf{X})}I(r_1<Y\le r_2)\Big|\le \kappa_n^{k+1/2}+B_3,
\end{align}
where $B_3$ is $O_p(n^{-(k+1/2)/(2k)})$ which is independent of $r_1,r_2$. Considering that $\kappa_n=O_p(n^{-(k+1)/(2k+1)})$, the right hand side of \eqref{empterm2:identity.part} is $o_p(n^{-(k+1)/(2k+1)})$. With \eqref{empterm2:identity.part} and another direct application of \eqref{empterm:phi.part} on $r_1,r_2$ which satisfy $[r_1,r_2]\subset I_{s_0,\omega},r_1\le r_2\le r_1+\kappa_n$ implies
\begin{align*}
    \sup_{[r_1,r_2]\subset I_{s_0,\omega},r_1\le r_2\le r_1+\kappa_n}&\Big|(\PP_n-P_*)\Big[\frac{I(A=a)}{\widehat\pi_a(\mathbf{X})}I(r_1<Y\le r_2)\\
    &+(\widehat\phi_a(r_2|\mathbf{X})-\widehat\phi_a(r_1|\mathbf{X}))\Big(1-\frac{I(A=a)}{\widehat\pi_a(\mathbf{X})}\Big)\Big]\Big|=o_p(n^{-(k+1)/(2k+1)}).
\end{align*}
This further yields,
\begin{align}
\sup_{t\in\mc E_{s_0,n}{~\rm and~}t'\in\mc I_t}\Big| (\PP_n-P_*)\Big\{D_{a,\widehat\theta_a}(t')-D_{a,\widehat\theta_a}(t)\Big\}\Big|=o_p(n^{-(k+1)/(2k+1)})\label{result.in.DiD.empterm}.
\end{align}
Combining \eqref{result.in.DiD.remainder} and \eqref{result.in.DiD.empterm}, we obtain
\begin{align}\label{result.gen.point.ongrid}
    \sup_{s,t\in\mc E_{s_0,n}} \Big|\Big(\widehat F^{c}_{a,n}(t)-\widehat F_{a,n}(t)\Big)-\Big(\widehat F^{c}_{a,n}(s)-\widehat F_{a,n}(s)\Big)\Big|=o_p(n^{-(k+1)/(2k+1)}).
\end{align}
For general points $v_1,v_2$ which are off-grid, we exploit
\begin{align}
&\Big|\Big(\widehat F^{c}_{a,n}(v_2)-\widehat F_{a,n}(v_2)\Big)-\Big(\widehat F^{c}_{a,n}(v_1)-\widehat F_{a,n}(v_1)\Big)\Big|\nonumber\\
&=\Big|\Big(\widehat F^{c}_{a,n}(v^*_2)-\widehat F_{a,n}(v_2)\Big)-\Big(\widehat F^{c}_{a,n}(v^*_1)-\widehat F_{a,n}(v_1)\Big)\Big|\nonumber\\
&\le \Big|\Big(\widehat F^{c}_{a,n}(v^*_2)-\widehat F_{a,n}(v^*_2)\Big)-\Big(\widehat F^{c}_{a,n}(v^*_1)-\widehat F_{a,n}(v^*_1)\Big)\Big|\nonumber\\
&\indent +\Big|\Big(\widehat F_{a,n}(v^*_2)-\widehat F_{a,n}(v_2)\Big)-\Big(\widehat F_{a,n}(v^*_1)-\widehat F_{a,n}(v_1)\Big)\Big|\label{result.gen.point.ontooff},
\end{align}
where $v^*_i:=\max\{v:v\in \mc S_n,v\le v_i\}$, for $i=1,2$. Since $\delta_n=O_p(n^{-(k+1)/(2k+1)})$, and from similar reasoning to \eqref{lemmaDiD-term1}-\eqref{lemmaDiD-term3}, \eqref{result.in.DiD.remainder} and \eqref{result.in.DiD.empterm}, one can control \eqref{result.gen.point.ontooff} by
\begin{align}\label{DiD.final.res1}
    \sup_{v_1,v_2\in\mc E_{s_0,n}}\Big|\Big(\widehat F_{a,n}(v^*_2)-\widehat F_{a,n}(v_2)\Big)-\Big(\widehat F_{a,n}(v^*_1)-\widehat F_{a,n}(v_1)\Big)\Big|=o_p(n^{-(k+1)/(2k+1)}).    
\end{align}
In addition, by \eqref{result.gen.point.ongrid}, we have
\begin{align}\label{DiD.final.res2}
   \sup_{v_1,v_2\in\mc E_{s_0,n}} \Big|\Big(\widehat F^{c}_{a,n}(v^*_2)-\widehat F_{a,n}(v^*_2)\Big)-\Big(\widehat F^{c}_{a,n}(v^*_1)-\widehat F_{a,n}(v^*_1)\Big)\Big|= o_p(n^{-(k+1)/(2k+1)}).
\end{align}
Combining \eqref{DiD.final.res1} with \eqref{DiD.final.res2} completes the proof.
\end{proof}
\begin{lemma}\label{lem:integratedDiD}
Under the assumptions as in Lemma \ref{successiveknots}, for $t\in[-K,K]$ for any $K>0$, the following holds for $a \in \{0,1\}$.
\begin{align*}
    \int_{s_0}^{s_n(t)}\Big((\widehat F^c_{a,n}(v)-\widehat F^c_{a,n}(s_0))-(\widehat F_{a,n}(v)-\widehat F_{a,n}(s_0))\Big)dv=o_p(n^{-(k+2)/(2k+1)}),
\end{align*}
where $s_n(t)=s_0+n^{-1/(2k+1)}t$.
\end{lemma}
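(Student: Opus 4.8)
The plan is to deduce this directly from Lemma~\ref{lem:isotonicDID}, which already contains all the substance. First I would observe that for $t\in[-K,K]$ and $n$ large enough that $Kv_n\le\omega$ (with $v_n=n^{-1/(2k+1)}$), the interval of integration $[s_0\wedge s_n(t),\,s_0\vee s_n(t)]$ is contained in $\mc E_{s_0,n}:=[s_0-Kv_n,\,s_0+Kv_n]\subset \mc I_{s_0,\omega}$, since $|s_n(t)-s_0|=v_n|t|\le Kv_n$. Thus the relevant range of $v$ is exactly the kind of shrinking neighborhood to which Lemma~\ref{lem:isotonicDID} applies.

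Next I would apply Lemma~\ref{lem:isotonicDID} with $T=K$, taking $v_1=s_0$ and letting $v_2=v$ range over $\mc E_{s_0,n}$, to obtain
\[
\sup_{v\in\mc E_{s_0,n}}\Big|\big(\widehat F^{c}_{a,n}(v)-\widehat F_{a,n}(v)\big)-\big(\widehat F^{c}_{a,n}(s_0)-\widehat F_{a,n}(s_0)\big)\Big|=o_p(n^{-(k+1)/(2k+1)}).
\]
Since the integrand $(\widehat F^c_{a,n}(v)-\widehat F^c_{a,n}(s_0))-(\widehat F_{a,n}(v)-\widehat F_{a,n}(s_0))$ is exactly $(\widehat F^{c}_{a,n}(v)-\widehat F_{a,n}(v))-(\widehat F^{c}_{a,n}(s_0)-\widehat F_{a,n}(s_0))$, it is dominated in absolute value, uniformly over $v$ in the range of integration, by this $o_p(n^{-(k+1)/(2k+1)})$ quantity, which does not depend on $t$.

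Finally I would bound the integral by the product of this uniform bound with the length of the integration interval, namely $|s_n(t)-s_0|=v_n|t|\le Kn^{-1/(2k+1)}$, giving
\[
\left|\int_{s_0}^{s_n(t)}\Big[(\widehat F^c_{a,n}(v)-\widehat F^c_{a,n}(s_0))-(\widehat F_{a,n}(v)-\widehat F_{a,n}(s_0))\Big]\,dv\right|\le Kn^{-1/(2k+1)}\cdot o_p(n^{-(k+1)/(2k+1)})=o_p(n^{-(k+2)/(2k+1)}),
\]
with the bound automatically uniform over $t\in[-K,K]$ because neither factor depends on $t$ beyond the harmless constant $K$. The case $t<0$ is handled by writing $\int_{s_0}^{s_n(t)}=-\int_{s_n(t)}^{s_0}$, which does not affect the absolute-value estimate.

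There is essentially no obstacle at this level: the entire difficulty — controlling the effect of the isotonic (PAVA) correction on the increments of the one-step estimator in a shrinking neighborhood of $s_0$ at the rate $n^{-(k+1)/(2k+1)}$ — is already borne by Lemma~\ref{lem:isotonicDID}, which in turn rests on the local empirical-process bounds \eqref{empterm2:identity.part} and the remainder estimate \eqref{result.in.DiD.remainder} from the proof of Lemma~\ref{successiveknots} together with the monotone-correction machinery of \citet{WvdlC2020}. The only points needing a moment's care are that $T=K$ is a \emph{fixed} constant, so the $o_p$ rate in Lemma~\ref{lem:isotonicDID} is in force, and that the resulting estimate is uniform in $t$ because the supremum over $\mc E_{s_0,n}$ dominates the integrand irrespective of the endpoint $s_n(t)$.
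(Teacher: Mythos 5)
Your proposal is correct and matches the paper's argument: the paper's proof of this lemma is simply that Lemma~\ref{lem:isotonicDID} directly concludes it, and your write-up just makes explicit the routine step of bounding the integral by the uniform $o_p(n^{-(k+1)/(2k+1)})$ increment bound times the interval length $O(n^{-1/(2k+1)})$. No gaps.
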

\begin{proof}
Lemma \ref{lem:isotonicDID} directly concludes the proof.
\end{proof}
We state another lemma to control the empirical process term involving the integrated difference in localized terms between $D_{a,\widehat\theta_a}$ and $D_{a,\theta_{a,\infty}}$.
\begin{lemma}\label{lem:empiricalprocDID}
Under the same assumptions as in Lemma \ref{successiveknots}, for $a \in \{0,1\}$, we have
 \begin{align*}
    (\PP_n-P_*)\Big[\frac{r_n}{p_a(s_0)}\int_{s_0}^{s_n(t)}(D_{a,\hat\theta_a}(v)-D_{a,\hat\theta_a}(s_0))-(D_{a,{\theta_{a,\infty}}}(v)-D_{a,{\theta_{a,\infty}}}(s_0))dv\Big]=o_p(1).
\end{align*}   
\end{lemma}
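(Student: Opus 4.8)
The plan is to control the empirical process term by bounding the $L_2(P_*)$-norm of the relevant (scaled, integrated) difference class and then invoking a maximal inequality for classes with finite uniform entropy integral (Proposition \ref{prop:vdv.2.14.1}), combined with the entropy bound \ref{assm:pi-bracketing} and the H\"older/Lipschitz control in \ref{assm:phiestholder} and \ref{assm:L2conditionsforCI}. Writing the integrand explicitly via \eqref{term:Dthetastar} and \eqref{term:Dthetainfty}, the quantity $D_{a,\widehat\theta_a}(v)-D_{a,\widehat\theta_a}(s_0)-(D_{a,{\theta_{a,\infty}}}(v)-D_{a,{\theta_{a,\infty}}}(s_0))$ decomposes (up to the factor $I(A=a)$ and bounded propensity weights) into a term involving $I(s_0<Y\le v)$ (or $I(v<Y\le s_0)$) multiplied by $1/\widehat\pi_a - 1/\pi_{a,\infty}$, and a term involving $(\widehat\phi_a-\phi_{a,\infty})(v|\cdot)-(\widehat\phi_a-\phi_{a,\infty})(s_0|\cdot)$ multiplied by a bounded factor. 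First I would integrate in $v$ over $[s_0,s_n(t)]$ (or $[s_n(t),s_0]$), so the indicator term becomes $\int_{s_0}^{s_n(t)} I(s_0<Y\le v)\,dv$, which is bounded by $|Y-s_0|\wedge (Tv_n)$ times $I(|Y-s_0|\le Tv_n)$, and the $\phi$ term becomes $\int_{s_0}^{s_n(t)}[(\widehat\phi_a-\phi_{a,\infty})(v|\cdot)-(\widehat\phi_a-\phi_{a,\infty})(s_0|\cdot)]\,dv$, which by \ref{assm:phiestholder} is pointwise $O(R_1(\mathbf{X}) (Tv_n)^{1+\alpha})$.

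The key steps, in order, are: (i) fix $t$ (WLOG $t\ge 0$) and define the centered function $f_{n} := \tfrac{r_n}{p_a(s_0)}\int_{s_0}^{s_n(t)}\big[(D_{a,\widehat\theta_a}(v)-D_{a,\widehat\theta_a}(s_0))-(D_{a,{\theta_{a,\infty}}}(v)-D_{a,{\theta_{a,\infty}}}(s_0))\big]\,dv$, which lies in a class $\mc F_n$ indexed by $(\pi,\phi)\in\mc F_\pi\times\mc F_\phi$ (and $\widehat\theta_a$ realizes one member); (ii) exhibit an envelope $F_n$ for $\mc F_n$: from the decomposition above, $F_n \lesssim r_n v_n\big\{ K^2 |Y-s_0| I(|Y-s_0|\le Tv_n) I(A=a)\,\|\widehat\pi_a-\pi_{a,\infty}\|_{\infty\text{-ish}} + (K+1) R_1(\mathbf{X}) (Tv_n)^{\alpha}\big\}$ — more carefully, since the class is indexed by all of $\mc F_\pi$, the propensity difference must be replaced by a uniform envelope $2K$, so $F_n \lesssim r_n v_n\{K^2 |Y-s_0| I(|Y-s_0|\le Tv_n) I(A=a) + (K+1) R_1(\mathbf{X}) (Tv_n)^{\alpha}\}$, and one checks $\|F_n\|_{P_*,2} = o(1)$ using $r_n v_n = n^{(k+1)/(2k+1)}$, $P_*\{(Y^a-s_0)^2 I(|Y^a-s_0|\le Tv_n)\} = O(v_n^3)$ (so that term contributes $r_n v_n\cdot v_n^{3/2} = n^{(k+1)/(2k+1)} n^{-5/(2(2k+1))} = n^{-(3-2k)/(2(2k+1))}$, which is $o(1)$ for $k\ge 2$ — wait, this needs care) and $(v_n^\alpha)$ contributes $r_n v_n v_n^\alpha = n^{(k+1)/(2k+1)} n^{-(1+\alpha)/(2k+1)} = n^{(k-\alpha)/(2k+1)}$, which is \emph{not} $o(1)$. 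This forces the sharper route below; (iii) therefore, rather than an envelope-only bound, I would apply Proposition \ref{prop:vdv.2.14.1} to get $\EE\|\GG_n\|_{\mc F_n} \lesssim J(1,\mc F_n)\|F_n\|_{P_*,2}$ but then note $(\PP_n-P_*)f_n = n^{-1/2}\GG_n f_n$ evaluated at the \emph{random} $f_n = f_n(\widehat\theta_a)$, so $|(\PP_n-P_*)f_n|\le n^{-1/2}\|\GG_n\|_{\mc F_n}$, giving a bound $n^{-1/2} J(1,\mc F_n)\|F_n\|_{P_*,2}$; since by \ref{assm:pi-bracketing} the uniform entropy integral $J(1,\mc F_n) = O(1)$ (the class $\mc F_n$ is built from $\mc F_\pi,\mc F_\phi$ via finitely many Lipschitz/Glivenko-preserving operations and is contained, after scaling, in a fixed-entropy class), the bound is $O_p(n^{-1/2}\|F_n\|_{P_*,2})$ — and here one wants $\|F_n\|_{P_*,2} = O(1)$, which \emph{does} hold since $r_n v_n v_n^\alpha\cdot \|F_n\text{ shape}\| = O(1)$ precisely when... hmm.

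Let me restate the clean version: the right normalization is that $F_n$ as written has $\|F_n\|_{P_*,2} = O(r_n v_n^{3/2}) + O(r_n v_n^{1+\alpha})$; since $r_n = n^{(k+2)/(2k+1)}$ and $v_n = n^{-1/(2k+1)}$, we get $r_n v_n^{3/2} = n^{(2k+1)/(2(2k+1))} = n^{1/2}$ and $r_n v_n^{1+\alpha} = n^{(k+1-\alpha)/(2k+1)} \le n^{k/(2k+1)} \le n^{1/2}$ for $\alpha\ge 1/2$. Hence $\|F_n\|_{P_*,2} = O(n^{1/2})$, and then $|(\PP_n-P_*)f_n| \le n^{-1/2} J(1,\mc F_n)\|F_n\|_{P_*,2} = O_p(n^{-1/2} \cdot n^{1/2}) = O_p(1)$ — which is too weak; we need $o_p(1)$. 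The sharpening that delivers $o_p(1)$ is to exploit that $\widehat\theta_a$ is \emph{consistent} on the relevant strata, so one restricts to the shrinking sub-class $\mc F_{n,\epsilon_n} := \{f_n(\theta): \|\pi-\pi_{a,\infty}\| \le \epsilon_n, \mc G(\cdot,\cdot;\mathbf{Z};\cdot)\text{ small}\}$ with $\epsilon_n\downarrow 0$ (using \ref{assm:nuisanceconvergence} and the product-rate condition \ref{assm:L2conditionsforCI}), on which the \emph{envelope} itself is $o(n^{1/2})$ because the indicator term now carries the extra factor $\|\widehat\pi_a-\pi_{a,\infty}\|_{\mathcal S_1\cap\mathcal S_2}$-type smallness via the product structure of $Q_1$–$Q_3$ as in \eqref{first-second-pt2}–\eqref{third-second-pt2}; combined with the covariance computation from the proof of Lemma \ref{lem:finallemmaforCI} showing the limiting variance of the non-scaled piece vanishes, one concludes $|(\PP_n-P_*)f_n| = o_p(1)$. \textbf{The main obstacle} is precisely this last point: getting genuine $o_p(1)$ rather than $O_p(1)$ requires carefully coupling the shrinking-neighborhood localization (driven by \ref{assm:nuisanceconvergence}, \ref{assm:L2conditionsforCI}) with Proposition \ref{prop:vdv.2.11.22}-type control so that both the envelope $L_2$-norm and the uniform-entropy factor are simultaneously controlled — essentially a repeat, at the integrated-and-localized scale, of the empirical-process bookkeeping already carried out in the proof of Lemma \ref{successiveknots} (inequalities \eqref{empterm:identity.part}, \eqref{empterm:phi.part}) and of Lemma \ref{lem:finallemmaforCI}(a), which I would cite and adapt rather than redo.
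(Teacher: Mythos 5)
Your decomposition of $D_{a,\widehat\theta_a}-D_{a,\theta_{a,\infty}}$ into the indicator term weighted by $1/\widehat\pi_a-1/\pi_{a,\infty}$ and the $\phi$-difference terms is the same as the paper's, and your treatment of the $\phi$-part (H\"older envelope of order $v_n^{\alpha}$ with $\alpha>1/2$, plus a maximal inequality over $\mc F_\phi$ via \ref{assm:pi-bracketing}) is essentially the paper's route through \eqref{empterm:phi.part}. You also correctly diagnose the crux: for the indicator term the envelope-only bound via Proposition \ref{prop:vdv.2.14.1} gives $\|F_n\|_{P_*,2}\asymp n^{1/2}$ and hence only $O_p(1)$, so the $L_2$-consistency of $\widehat\pi_a$ must enter somewhere.

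The gap is that the step which actually constitutes the lemma is not carried out, and the mechanism you sketch for it would fail. Restricting the class to an $L_2(P_*)$-ball $\{\pi:\|\pi-\pi_{a,\infty}\|\le\epsilon_n\}$ does \emph{not} make "the envelope itself $o(n^{1/2})$": an envelope must dominate pointwise over the class, and $L_2$-closeness to $\pi_{a,\infty}$ leaves the pointwise bound on $(1/\pi-1/\pi_{a,\infty})I(A=a)I(s_0<Y\le s_0+v_ns)$ at order $K^2 I(A=a)I(|Y-s_0|\le Tv_n)$, exactly as before; what shrinks is the $L_2$-norm of each class member (the class diameter/variance), not the envelope. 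Likewise, the "product structure of $Q_1$--$Q_3$" controls the deterministic remainder $P_*\{D_{a,\widehat\theta_a}-D_{a,\theta_{a,\infty}}\}$, and the covariance computations in Lemma \ref{lem:finallemmaforCI}(a) identify the nondegenerate limit of the $\theta_{a,\infty}$-indexed process; neither gives smallness of the centered fluctuation $(\PP_n-P_*)$ applied to the difference class, which is what is at issue here. The paper closes this by a genuinely different device that you only name-check and defer as "the main obstacle": it defines the $\sqrt{v_n^{-1}}$-scaled class $\{\sqrt{v_n^{-1}}\,\pi(\mathbf{X})^{-1}I(A=a)I(s_0<Y\le s_0+v_ns):\pi\in\mc F_\pi,\,s\in[0,t]\}$, verifies the four conditions of Proposition \ref{prop:vdv.2.11.22} (Lindeberg envelope conditions, mean-square continuity in the product semimetric combining $|s_1-s_2|$ and $\|\pi_1-\pi_2\|$, and the uniform entropy integral from \ref{assm:pi-bracketing}), and then uses asymptotic $\rho$-equicontinuity together with $\|\widehat\pi_a-\pi_{a,\infty}\|\to_p0$ from \ref{assm:nuisanceconvergence} to conclude the scaled process is $o_p(n^{-1/2})$, i.e.\ the unscaled term is $o_p(n^{-(k+1)/(2k+1)})$, which after integrating over an interval of length $\asymp v_n$ and multiplying by $r_n$ gives $o_p(1)$. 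Until you supply (or correctly adapt) that equicontinuity argument, the proof is incomplete at precisely the point the lemma is needed for.
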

\begin{proof}
Without loss of generality, we assume $t>0$.
First, from the definition of $D_{a,\hat\theta_a}$ and $D_{a,\theta_{a,\infty}}$, we have
\begin{align*}
(D_{a,\hat\theta_a}(v)-D_{a,\hat\theta_a}(s_0))&-(D_{a,{\theta_{a,\infty}}}(v)-D_{a,{\theta_{a,\infty}}}(s_0))\\
&=\frac{\pi_{a,\infty}(X)-\widehat\pi_a(X)}{\pi_{a,\infty}(X)\widehat\pi_a(X)}I(A=a)I(s_0< Y\le v)\\
&\indent + \Big(1-\frac{I(A=a)}{\widehat\pi_a(X)}\Big)\Big(\widehat\phi_a(v|X)-\widehat\phi_a(s_0|X)\Big)\\
&\indent - \Big(1-\frac{I(A=a)}{\pi_{a,\infty}(X)}\Big)\Big(\phi_{a,\infty}(v|X)-\phi_{a,\infty}(s_0|X)\Big).
\end{align*}
In the proof of Lemma \ref{successiveknots} (see \eqref{empterm:phi.part}), we showed that there exists $\epsilon>0$ such that 
\begin{align*}
   (\PP_n-P_*) \sup_{s\in[0,t]}\Big|\Big(1-\frac{I(A=a)}{\widehat\pi_a(X)}\Big)\Big(\widehat\phi_a(s_0+v_ns|X)&-\widehat\phi_a(s_0|X)\Big)\Big|\\
   &=O_p(n^{-(k+1+\epsilon)/(2k+1)}).
\end{align*}
Similarly to this, with Assumption \ref{assm:boundedpropscore},  \ref{assm:phiregularity}, one can show that
\begin{align*}
 (\PP_n-P_*) \sup_{s\in[0,t]}\Big|\Big(1-\frac{I(A=a)}{\pi_{a,\infty}(X)}\Big)\Big(\phi_{a,\infty}(s_0+v_ns|X)
 &-\phi_{a,\infty}(s_0|X)\Big)\Big|\\
 &=O_p(n^{-(k+3/2)/(2k+1)}).
\end{align*}
The two preceding convergence rates are $o_p(n^{-(k+1)/(2k+1)})$ and $\int_{s_0}^{s_n(t)} o_p(n^{-(k+1)/(2k+1)}) dv$ is $o_p(r_n)$. 
Thus, it suffices to show that 
\begin{align*}
(\PP_n-P_*) \sup_{s\in[0,t]}\Big|\frac{\pi_{a,\infty}(X)-\widehat\pi_a(X)}{\pi_{a,\infty}(X)\widehat\pi_a(X)}I(A=a)I(s_0< Y\le s_0+v_ns)\Big|    
\end{align*}
is $o_p(n^{-(k+1)/(2k+1)}).$
To prove this, define a function class $\mc F_{n,s_0,t}$ by
\begin{align*}
    \mc F_{n,s_0,t}:=\Big\{(s,f_{n,s_0,s,\pi}(x_1,\mathbf{x_2},x_3)):\pi \in\mc F_{\pi},x_1\in\{0,1\},\mathbf{x_2}\in\RR^d, x_3\in\RR, s\in[0,t]\Big\},
\end{align*}
where
\begin{align*}
f_{n,s_0,s,\pi}(x_1,\mathbf{x_2},x_3):=\sqrt{v_n^{-1}}&\frac{I_{a}(x_1)}{\pi(\mathbf{x_2})}I(s_0< x_3\le s_0+v_ns).
\end{align*}
Then we show $\rho$-equicontinuity of this class where the semi-metric $\rho$ is a product metric of Euclidean distance in $\RR$ and $L_2$ norm in $\mc F_{\pi}$ (eg. $\rho$ is the sum of two metrics).

We will show the four conditions for concluding $\rho$-equicontinuity of Theorem 2.11.24 of \citet{vdvandW} (which we have provided as Proposition \ref{prop:vdv.2.11.24} in the Appendix for completeness). First, due to the Assumption \ref{assm:boundedpropscore}, the class $\mc F_{n,s_0,t}$ admits an envelope $F_{n,s_0,t}:=K\sqrt{v_n^{-1}}I(A=a)I(s_0< Y\le s_0+v_nt)$. And, by the log-concavity of the distribution of $Y^a$, there exists a constant $C_5>0$ such that
\begin{align*}
P_* F_{n,s_0,t}^2=K^2 v_n^{-1}\PP(s_0< Y^a\le s_0+v_nt) \le K^2C_5t.    
\end{align*}
Furthermore, for any $\vartheta>0$, since  $|F_{n,s_0,t}|\lesssim \sqrt{v_n^{-1}}$, it is obvious that 
\begin{align*}
    P_* F_{n,s_0,t}^2I( F_{n,s_0,t}>\vartheta\sqrt{n})\rightarrow 0.
\end{align*}
Next, we prove 
\begin{align*}
\sup_{|t_1-t_2|\le\zeta_n,\|\pi_1-\pi_2\|\le\zeta_n}P_*(f_{n,s_0,t_1,\pi_1}-f_{n,s_0,t_2,\pi_2})^2 \rightarrow 0,  
\end{align*}
for any $\zeta_n\rightarrow 0$. Indeed, since $\zeta_n\rightarrow 0$, 
\begin{align*}
P_*(f_{n,s_0,t_1,\pi_1}-f_{n,s_0,t_2,\pi_2})^2&\le P_*(f_{n,s_0,t_1,\pi_1}-f_{n,s_0,t_2,\pi_1})^2 + P_*(f_{n,s_0,t_2,\pi_1}-f_{n,s_0,t_2,\pi_2})^2 \\
&\le K^2C_5|t_1-t_2|+K^4C_5tP_*\|\pi_1-\pi_2\|^2\\
&\le K^2C_5\zeta_n+K^4C_5t\zeta_n^2 \rightarrow 0.
\end{align*}
We already showed the bracketing entropy condition of the class $\mc F_{t1}$ in the proof of Lemma \ref{successiveknots}. Since, for arbitrary probability measure $Q$ and $\epsilon>0$,
\begin{align*}
 \sup_Q{\rm log}N(\epsilon,\mc F_{n,s_0,t},L_2(Q))   \asymp \sup_Q{\rm log}N(\epsilon,\mc F_{t1},L_2(Q)) \lesssim \epsilon^{-V}-{\rm log}(\epsilon),
\end{align*}
 the function class $\mc F_{n,s_0,t}$ satisfies the uniform entropy integral condition. Combining the preceding results with $L_2(P_*)$ convergence of $\widehat\pi_a$ to $\pi_{a,\infty}$ which is given in \ref{assm:nuisanceconvergence} and $v_nt\rightarrow0$, we have $(\PP_n-P_*)\sup_{(t,f)\in\mc F_{n,s_0,t}}f=o_p(n^{-1/2})$ by $\rho$-equicontinuity. This further implies $$(\PP_n-P_*)\sup_{s\in[0,t]}\left|\frac{\pi_{a,\infty}(X)-\widehat\pi_a(X)}{\pi_{a,\infty}(X)\widehat\pi_a(X)}I(A=a)I(s_0< Y\le v_ns)\right|=o_p(n^{-(k+1)/(2k+1)}).$$
And, again since $\int_{s_0}^{s_n(t)} o_p(n^{-(k+1)/(2k+1)}) dv=o_p(r_n)$, this concludes the lemma.
\end{proof}

\subsubsection{Proof of the main Theorem}\label{subsec:CI.proof.section}
In this section we give the proof of Theorem \ref{prop:CI} based on the preceding lemmas in Section \ref{subsec:lemmas.for.CI}.

\begin{proof}
First, we find the two constants $\gamma_{1a},\gamma_{2a}$ for each $a\in\{0,1\}$ which satisfy
\begin{align*}
   \gamma_{1a} G_{\alpha_a,k,\sigma_a}(\gamma_{2a} t)=_d Y_k(t),
\end{align*}
where $Y_k$ is the integrated Gaussian process defined in \eqref{integrgaussprocess}. Due to the scaling property of Brownian motion, we have,
\begin{align*}
\gamma_{1a}\gamma_{2a}^{3/2}=\alpha_a^{-1},~~~\gamma_{1a}\gamma_{2a}^{k+2}=\sigma_a^{-1},
\end{align*}
where $\alpha_a,\sigma_a$ are defined in Lemma \ref{lem:finallemmaforCI}-(a).
The solution of the above system of equations is 
\begin{align*}
    \gamma_{1a}&=\Bigg(\frac{|\varphi^{(k)}_a(s_0)|}{(k+2)!}\Bigg)^{3/(2k+1)}\Bigg(\frac{p_a(s_0)^2}{\chi_{\theta_a}}\Bigg)^{(k+2)/(2k+1)},\\
    \gamma_{2a}&=\Bigg(\frac{|\varphi^{(k)}_a(s_0)|}{(k+2)!}\Bigg)^{-2/(2k+1)}\Bigg(\frac{p_a(s_0)^2}{\chi_{\theta_a}}\Bigg)^{-1/(2k+1)}.
\end{align*}
Next, since
\begin{align}
\gamma_{1a} \gamma_{2a}^2 \widehat H^{\rm locmod,(2)}_{a,n}\rightarrow H_k^{(2)}(t),\\
\gamma_{1a} \gamma_{2a}^3 \widehat H^{\rm locmod,(3)}_{a,n}\rightarrow H_k^{(3)}(t),
\end{align}
by Lemma \ref{lem:finallemmaforCI} and \ref{lem:finallemmaforCI.joint}, the preceding displays imply
\begin{align*}
\begin{pmatrix}
n^{k/(2k+1)}(\widehat\varphi_{1,n}(s_0)-\varphi_1(s_0))\\
n^{(k-1)/(2k+1)}(\widehat\varphi'_{1,n}(s_0)-\varphi'_1(s_0))\\
n^{k/(2k+1)}(\widehat\varphi_{0,n}(s_0)-\varphi_0(s_0))\\
n^{(k-1)/(2k+1)}(\widehat\varphi'_{0,n}(s_0)-\varphi'_0(s_0))\\
\end{pmatrix}
\rightarrow_d 
\begin{pmatrix}
C_k(s_0,\varphi_1) H_{1k}^{(2)}(0)\\
D_k(s_0,\varphi_1) H_{1k}^{(3)}(0)\\
C_k(s_0,\varphi_0) H_{0k}^{(2)}(0)\\
D_k(s_0,\varphi_0) H_{0k}^{(3)}(0)\\
\end{pmatrix},
\end{align*}
where $H_{1k}$ and $H_{0k}$ are independent copies of $H_k$ defined in \eqref{H-k:definition}; $C_k(s_0,\varphi_a)=(\gamma_{1a}\gamma_{2a}^2)^{-1}$ and $D_k(s_0,\varphi_a)=(\gamma_{1a}\gamma_{2a}^3)^{-1}$, $a\in\{0,1\}$.

Now, plugging in the exact values of $\gamma_{1a},\gamma_{2a}$, we get the exact values of $C_k(s_0,\varphi_a)$, and $D_k(s_0,\varphi_a)$ which are given by,
\begin{align*}
&C_k(s_0,\varphi_a)=\Bigg(\frac{|\varphi^{(k)}_a(s_0)|}{(k+2)!}\Bigg)^{1/(2k+1)}\Bigg(\frac{p_a(s_0)^2}{\chi_{\theta_a}}\Bigg)^{-k/(2k+1)},\\
&D_k(s_0,\varphi_a)=\Bigg(\frac{|\varphi^{(k)}_a(s_0)|}{(k+2)!}\Bigg)^{3/(2k+1)}\Bigg(\frac{p_a(s_0)^2}{\chi_{\theta_a}}\Bigg)^{-(k-1)/(2k+1)}.
\end{align*}
Next, \eqref{jointfororiginal} follows directly from the delta method. This completes the proof.
\end{proof}

\subsection{Proof of Theorem \ref{prop:CI-construction}}\label{subsec:proof.CI.constr}
We give the proof of Theorem \ref{prop:CI-construction} as follows.

\begin{proof}
For part (a), based on the result from Lemma \ref{lem:finallemmaforCI}-(d), the entire proof of both Theorem 2.4 and 3.2 in \citet{deng2022inference} can be directly applied to the joint process $(V_{1,n},V_{0,n})$, which converges weakly to $(V_1,V_0)$ (see \eqref{def:V.an.est.proc}--\eqref{def:V.a.limit.proc} for the definition of the joint processes).
Consequently, this yields
\begin{align}\label{eq:pitoval.joint.log.level}
\begin{pmatrix}
\sqrt{n(\tau_n^{+}(s_0;1)-\tau_n^{-}(s_0;1))}(\widehat\varphi_{1,n}(s_0)-\varphi_1(s_0))\\
\sqrt{n(\tau_n^{+}(s_0;1)-\tau_n^{-}(s_0;1))^3}(\widehat\varphi'_{1,n}(s_0)-\varphi'_1(s_0))\\
\sqrt{n(\tau_n^{+}(s_0;0)-\tau_n^{-}(s_0;0))}(\widehat\varphi_{0,n}(s_0)-\varphi_0(s_0))\\
\sqrt{n(\tau_n^{+}(s_0;0)-\tau_n^{-}(s_0;0))^3}(\widehat\varphi'_{0,n}(s_0)-\varphi'_0(s_0))\\
\end{pmatrix}
\rightarrow_d 
\begin{pmatrix}
-\alpha_1\LL_{1k}^{(0)}\\
-\alpha_1\LL_{1k}^{(1)}\\
-\alpha_0\LL_{0k}^{(0)}\\
-\alpha_0\LL_{0k}^{(1)}\\
\end{pmatrix},   
\end{align}
where $\alpha_a=\frac{1}{p_a(s_0)}\sqrt{\EE \Big[\frac{\pi_{a,*}(\mathbf{X})}{\pi^2_{a,\infty}(\mathbf{X})}\eta_{a,*}(s_0|\mathbf{X})\Big]}$, for $a\in\{0,1\}$, is defined in Lemma \ref{lem:finallemmaforCI}-(a). 
Recall $h_{k;-}^*$ and $h_{k;+}^*$ are characterized as the absolute values of the location of the first touch points of the pair $(H_k,Y_k)$.
Since $V_1$ and $V_0$ converge weakly to independent joint processes that are of the form \eqref{def:V.a.limit.proc}, it is obvious that $\LL_{1k}^{(i)}$ and $\LL_{0k}^{(i)}$ are independent, for $i=0,1$.
The distributional result for $\widehat p_{a,n}$ and $\widehat p'_{a,n}$ follow directly by the delta method.

Next, the part (b) can be obtained directly from the part (a), similarly to the proof of Theorem 2.6 in \citet{deng2022inference} which is directly concluded by Theorem 2.4 therein.
\end{proof}

\subsection{Proof of Lemma \ref{lemma:tuning.consistency}}\label{subsec:tuning.proof}
\begin{proof}
Without loss of generality, we show that the following holds for a positive deterministic sequence $h_n$.
\begin{align}
&h^{-1}P_* \left\{ D_{a,{\theta_{a,\infty}}}(s_0+h)-D_{a,{\theta_{a,\infty}}}(s_0)\right\}^2 \rightarrow  \chi_{\theta_a},\label{WTS.chi.1}\\
&h^{-1}\left[P_*\left\{ D_{a,{\theta_{a,\infty}}}(s_0+h)-D_{a,{\theta_{a,\infty}}}(s_0) \right\}^2-P_*\left\{D_{a,{\widehat\theta_{a}}}(s_0+h)-D_{a,{\widehat\theta_{a}}}(s_0)\right\}^2\right] \rightarrow  0,\label{WTS.chi.2}\\
&h^{-1}(\PP_n-P_*)\left\{  D_{a,\widehat\theta_a}(s_0+h) - D_{a,\widehat\theta_a}(s_0)\right\}^2\rightarrow_p 0,\label{WTS.chi.3}
\end{align}	
since combining \eqref{WTS.chi.1}--\eqref{WTS.chi.3} yields the main result.
First, the same procedure to check the limiting behavior of \eqref{cov:1}--\eqref{cov:4} can be directly applied to show \eqref{WTS.chi.1}. Hence, we omit the proof.
Secondly, for \eqref{WTS.chi.2}, since
\begin{align}
&\left|P_*\left\{ D_{a,{\theta_{a,\infty}}}(s_0+h)-D_{a,{\theta_{a,\infty}}}(s_0) \right\}^2-P_*\left\{D_{a,{\widehat\theta_{a}}}(s_0+h)-D_{a,{\widehat\theta_{a}}}(s_0)\right\}^2\right|\nonumber\\
&\leq 4(K+1)P_*\left|\left( D_{a,{\theta_{a,\infty}}}(s_0+h)-D_{a,{\theta_{a,\infty}}}(s_0) \right)-\left(D_{a,{\widehat\theta_{a}}}(s_0+h)-D_{a,{\widehat\theta_{a}}}(s_0)\right)\right|,\label{ineq.chi.lemma}
\end{align}
a similar derivation step to check \eqref{result.in.DiD.remainder} (see the decomposition in the preceding paragraph therein) can be applied to \eqref{WTS.chi.2}. In \eqref{ineq.chi.lemma}, we used the fact that $|D_{a,{\widehat\theta_{a}}}|,|D_{a,{\theta_{a,\infty}}}| \le K+1$ by conditions \ref{assm:boundedpropscore} and \ref{assm:properCDF}.
Lastly, we now check \eqref{WTS.chi.3}. As in \eqref{empterm1}, we have
\begin{align*}
& (\PP_n-P_*)\left\{D_{a,\widehat\theta_a}(s_2)-D_{a,\widehat\theta_a}(s_1)\right\}^2\\
 &=(\PP_n-P_*)\left[\frac{I(A=a)}{\widehat\pi^2_a(\mathbf{X})}I(s_1<Y\le s_2)\right.\\
 &\qquad\qquad \qquad+2I(s_1<Y\le s_2)(\widehat\phi_a(s_2|\mathbf{X})-\widehat\phi_a(s_1|\mathbf{X}))\Big(1-\frac{I(A=a)}{\widehat\pi_a(\mathbf{X})}\Big)\frac{I(A=a)}{\widehat\pi_a(\mathbf{X})}\\
&\qquad\qquad\qquad\left.+(\widehat\phi_a(s_2|\mathbf{X})-\widehat\phi_a(s_1|\mathbf{X}))^2\Big(1-\frac{I(A=a)}{\widehat\pi_a(\mathbf{X})}\Big)^2\right].
\end{align*}
By an analogous derivation step used in Lemma \ref{successiveknots} to show \eqref{empterm:identity.part}, one can easily verify that
\begin{align}\label{empterm:identity.part.chi}	\sup_{[s_1,s_2]\subset I_{s_0,\omega},s_1\le s_2\le s_1+R} \Big|(\PP_n-P_*)\Big[\frac{I(A=a)}{\widehat\pi^2_a(\mathbf{X})}I(s_1<Y\le s_2)\Big|\le \epsilon|s_2-s_1|^{k+1}+B'_1,
\end{align}
where a random variable $B'_1$ has order of $O_p(n^{-(k+1)/(2k+1)})$ and is independent of $s_1,s_2$.
And, following a similar step to prove \eqref{empterm:phi.part}, one can further check that
\begin{align}\label{empterm:phi.part.chi}
	\sup_{[s_1,s_2]\subset I_{s_0,\omega},s_1\le s_2\le s_1+R} \Big|(\widehat\phi_a(s_2|\mathbf{X})-\widehat\phi_a(s_1|\mathbf{X}))^2\Big(1-\frac{I(A=a)}{\widehat\pi_a(\mathbf{X})}\Big)^2\Big|\le \epsilon|s_2-s_1|^{k+t}+B'_2,   
\end{align}
knowing that this class of functions allows an envelope $(1+K)^2R^2_1(\mathbf{X})\omega^{2\alpha}$ where $\alpha\in(1/2,1]$ by condition \ref{assm:boundedpropscore} and \ref{assm:phiestholder},
where a random variable $B_2'$ has order of $O_p(n^{-(k+t)/(2k+1)})$ with some $t>3/2$, and is independent of $s_1,s_2$.
Thus, the following holds
\begin{align*}
&(\PP_n-P_*)\left[\frac{1}{h}\left|\frac{I(A=a)}{\widehat\pi^2_a(\mathbf{X})}I(s_0<Y\le s_0+h)\right|\right]\rightarrow_p 0,\\
&(\PP_n-P_*)\left[\frac{1}{h}\left|(\widehat\phi_a(s_0+h|\mathbf{X})-\widehat\phi_a(s_0|\mathbf{X}))^2\Big(1-\frac{I(A=a)}{\widehat\pi_a(\mathbf{X})}\Big)^2\right|\right]\rightarrow_p 0,
\end{align*}
as long as $h^{-1}=O(\sqrt{n})$ regardless of an even integer $k$.
Thus, the proof now reduces to show that
\begin{align*}
(\PP_n-P_*)\left[\frac{1}{h}\left| I(s_1<Y\le s_2)(\widehat\phi_a(s_2|\mathbf{X})-\widehat\phi_a(s_1|\mathbf{X}))\Big(1-\frac{I(A=a)}{\widehat\pi_a(\mathbf{X})}\Big)\frac{I(A=a)}{\widehat\pi_a(\mathbf{X})}\right|\right]\rightarrow_p 0.
\end{align*}
We follow the same derivation steps used in Lemma \ref{successiveknots}. Recall that $\mathcal{F}_1=\{I_{(s_1,s_2](\cdot)}:[s_1,s_2]\subseteq I_{s_0,\omega}, s_1\le s_2\le s_1+R\}$ satisfied
\begin{align*}
	\sup_Q N(\epsilon,\mathcal{F}_1,L_2(Q))\lesssim -\log(\varepsilon),
\end{align*}
and $\mathcal{F}_2:=\Big\{\frac{I_{\{a\}}(x_1)}{\pi_a(\mathbf{x_2})}: \pi_a \in \mathcal{F}_{\pi}, x_1\in \{0,1\}, \mathbf{x_2}\in \mathbb{R}^d\Big\}$ satisfied
\begin{align*}
	\sup_Q N(\epsilon, \mathcal{F}_2, L_2(Q)) \lesssim \varepsilon^{-V},
\end{align*}
and $\mathcal{F}_3:=\{\phi_a(s_2|\cdot)-\phi_a(s_1|\cdot): [s_1,s_2]\subseteq [s_0-\delta,s_0+\delta], s_1\le s_2\le s_1+R, \phi_a \in \mathcal{F}_{\phi}\}$ satisfied
\begin{align*}
	\sup_Q{\rm log} N(\epsilon, \mathcal{F}_3, L_2(Q)) \lesssim  \epsilon^{-V}, 
\end{align*}
for any probability measure $Q$ with $V\in[0,2)$.
Thus, the whole function class $\mathcal{F}_{0}:=\mathcal{F}_1\cdot \mathcal{F}_2\cdot \mc F_3 \cdot (1-\mc F_2)$ has finite uniform entropy integral, by Assumptions \ref{assm:pi-bracketing} and Lemma 5.1 from \citet{vdvvl06},
\begin{align*} 
	\sup_Q {\rm log} N(\epsilon , \mathcal{F}_{0}, L_2(Q)) \lesssim \epsilon^{-V}-{\rm log}(\epsilon) 
\end{align*}
up to a constant, with $V\in[0,2)$. Furthermore, the class allows an envelope $\omega^{\alpha}K(K+1)I(s_0-\omega<Y\le s_0+\omega)I(A=a)R_1(\mathbf{X})$ which satisfies
\begin{align*}
	&\mathbb{E}\Big(\omega^{\alpha}K(K+1)I(s_0-\omega<Y\le s_0+\omega)I(A=a)R_1(\mathbf{X})\Big)^2\\
	&= \omega^{2\alpha}K^2(K+1)^2P_*(s_0-\omega<Y^a\le s_0+\omega) \mathbb{E}R^2_1(\mathbf{X})\\
	&\le C_0\omega^{2\alpha+1}K^2 \mathbb{E}R^2_1(\mathbf{X})
\end{align*}
for some constant $C_0>0$.
Hence, by Lemma \ref{emptermlemma}, for each $\epsilon>0$, there exist a random variable $B_0$ which has order of $O_p(n^{-(k+t)/(2k+1)})$ with some $t>3/2$ and is independent of $s_1,s_2$ such that
\begin{align*}
	\sup_{[s_1,s_2]\subset I_{s_0,\omega},s_1\le s_2\le s_1+R}& \left|I(s_1<Y \le s_2) (\widehat\phi_a(s_2|\mathbf{X})-\widehat\phi_a(s_1|\mathbf{X}))\Big(1-\frac{I(A=a)}{\widehat\pi_a(\mathbf{X})}\Big)\frac{I(A=a)}{\widehat\pi_a(\mathbf{X})}\right|\\
	& \le \epsilon|s_2-s_1|^{k+t}+B_0.   
\end{align*}
This implies that \eqref{WTS.chi.3} when $h^{-1}=O(\sqrt{n})$, and it completes the proof.
\end{proof}

\subsection{Proof of Theorem \ref{prop:Cons-split}}\label{subsec:proof.cons.split}
\begin{proof}
We denote the empirical process over the $i$-th fold (or subgroup) $\mc V_{n,i}$ by $\mathbb{G}^i_n=\sqrt{N}(\PP^i_n-P_*)$ where $\PP^i_n$ is the empirical measure on the same subgroup.
As discussed in the proof of Theorem \ref{prop:Consistency}, we have to check the following:
\begin{align}
&\sup_{s\in\RR}\left|\widehat F^{K_0}_{a,n}(s)-F_a(s)\right|=o_p(1),\label{split.consist.WTS0}\\
&\left|\int_{\RR}|s|dF^{K_0}_{a,n}(s)-\int_{\RR} |s|dF_a(s)\right|=o_p(1).\label{split.consist.WTS.integral}
\end{align}
Now, we prove \eqref{split.consist.WTS0}, which is equivalent to show that 
\begin{align}
&\sup_{s\in(-\infty,0]}\left|\widehat F^{K_0}_{a,n}(s)-F_a(s)\right|=o_p(1), \label{split.consist.WTS1}   \\
&\sup_{s\in[0,\infty)}\left|(1-\widehat F^{K_0}_{a,n}(s))-(1-F_a(s))\right|=o_p(1). \label{split.consist.WTS2} 
\end{align}
We start with showing \eqref{split.consist.WTS1}.
First, by Assumption \ref{assm:DR}, we have
\begin{equation}
\begin{split}
&\widehat F^{K_0}_{a,n}(s)-F_a(s)\\
&=\frac{1}{K_0}\sum_{i=1}^{K_0}\left[\PP^i_n\left\{D_{a,\widehat\theta_{a,-i}}(s)\right\}\right]-P_*D_{a,\theta_{a,\infty}}(s)\\
&=\frac{1}{K_0}\sum_{i=1}^{K_0}\left[\PP^i_n\left\{D_{a,\widehat\theta_{a,-i}}(s)\right\}\right]-\PP_n D_{a,\theta_{a,\infty}}(s)+(\PP_n-P_*)D_{a,\theta_{a,\infty}}(s),\\
\end{split}
\end{equation}
where $\PP_n^i D_{a,\widehat\theta_{a,-i}}$ is the estimated centered efficient influence function evaluated with validation sample in which the nuisance estimators are constructed upon only the observations from the training set $\mc T_{n,i}$.
Since $\sum_{i=1}^{K_0}\PP^i_n D_{a,\theta_{a,\infty}}(s)=\sum_{i=1}^{K_0}\PP_nD_{a,\theta_{a,\infty}}(s)$, the preceding display equals
\begin{align}
&\frac{1}{K_0}\sum_{i=1}^{K_0}\left[(\PP^i_n-P_*)\left\{D_{a,\widehat\theta_{a,-i}}(s)-D_{a,\theta_{a,\infty}}(s)\right\}\right]+ \frac{1}{K_0}\sum_{i=1}^{K_0}\left[P_*\left\{D_{a,\widehat\theta_{a,-i}}(s)-D_{a,\theta_{a,\infty}}(s)\right\}\right]\nonumber\\
&\qquad + (\PP_n-P_*)D_{a,\theta_{a,\infty}}(s)\nonumber\\
&=: R_{1n}(s)+R_{2n}(s)+(\PP_n-P_*)D_{a,\theta_{a,\infty}}(s).\label{sample.split.decomposition}
\end{align}

We show the three terms in \eqref{sample.split.decomposition} are negligible.  
We start with the last summand, $(\PP_n-P_*)D_{a,\theta_{a,\infty}}(s)$.  
First, by a similar reasoning used in the proof of Lemma \ref{lem:vdvW2.7.3.ext}, we note that one can show that
\begin{align}\label{eq:below.zero.GC}
 N_{[]}(\epsilon, \mc F_{\phi_\infty}, L_2(P_*)) <\infty,    
\end{align} 
for any $\epsilon>0$, where $\mc F_{\phi_\infty}:=\{\mathbf{X}\mapsto{\phi}_{a,\infty}(s|\mathbf{X}):s\in(-\infty,0]\}$.
This implies that $\mc F_{\phi_\infty}$ is a Glivenko-Cantelli class (see Theorem 2.4.1 of \citet{vdvandW}). Thus, we have
\begin{align*}
 &\sup_{s\in(-\infty,0]}\Big|(\PP_n-P_*)D_{a,{\theta_{a,\infty}}}(s)\Big|\rightarrow 0~~a.s.   
\end{align*}
Moreover, a similar procedure used to prove that \eqref{indist.rem.pt1}--\eqref{indist.rem.pt3} are $o_p(1)$ in the proof of Theorem \ref{prop:Consistency} can be applied to show that $\sup_{s\in(-\infty,0]} \left|R_{2n}\right|$ is $o_p(1)$, since $K=O_p(1)$. 
Next, it suffices to study the term $R_{1n}(s)$ in \eqref{sample.split.decomposition}.
By the tower property of expectation, 
\begin{align*}
\EE\left[\sup_{f\in \mc F_{n,-i}}\left|\mathbb{G}^i_n f\right|\right]=\EE\left\{\EE\left[\sup_{f\in \mc F_{n,-i}}\left|\mathbb{G}^i_n f\right|\Big| \mc T_{n,i}\right]\right\},
\end{align*}
where $\mc F_{n,-i}:=\left\{D_{a,\widehat\theta_{a,-i}}(s)-D_{a,\theta_{a,\infty}}(s):s\in(-\infty,0] \right\}$. We further note that
\begin{equation}\label{split.emp.proc.decom}
\begin{split}
&\left|D_{a,\widehat\theta_{a,-i}}(s)-D_{a,\theta_{a,\infty}}(s)\right|\\
&=\left|I(A=a)\left(\frac{1}{\widehat\pi_{a,-i}(\mathbf{X})}-\frac{1}{\pi_{a,\infty}(\mathbf{X})}\right)\left[I(Y \le s)-\phi_{a,\infty}(s|\mathbf{X})\right]\right|\\
&\indent + \left|\left(1-\frac{I(A=a)}{\widehat\pi_{a,-i}(\mathbf{X})}\right)
\left(\widehat\phi_{a,-i}(s|\mathbf{X})-\phi_{a,\infty}(s|\mathbf{X})\right)\right|,
\end{split}
\end{equation}
which is bounded by 
\begin{align}\label{envlope.function}
H_{n,i}(\mathbf{z})=K^2 \left|\widehat\pi_{a,-i}(\mathbf{x})-\pi_{a,\infty}(\mathbf{x})\right|+(K+1)\sup_{s\in(-\infty,0]}\left|\widehat\phi_{a,-i}(s|\mathbf{x})-\phi_{a,\infty}(s|\mathbf{x})\right|.
\end{align}
By Theorem 2.14.16 in \citet{vdvandW} (see Proposition \ref{prop:vdv.2.14.16} in our Appendix), for sufficiently large $n$, we have
\begin{align*}
\EE\left[\sup_{f\in \mc F_{n,-i}}\left|\mathbb{G}^i_n f\right|\Big| \mc T_{n,i}\right]\le C_*\|H_{n,i}\|_{P_*,2}J_{[]}(1,\mc F_{n,-i},L_2(P_*)),    
\end{align*}
for a universal constant $C_*>0$, where $J_{[]}(1,\mc F_{n,-i},L_2(P_*))$ is the bracketing integral of the function class $\mc F_{n,-i}$ (see Proposition \ref{prop:vdv.2.14.16} for its definition).
We have uniformly bounded $J_{[]}(1,\mc F_{n,-i},L_2(P_*))$ for all $n$ and $i$, since $\{ Y \mapsto I(Y \leq s) : |s|\le M \}$ is a VS subgraph class, and Lemma \ref{lem:vdvW2.7.3.ext} can be applied to a function class ${\mc F}_{\widehat\phi_{a,-i}}:=\{\mathbf{X}\mapsto \widehat\phi_{a,-i}(s|\mathbf{X})ds:s\in(-\infty,0]\}$ by Assumption \ref{assm:properCDF.split}. Hence, it suffices to show that $\max_{1\le i\le K_0} \EE\|H_{n,i}\| \rightarrow 0$. Indeed, by Assumption \ref{assm:nuisanceconvergence.split},
\begin{align*}
\max_{1\le i\le K_0} \|H_{n,i}\|_{P_*,2}&\le K^2 \max_{1\le i\le K_0} \left\| \widehat{\pi}_{a,-i}(\mathbf{X})-{\pi}_{a,\infty}(\mathbf{X})\right\|_{P_*,2} \\
&\qquad + (K+1) \max_{1\le i\le K_0} \left\|\sup_{s\in(-\infty,0]}\left(\widehat{\phi}_{a,-i}(s|\mathbf{X})-{\phi}_{a,\infty}(s|\mathbf{X})\right)\right\|_{P_*,2}\rightarrow 0,    
\end{align*}
since the convergence of the second term on the right hand side of the preceding display can be easily obtained by the similar steps used in Theorem \ref{prop:Consistency}.
Similar reasoning can be applied to show \eqref{split.consist.WTS2}, since we have
\begin{equation*}
\begin{split}
&(1-\widehat F^{K_0}_{a,n}(s))-(1-F_a(s))\\
&=\frac{1}{K_0}\sum_{i=1}^{K_0}\left[\PP^i_n\left\{1-D_{a,\widehat\theta_{a,-i}}(s)\right\}\right]-P_*\left(1-D_{a,\theta_{a,\infty}}(s)\right)\\
&=\frac{1}{K_0}\sum_{i=1}^{K_0}\left[\PP^i_n\left\{1-D_{a,\widehat\theta_{a,-i}}(s)\right\}\right]-\PP_n \left(1-D_{a,\theta_{a,\infty}}(s)\right)+(\PP_n-P_*)\left(1-D_{a,\theta_{a,\infty}}(s)\right),\\
\end{split}
\end{equation*}
and, further since, analogously to Lemma \ref{lem:vdvW2.7.3.ext}, we obtain
\begin{align}\label{eq:below.zero.GC}
 N_{[]}(\epsilon, \mc F'_{\phi_\infty}, L_2(P_*)) <\infty,    
\end{align} 
for any $\epsilon>0$, where $\mc F'_{\phi_\infty}:=\{\mathbf{X}\mapsto (1-{\phi}_{a,\infty}(s|\mathbf{X})):s\in[0,\infty)\}$.
Hence, we omit the detailed derivation.

Now, we prove \eqref{split.consist.WTS.integral}.
Since one can write $\widehat F^{K_0}_{a,n}(s)=\widehat F^{K_0}_{a,n,1}(s)-\widehat F^{K_0}_{a,n,2}(s)$ for
\begin{align*}
&\widehat F^{K_0}_{a,n,1}(s)=\frac{1}{K_0}\sum_{k=1}^{K_0}\frac{1}{N}\sum_{i\in\mc V_{n,k}} \left\{\frac{I(A_i=a)}{\widehat\pi_{a,-k}(\mathbf{X}_i)}I(Y_i\le s) +\widehat\phi_{a,-k} (s|\mathbf{X}_i)\right\}, \\
&\widehat F^{K_0}_{a,n,2}(s)=\frac{1}{K_0}\sum_{k=1}^{K_0}\frac{1}{N}\sum_{i\in\mc V_{n,k}} \frac{I(A_i=a)}{\widehat\pi_{a,-k}(\mathbf{X}_i)}\widehat\phi_{a,-k}(s|\mathbf{X}_i), 
\end{align*}
where $\widehat F^{K_0}_{a,n,1}(s)$ and $\widehat F^{K_0}_{a,n,2}(s)$ are uniformly bounded monotone functions (by Assumption \ref{assm:boundedpropscore.split} and \ref{assm:properCDF.split}), 
we can still proceed with the derivation used in the proof of Theorem \ref{prop:Consistency}.
It suffices to check
\begin{align}
&\int_{-\infty}^0 \left|\widehat F^{K_0}_{a,n}(s)\right|ds<\infty \quad \mathbf{X}-a.e.,\label{eq:const.split.part2.wts1}\\
&\sup_{t\leq 0}\left|\int_{-\infty}^t \left[\widehat F^{K_0}(s)-F_a(s)\right]ds\right|=o_p(1),\label{eq:const.split.part2.wts2}
\end{align}
since one can derive the following analogously:
\begin{align*}
&\int_0^\infty \left|\widehat 1-F^{K_0}_{a,n}(s)\right|ds<\infty \quad \mathbf{X}-a.e.,\\
&\sup_{0\leq t}\left|\int_t^{\infty} \left[\left\{1-\widehat F^{K_0}(s)\right\}-\left\{1-F_a(s)\right\}\right]ds\right|=o_p(1).
\end{align*}
We first check \eqref{eq:const.split.part2.wts1}.
By Assumption \ref{assm:boundedpropscore.split} and \ref{assm:properCDF.split}, we have
\begin{equation*}
\begin{split}
\EE\int_{-\infty}^0 \left|\widehat F^{K_0}_{a,n}(s)\right|ds&=\EE\left[\int_{-\infty}^0\left|\frac{1}{K_0}\sum_{i=1}^{K_0}\PP_n^i D_a,\hat\theta_{a,-i}(s)\right|ds\right]\\
&\leq \EE\left\{\frac{1}{K_0}\sum_{i=1}^{K_0}\PP_n^i\int_{-\infty}^0 \left|D_a,\hat\theta_{a,-i}(s)\right|ds\right\}\\
&\leq \EE\left\{\frac{1}{K_0}\sum_{k=1}^{K_0}\frac{1}{N}\sum_{i\in\mc V_{n,k}} \frac{I(A_i=a)}{\widehat\pi_{a,-k}(\mathbf{X}_i)} I(Y_i\leq 0)(-Y_i)\right\}\\
&\quad + \EE\left\{\frac{1}{K_0}\sum_{k=1}^{K_0}\frac{1}{N} \left|\left[1-\frac{I(A_i=a)}{\widehat\pi_{a,-k}(\mathbf{X}_i)}\right]\int_{-\infty}^0 \widehat\phi_{a,-k}(s|\mathbf{X}_i) ds\right|\right\} \\
&\leq K\EE|Y^a| + (1+K)\EE h(\mathbf{X})<\infty.
\end{split}
\end{equation*}
Thus, this implies \eqref{eq:const.split.part2.wts1}.
Now we check \eqref{eq:const.split.part2.wts2}.
As in \eqref{sample.split.decomposition}, we have
\begin{equation*}
\begin{split}
&\int_{-\infty}^t \widehat F^{K_0}_{a,n}(s)-F_a(s) ds\\
&=\int_{-\infty}^t\frac{1}{K_0}\sum_{i=1}^{K_0}\left[(\PP^i_n-P_*)\left\{D_{a,\widehat\theta_{a,-i}}(s)-D_{a,\theta_{a,\infty}}(s)\right\}\right] ds\\
&\indent + \int_{-\infty}^t \frac{1}{K_0}\sum_{i=1}^{K_0} \left[P_* \left\{D_{a,\widehat\theta_{a,-i}}(s)-D_{a,\theta_{a,\infty}}(s)\right\}\right]ds\\
&\indent + \int_{-\infty}^t (\PP_n-P_*)D_{a,\theta_{a,\infty}}(s)ds\\
&=:\Tilde R_{1n}(t)+\Tilde R_{2n}(t)+ \int_{-\infty}^t (\PP_n-P_*)D_{a,\theta_{a,\infty}}(s)ds.
\end{split}
\end{equation*}
A similar technique used to prove that \eqref{first-second-pt2}--\eqref{third-second-pt2} (see the proof of Theorem \ref{prop:Consistency}) can be applied to show that $\sup_{t\leq 0}|\Tilde R_{2n}(t)|=o_p(1)$.
For $\Tilde R_{1n}$, one can apply Lemma \ref{lem:vdvW2.7.3.ext} to a class of functions $\Tilde{\mc F}_{\phi_\infty}:=\{\mathbf{X}\mapsto \int_{-\infty}^t {\phi}_{a,\infty}(s|\mathbf{X})ds:t\in(-\infty,0]\}$,
since $\int_{-\infty}^0 {\phi}_{a,\infty}(s|\mathbf{X})ds \leq \int_{\RR} |s|{\phi}_{a,\infty}(s|\mathbf{X})\in L_2(P_*)$ (see the proof of Theorem \ref{prop:Consistency}).
This implies
\begin{align}\label{eq:below.zero.GC.integrated}
 N_{[]}(\epsilon, \Tilde{\mc F}_{\phi_\infty}, L_2(P_*)) <\infty,    
\end{align} 
for any $\epsilon>0$, which in turn indicates $\Tilde{\mc F}_{\phi_\infty}$ is a Glivenko-Cantelli class by Theorem 2.4.1 of \citet{vdvandW}. 
Thus, we have
\begin{align*}
 &\sup_{t\in(-\infty,0]}\Big|\int_{-\infty}^t(\PP_n-P_*)D_{a,{\theta_{a,\infty}}}(s)ds\Big|\rightarrow 0~~a.s..   
\end{align*}
Lastly, we study the term $\Tilde{R}_{1n}$.
By the tower property of expectation, we have
\begin{align*}
\EE\left[\sup_{f\in \Tilde{\mc F}_{n,-i}}\left|\mathbb{G}^i_n f\right|\right]=\EE\left\{\EE\left[\sup_{f\in \Tilde{\mc F}_{n,-i}}\left|\mathbb{G}^i_n f\right|\Big| \mc T_{n,i}\right]\right\},
\end{align*}
where $\Tilde{\mc F}_{n,-i}:=\left\{\int_{-\infty}^t \left(D_{a,\widehat\theta_{a,-i}}(s)-D_{a,\theta_{a,\infty}}(s)\right)ds:t\in(-\infty,0] \right\}$. 
We further note that
\begin{equation}\label{split.emp.proc.decom.integrated}
\begin{split}
&\left|\int_{-\infty}^t \left(D_{a,\widehat\theta_{a,-i}}(s)-D_{a,\theta_{a,\infty}}(s)\right)ds\right|\\
&=\left|I(A=a)\left(\frac{1}{\widehat\pi_{a,-i}(\mathbf{X})}-\frac{1}{\pi_{a,\infty}(\mathbf{X})}\right)\left[I(Y \le t)(t-Y)-\int_{-\infty}^t\phi_{a,\infty}(s|\mathbf{X})ds\right]\right|\\
&\indent + \left|\left(1-\frac{I(A=a)}{\widehat\pi_{a,-i}(\mathbf{X})}\right)
\int_{-\infty}^t\left(\widehat\phi_{a,-i}(s|\mathbf{X})-\phi_{a,\infty}(s|\mathbf{X})\right)ds\right|,
\end{split}
\end{equation}
which admits an envelope function of
\begin{equation}\label{envlope.function}
\begin{split}
\Tilde H_{n,i}(\mathbf{z})&=K^2 \left|\widehat\pi_{a,-i}(\mathbf{x})-\pi_{a,\infty}(\mathbf{x})\right|(I(A=a)|Y|+h(\mathbf{X}))\\
&\indent+(K+1)\int_{-\infty}^{\infty}\left|\widehat\phi_{a,-i}(s|\mathbf{x})-\phi_{a,\infty}(s|\mathbf{x})\right|ds,
\end{split}
\end{equation}
by Assumption \ref{assm:boundedpropscore.split} and \ref{assm:properCDF.split}.
By Theorem 2.14.16 in \citet{vdvandW} (see Proposition \ref{prop:vdv.2.14.16} in our Appendix), for sufficiently large $n$, we have
\begin{align*}
\EE\left[\sup_{f\in \Tilde{\mc F}_{n,-i}}\left|\mathbb{G}^i_n f\right|\Big| \mc T_{n,i}\right]\le C'_*\|\Tilde H_{n,i}\|_{P_*,2}J_{[]}(1,\Tilde{\mc F}_{n,-i},L_2(P_*)),    
\end{align*}
for a universal constant $C'_*>0$, where $J_{[]}(1,\Tilde{\mc F}_{n,-i},L_2(P_*))$ is the bracketing integral of the function class $\Tilde{\mc F}_{n,-i}$ (see Proposition \ref{prop:vdv.2.14.16} for its definition).
We have uniformly bounded $J_{[]}(1,\mc F_{n,-i},L_2(P_*))$ for all sufficiently large $n$ and $i$, since $\{ Y \mapsto I(Y \leq t)(t-Y) : |t|\le M \}$ has a polynomial covering number (discussed in the proof of Theorem \ref{prop:Consistency}), and Lemma \ref{lem:vdvW2.7.3.ext} can be applied to a class of functions $\Tilde{\mc F}_{\widehat\phi_{a,-i}}:=\{\mathbf{X}\mapsto \int_{-\infty}^t \widehat\phi_{a,-i}(s|\mathbf{X})ds:t\in(-\infty,0]\}$ by Assumption \ref{assm:properCDF.split}. 
Hence, it suffices to show that $\max_{1\le i\le K_0} \EE\|\Tilde H_{n,i}\| \rightarrow 0$. 
By Assumption \ref{assm:nuisanceconvergence.split}, we have
\begin{align*}
\max_{1\le i\le K_0} \|\Tilde H_{n,i}\|_{P_*,2}&\le 2K^2 \max_{1\le i\le K_0} \left\| \widehat{\pi}_{a,-i}(\mathbf{X})-{\pi}_{a,\infty}(\mathbf{X})\right\|_{P_*,2} (\|Y^a\|_{P_*,2}+\|h\|_{P_*,2})\\
&\qquad + (K+1) \max_{1\le i\le K_0} \left\|\int_{-\infty}^{\infty}\left|\widehat\phi_{a,-i}(s|\mathbf{x})-\phi_{a,\infty}(s|\mathbf{x})\right|ds\right\|_{P_*,2}\rightarrow 0.  \end{align*}
This completes the proof.
\end{proof}

This is an extension of the Lemma 4 of \citet{westling2024inference}.
\begin{lemma}\label{lem:vdvW2.7.3.ext}
Assume that a fixed function $f$ satisfies $0\leq f_s(x)\leq f_t(x)$ for all $s\leq t$ and $x\in \mc X$, and $F:=f_0$ satisfies $F\in L_2(P)$.
Then $N_{[]}(\epsilon \|F\|_{P,2},\mc F,L_2(P))\leq 4/\epsilon^2$ for all $\epsilon\in (0,1]$, where $\mc F:=\{x\mapsto f_t(x):t\in (-\infty,0]\}$.
\end{lemma}
\begin{proof}
When $f_0=0$, then there is nothing to show. 
Hence, without loss of generality, we assume $\|F\|_{P,2}>0$.
We denote $f_{-\infty}=0$, and $h(s)=\int_\RR f_s(x)^2 dP(x)$ which is a monotone function from $(-\infty,0]$ to $[0,1]$.
And, we have $P(f_t-f_s)^2\leq Pf_t^2-Pf_s^2=h(t)-h(s)$, for all $-\infty<s\leq t\leq 0$.
For a given $\epsilon\in (0,1]$, we define $\eta=\epsilon \|F\|_{P,2}$.
We follow the same procedure used in the proof of Lemma 4 in \citet{westling2024inference} to construct a set of $\eta$-brackets (in $L_2(P)$) which cover $\mc F$.
We denote $t_0=-\infty$ and $\ell_1=f_{-\infty}=0$. 
Then we define $t_1:=\sup \{t\leq 0: h(t)-h(-\infty)\leq \eta^2\}$, where $h(-\infty)=0$.
It implies $h(t_1-)\leq \eta^2$ and $h(t_1+)\geq \eta^2$.
If $h(t_1)\leq \eta^2$, then we define $I_1=(-\infty,t_1]$ and $u_1:=f_{t_1}$. 
On the other hand, if $h(t_1)>\eta^2$, we define $I_1=(-\infty,t_1)$ and $u_1:x\mapsto sup_{t<t_1}f_t(x)$. Then, the bracket $[\ell_1,u_1]$ covers $\{f_t:t\in I_1\}$ and satisfies $P(u_1-\ell_1)^2\leq \eta^2$.
The rest of the proof follows exactly same steps in the proof of aforementioned Lemma 4 of \citet{westling2024inference}, except that we define $r_1:=h(t_1)$ when $I_1=(-\infty,t_1)$, and we define $r_1:=h(t_1+)$ when $I_1=(-\infty,t_1]$.
We refer the readers to the proof therein for detailed derivation.
\end{proof}

\subsection{Proof of Theorem \ref{prop:CI.split}}\label{subsec:proof.CI.split}
\begin{proof}
The only differences compared with the proof of Theorem \ref{prop:CI} arise from Lemma \ref{successiveknots}, \ref{lem:isotonicDID} and the limiting behavior of $\widehat G^{\rm locmod}_{a,n}$ that is defined in Lemma \ref{lem:finallemmaforCI}.
We start by checking the conclusion of Lemma \ref{successiveknots} still holds, namely,
\begin{align*}
\tau^{+}_n-\tau^{-}_n=O_p(n^{-1/(2k+1)}),
\end{align*}
where $\tau^+_n, \tau^-_n$ are estimator knots as in Lemma \ref{successiveknots}.
We first show 
\begin{align*}
&\indent\left|\frac{\tau_n^{+}-\tau_n^{-}}{4}\Big[(\widehat F^{K_0}_{a,n}-F_a)(\tau_n^{+})-(\widehat F^{K_0}_{a,n}-F_a)(\tau_n^{-})\Big]\right| \\
&\le \epsilon(\tau^{+}_n-\tau^{-}_n)^{k+2}+(\tau^{+}_n-\tau^{-}_n)O_p(n^{-(k+1)/(2k+1)})+(\tau^{+}_n-\tau^{-}_n)^2o_p(n^{-k/(2k+1)}),
\end{align*}
for sufficiently small $\epsilon>0$.

Since we verified the following decomposition,
\begin{equation}\label{Reference.decom}
\begin{split}
&\widehat F^{K_0}_{a,n}(s)-F_a(s)\\
&=\frac{1}{K_0}\sum_{i=1}^{K_0}\left[(\PP^i_n-P_*)\left\{D_{a,\widehat\theta_{a,-i}}(s)-D_{a,\theta_{a,\infty}}(s)\right\}\right]+ \frac{1}{K_0}\sum_{i=1}^{K_0}\left[P_*\left\{D_{a,\widehat\theta_{a,-i}}(s)-D_{a,\theta_{a,\infty}}(s)\right\}\right]\\
&\qquad + (\PP_n-P_*)D_{a,\theta_{a,\infty}}(s)
\end{split}
\end{equation}
in the proof of Theorem \ref{prop:Consistency}, this yields
\begin{align*}
&\left(\widehat F^{K_0}_{a,n}(\tau_n^{+})-F_a(\tau_n^{+})\right)-\left(\widehat F^{K_0}_{a,n}(\tau_n^{-})-F_a(\tau_n^{-})\right)\nonumber\\
&=\frac{1}{K_0}\sum_{i=1}^{K_0}\left[(\PP^i_n-P_*)\left\{\left(D_{a,\widehat\theta_{a,-i}}(\tau_n^{+})-D_{a,\theta_{a,\infty}}(\tau_n^{+})\right)-\left(D_{a,\widehat\theta_{a,-i}}(\tau_n^{-})-D_{a,\theta_{a,\infty}}(\tau_n^{-})\right)\right\}\right]\\
&\indent + \frac{1}{K_0}\sum_{i=1}^{K_0}\left[P_*\left\{\left(D_{a,\widehat\theta_{a,-i}}(\tau_n^{+})-D_{a,\theta_{a,\infty}}(\tau_n^{+})\right)-\left(D_{a,\widehat\theta_{a,-i}}(\tau_n^{-})-D_{a,\theta_{a,\infty}}(\tau_n^{-})\right)\right\}\right]\\
&\indent + (\PP_n-P_*)\left[D_{a,\theta_{a,\infty}}(\tau_n^{+})-D_{a,\theta_{a,\infty}}(\tau_n^{-})\right]\\
&=:R_{5n}+R_{6n}+R_{7n},
\end{align*}
where $D_{a,\widehat\theta_{a,-i}}$ is defined in the proof of Theorem \ref{prop:Cons-split}.
The terms $R_{6n}$ and $R_{7n}$ do not involve empirical processes indexed by the nuisance functions, and so their analysis, which we present briefly next, is similar to the analogous analysis done previously in the proof of Theorem~\ref{prop:CI}.

By similar reasoning to $(\PP_n-P_*)D_{a,\widehat\theta_a}$ in \eqref{empterm:identity.part} and \eqref{empterm:phi.part} with Assumption \ref{assm:phiinfholder}, we have
\begin{equation}\label{Lemma.knots.check}
\begin{split}
(\tau_n^{+}-\tau_n^{-})\left|R_{7n}\right|&=(\tau_n^{+}-\tau_n^{-})\left|(\PP_n-P_*)\left[\frac{I(A=a)}{\pi_{a,\infty}(\mathbf{X})}I(\tau_n^{-}<Y\le \tau_n^{+})\right.\right.\\
&\indent +\left.\left. (\phi_{a,\infty}(\tau_n^{+}|\mathbf{X})-\phi_{a,\infty}(\tau_n^{-}|\mathbf{X}))\Big(1-\frac{I(A=a)}{\pi_{a,\infty}(\mathbf{X})}\Big)\right]\right|\\
&\le \epsilon(\tau_n^{+}-\tau_n^{-})^{k+2}+(\tau_n^{+}-\tau_n^{-})O_p\Big(n^{-(k+1)/(2k+1)}\Big),
\end{split}
\end{equation}
for sufficiently small $\epsilon>0$.
By relying on similar steps to check terms \eqref{remainder1}--\eqref{remainder3} in the proof of Lemma \ref{successiveknots}, one can easily prove
\begin{align*}
(\tau_n^{+}-\tau_n^{-})|R_{6n}| \le (\tau_n^{+}-\tau_n^{-})^2o_p(n^{-k/(2k+1)}),
\end{align*}
since $K_0=O_p(1)$.

Now we consider $R_{5n}$.
By applying the following decomposition,
\begin{equation}\label{Did.decomposition}
\begin{split}
&\Big((D_{a,\widehat\theta_{a,-i}}(s_2)-D_{a,\widehat\theta_{a,-i}}(s_1))-(D_{a,\theta_{a,\infty}}(s_2)-D_{a,\theta_{a,\infty}}(s_1))\Big)\\
&=\Big(\frac{1}{\widehat\pi_{a,-i}(\mathbf{X})}-\frac{1}{\pi_{a,\infty}(\mathbf{X})}\Big)I(A=a)\Big(I(s_1<Y\le s_2)-(\phi_{a,\infty}(s_2|\mathbf{X})-\phi_{a,\infty}(s_1|\mathbf{X}))\Big)\\
&\indent + \Big(1-\frac{I(A=a)}{\widehat\pi_{a,-i}(\mathbf{X})}\Big)
\Big((\widehat\phi_{a,-i}(s_2|\mathbf{X})-\widehat\phi_{a,-i}(s_1|\mathbf{X})-(\phi_{a,\infty}(s_2|\mathbf{X})-\phi_{a,\infty}(s_1|\mathbf{X}))\Big),
\end{split}
\end{equation}
for $s_0\in[s_1,s_2]$, and further
by Lemma C.3 in \citet{kim2018causal} (see Lemma \ref{kkklem}) and Assumption \ref{assm:logconcv}, \ref{assm:boundedpropscore}, \ref{assm:samplesplit}, \ref{assm:L2conditionsforCI.split} (recall that the $L_2(P_*)$ norm is denoted by $\|\cdot\|$), we have
\begin{equation*}
\begin{split}
&\sqrt{n} P_*\left|(\tau_n^{+}-\tau_n^{-})^{-1}R_{5n}\right|\\
&\asymp \sqrt{N} P_*\left|(\tau_n^{+}-\tau_n^{-})^{-1}R_{5n}\right|\\
&\le \max_i P_*\left|(\tau_n^{+}-\tau_n^{-})^{-1}\GG^i_n\Big((D_{a,\widehat\theta_{a,-i}}(\tau_n^{+})-D_{a,\widehat\theta_{a,-i}}(\tau_n^{-}))-(D_{a,\theta_{a,\infty}}(\tau_n^{+})-D_{a,\theta_{a,\infty}}(\tau_n^{-}))\Big)\right|\\
&\le \max_i \Big\{ K^2\|\widehat\pi_{a,-i}(\mathbf{X})-\pi_{a,\infty}(\mathbf{X})\|\\
&\qquad \qquad \times (\tau_n^{+}-\tau_n^{-})^{-1}\left(F_a(\tau_n^+)-F_a(\tau_n^-)+\|\phi_{a,\infty}(\tau_n^{+}|\mathbf{X})-\phi_{a,\infty}(\tau_n^{-}|\mathbf{X})\|\right)\\
&\indent + (K+1)\|(\tau_n^{+}-\tau_n^{-})^{-1}(\widehat\phi_{a,-i}(\tau_n^{+}|\mathbf{X})-\widehat\phi_{a,-i}(\tau_n^{-}|\mathbf{X})-(\phi_{a,\infty}(\tau_n^{+}|\mathbf{X})-\phi_{a,\infty}(\tau_n^{-}|\mathbf{X}))\|\Big\}\\
&\indent \rightarrow 0,
\end{split}
\end{equation*}
by Cauchy-Schwarz inequality, Assumptions \ref{assm:phiregularity}, \ref{assm:pregularity}, \ref{assm:L2conditionsforCI.split}, $(\tau_n^{+}-\tau_n^{-})=o_p(1)$ and $K_0=O_p(1)$. This implies that
\begin{align*}
&(\tau_n^{+}-\tau_n^{-})|R_{5n}| = (\tau_n^{+}-\tau_n^{-})^2o_p(n^{-1/2}),
\end{align*}
which is further $(\tau_n^{+}-\tau_n^{-})^2O_p(n^{-k/(2k+1)})$.
Furthermore, since the conclusion for the other term
\begin{align*}
&\left|\int_{\Bar{\tau}}^{\tau_n^{+}} \left[(\widehat F^{K_0}_{a,n}-F_a)(y)-(\widehat F^{K_0}_{a,n}-F_a)(2\Bar{\tau}-y)\right]dy\right|\\
&\le \int_{\Bar{\tau}}^{\tau_n^{+}} \left|(\widehat F^{K_0}_{a,n}-F_a)(y)-(\widehat F^{K_0}_{a,n}-F_a)(2\Bar{\tau}-y)\right|dy\\
&\asymp \left|\left(\tau_n^{+}-\tau_n^{-}\right)\left[(\widehat F^{K_0}_{a,n}-F_a)(\tau_n^{+})-(\widehat F^{K_0}_{a,n}-F_a)(\tau_n^{-})\right]\right| 
\end{align*}
follows by the same reasoning to \eqref{pt2.decom.emp}, \eqref{reference.Remainder.2}, and the preceding derivation above,  
thus we have shown that the conclusion of Lemma \ref{successiveknots} still holds (i.e., $\tau^+_n - \tau^-_n = O_p(n^{-1/(2k+1)})$).

Next, we verify that the conclusion of Lemma \ref{lem:isotonicDID} holds (in the sample splitting setting). It suffices to check that the term  \eqref{lemmaDiD-term0}--\eqref{lemmaDiD-term1} (for the sample splitting estimator $\widehat F_{a,n}$) is $o_p(n^{-(k+1)/(2k+1)})$ (the terms \eqref{lemmaDiD-term2}--\eqref{lemmaDiD-term3}
are unchanged). Indeed, by exploiting the decomposition \eqref{Reference.decom} again, we have
\begin{align*}
&\Big[\Big(\widehat F^{K_0}_{a,n}(t')-F_{a}(t')\Big)-\Big(\widehat F^{K_0}_{a,n}(t)-F_{a}(t)\Big)\Big]\\
&\qquad=\frac{1}{K_0}\sum_{i=1}^{K_0}\left[(\PP^i_n-P_*)\left((D_{a,\widehat\theta_{a,-i}}-D_{a,\theta_{a,\infty}})(t')-(D_{a,\widehat\theta_{a,-i}}-D_{a,\theta_{a,\infty}})(t)\right)\right]\\
&\qquad+\frac{1}{K_0}\sum_{i=1}^{K_0}\left[P_*\left((D_{a,\widehat\theta_{a,-i}}-D_{a,\theta_{a,\infty}})(t')-(D_{a,\widehat\theta_{a,-i}}-D_{a,\theta_{a,\infty}})(t)\right)\right]\\
&\indent+(\PP_n-P_*)\left[D_{a,\theta_{a,\infty}}(t')-D_{a,\theta_{a,\infty}}(t)\right]\\
&:=R_{8n}+R_{9n}+R_{10n}.
\end{align*}
Handling the terms $R_{9n}$ and $R_{10n}$ is done quite similarly as in Lemma~\ref{lem:isotonicDID}; $R_{8n}$ requires some modifications.  To check $R_{10n}$, following  the steps used to prove \eqref{result.in.DiD.empterm} in the proof of Lemma \ref{lem:isotonicDID}, it can be easily shown that
\begin{align*}
 \sup_{t\in\mc W{~\rm and~}t'\in\mc I_t^{K_0}}\Big| &(\PP_n-P_*)\Big(D_{a,\theta_{a,\infty}}(t')-D_{a,\theta_{a,\infty}}(t)\Big)\Big|=o_p(n^{-(k+1)/(2k+1)}), 
\end{align*}
where $\mc W:=\mc S_n \cap \mc E_{s_0,n}$, $\mc I^{K_0}_v:=[v-\kappa^{K_0}_n,v+\kappa^{K_0}_n]$ in which $\kappa^{K_0}_n$ is defined as 
\begin{align*}
\kappa^{K_0}_n:=\sup\{|t-s|:t,s\in[L_n,U_n],s\le t, \widehat F^{K_0}_{a,n}(t)\le \widehat F^{K_0}_{a,n}(s)\},   
\end{align*}
for any $v\in I_{s_0,\omega}$, and $I_{s_0,\omega}$, $\mc E_{s_0,n}$ are defined in Assumption \ref{assm:pregularity}, Lemma \ref{lem:isotonicDID}, respectively.
For $R_{9n}$, one can check the following,
\begin{equation}\label{R12n.reference}
\begin{split}
&\sup_{t\in\mc W{~\rm and~}t'\in\mc I_t^{K_0}}\Bigg|\frac{1}{K_0}\sum_{i=1}^{K_0}\left[P_*\left\{(D_{a,\widehat\theta_{a,-i}}-D_{a,\theta_{a,\infty}})(t')-(D_{a,\widehat\theta_{a,-i}}-D_{a,\theta_{a,\infty}})(t)\right\}\right]\Bigg|\\
&\qquad =o_p(n^{-(k+1)/(2k+1)}),
\end{split}
\end{equation}
by similar reasoning to \eqref{result.in.DiD.remainder} in the proof of Lemma \ref{lem:isotonicDID}, since $K_0=O_p(1)$.

We now check $R_{8n}$. Indeed, from the proof of Lemma \ref{lem:isotonicDID} (see \eqref{result.in.DiD.empterm}) it suffices to show
\begin{align*}
\sup_{t\in\mc W{~\rm and~}t'\in\mc I_t^{K_0}}&\Bigg|\frac{1}{K_0}\sum_{i=1}^{K_0}\left[(\PP^i_n-P_*)\left\{(D_{a,\widehat\theta_{a,-i}}-D_{a,\theta_{a,\infty}})(t')-(D_{a,\widehat\theta_{a,-i}}-D_{a,\theta_{a,\infty}})(t)\right\}\right]\Bigg|\\
&=o_p(n^{-(k+1)/(2k+1)}).
\end{align*}
Multiplying by $\sqrt{n}$, this is equivalent to showing (since $K_0=O_p(1)$) that
\begin{align*}
\sup_{t\in\mc W{~\rm and~}t'\in\mc I_t^{K_0}}&\Bigg|\frac{1}{K_0}\sum_{i=1}^{K_0}\left[\GG_n^i\left\{(D_{a,\widehat\theta_{a,-i}}-D_{a,\theta_{a,\infty}})(t')-(D_{a,\widehat\theta_{a,-i}}-D_{a,\theta_{a,\infty}})(t)\right\}\right]\Bigg|\\
&=o_p(n^{-1/2(2k+1)}).
\end{align*}
Indeed, by the tower property of expectation, 
\begin{align*}
\EE\left[\sup_{f\in \mc F^1_{n,-i}} \left|\mathbb{G}^i_n f\right|\right]=\EE\left\{\EE\left[\sup_{f\in \mc F^1_{n,-i}} \left|\mathbb{G}^i_n f\right|\Big| \mc T_{n,i}\right]\right\},
\end{align*}
where
\begin{align*}
\mc F^1_{n,-i}:=\Bigg\{\Big(D_{a,\widehat\theta_{a,-i}}(t')-D_{a,\theta_{a,\infty}}(t')&-(D_{a,\widehat\theta_{a,-i}}(t)-D_{a,\theta_{a,\infty}}(t)\Big)\\
&:t\in[s_0-Tv_n,s_0+Tv_n],\, |t'-t|\le \kappa_n \Bigg\},    
\end{align*}
which admits an envelope function
\begin{align}\label{envlope.function2}
&H^1_{n,i}(\mathbf{z})\\
&=2K^2\sup_{(t_1,t_2)\in \mc W^1}\Bigg\{ \left|\widehat\pi_{a,-i}(\mathbf{x})-\pi_{a,\infty}(\mathbf{x})\right|\nonumber\\
&\qquad \qquad\qquad\qquad \indent\times I(A=a)\left|I(t_1<Y\le t_2)-(\phi_{a,\infty}(t_2|\mathbf{X})-\phi_{a,\infty}(t_1|\mathbf{X}))\right|\nonumber\\
&\qquad \qquad\qquad\qquad + \left|\left(\widehat\phi_{a,-i}(t_2|\mathbf{X})-\widehat\phi_{a,-i}(t_2|\mathbf{X})\right)-\left(\phi_{a,\infty}(t_1|\mathbf{X})-\phi_{a,\infty}(t_1|\mathbf{X})\right)\right|\Bigg\}\nonumber,
\end{align}
where $\mc W^1:\{(x,y):|y-s_0|\le Tv_n,\,|y-x|\le \kappa_n^{K_0}\}$.
By Theorem 2.14.1 in \citet{vdvandW} (see Proposition \ref{prop:vdv.2.14.1}), for sufficiently large $n$, we have
\begin{align*}
\EE\left[n^{1/2(2k+1)}\sup_{f\in \mc F^1_{n,-i}}\left|\mathbb{G}^i_n f\right|\Big| \mc T_{n,i}\right]\le C_*n^{1/2(2k+1)}\|H^1_{n,i}\|_{P_*,2}J(1,\mc F^1_{n,-i}).    
\end{align*}
Similarly to $J(1,\mc F_{n-i})$ in the proof of Theorem \ref{prop:Cons-split}, since here  $\{ Y \mapsto I(t_1<Y \le t_2) : (t_1,t_2)\in \mc W^1 \}$ is VC, we have uniformly bounded $J(1,\mc F^1_{n,-i})$ for all $n$ and $-i$. Furthermore, we have
\begin{align*}
&n^{1/2(2k+1)}\max _i \|H^1_{n,i}\| _{P_*,2}\\
&\lesssim \max_{i}\Bigg\{ \| \widehat{\pi}_{a,-i}(\mathbf{X})-{\pi}_{a,\infty}(\mathbf{X})\|_{P_*,2}\\
&\qquad \times n^{1/2(2k+1)}\left\|\sup_{(t_1,t_2)\in \mc W^1}\Big[(\phi_{a,*}(t_2|\mathbf{X})-\phi_{a,*}(t_1|\mathbf{X}))-(\phi_{a,\infty}(t_2|\mathbf{X})-\phi_{a,\infty}(t_1|\mathbf{X}))\Big]\right\|_{P_*,2}\\
&\indent + n^{1/2(2k+1)}  \left\|\sup_{(t_1,t_2)\in \mc W^1}\Big[(\widehat\phi_{a,-i}(t_2|\mathbf{X})-\widehat\phi_{a,-i}(t_2|\mathbf{X}))-(\phi_{a,\infty}(t_1|\mathbf{X})-\phi_{a,\infty}(t_1|\mathbf{X}))\Big]\right\|_{P_*,2}\Bigg\}\\
&\qquad \rightarrow0,
\end{align*}
since Assumption \ref{assm:L2conditionsforCI.split} holds, and one can check $\kappa_n^{K_0}=O_p(n^{-(k+1)/(2k+1)})$ following the same reasoning in the proof of Lemma \ref{lem:monotone.correction.ext}. Hence we have shown that the conclusion of Lemma \ref{lem:isotonicDID} continues to hold in the sample splitting setting.

Lastly, we check that the term $\widehat G^{\rm locmod}_{a,n}$ has the same asymptotics as in the non sample splitting case. Since it was already shown that
\begin{align*}
\widehat G^{\rm locmod}_{a,n}(t)&=\frac{r_n}{p_a(s_0)}\int_{s_0}^{s_n(t)}\Big((\widehat F^{K_0}_{a,n}(v)-\widehat F^{K_0}_{a,n}(s_0))-(F_a(v)-F_a(s_0))\Big)dv \\
&\indent +\frac{\varphi_a^{(k)}(s_0)}{(k+2)!}t^{k+2}+o_p(1),
\end{align*}
from \eqref{Glocmod} and the sentence (and display) that follows,  again by the decomposition \eqref{Reference.decom} we have
\begin{align*}
&\int_{s_0}^{s_n(t)}\Big((\widehat F^{K_0}_{a,n}(v)-\widehat F^{K_0}_{a,n}(s_0))-(F_a(v)-F_a(s_0))\Big)dv\\   
&=\frac{1}{K_0}\sum_{i=1}^{K_0}\int_{s_0}^{s_n(t)}\left[(\PP^i_n-P_*)\left\{\left(D_{a,\widehat\theta_{a,-i}}(v)-D_{a,\theta_{a,\infty}}(v)\right)-\left(D_{a,\widehat\theta_{a,-i}}(s_0)-D_{a,\theta_{a,\infty}}(s_0)\right)\right\}\right]dv\\
&\indent + \frac{1}{K_0}\sum_{i=1}^{K_0}\int_{s_0}^{s_n(t)}\left[P_*\left\{\left(D_{a,\widehat\theta_{a,-i}}(v)-D_{a,\theta_{a,\infty}}(v)\right)-\left(D_{a,\widehat\theta_{a,-i}}(s_0)-D_{a,\theta_{a,\infty}}(s_0)\right)\right\}\right]dv\\
&\indent + \frac{r_n}{p_a(s_0)}\int_{s_0}^{s_n(t)} (\PP_n-P_*)\left[D_{a,\theta_{a,\infty}}(v)-D_{a,\theta_{a,\infty}}(s_0)\right]dv\\
&:=R_{11n}+R_{12n}+R_{13n}.
\end{align*}
We already verified the limiting behavior of $R_{13n}$ in the proof of Lemma \ref{lem:finallemmaforCI}. For the term $R_{12n}$, we can see
\begin{equation}
\begin{split}
\sup_{v\in[s_0-Tv_n+s_0+Tv_n]}&\Bigg|\frac{1}{K_0}\sum_{i=1}^{K_0}\left[P_*\left\{\left((D_{a,\widehat\theta_{a,-i}}-D_{a,\theta_{a,\infty}})(t')-(D_{a,\widehat\theta_{a,-i}}-D_{a,\theta_{a,\infty}})(t)\right)\right.\right.\\
&\qquad \indent \left.\left. -\left((D_{a,\widehat\theta_{a,-i}}-D_{a,\theta_{a,\infty}})(s')-(D_{a,\widehat\theta_{a,-i}}-D_{a,\theta_{a,\infty}})(s)\right)\right\}\right]\Bigg|\\
&\qquad =o_p(n^{-(k+1)/(2k+1)})
\end{split}
\end{equation}
by an identical argument to that leading to \eqref{R12n.reference},
since $v_n=n^{-1/(2k+1)}$.  Thus, it suffices to show that 
\begin{align*}
\sup_{v\in[s_0-Tv_n+s_0+Tv_n]}\left|R_{11n}\right| =o_p(n^{-(k+1)/(2k+1)}).   
\end{align*}
The same argument we used above to show $R_{8n} = o_p(n^{-(k+1)/(2k+1)})$ can be applied here to $R_{11n}$, except that we use $v_n$ instead of $\kappa_n^{K_0}$ for the width of the shrinking interval.  This concludes the proof.
\end{proof}

\section{Further results and materials for the simulations}
\subsection{Explicit forms of counterfactual density $p_a$ in Section \ref{Section:Simul}}\label{section:pas}
Here we give the exact density, $ir_4(\cdot)$, of the sum of four i.i.d. standard uniform random variables as follows.
\begin{align}
ir_4(x)=\begin{cases}
    0 & x<0\\
    \frac{x^3}{6} & 0\le x<1\\
\frac{1}{6} \left(-3x^3+12x^2-12x+4\right)& 1\le x<2 \\
\frac{1}{6}\left(3x^3-24x^2+60x-44\right) & 2\le x<3\\
 \frac{(4-x)^3}{6} & 3\le x<4 \\
 0 & x\ge 4.
\end{cases}
\end{align}
Then, $p_1$ and $p_0$ are given by
\begin{align}
& p_1(y)=I_{(4,12)}(y)\int_{(y-4)/2}^{4} \frac{1}{2x} ir_4(x)dx ,\label{p1form}\\
& p_0(y)=I_{(0,8)}(y)\int_{(8-y)/2}^{4} \frac{1}{2x} ir_4(x)dx\label{p0form},
\end{align}
hence, the support of $p_1$ and $p_0$ are $(4,12)$ and $(0,8)$, respectively. Furthermore, we have, for a random variable $S$ which follows the density $ir_4$,
\begin{equation}\label{mean.p1andp0}
\begin{split}
\EE\left[Y^1\right]=\EE\left[S+4\right]=6,\\
\EE\left[Y^0\right]=\EE\left[8-S\right]=6.
\end{split}   
\end{equation}

\subsection{Estimating nuisance functions}\label{subsec:est.nuisances}
To construct a well-specified estimator  $\widehat\pi_a$, we use a correctly specified logistic regression model. 
To construct a mis-specified $\widehat\pi_a$, we omit $(X_{i1},X_{i3})$ from the regression model.
To construct a well-specified $\widehat\phi_1$, we first estimate a correctly specified linear regressions of $Y_i$ on $\mathbf{X}_i$ among $i$ with $A_i = 1$, and we then set  $\hat\mu_1(\mathbf{x})$ equal to the maximum of the prediction from this regression at $\mathbf{x}$ and 4 for any $\mathbf{x}$. We then set $\widehat\phi_1(\cdot | \mathbf{x})$ as the CDF of $U[4,2\hat\mu_1(\mathbf{x})-4]$. Similarly, we construct a well-specified $\widehat\phi_0$ by estimating a linear regression of $Y_i$ on $\mathbf{X}_i$ among $i$ with $A_i = 0$, setting  $\hat\mu_0(\mathbf{x})$ equal to the minimum of the prediction from this regression at $\mathbf{x}$ and 8, and setting $\widehat\phi_0(\cdot | \mathbf{x})$ as the CDF of $U[2\hat\mu_0(\mathbf{x})-8,8]$.  To construct mis-specified $\widehat\phi_a$, we use the same procedure as above, but omit $(X_{i1},X_{i4})$ from the linear regression steps.

\subsection{Details for constructing projection estimator by \citet{KBWcounterfactualdensity}}\label{subsec:est.KBW}
To facilitate a fair comparison with our estimators, we estimate $\hat\mu_a$ in \citet{KBWcounterfactualdensity} with $\int b(t) \widehat\phi_a(t|\mathbf{x})dt$ for $\widehat\phi_a$ defined above for each basis function $b$. 
We truncate negative density values to $0$ and normalized the function to integrate to $1$ on its support $[\min_i\{Y_i\},\max_i\{Y_i\}]$. The method also requires selecting the number of basis functions to use.
We select this tuning parameter in an oracle fashion: we select the number of basis functions 
as the one among the set $\{2,3,\ldots,30\}$ that achieved the lowest average $L_1$ distance between the estimated and true counterfactual densities for each sample size $n$.

\subsection{Generating quantiles used in difference/log-ratio CI's}\label{subsec:quantile.generation}
Note that $k=2$ in our simulations.
We use the same scaling factor estimates as those employed in constructing the pointwise CI's.
Here we outline the exact procedure for obtaining the upper $1-\alpha$ quantiles used in the difference CI's (see \eqref{def:contrast.CI} and \eqref{def:contrast.CI.split}) as well as log-ratio CI's (see \eqref{def:ratio.CI} and \eqref{def:ratio.CI.split}).
The methodology follows the approach described in Appendix B.1 of \citet{deng2022inference}. 
Specifically, we simulate $\LL_2^{(0)}$ using convex regression, with the true function $f_0(x)=12(x-1/2)^2$ and the target point set to $x_0=1/2$.
The design points are evenly spaced as $x_i=i/n$ for $1\leq i \leq n$, with $n=10^6$.
We conducted $B=2\times 10^4$ independent replications to sample $\LL_2^{(0)}$.
And, we split the samples equally into two subsets: the first half for $\LL_{12}^{(0)}$ and the second half for $\LL_{02}^{(0)}$, each with a length of $B/2$.
This fixed pair $\LL_{12}^{(0)}$ and $\LL_{02}^{(0)}$ was used consistently across all simulations.
With the estimated scaling factors, we then numerically compute a vector of length $B/2$, given by:
\begin{align*}
|\sqrt{\hat \chi_{\theta_{0}}/(n\Delta\tau_{0,n})}\LL_{02}^{(0)}-\sqrt{\hat \chi_{\theta_{1}}/(n\Delta\tau_{1,n})}\LL_{12}^{(0)}|,
\end{align*}
or
\begin{align*}
|\sqrt{\hat \chi^{K_0}_{\theta_{0}}/(n\Delta\tau^{K_0}_{0,n})}\LL_{02}^{(0)}-\sqrt{\hat \chi^{K_0}_{\theta_{1}}/(n\Delta\tau^{K_0}_{1,n})}\LL_{12}^{(0)}|,
\end{align*}
for each method with and without sample splitting.
Finally the upper $1-\alpha$ quantile of this computed vector is used for each difference CI's.
The log-ratio CI's can be constructed by using upper $1-\alpha$ quantile of 
\begin{align*}
\left|\left(\sqrt{\hat \chi_{\theta_{0}}/(n\Delta\tau_{0,n})}/\widehat p_{0,n}(s_0)\right)\LL_{02}^{(0)}-\left(\sqrt{\hat \chi_{\theta_{1}}/(n\Delta\tau_{1,n})}/\widehat p_{1,n}(s_0)\right)\LL_{12}^{(0)}\right|,
\end{align*}
or
\begin{align*}
\left|\left(\sqrt{\hat \chi^{K_0}_{\theta_{0}}/(n\Delta\tau^{K_0}_{0,n})}/\widehat p_{0,n}^{K_0}(s_0)\right)\LL_{02}^{(0)}-\left(\sqrt{\hat \chi^{K_0}_{\theta_{1}}/(n\Delta\tau^{K_0}_{1,n})}/\widehat p_{1,n}^{K_0}(s_0)\right)\LL_{12}^{(0)}\right|.
\end{align*}

\subsection{Table for projection method in Section \ref{Section:Simul}}\label{appd:basis.table}
\begin{table}[h!]
\centering
\resizebox{\columnwidth}{!}{%
\begin{tabular}{ |p{2cm}||p{2cm}|p{2cm}|p{2cm}||p{2cm}|p{2cm}|p{2cm}|  }
 \hline
 \multicolumn{7}{|c|}{Number of basis functions selected by the projection basis method} \\
 \hline
 Sample size& Case 1; $p_1$ & Case 2; $p_1$ &Case 3; $p_1$ & Case 1; $p_0$ & Case 2; $p_0$ &Case 3; $p_0$\\
 \hline
 $n=500$ & 8 & 9 & 8 & 10 & 10 & 10\\
 $n=1000$ & 13 & 13 & 13 & 14 & 14 & 14\\
 $n=2500$ & 19 & 19 & 19 & 21 & 21 & 21\\
 $n=4000$ & 26 & 24 & 23 & 27 & 28 & 28\\
 $n=6000$ & 30 & 29 & 30 & 30 & 30 & 30\\
 $n=8000$ & 30 & 30 & 30 & 30 & 30 & 30\\
 \hline
\end{tabular}
}
\caption{The above table exhibits the selected (oracle) number of basis functions for the projection basis method by \citet{KBWcounterfactualdensity} for each setting.}
\label{table:basis.number}
\end{table}

\clearpage

\subsection{Tuning parameter selection in Section \ref{Section:Simul}}\label{appd:tuning.param.section}
For the 95\% CI construction, we considered the tuning parameter selection from the set $b\in\{1/25,1/20,1/15,1/10,1/5\}$ for scalar factor estimation procedure (see Section \ref{subsec:tuning.param}). 
When both nuisance functions are well-specified (Case 1), since the true value of $\chi_{\theta_a}$ is available through the true limit nuisance functions, we also compute the oracle 95\% confidence intervals with the true scalar factor.

In Figure \ref{fig:kplots.case.1.p1}--\ref{fig:kplots.case.3.p0} below, we display the coverage probabilities from our 95\% pointwise confidence intervals constructed from \eqref{chi.theta.hat}, \eqref{chi.theta.hat.split} for each original and cross-fitted estimator with aforementioned five different tuning parameters.
In general, within our estimators, coverage probabilities in each tail for $p_1,\,p_0$  where each mode locates are relatively high compared with the other points for all $n$ values.
However, as $n$ grows, the high coverage tendency reduces to relatively low levels.
The oracle estimator's coverage was the highest among all candidates, in general.
Among the possible tuning parameters in the set $\{1/25,1/20,1/15,1/10,1/5\}$, $b=1/5$ performed relatively worse than the others at the other side of the tail where each density is close to $0$ in low sample sizes such as $n=500,1000$. 
As discussed in Section \ref{Section:Simul}, conservative coverage of oracle tuned CI attributed to the discovery in \citet{deng2022inference} (see Figure \ref{fig:kplots.case.1.p1}, \ref{fig:kplots.case.1.p0}).
We suggest $b=1/10$ for our tuning parameter selection.

\begin{figure}
    \centering
    \begin{subfigure}[$p_1$; C1; $n=500$]{\label{500WW1}\includegraphics[width=50mm]{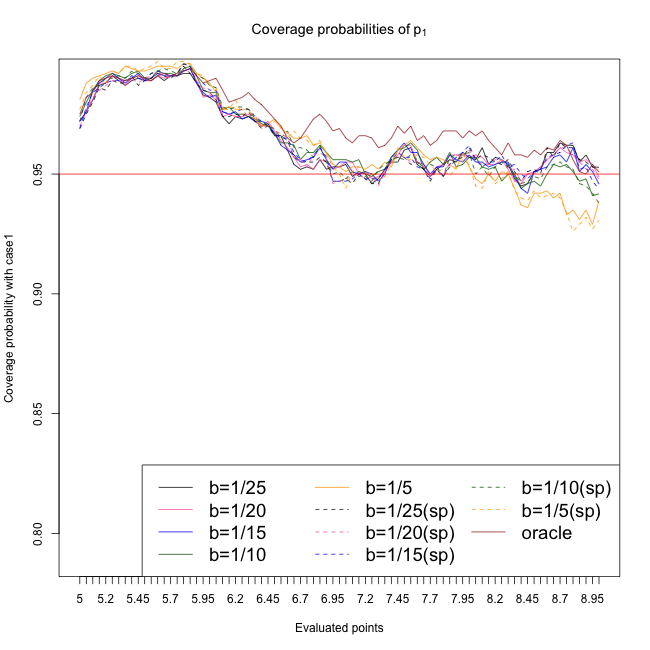}}
   \end{subfigure}
    \hspace{0.45cm}
    \begin{subfigure}[$p_1$; C1; $n=1000$]
    {\label{1000WW1}\includegraphics[width=50mm]{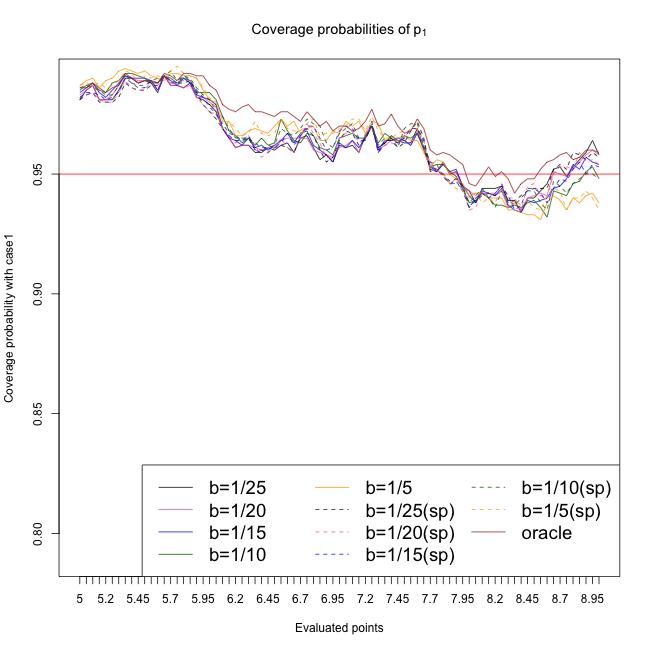}}
    \end{subfigure}\\
    \begin{subfigure}[$p_1$; C1; $n=2500$]
    {\label{2500WW1}\includegraphics[width=50mm]{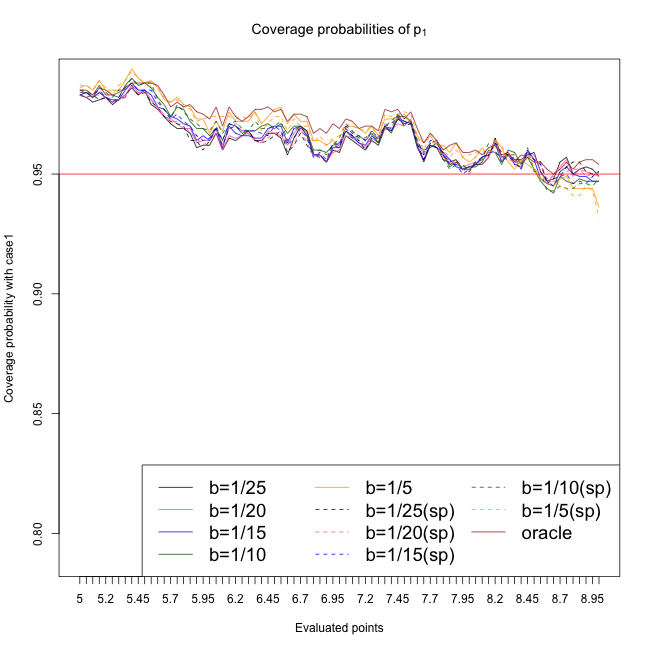}}
    \end{subfigure}
    \hspace{0.45cm}
    \begin{subfigure}[$p_1$; C1; $n=4000$]{\label{4000WW1}\includegraphics[width=50mm]{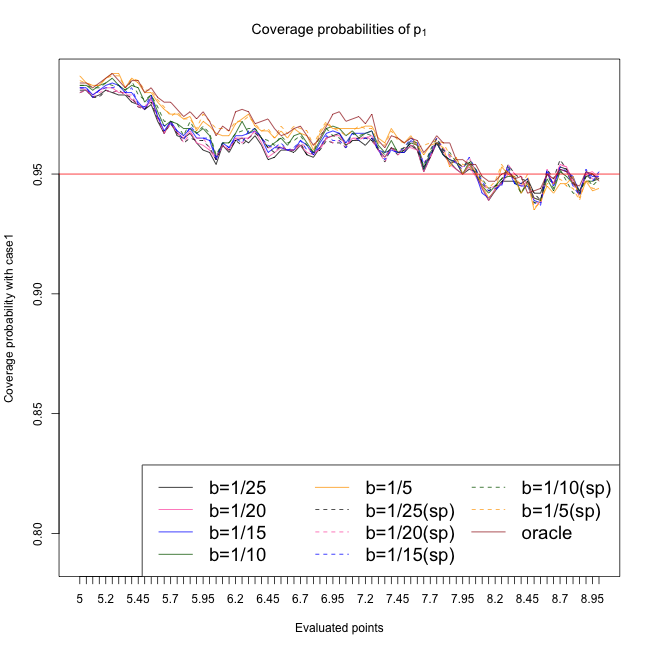}}
   \end{subfigure}\\
    \begin{subfigure}[$p_1$; C1; $n=6000$]
    {\label{6000WW1}\includegraphics[width=50mm]{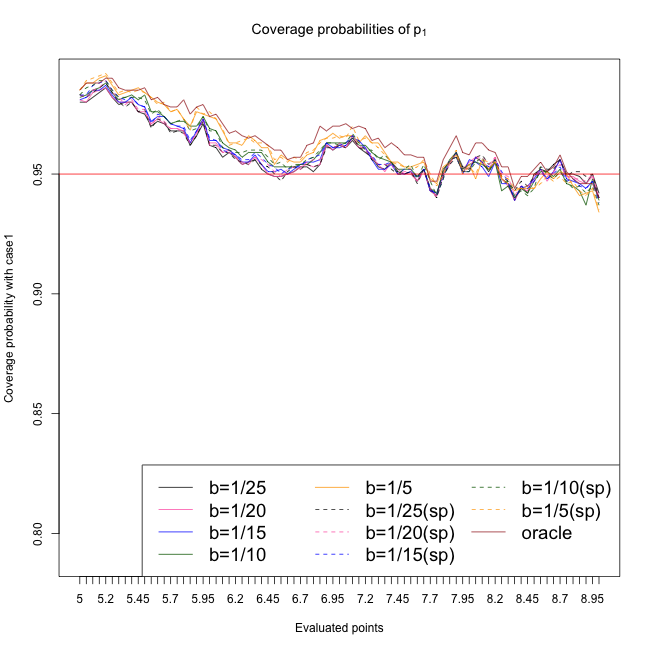}}
    \end{subfigure}
    \hspace{0.45cm}
    \begin{subfigure}[$p_1$; C1; $n=8000$]
    {\label{8000WW1}\includegraphics[width=50mm]{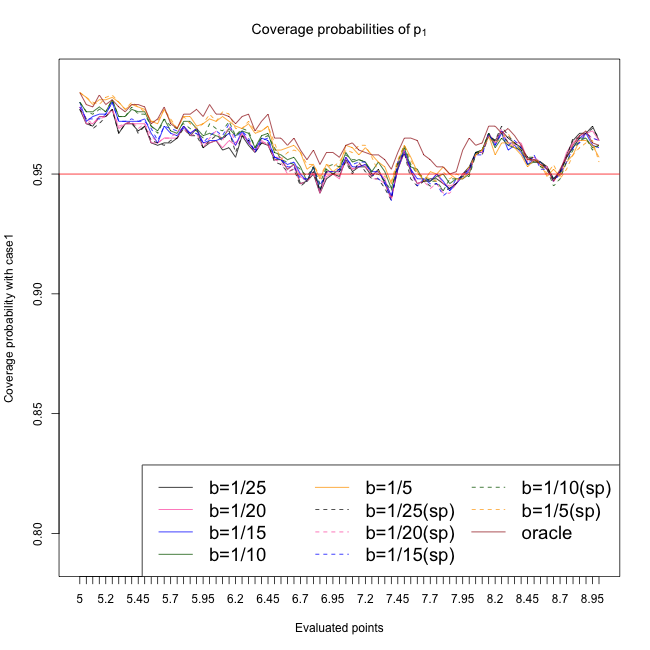}}
    \end{subfigure}
    \caption{\label{fig:kplots.case.1.p1}The above displays are coverage probabilities of our proposed log-concave projection estimator's 95\% CI's (labeled as $b=1/m$ where each $b$ means that we used $h_n=n^{-b}$ for approximating $\chi_{\theta_a}$ (see Section \ref{subsec:tuning.param}).
    For the Case 1 where both nuisance functions are well-specified, we also use true value of $\chi_{\theta_a}$ to construct the oracle 95\% CI which is labeled as oracle in the displays.
    ``sp" stands for our sample splitting based estimator (see \eqref{crossfitted.onestep}).
    The coverage probabilities are measured in 81 equally spaced points in each domain. 
    Each subcaption describes the estimation target ($p_1$ or $p_0$), the sample size, and each case of nuisance estimations (Case 1, 2, or 3 abbreviated to C1, C2, and C3, respectively).}
\end{figure}

\begin{figure}
    \centering
    \begin{subfigure}[$p_1$; C2; $n=500$]{\label{500WM1}\includegraphics[width=50mm]{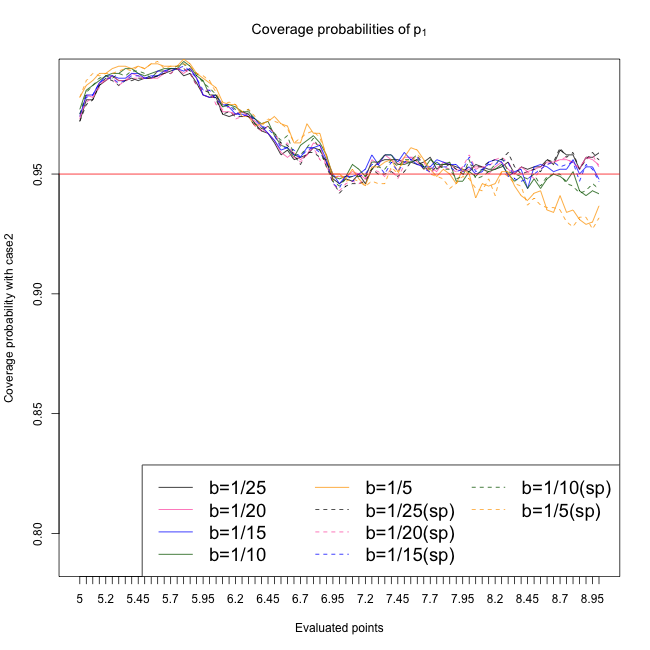}}
   \end{subfigure}
    \hspace{0.45cm}
    \begin{subfigure}[$p_1$; C2; $n=1000$]
    {\label{1000WM1}\includegraphics[width=50mm]{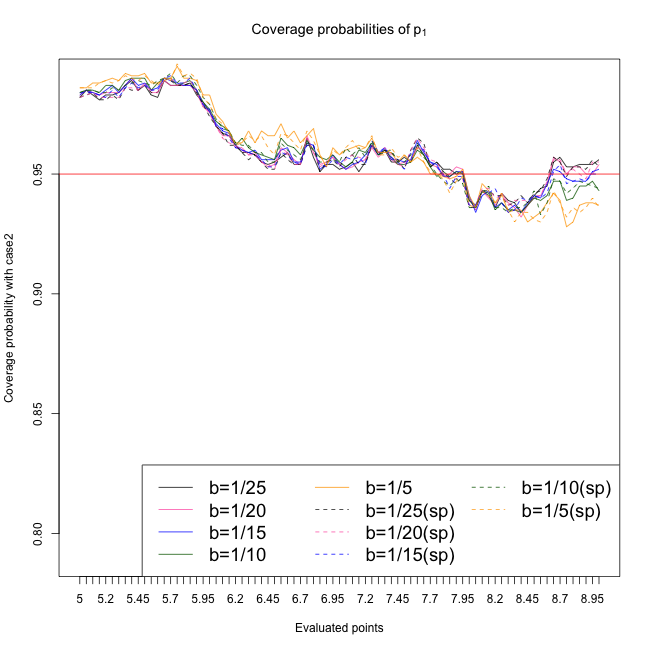}}
    \end{subfigure}\\
    \begin{subfigure}[$p_1$; C2; $n=2500$]
    {\label{2500WM1}\includegraphics[width=50mm]{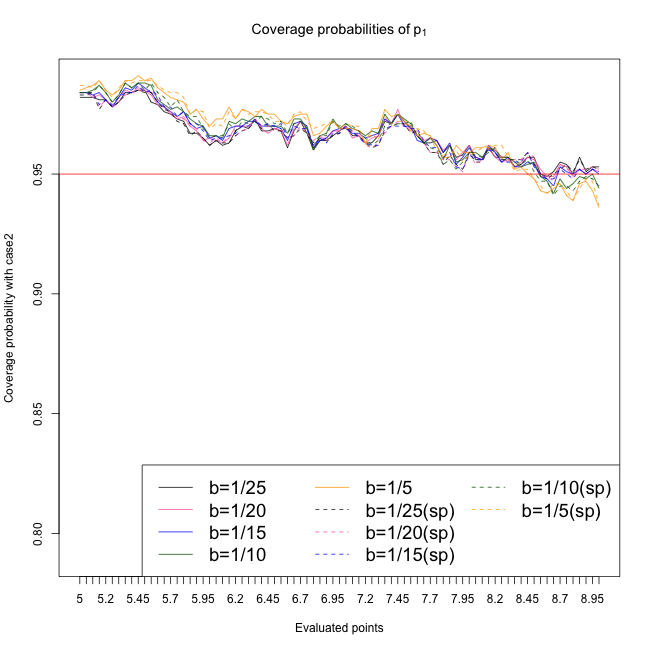}}
    \end{subfigure}
    \hspace{0.45cm}
    \begin{subfigure}[$p_1$; C2; $n=4000$]{\label{4000WM1}\includegraphics[width=50mm]{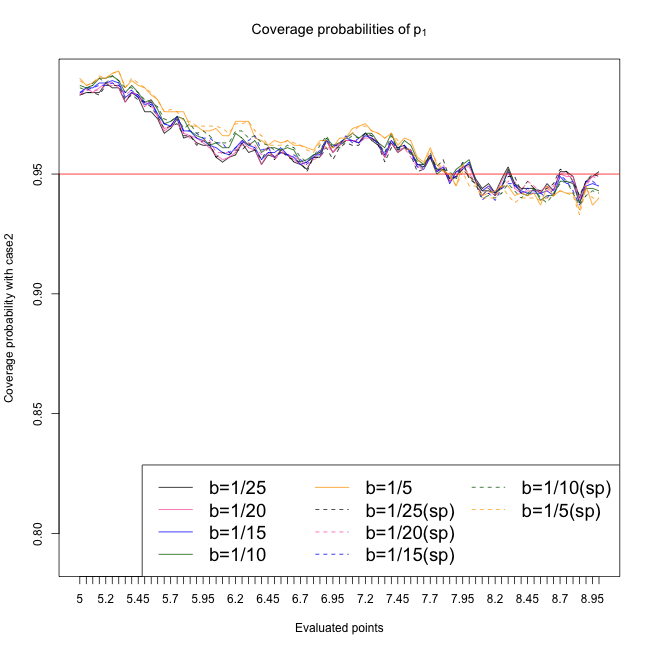}}
   \end{subfigure}\\
    \begin{subfigure}[$p_1$; C2; $n=6000$]
    {\label{6000WM1}\includegraphics[width=50mm]{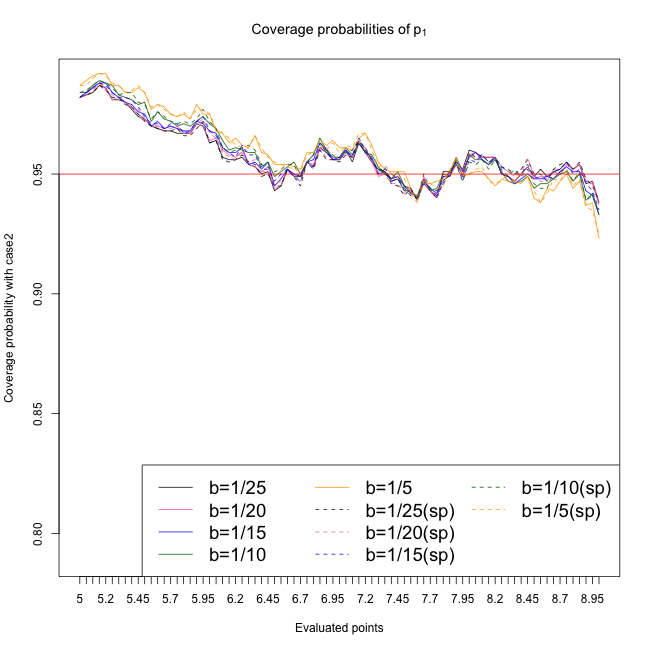}}
    \end{subfigure}
    \hspace{0.45cm}
    \begin{subfigure}[$p_1$; C2; $n=8000$]
    {\label{8000WM1}\includegraphics[width=50mm]{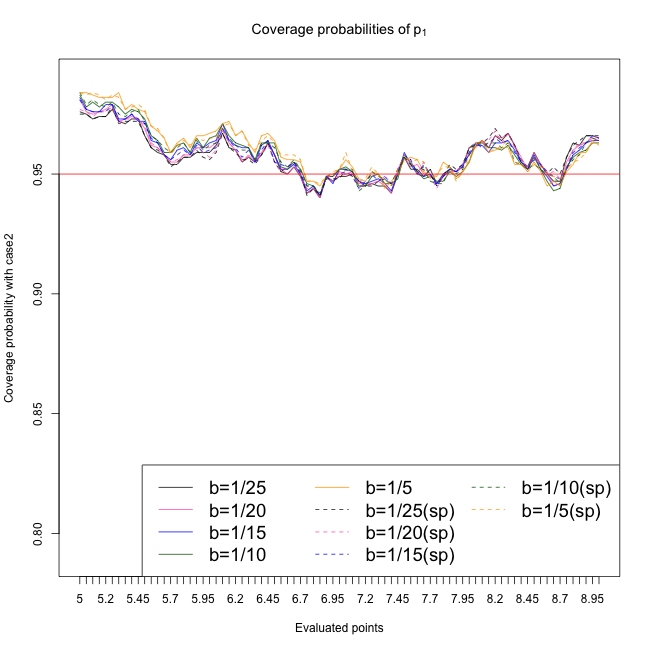}}
    \end{subfigure}
    \caption{\label{fig:kplots.case.2.p1} Notational details can be found in Figure \ref{fig:kplots.case.1.p1}.}
\end{figure}

\begin{figure}
    \centering
    \begin{subfigure}[$p_1$; C3; $n=500$]{\label{500MW1}\includegraphics[width=50mm]{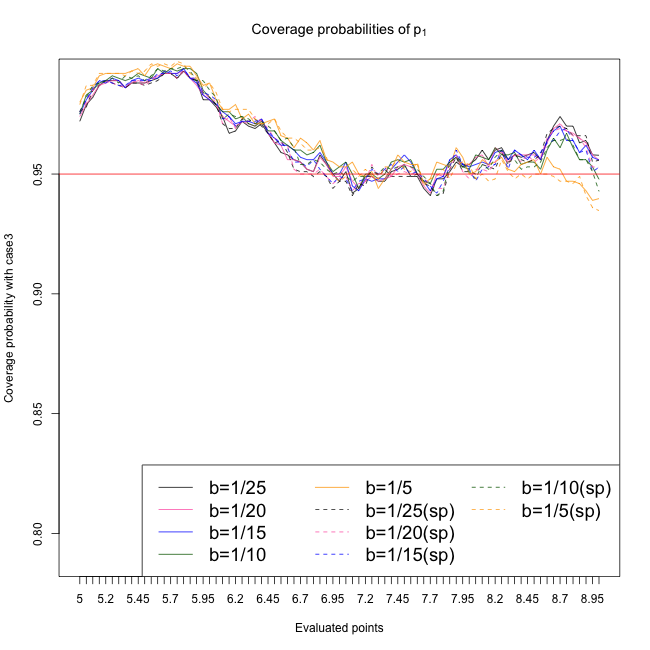}}
   \end{subfigure}
    \hspace{0.45cm}
    \begin{subfigure}[$p_1$; C3; $n=1000$]
    {\label{1000MW1}\includegraphics[width=50mm]{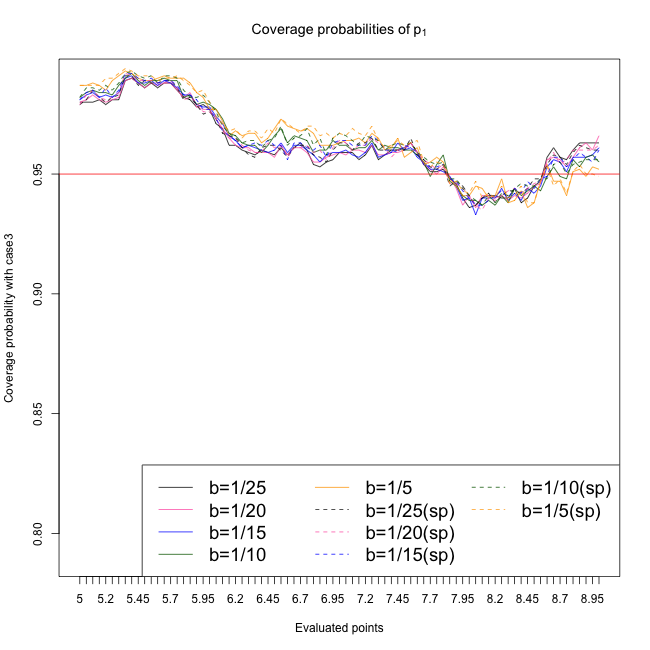}}
    \end{subfigure}\\
    \begin{subfigure}[$p_1$; C3; $n=2500$]
    {\label{2500MW1}\includegraphics[width=50mm]{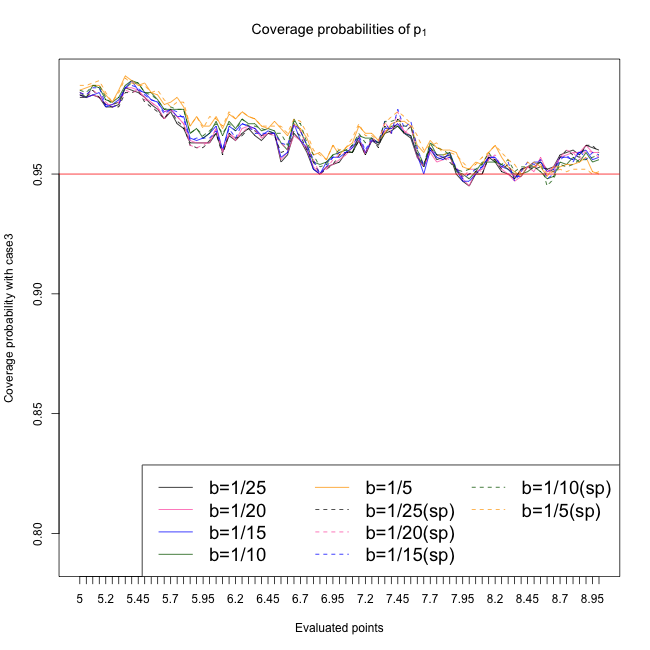}}
    \end{subfigure}
    \hspace{0.45cm}
    \begin{subfigure}[$p_1$; C3; $n=4000$]{\label{4000MW1}\includegraphics[width=50mm]{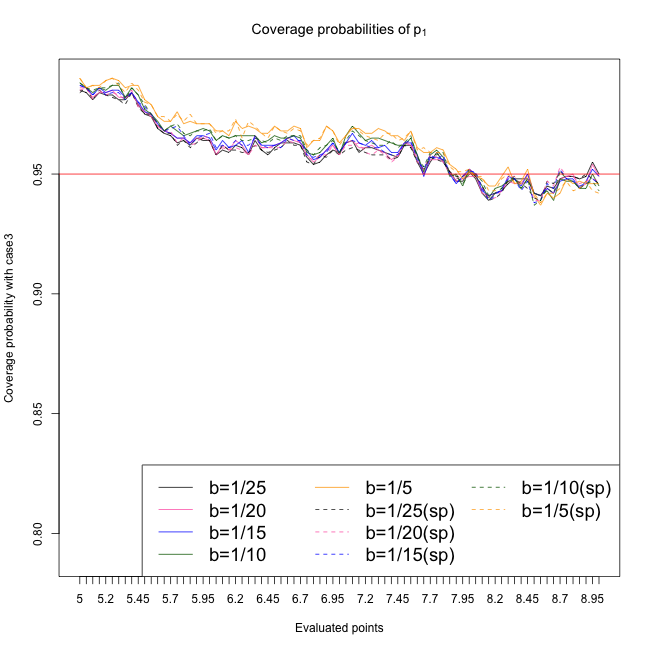}}
   \end{subfigure}\\
    \begin{subfigure}[$p_1$; C3; $n=6000$]
    {\label{6000MW1}\includegraphics[width=50mm]{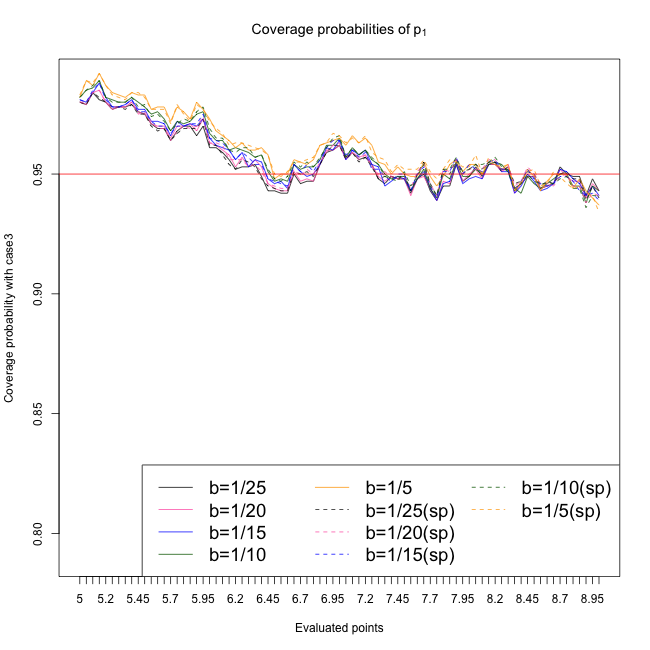}}
    \end{subfigure}
    \hspace{0.45cm}
    \begin{subfigure}[$p_1$; C3; $n=8000$]
    {\label{8000MW1}\includegraphics[width=50mm]{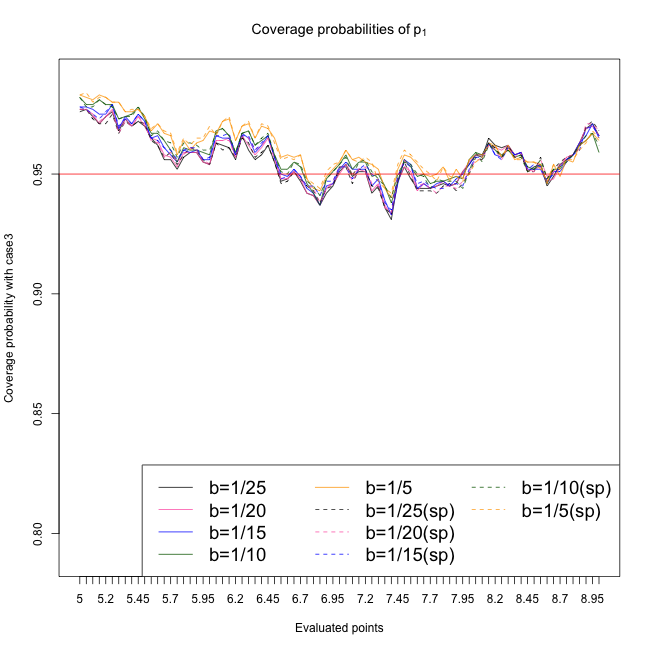}}
    \end{subfigure}
    \caption{\label{fig:kplots.case.3.p1}  Notational details can be found in Figure \ref{fig:kplots.case.1.p1}.}
\end{figure}

\begin{figure}
    \centering
    \begin{subfigure}[$p_0$; C1; $n=500$]{\label{500WW0}\includegraphics[width=50mm]{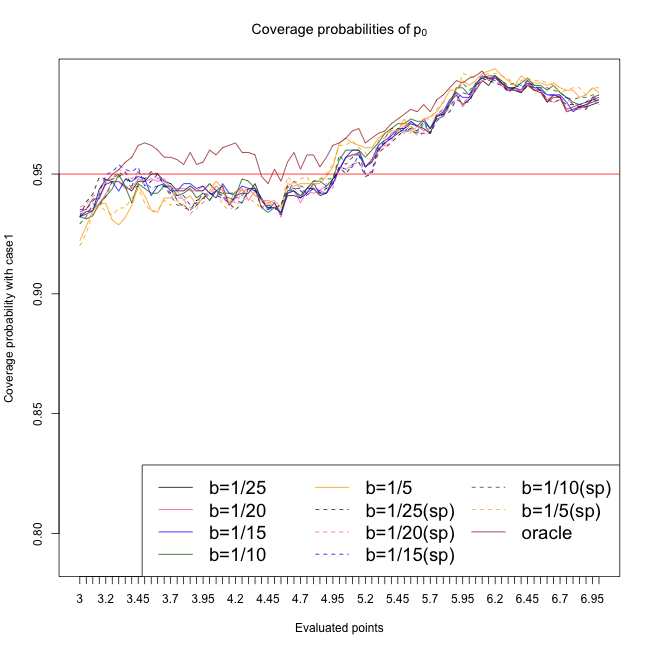}}
   \end{subfigure}
    \hspace{0.45cm}
    \begin{subfigure}[$p_0$; C1; $n=1000$]
    {\label{1000WW0}\includegraphics[width=50mm]{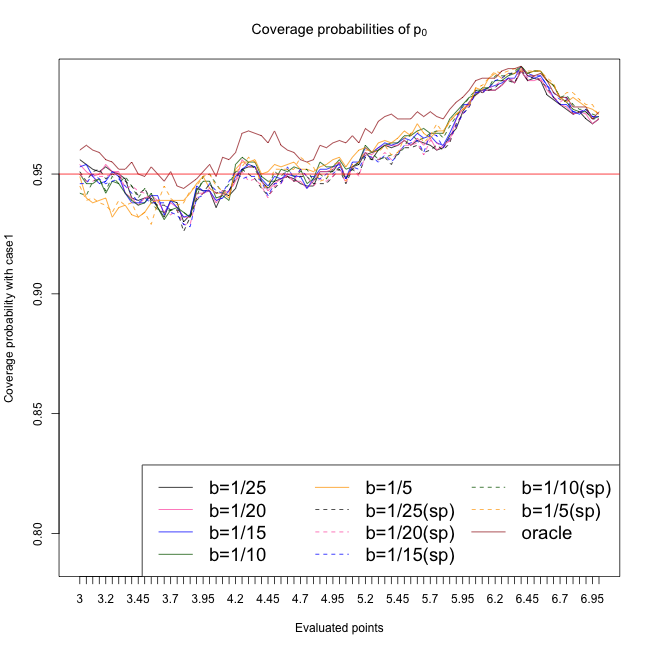}}
    \end{subfigure}\\
    \begin{subfigure}[$p_0$; C1; $n=2500$]
    {\label{2500WW0}\includegraphics[width=50mm]{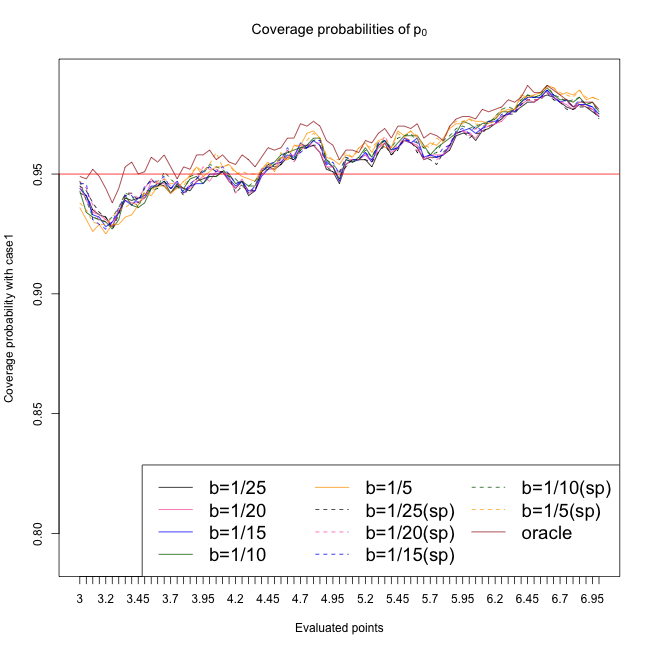}}
    \end{subfigure}
    \hspace{0.45cm}    
    \begin{subfigure}[$p_0$; C1; $n=4000$]{\label{4000WW0}\includegraphics[width=50mm]{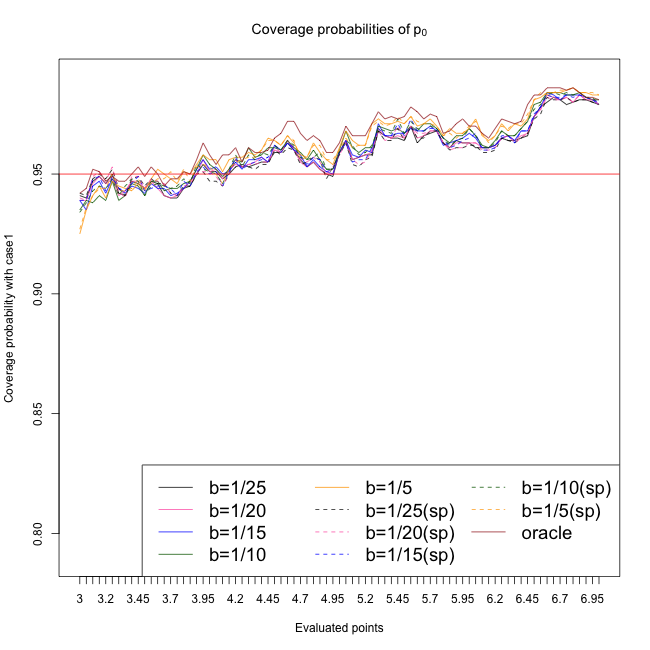}}
   \end{subfigure}\\
    \begin{subfigure}[$p_0$; C1; $n=6000$]
    {\label{6000WW0}\includegraphics[width=50mm]{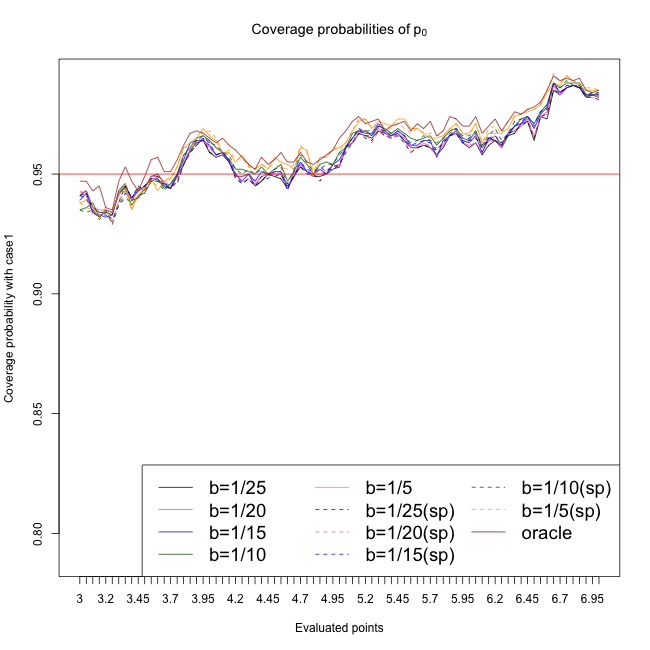}}
    \end{subfigure}
    \hspace{0.45cm}
    \begin{subfigure}[$p_0$; C1; $n=8000$]
    {\label{8000WW0}\includegraphics[width=50mm]{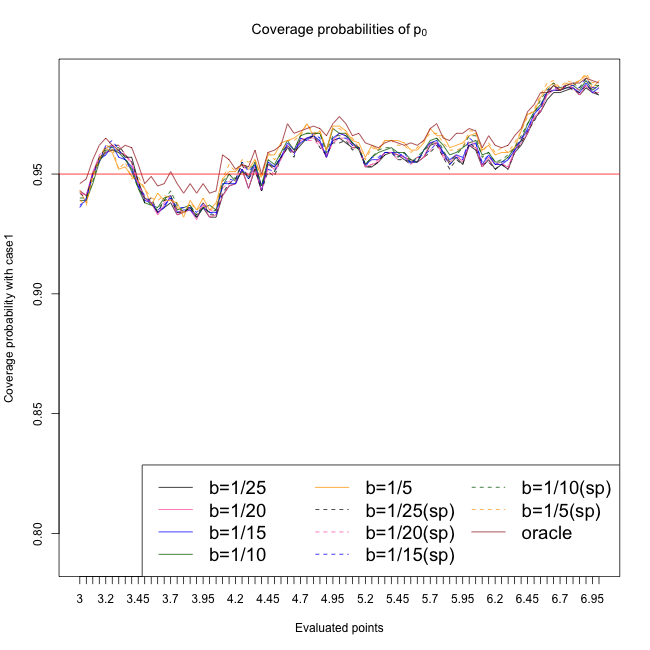}}
    \end{subfigure}
    \caption{\label{fig:kplots.case.1.p0} Notational details can be found in Figure \ref{fig:kplots.case.1.p1}.}
\end{figure}

\begin{figure}
    \centering
    \begin{subfigure}[$p_0$; C2; $n=500$]{\label{500WM0}\includegraphics[width=50mm]{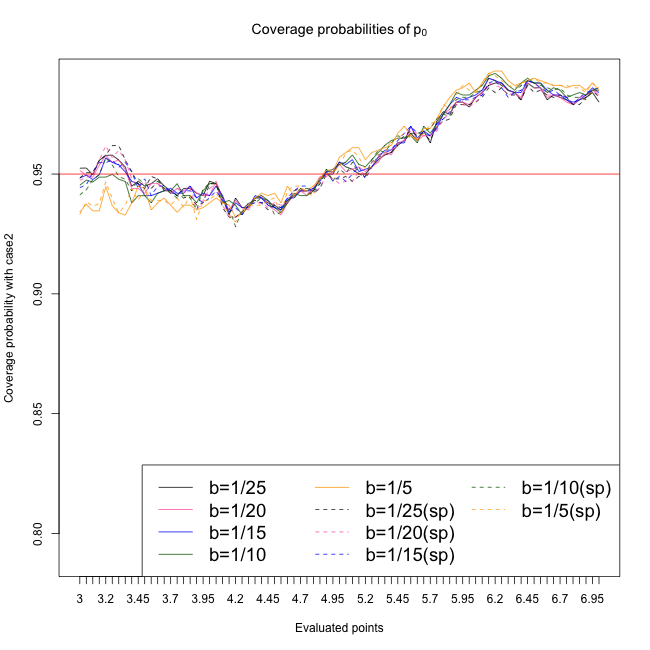}}
   \end{subfigure}
    \hspace{0.45cm}
    \begin{subfigure}[$p_0$; C2; $n=1000$]
    {\label{1000WM0}\includegraphics[width=50mm]{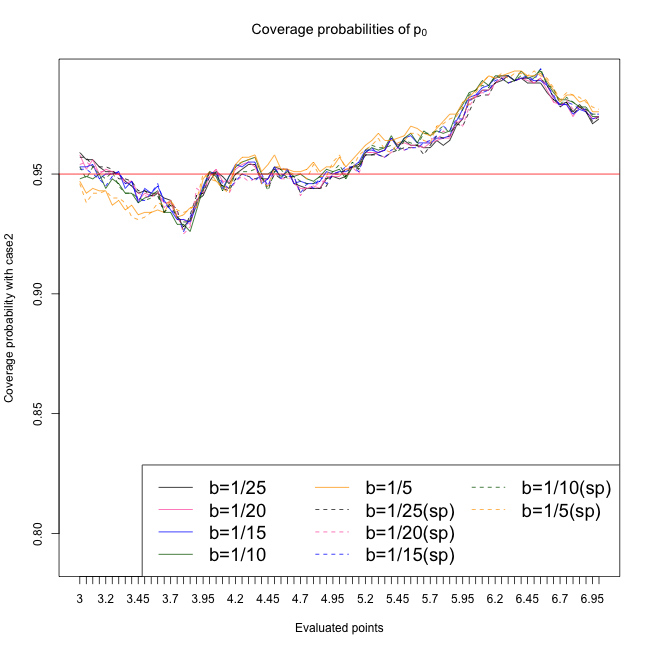}}
    \end{subfigure}\\
    \begin{subfigure}[$p_0$; C2; $n=2500$]
    {\label{2500WM0}\includegraphics[width=50mm]{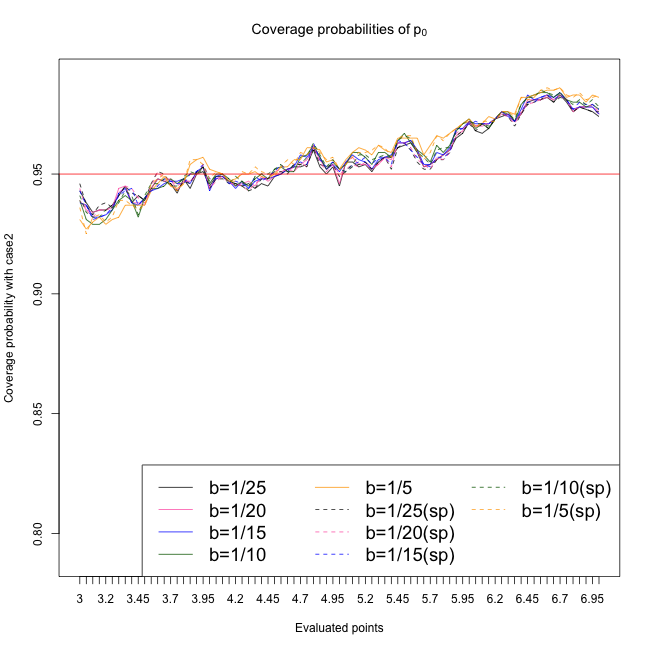}}
    \end{subfigure}
    \hspace{0.45cm}    
    \begin{subfigure}[$p_0$; C2; $n=4000$]{\label{4000WM0}\includegraphics[width=50mm]{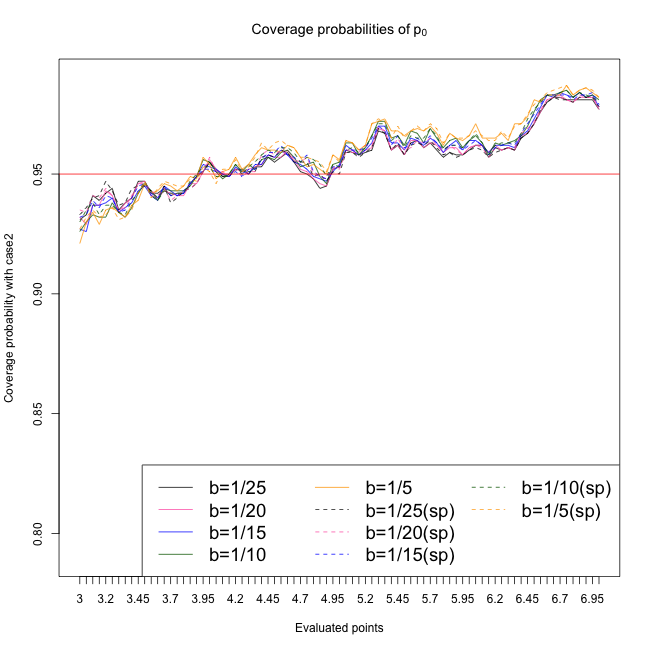}}
   \end{subfigure}\\
    \begin{subfigure}[$p_0$; C2; $n=6000$]
    {\label{6000WM0}\includegraphics[width=50mm]{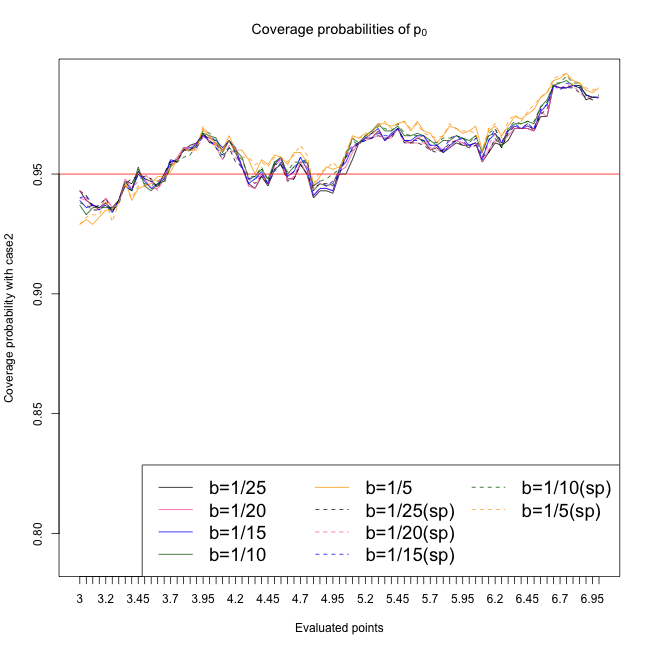}}
    \end{subfigure}
    \hspace{0.45cm}
    \begin{subfigure}[$p_0$; C2; $n=8000$]
    {\label{8000WM0}\includegraphics[width=50mm]{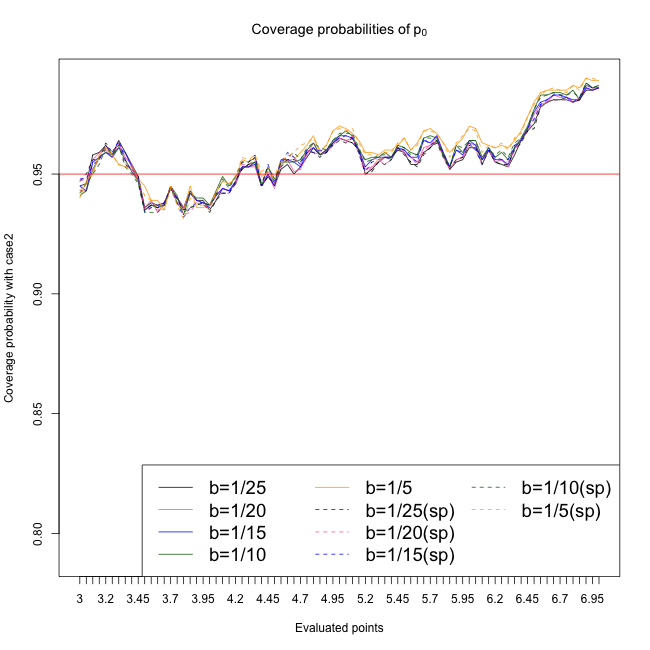}}
    \end{subfigure}
    \caption{\label{fig:kplots.case.2.p0} Notational details can be found in Figure \ref{fig:kplots.case.1.p1}.}
\end{figure}

\begin{figure}
    \centering
    \begin{subfigure}[$p_0$; C3; $n=500$]{\label{500MW0}\includegraphics[width=50mm]{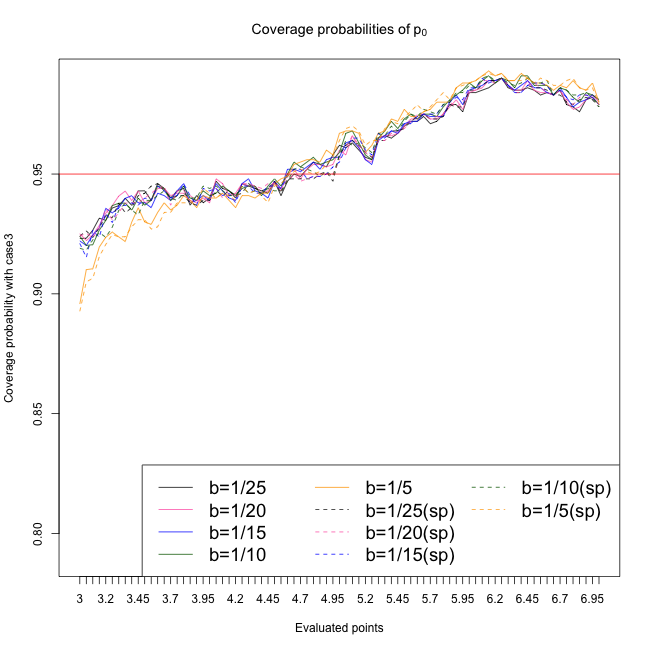}}
   \end{subfigure}
    \hspace{0.45cm}
    \begin{subfigure}[$p_0$; C3; $n=1000$]
    {\label{1000MW0}\includegraphics[width=50mm]{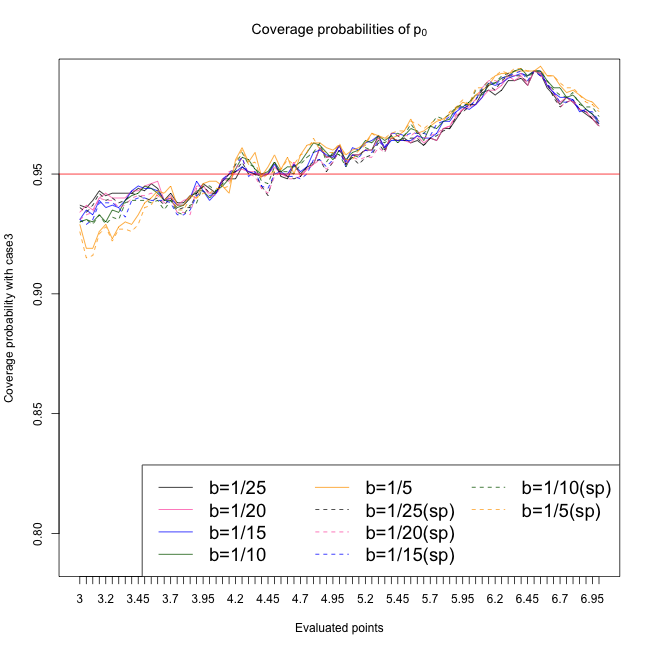}}
    \end{subfigure}\\
    \begin{subfigure}[$p_0$; C3; $n=2500$]
    {\label{2500MW0}\includegraphics[width=50mm]{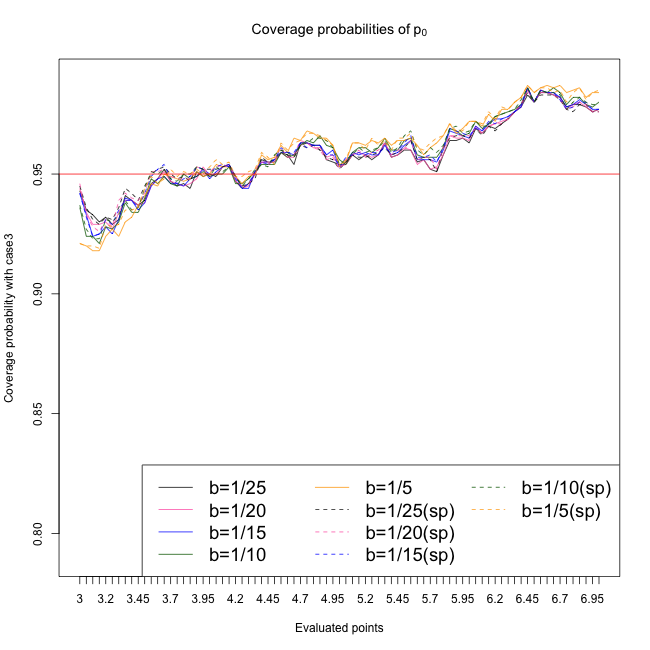}}
    \end{subfigure}
    \hspace{0.45cm}
    \begin{subfigure}[$p_0$; C3; $n=4000$]{\label{4000MW0}\includegraphics[width=50mm]{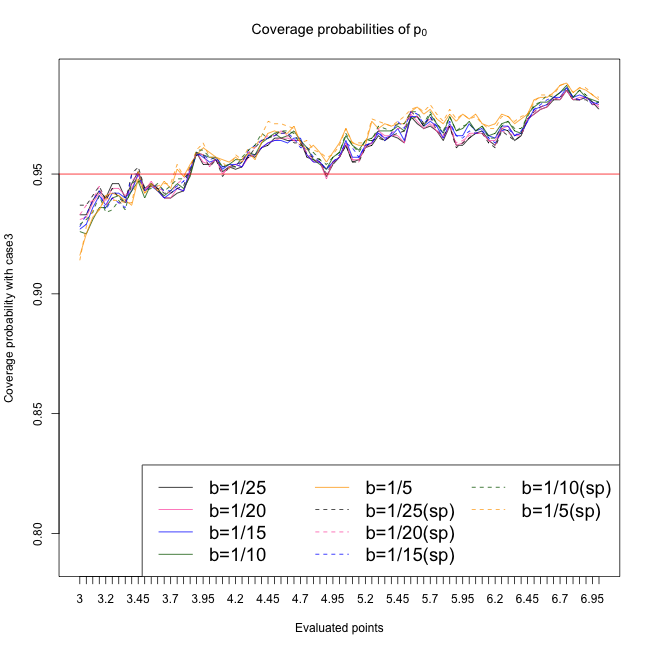}}
   \end{subfigure}\\
    \begin{subfigure}[$p_0$; C3; $n=6000$]
    {\label{6000MW0}\includegraphics[width=50mm]{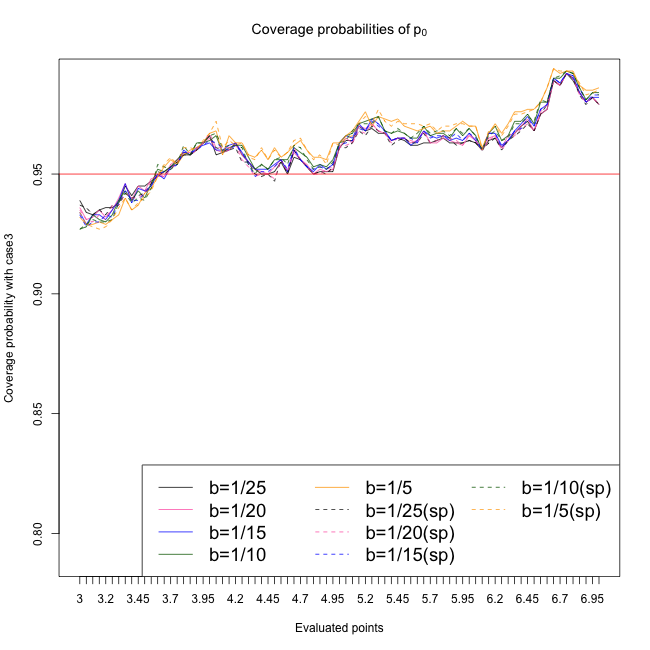}}
    \end{subfigure}
    \hspace{0.45cm}
    \begin{subfigure}[$p_0$; C3; $n=8000$]
    {\label{8000MW0}\includegraphics[width=50mm]{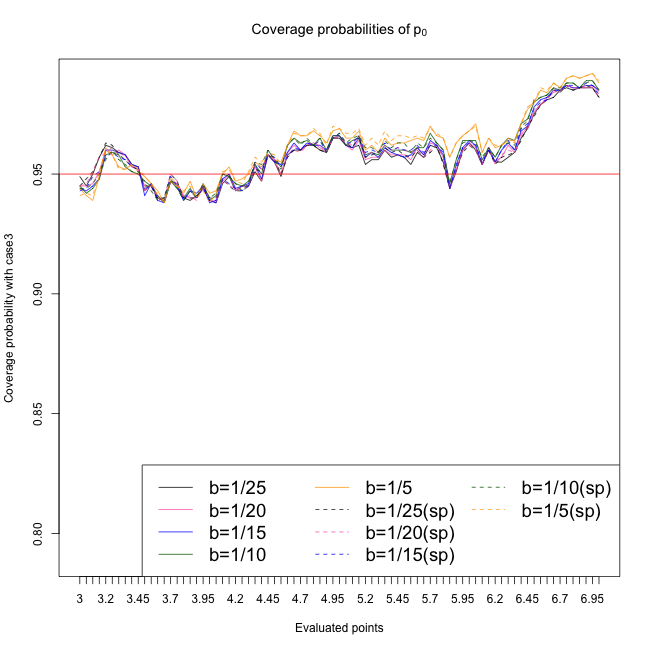}}
    \end{subfigure}
    \caption{\label{fig:kplots.case.3.p0}  Notational details can be found in Figure \ref{fig:kplots.case.1.p1}.}
\end{figure}

\clearpage

\subsection{Additional plots for Section \ref{Section:Simul}}\label{appd:additional.kplots}

\begin{figure}
	\centering
	\begin{subfigure}[Estimation of $p_0$; C1] {\label{P0C1}\includegraphics[width=50mm]{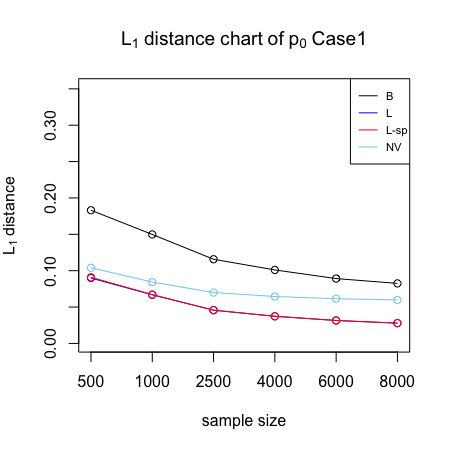}}
	\end{subfigure}
	\hspace{0.45cm}
	\begin{subfigure}[Estimation of $p_0$; C2]
		{\label{P0C2}\includegraphics[width=50mm]{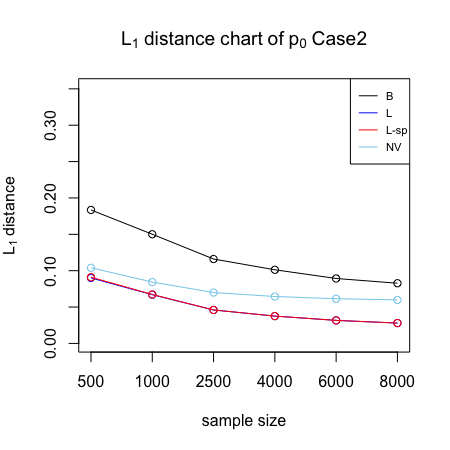}}
	\end{subfigure}
	\hspace{0.45cm}
	\begin{subfigure}[Estimation of $p_0$; C3]
		{\label{P0C3}\includegraphics[width=50mm]{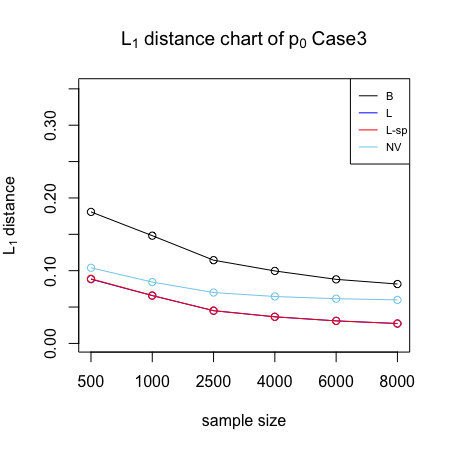}}
	\end{subfigure}
	\caption{\label{fig:Hellinger.p0.plots} Average $L_1$ distance between the estimators and truth for $p_0$. ``B", ``L", and ``NV" stand for the basis estimator of \cite{KBWcounterfactualdensity}, the log-concave estimator, and the naive log-concave MLE, respectively. ``sp" stands for sample splitting.
		Case 1 (both nuisance functions well-specified), Case 2 (only the propensity score is well-specified) and Case 3 (only the conditional CDF is well-specified), respectively. 
		The lines for L and L-sp cannot be visually distinguished.}
\end{figure}

\begin{figure}
	\centering
	\begin{subfigure}[$p_1$; C1; $n=500$]{\label{cmp500WW1}\includegraphics[width=50mm]{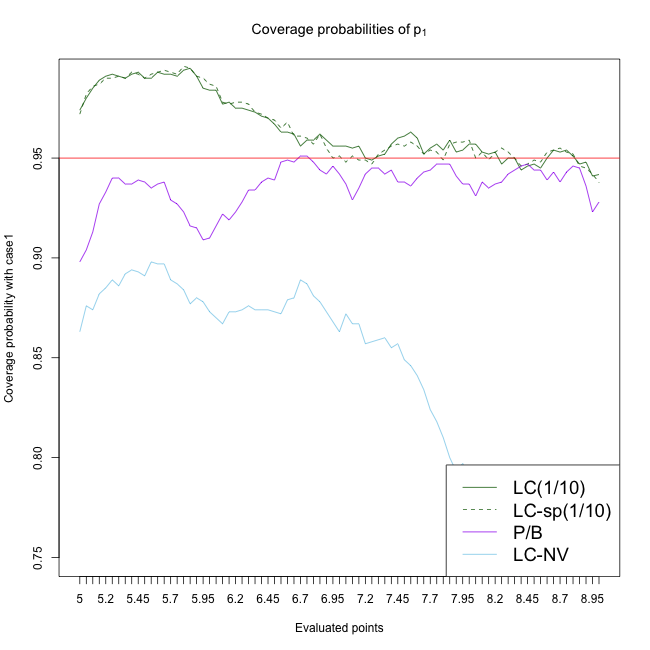}}
	\end{subfigure}
	\hspace{0.45cm}
	\begin{subfigure}[$p_1$; C1; $n=1000$]
		{\label{cmp1000WW1}\includegraphics[width=50mm]{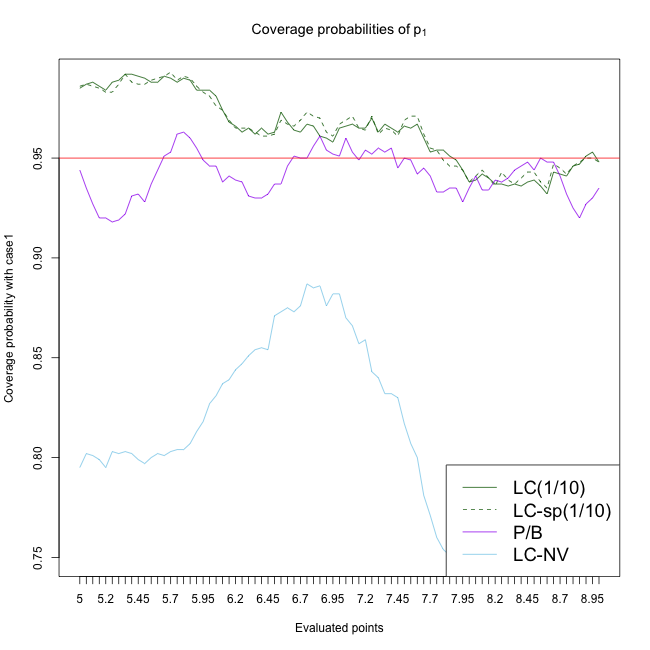}}
	\end{subfigure}\\
	\begin{subfigure}[$p_1$; C1; $n=2500$]
		{\label{cmp2500WW1}\includegraphics[width=50mm]{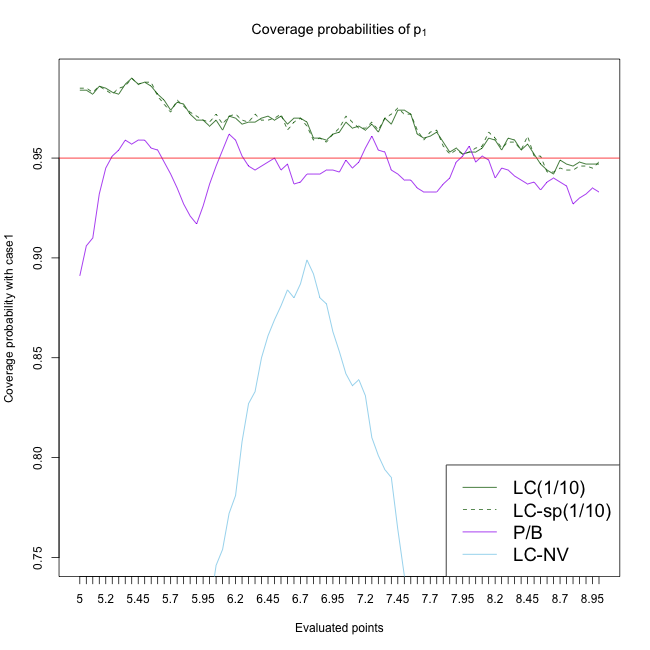}}
	\end{subfigure}
	\hspace{0.45cm}
	\begin{subfigure}[$p_1$; C1; $n=4000$]{\label{cmp4000WW1}\includegraphics[width=50mm]{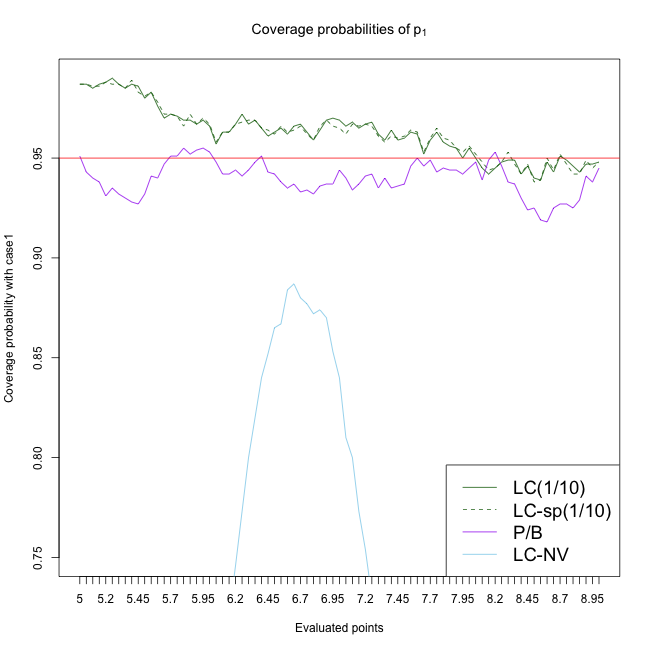}}
	\end{subfigure}\\
	\begin{subfigure}[$p_1$; C1; $n=6000$]
		{\label{cmp6000WW1}\includegraphics[width=50mm]{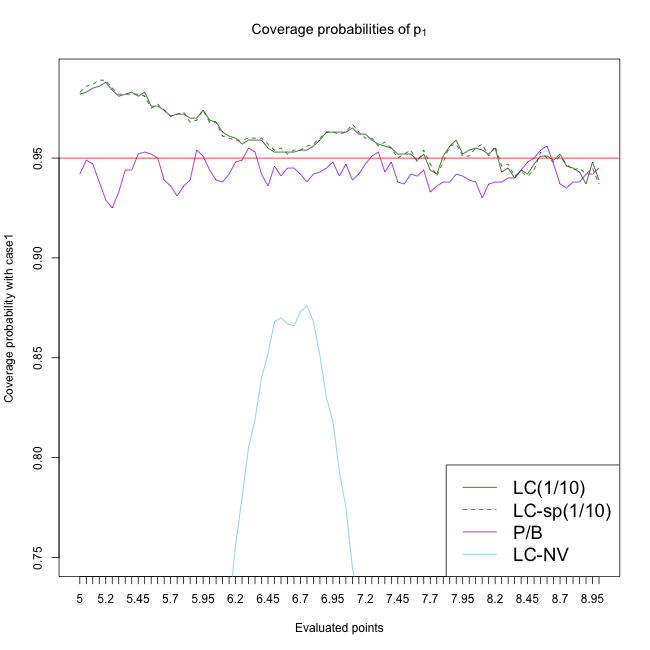}}
	\end{subfigure}
	\hspace{0.45cm}
	\begin{subfigure}[$p_1$; C1; $n=8000$]
		{\label{cmp8000WW1}\includegraphics[width=50mm]{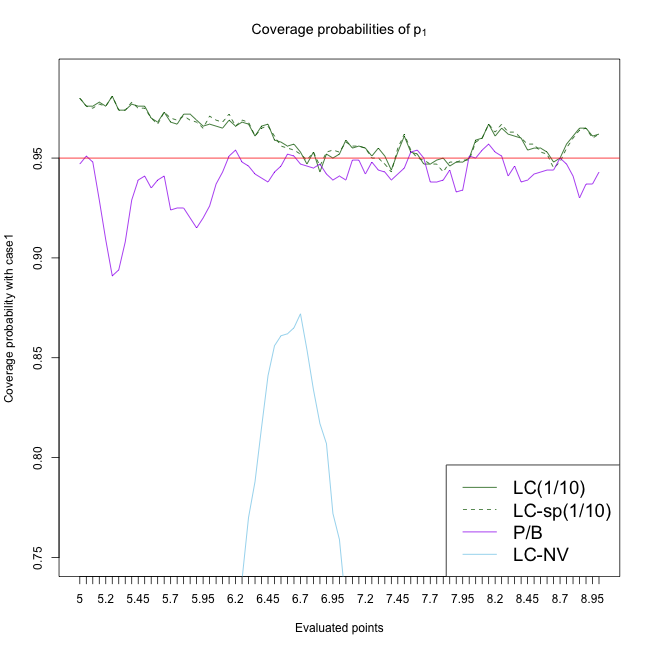}}
	\end{subfigure}
	\caption{\label{fig:kplot2s.case.1.p1}The above displays are coverage probabilities of our proposed log-concave projection estimator's 95\% CI's with the suggested tuning parameter $b=1/10$ which is labeled as LC (see Section \ref{subsec:tuning.param}), and the coverage probabilities from 95\% CI of projection basis method by \citet{KBWcounterfactualdensity} which is denoted by P and B.
		For the projection basis method, the number of basis functions is the oracle number of functions which achieved the lowest average $L_1$ distance in each setting. 
		``sp" stands for our sample splitting based estimator (see \eqref{crossfitted.onestep}).
		And, ``NV" denotes the naive log-concave MLE.
		The coverage probabilities are measured in 81 equally spaced points in the domain. Each subcaption describes the estimand ($p_1$ or $p_0$), the sample size, and each case of nuisance estimations (Case 1, 2, or 3 abbreviated to C1, C2, and C3, respectively).}
\end{figure}

\begin{figure}
	\centering
	\begin{subfigure}[$p_1$; C2; $n=500$]{\label{cmp500WM1}\includegraphics[width=50mm]{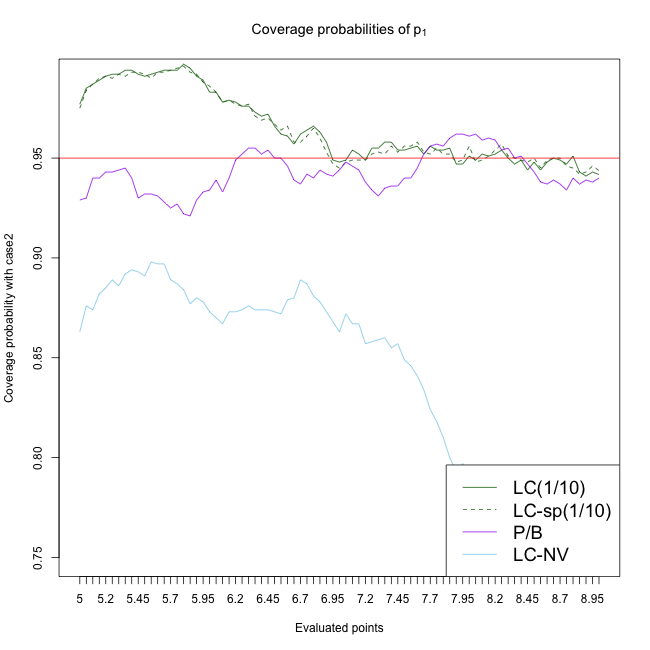}}
	\end{subfigure}
	\hspace{0.45cm}
	\begin{subfigure}[$p_1$; C2; $n=1000$]
		{\label{cmp1000WM1}\includegraphics[width=50mm]{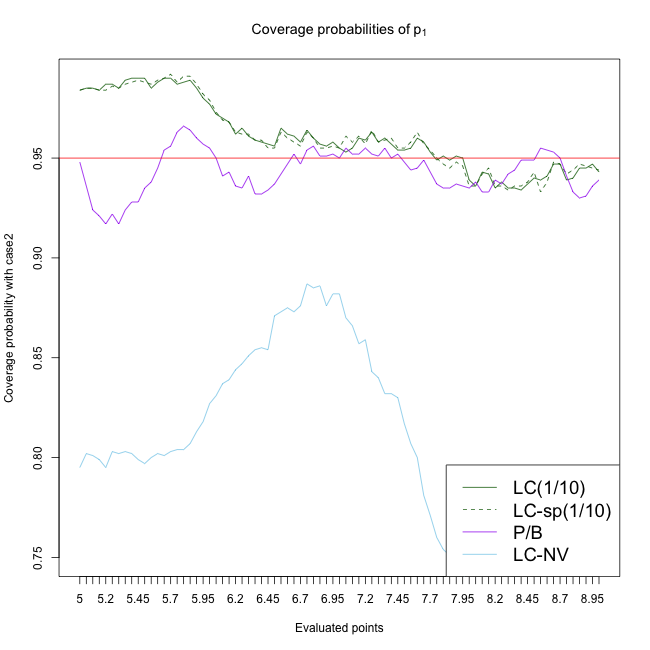}}
	\end{subfigure}\\
	\begin{subfigure}[$p_1$; C2; $n=2500$]
		{\label{cmp2500WM1}\includegraphics[width=50mm]{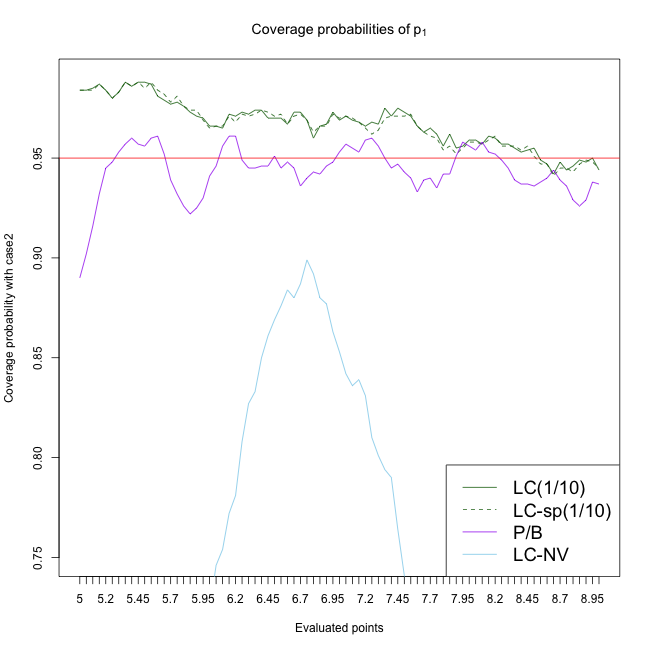}}
	\end{subfigure}
	\hspace{0.45cm}
	\begin{subfigure}[$p_1$; C2; $n=4000$]{\label{cmp4000WM1}\includegraphics[width=50mm]{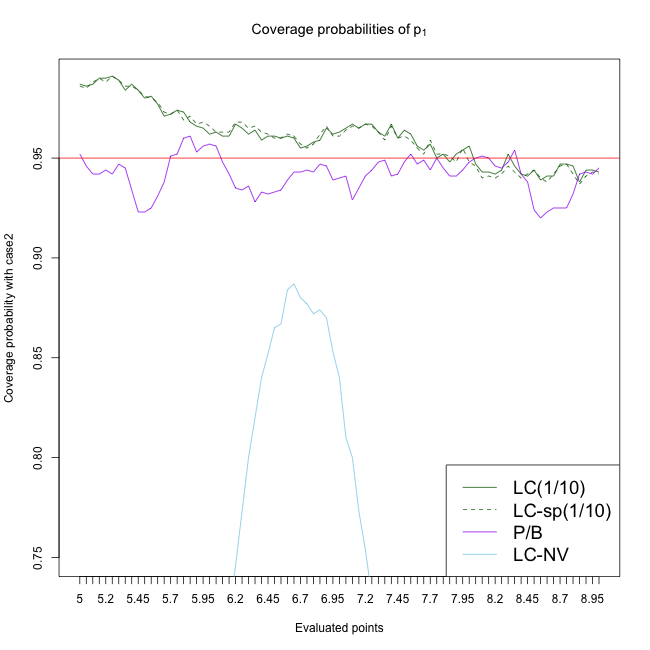}}
	\end{subfigure}\\
	\begin{subfigure}[$p_1$; C2; $n=6000$]
		{\label{cmp6000WM1}\includegraphics[width=50mm]{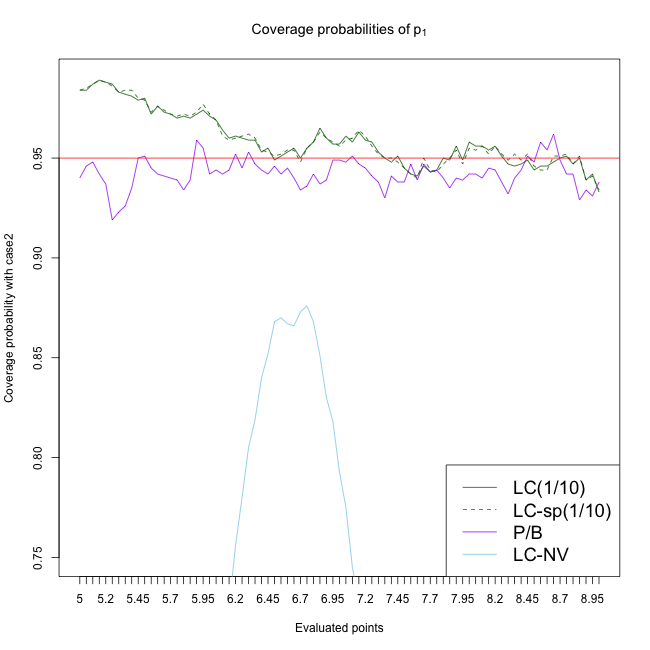}}
	\end{subfigure}
	\hspace{0.45cm}
	\begin{subfigure}[$p_1$; C2; $n=8000$]
		{\label{cmp8000WM1}\includegraphics[width=50mm]{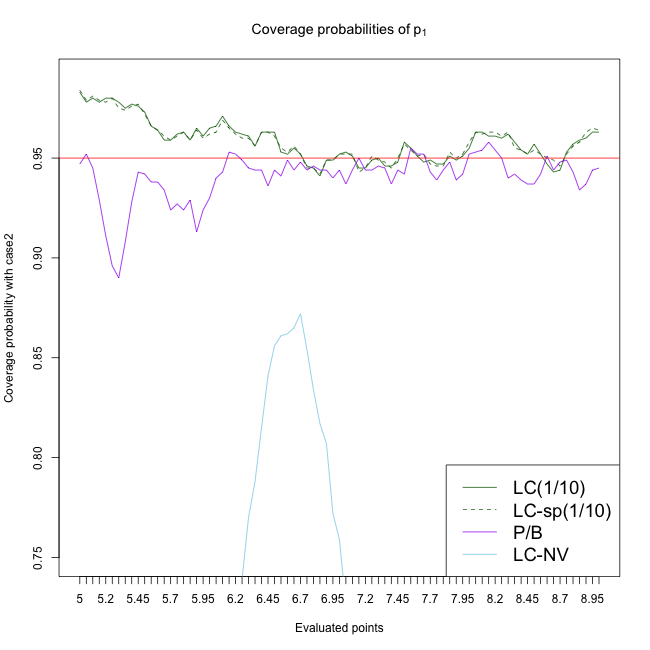}}
	\end{subfigure}
	\caption{\label{fig:kplot2s.case.2.p1} Notational details can be found in Figure \ref{fig:kplot2s.case.1.p1}.}
\end{figure}

\begin{figure}
	\centering
	\begin{subfigure}[$p_1$; C3; $n=500$]{\label{cmp500MW1}\includegraphics[width=50mm]{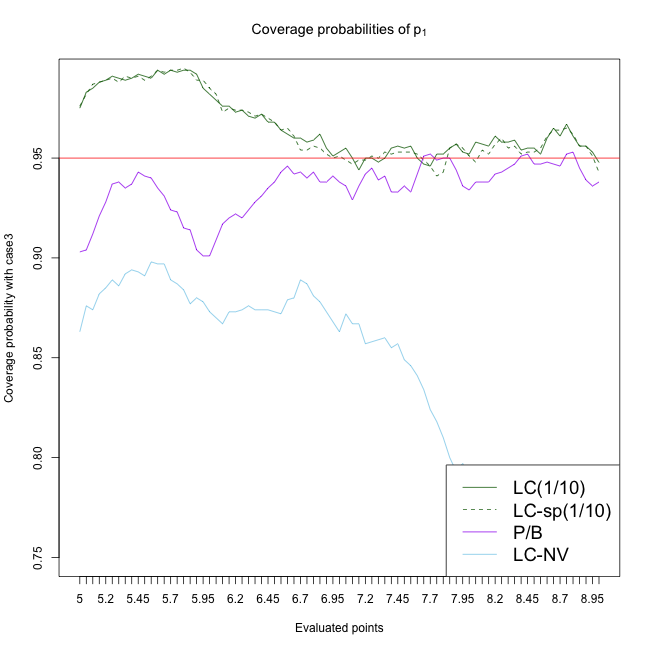}}
	\end{subfigure}
	\hspace{0.45cm}
	\begin{subfigure}[$p_1$; C3; $n=1000$]
		{\label{cmp1000MW1}\includegraphics[width=50mm]{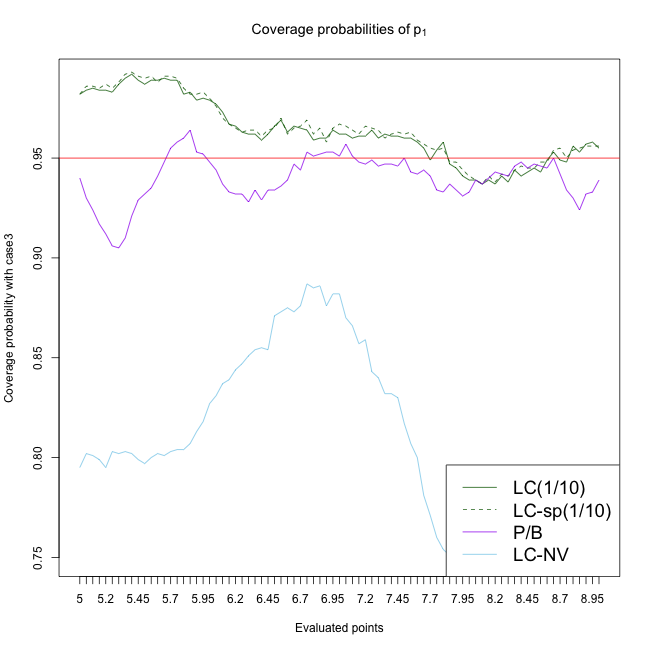}}
	\end{subfigure}\\
	\begin{subfigure}[$p_1$; C3; $n=2500$]
		{\label{cmp2500MW1}\includegraphics[width=50mm]{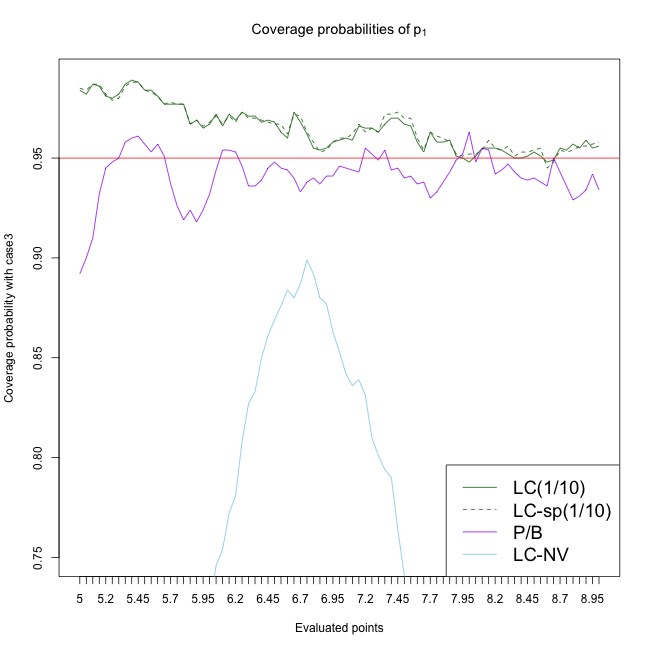}}
	\end{subfigure}
	\hspace{0.45cm}
	\begin{subfigure}[$p_1$; C3; $n=4000$]{\label{cmp4000MW1}\includegraphics[width=50mm]{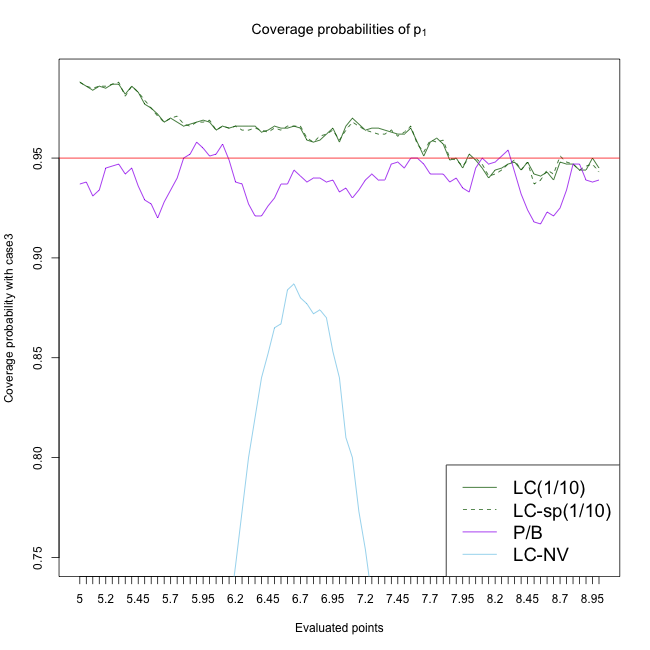}}
	\end{subfigure}\\
	\begin{subfigure}[$p_1$; C3; $n=6000$]
		{\label{cmp6000MW1}\includegraphics[width=50mm]{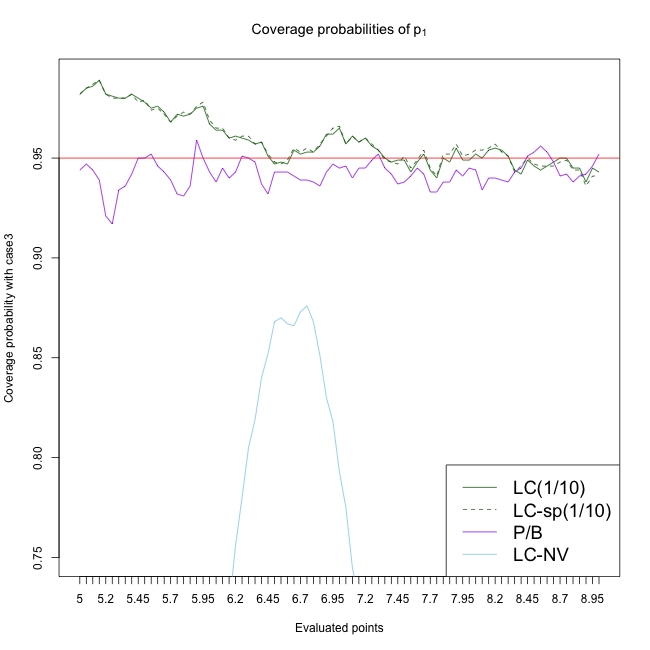}}
	\end{subfigure}
	\hspace{0.45cm}
	\begin{subfigure}[$p_1$; C3; $n=8000$]
		{\label{cmp8000MW1}\includegraphics[width=50mm]{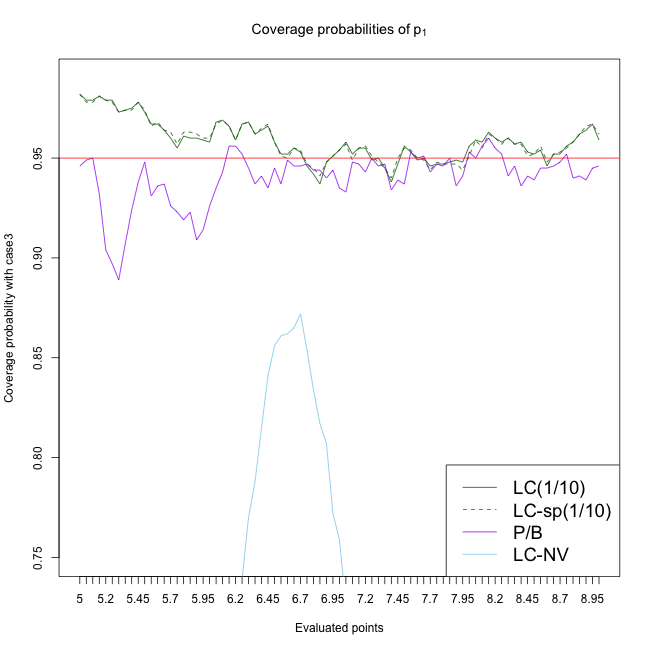}}
	\end{subfigure}
	\caption{\label{fig:kplot2s.case.3.p1}  Notational details can be found in Figure \ref{fig:kplot2s.case.1.p1}.}
\end{figure}

\begin{figure}
	\centering
	\begin{subfigure}[$p_0$; C1; $n=500$]{\label{cmp500WW0}\includegraphics[width=50mm]{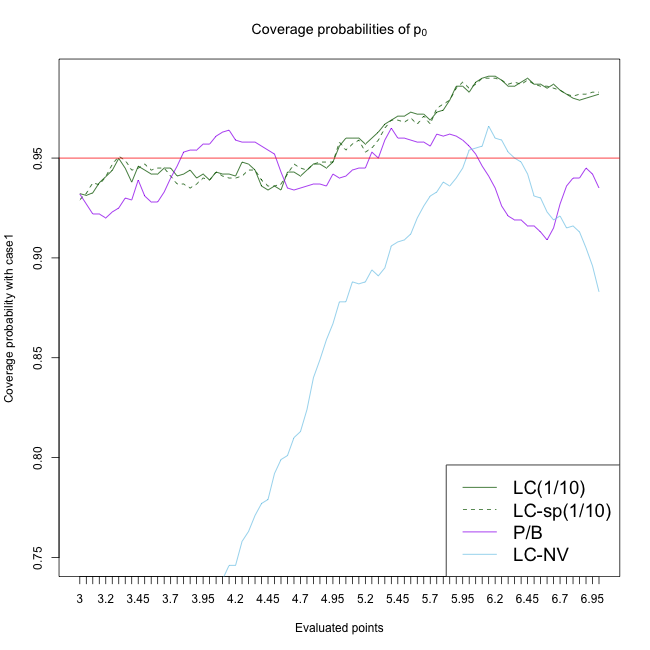}}
	\end{subfigure}
	\hspace{0.45cm}
	\begin{subfigure}[$p_0$; C1; $n=1000$]
		{\label{cmp1000WW0}\includegraphics[width=50mm]{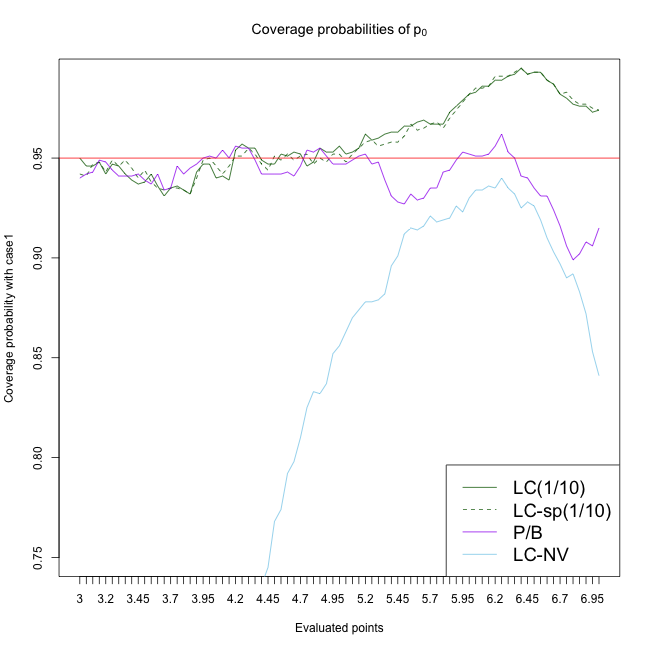}}
	\end{subfigure}\\
	\begin{subfigure}[$p_0$; C1; $n=2500$]
		{\label{cmp2500WW0}\includegraphics[width=50mm]{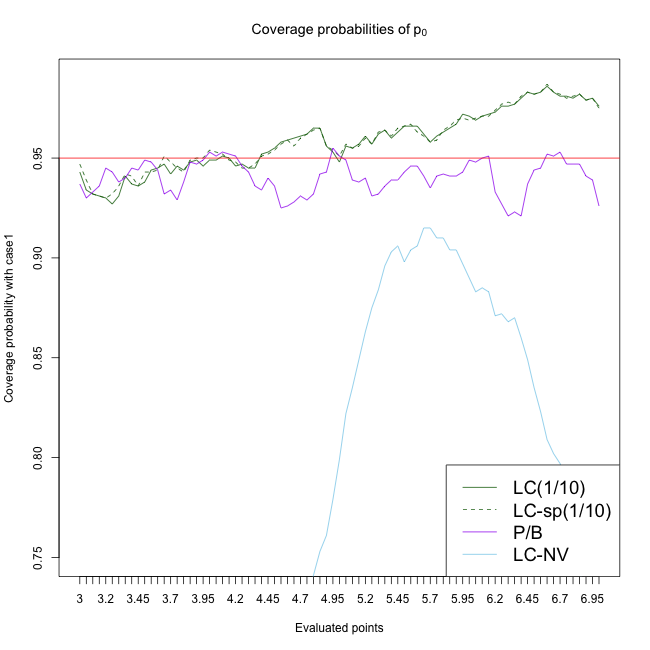}}
	\end{subfigure}
	\hspace{0.45cm}    
	\begin{subfigure}[$p_0$; C1; $n=4000$]{\label{cmp4000WW0}\includegraphics[width=50mm]{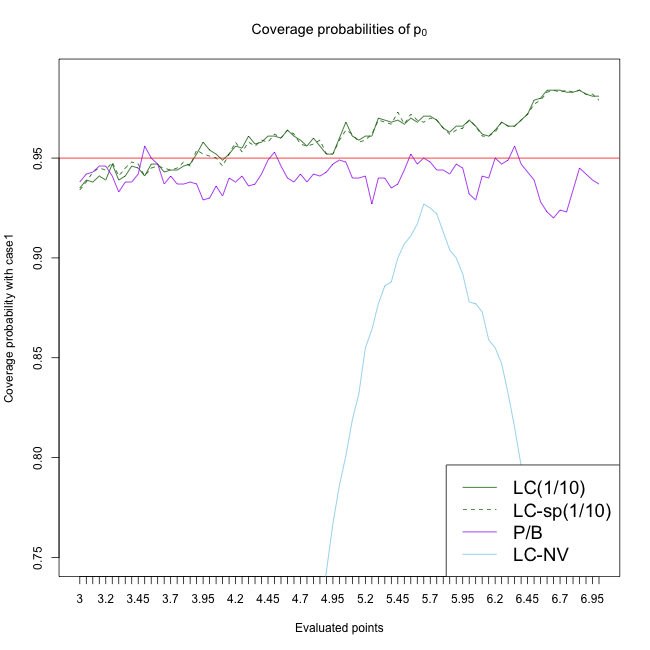}}
	\end{subfigure}\\
	\begin{subfigure}[$p_0$; C1; $n=6000$]
		{\label{cmp6000WW0}\includegraphics[width=50mm]{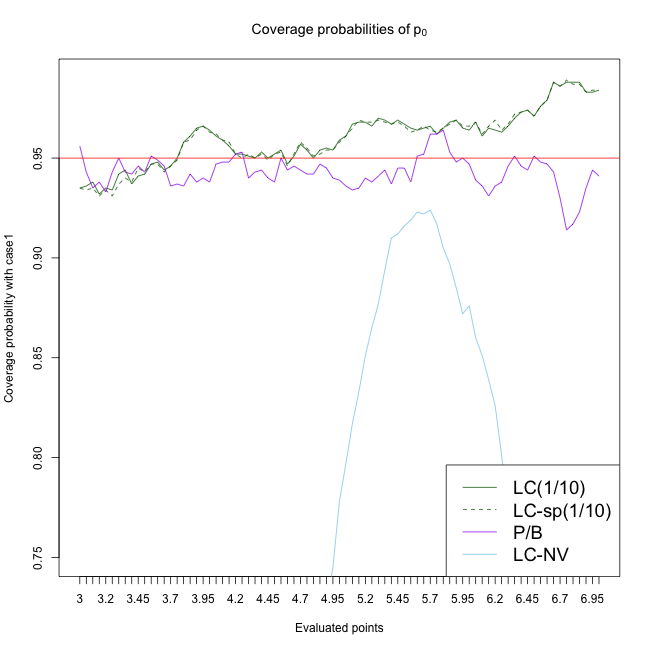}}
	\end{subfigure}
	\hspace{0.45cm}
	\begin{subfigure}[$p_0$; C1; $n=8000$]
		{\label{cmp8000WW0}\includegraphics[width=50mm]{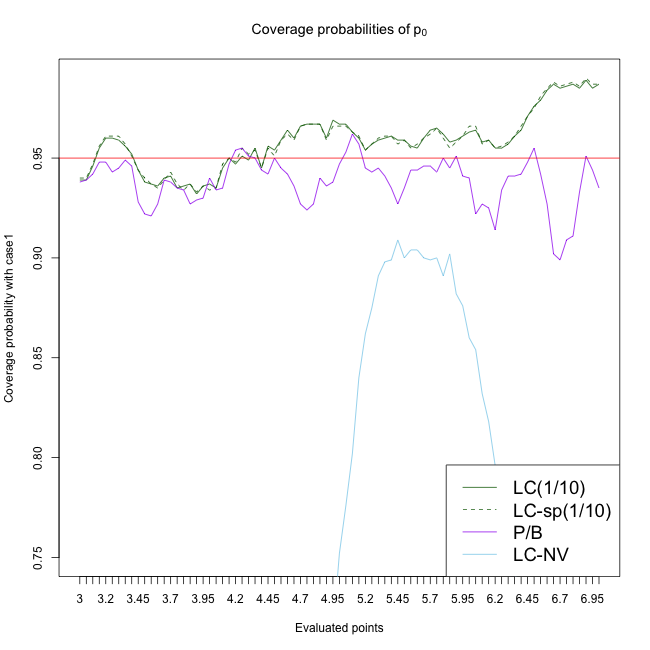}}
	\end{subfigure}
	\caption{\label{fig:kplot2s.case.1.p0} Notational details can be found in Figure \ref{fig:kplot2s.case.1.p1}.}
\end{figure}

\begin{figure}
	\centering
	\begin{subfigure}[$p_0$; C2; $n=500$]{\label{cmp500WM0}\includegraphics[width=50mm]{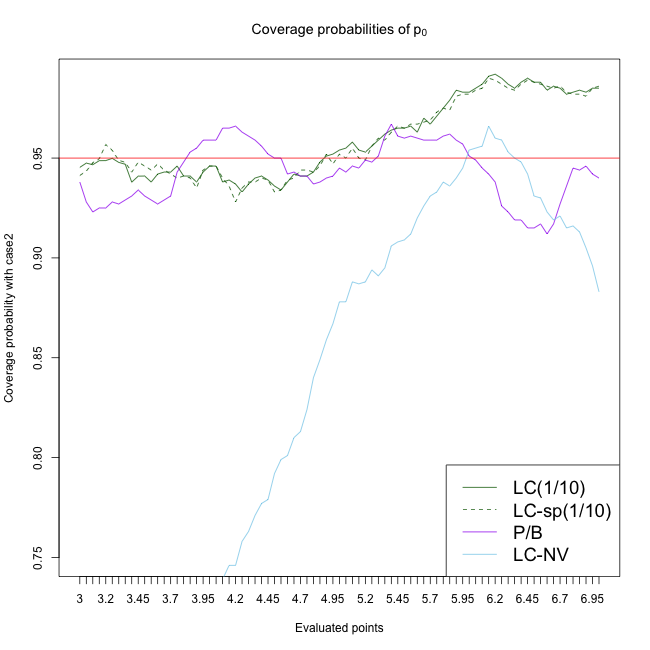}}
	\end{subfigure}
	\hspace{0.45cm}
	\begin{subfigure}[$p_0$; C2; $n=1000$]
		{\label{cmp1000WM0}\includegraphics[width=50mm]{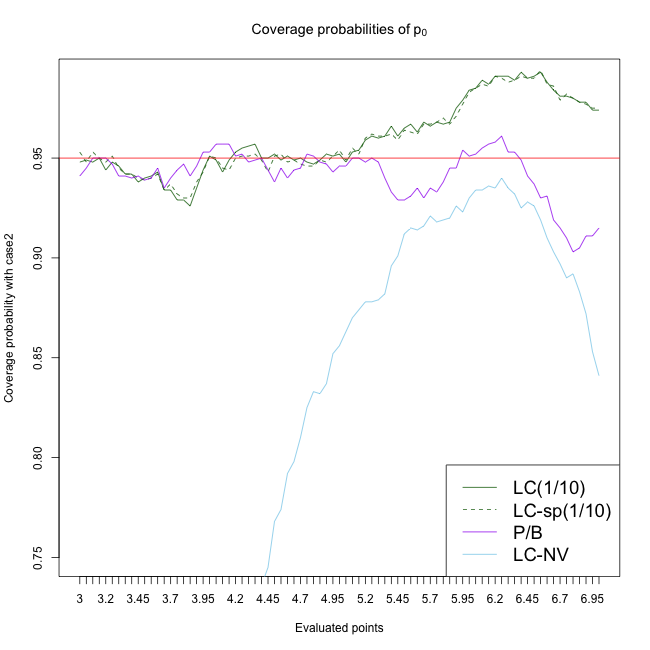}}
	\end{subfigure}\\
	\begin{subfigure}[$p_0$; C2; $n=2500$]
		{\label{cmp2500WM0}\includegraphics[width=50mm]{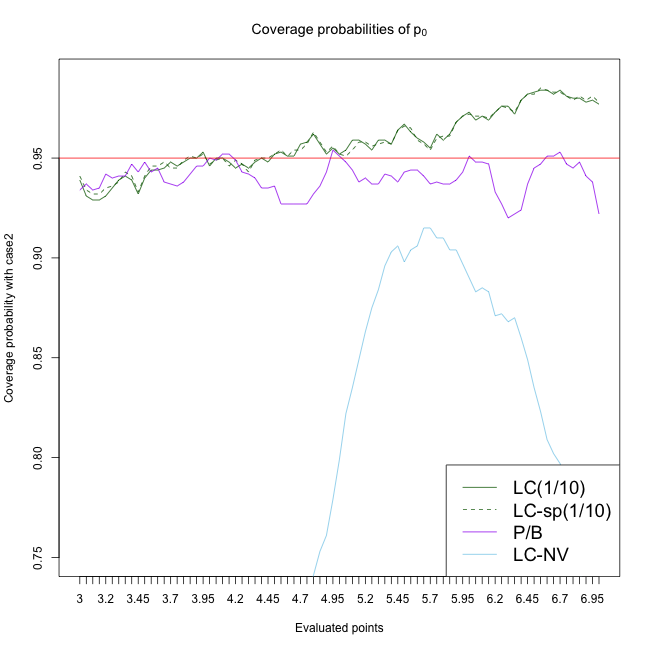}}
	\end{subfigure}
	\hspace{0.45cm}    
	\begin{subfigure}[$p_0$; C2; $n=4000$]{\label{cmp4000WM0}\includegraphics[width=50mm]{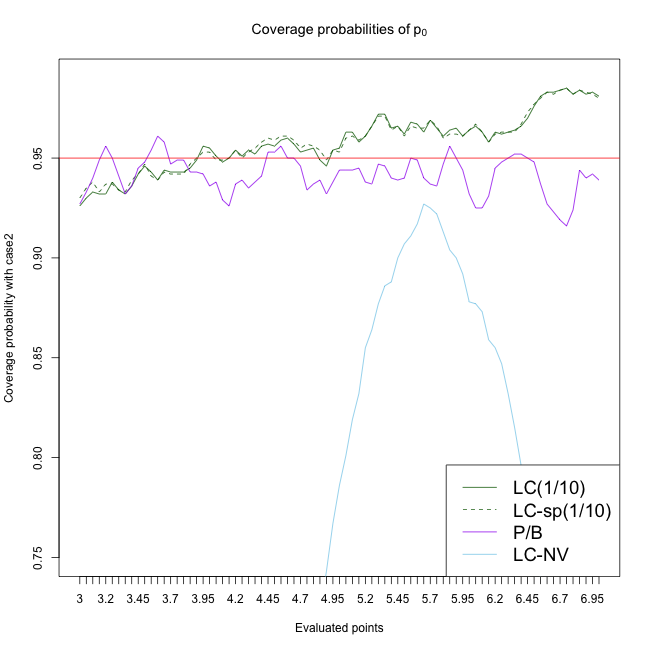}}
	\end{subfigure}\\
	\begin{subfigure}[$p_0$; C2; $n=6000$]
		{\label{cmp6000WM0}\includegraphics[width=50mm]{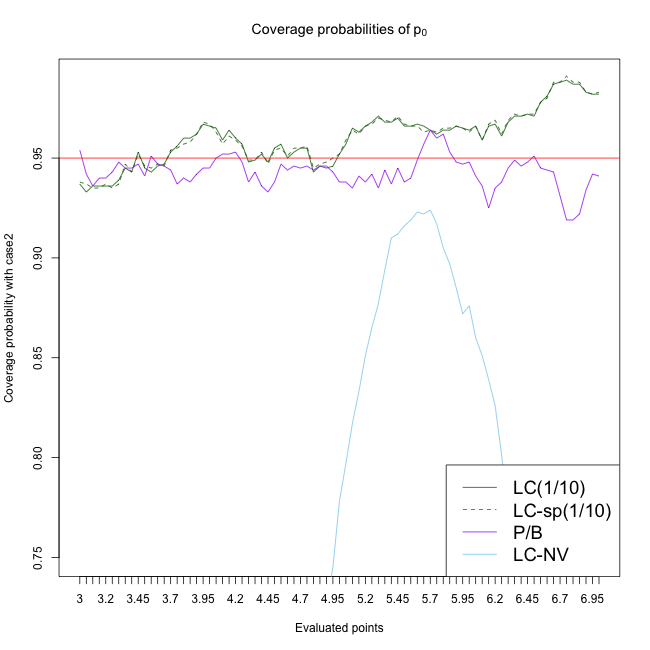}}
	\end{subfigure}
	\hspace{0.45cm}
	\begin{subfigure}[$p_0$; C2; $n=8000$]
		{\label{cmp8000WM0}\includegraphics[width=50mm]{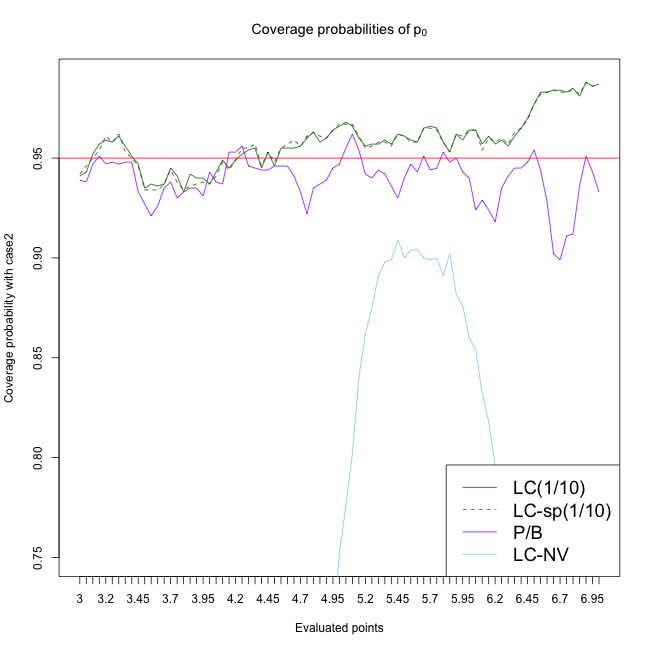}}
	\end{subfigure}
	\caption{\label{fig:kplot2s.case.2.p0} Notational details can be found in Figure \ref{fig:kplot2s.case.1.p1}.}
\end{figure}

\begin{figure}
	\centering
	\begin{subfigure}[$p_0$; C3; $n=500$]{\label{cmp500MW0}\includegraphics[width=50mm]{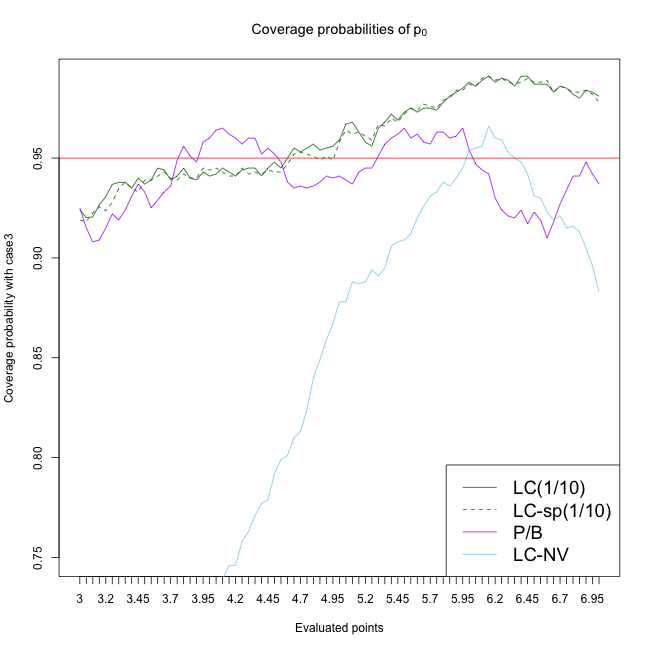}}
	\end{subfigure}
	\hspace{0.45cm}
	\begin{subfigure}[$p_0$; C3; $n=1000$]
		{\label{cmp1000MW0}\includegraphics[width=50mm]{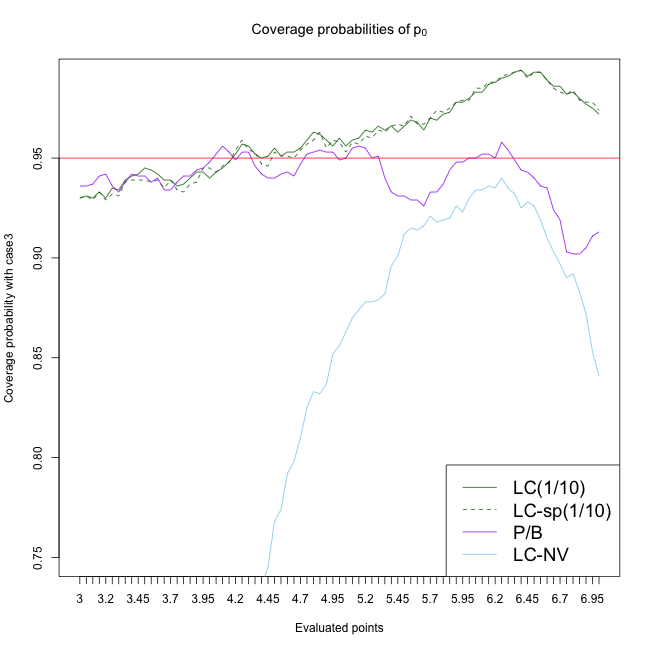}}
	\end{subfigure}\\
	\begin{subfigure}[$p_0$; C3; $n=2500$]
		{\label{cmp2500MW0}\includegraphics[width=50mm]{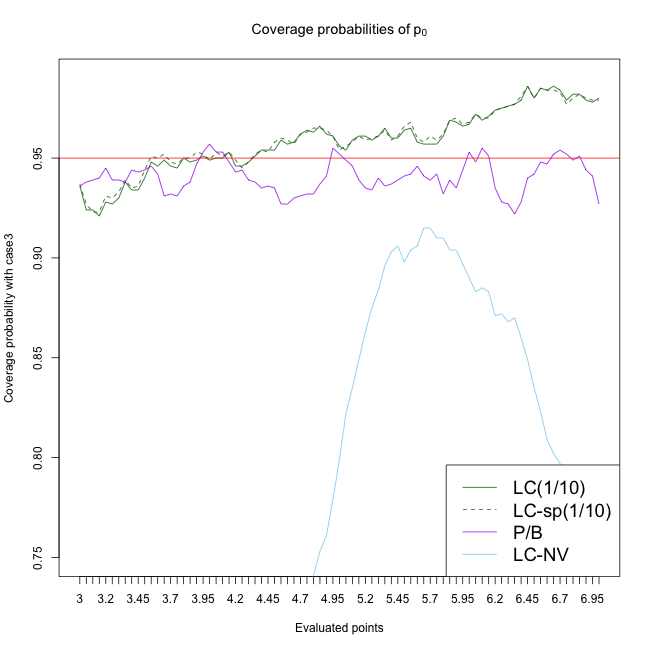}}
	\end{subfigure}
	\hspace{0.45cm}
	\begin{subfigure}[$p_0$; C3; $n=4000$]{\label{cmp4000MW0}\includegraphics[width=50mm]{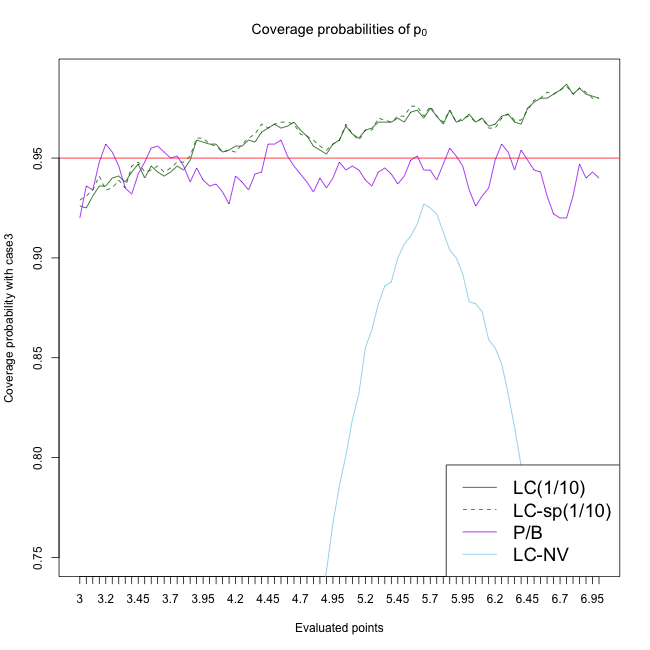}}
	\end{subfigure}\\
	\begin{subfigure}[$p_0$; C3; $n=6000$]
		{\label{cmp6000MW0}\includegraphics[width=50mm]{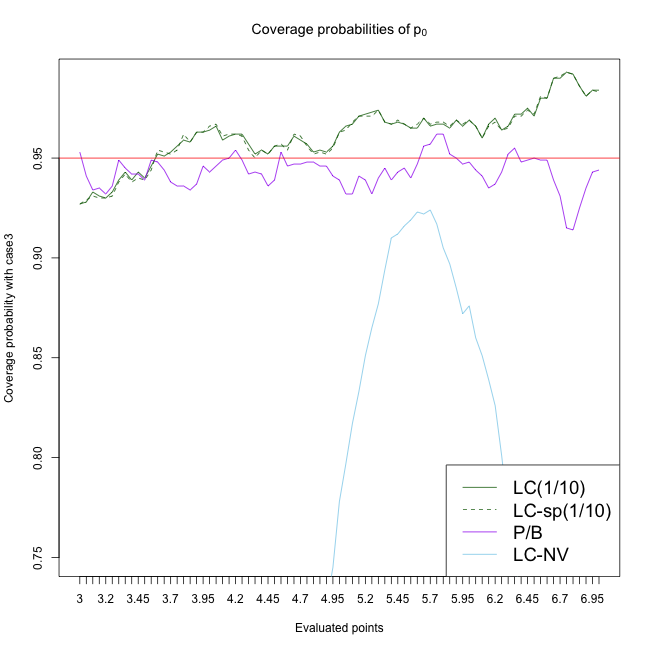}}
	\end{subfigure}
	\hspace{0.45cm}
	\begin{subfigure}[$p_0$; C3; $n=8000$]
		{\label{cmp8000MW0}\includegraphics[width=50mm]{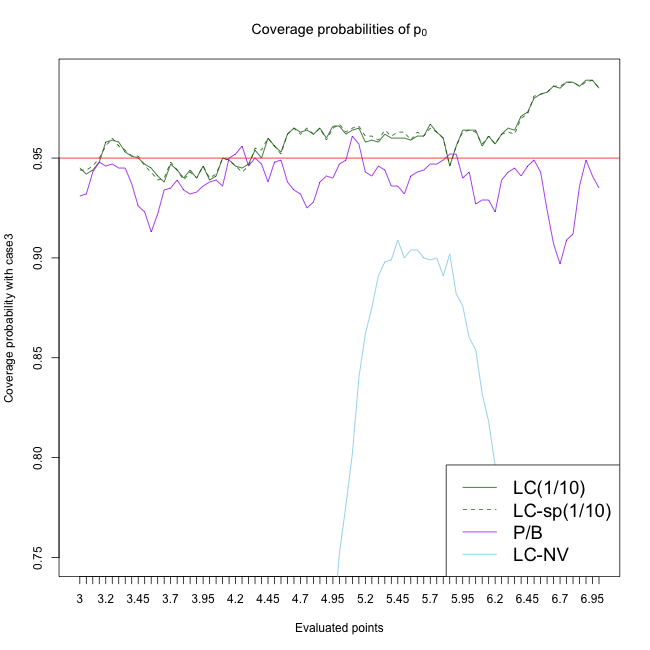}}
	\end{subfigure}
	\caption{\label{fig:kplot2s.case.3.p0}  Notational details can be found in Figure \ref{fig:kplot2s.case.1.p1}.}
\end{figure}

\begin{figure}
	\centering
	\begin{subfigure}[$p_1$; C1; $n=500$]{\label{wd500WW1}\includegraphics[width=50mm]{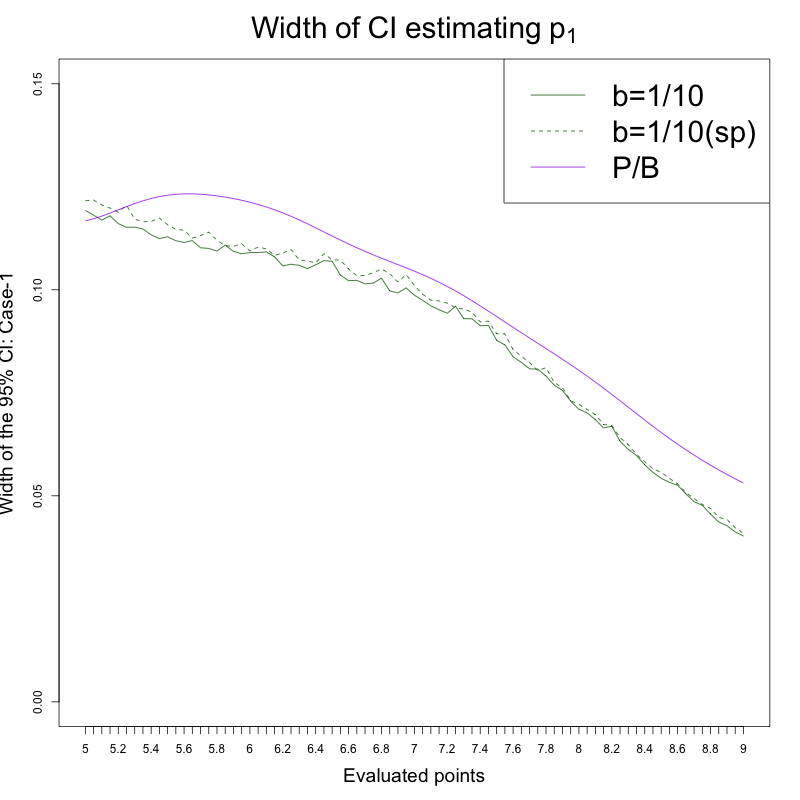}}
	\end{subfigure}
	\hspace{0.45cm}
	\begin{subfigure}[$p_1$; C1; $n=1000$]
		{\label{wd1000WW1}\includegraphics[width=50mm]{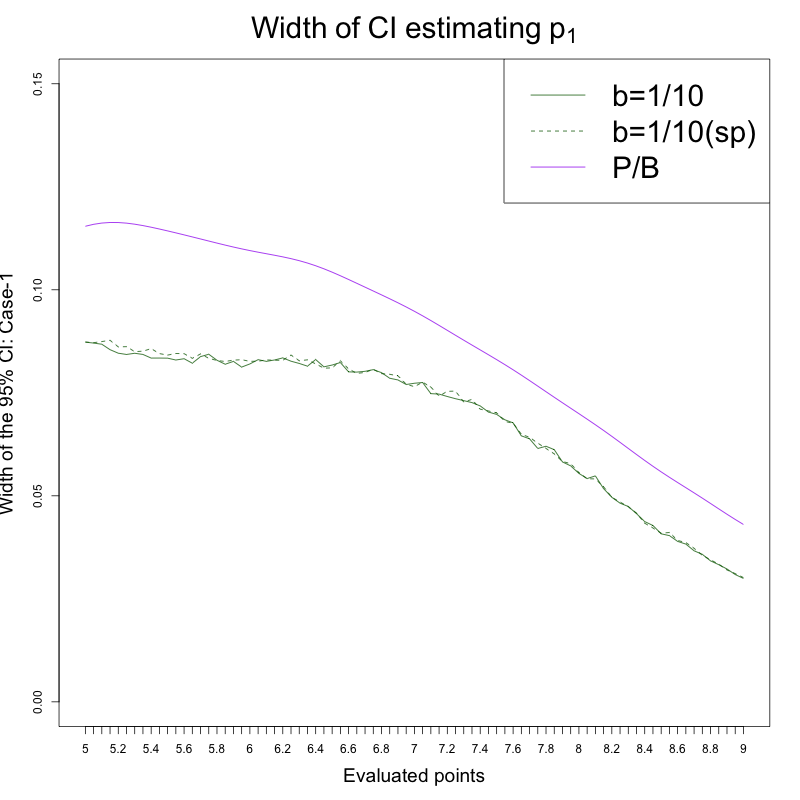}}
	\end{subfigure}\\
	\begin{subfigure}[$p_1$; C1; $n=2500$]
		{\label{wd2500WW1}\includegraphics[width=50mm]{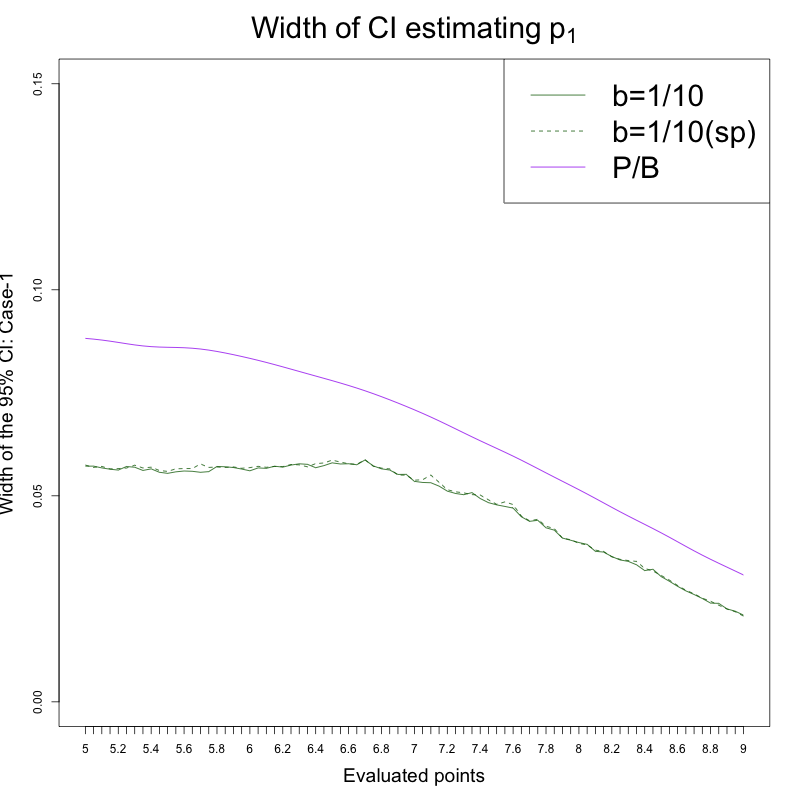}}
	\end{subfigure}
	\hspace{0.45cm}
	\begin{subfigure}[$p_1$; C1; $n=4000$]{\label{wd4000WW1}\includegraphics[width=50mm]{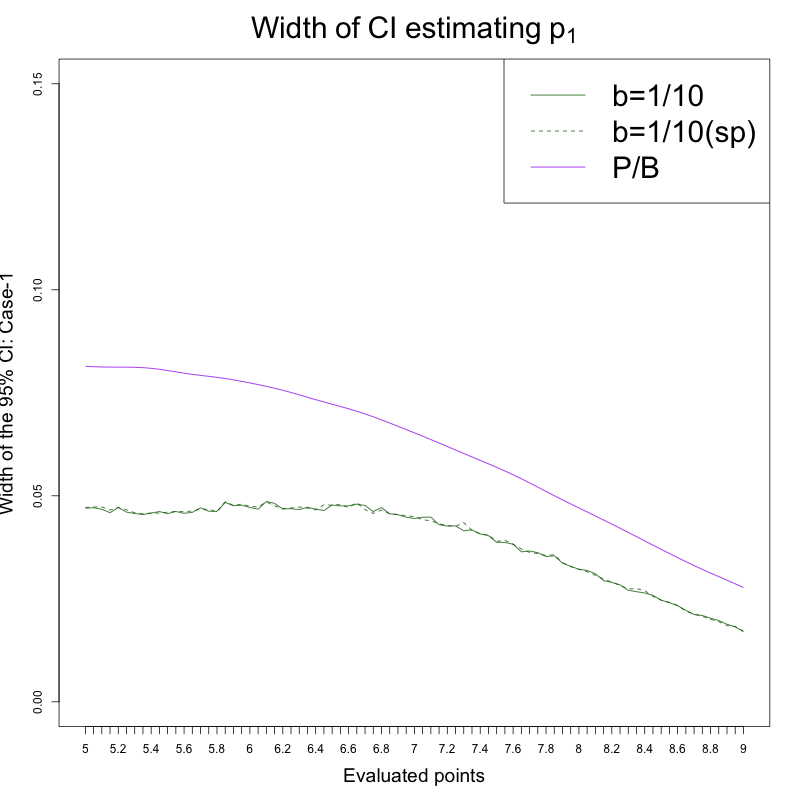}}
	\end{subfigure}\\
	\begin{subfigure}[$p_1$; C1; $n=6000$]
		{\label{wd6000WW1}\includegraphics[width=50mm]{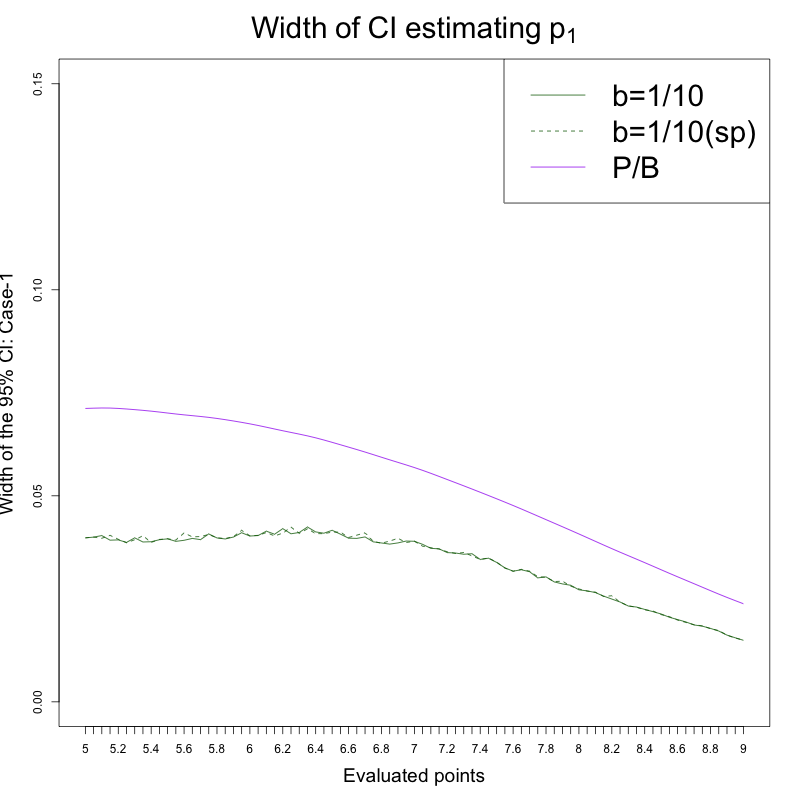}}
	\end{subfigure}
	\hspace{0.45cm}
	\begin{subfigure}[$p_1$; C1; $n=8000$]
		{\label{wd8000WW1}\includegraphics[width=50mm]{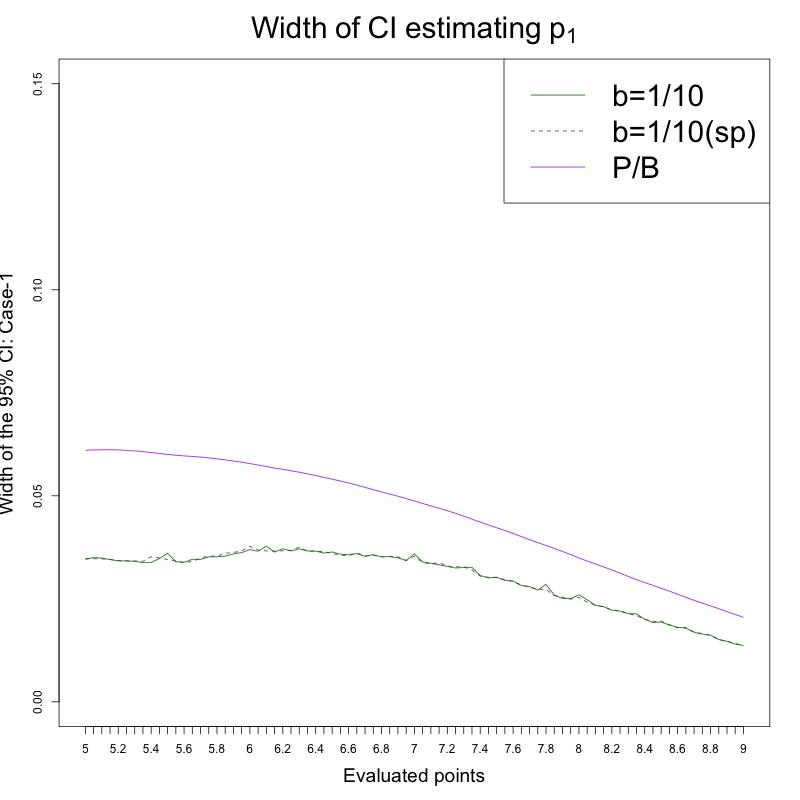}}
	\end{subfigure}
	\caption{\label{fig:wplots.case.1.p1}The above displays are widths of our proposed log-concave projection estimator's 95\% CI's with the suggested tuning parameter $b=1/10$ which is labeled as LC (see Section \ref{subsec:tuning.param}), and the widths from 95\% CI of projection basis method by \citet{KBWcounterfactualdensity} which is denoted by P and B.
		For the projection basis method, the number of basis functions is the oracle number of functions which achieved the lowest average $L_1$ distance in each setting. 
		``sp" stands for our sample splitting based estimator (see \eqref{crossfitted.onestep}).
		The widths are measured in 81 equally spaced points in the domain. Each subcaption describes the estimand ($p_1$ or $p_0$), the sample size, and each case of nuisance estimations (Case 1, 2, or 3 abbreviated to C1, C2, and C3, respectively).
	Since the difference between the cross-fitted procedure and the non-sample splitting procedure is insignificant, it is not visually distinguishable. }
\end{figure}

\begin{figure}
	\centering
	\begin{subfigure}[$p_1$; C2; $n=500$]{\label{wd500WM1}\includegraphics[width=50mm]{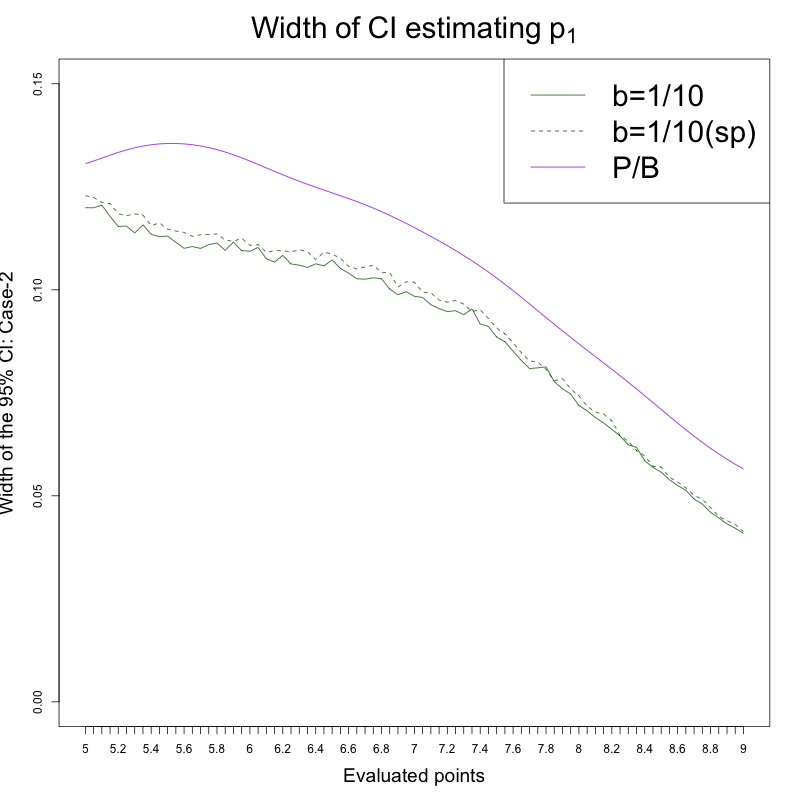}}
	\end{subfigure}
	\hspace{0.45cm}
	\begin{subfigure}[$p_1$; C2; $n=1000$]
		{\label{wd1000WM1}\includegraphics[width=50mm]{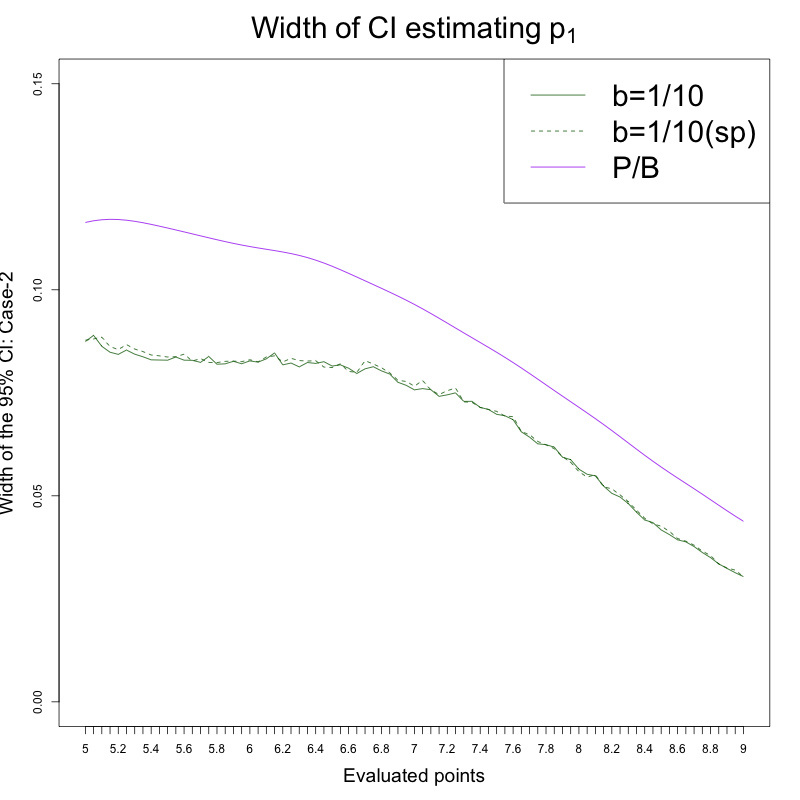}}
	\end{subfigure}\\
	\begin{subfigure}[$p_1$; C2; $n=2500$]
		{\label{wd2500WM1}\includegraphics[width=50mm]{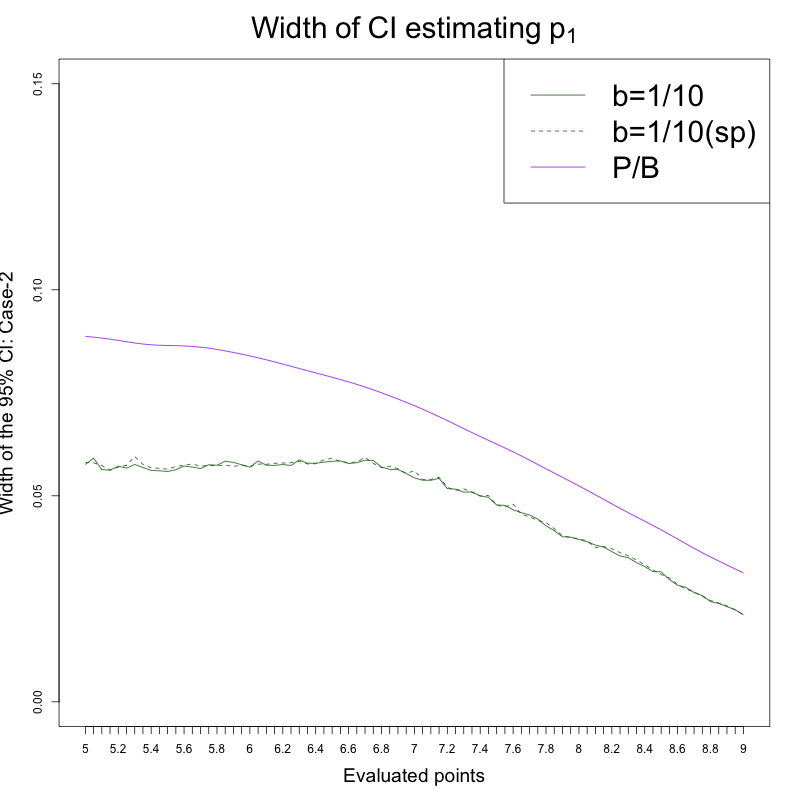}}
	\end{subfigure}
	\hspace{0.45cm}
	\begin{subfigure}[$p_1$; C2; $n=4000$]{\label{wd4000WM1}\includegraphics[width=50mm]{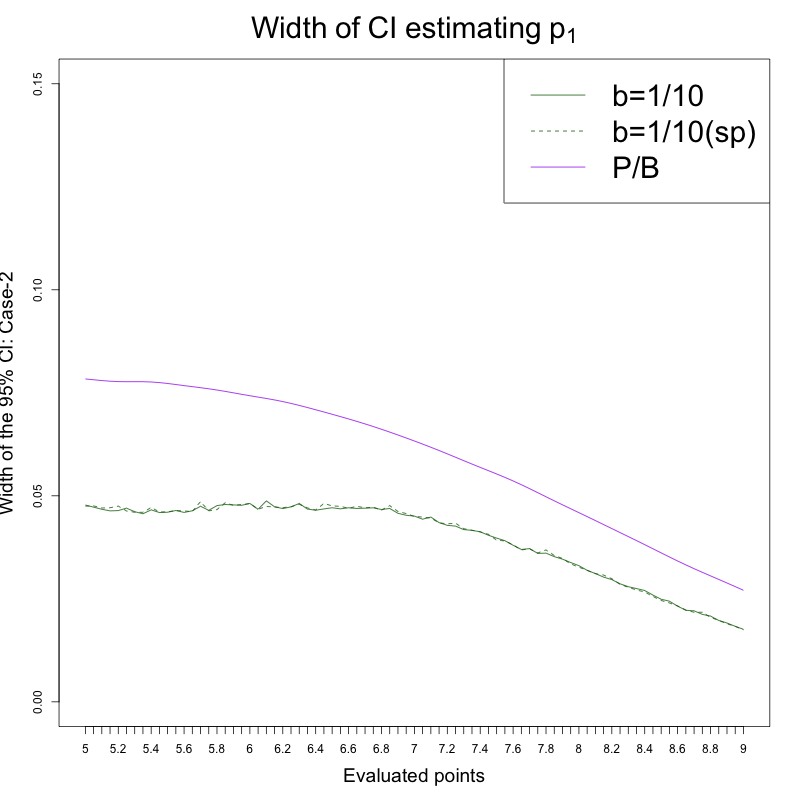}}
	\end{subfigure}\\
	\begin{subfigure}[$p_1$; C2; $n=6000$]
		{\label{wd6000WM1}\includegraphics[width=50mm]{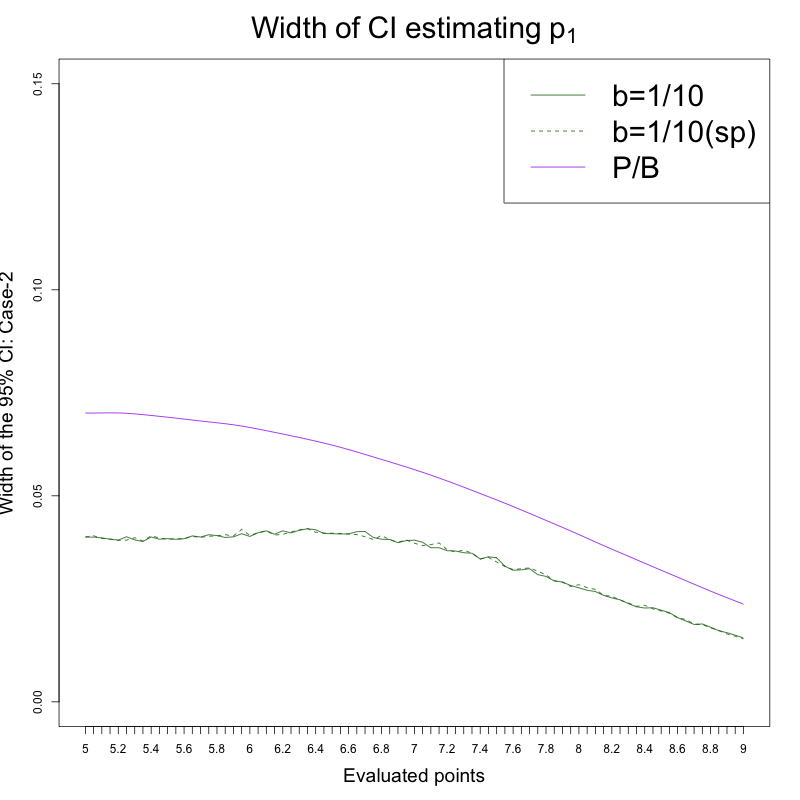}}
	\end{subfigure}
	\hspace{0.45cm}
	\begin{subfigure}[$p_1$; C2; $n=8000$]
		{\label{wd8000WM1}\includegraphics[width=50mm]{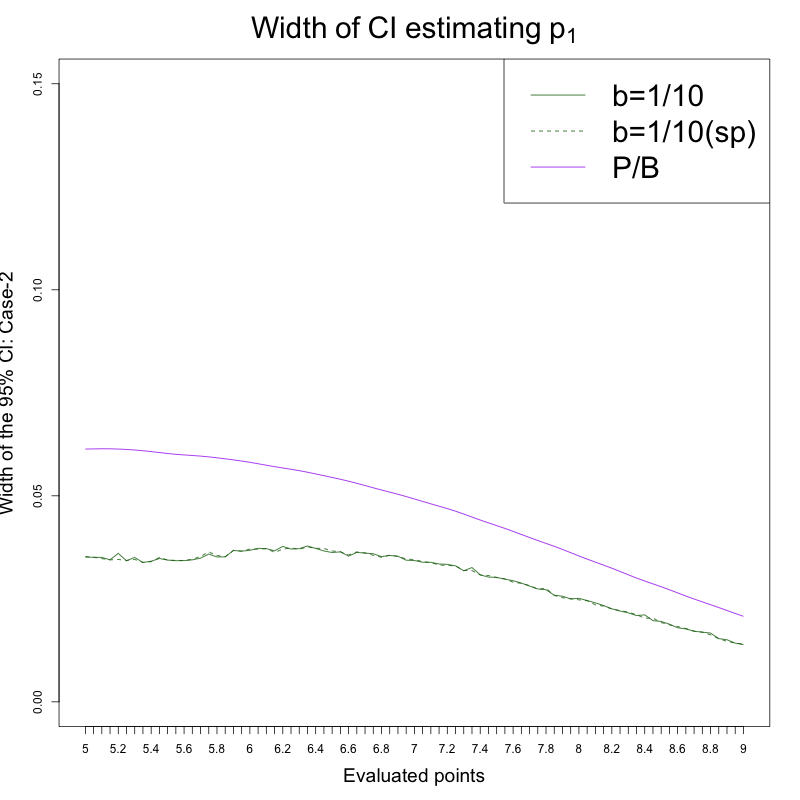}}
	\end{subfigure}
	\caption{\label{fig:wplots.case.2.p1} Notational details can be found in Figure \ref{fig:wplots.case.1.p1}.}
\end{figure}

\begin{figure}
	\centering
	\begin{subfigure}[$p_1$; C3; $n=500$]{\label{wd500MW1}\includegraphics[width=50mm]{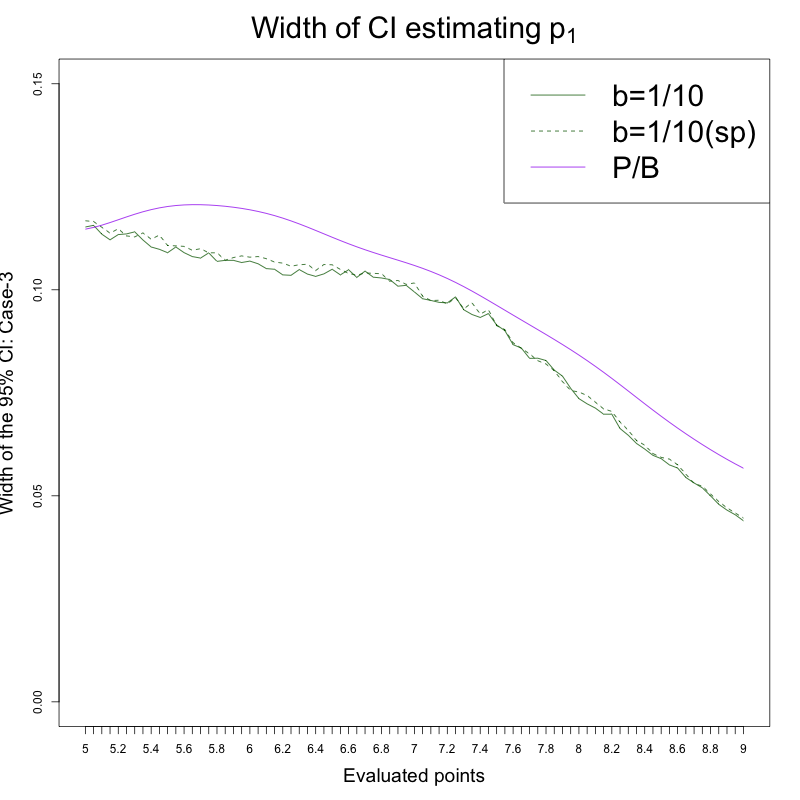}}
	\end{subfigure}
	\hspace{0.45cm}
	\begin{subfigure}[$p_1$; C3; $n=1000$]
		{\label{wd1000MW1}\includegraphics[width=50mm]{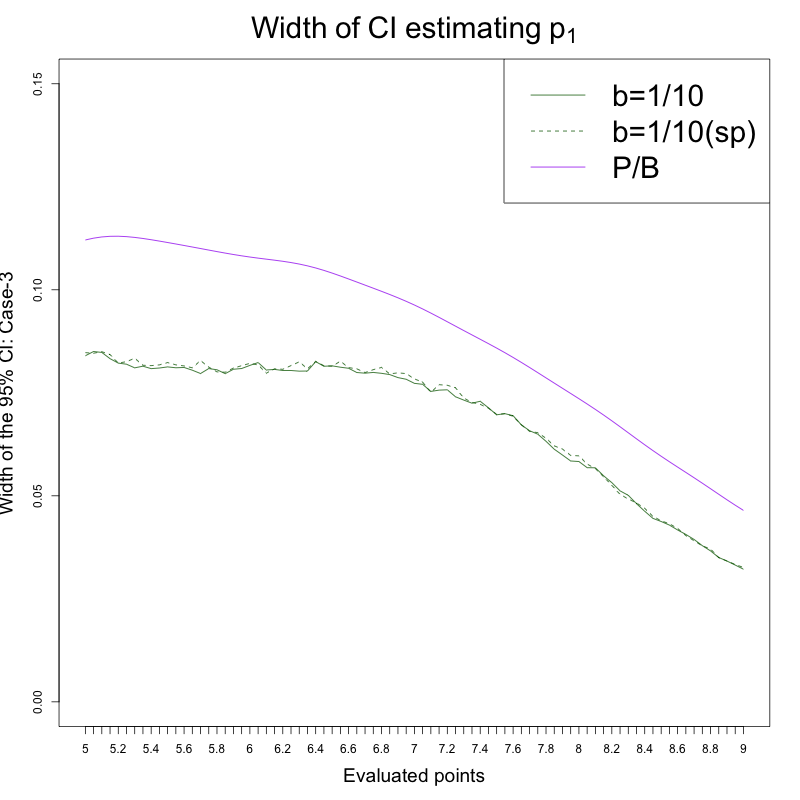}}
	\end{subfigure}\\
	\begin{subfigure}[$p_1$; C3; $n=2500$]
		{\label{wd2500MW1}\includegraphics[width=50mm]{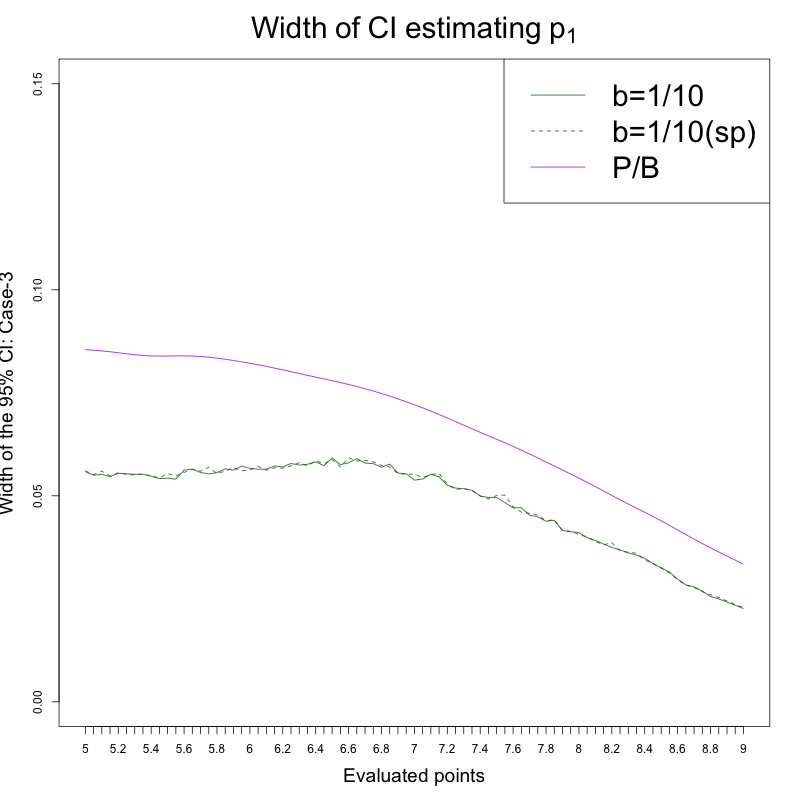}}
	\end{subfigure}
	\hspace{0.45cm}
	\begin{subfigure}[$p_1$; C3; $n=4000$]{\label{wd4000MW1}\includegraphics[width=50mm]{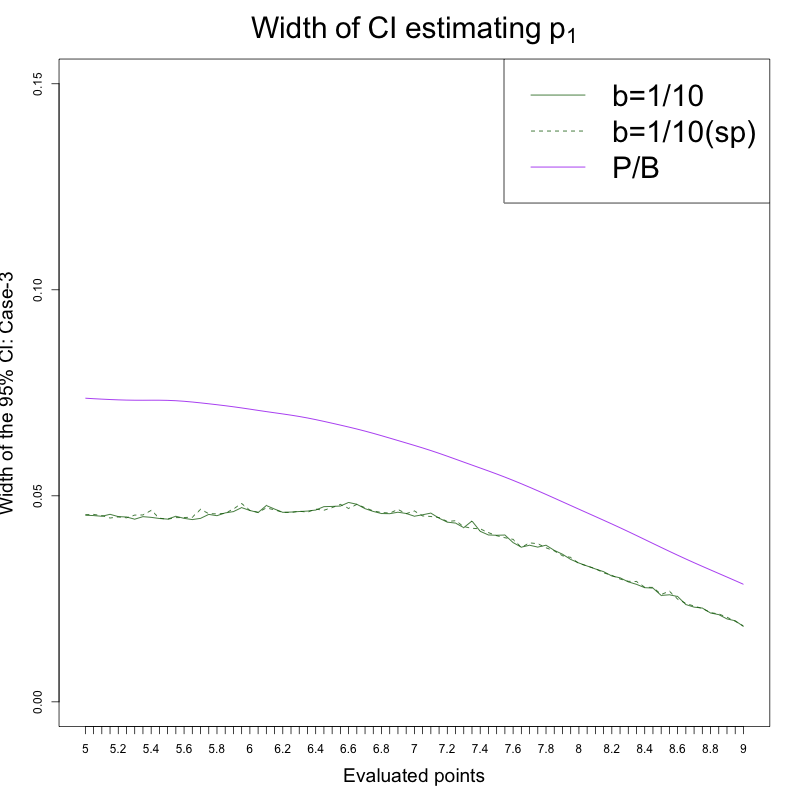}}
	\end{subfigure}\\
	\begin{subfigure}[$p_1$; C3; $n=6000$]
		{\label{wd6000MW1}\includegraphics[width=50mm]{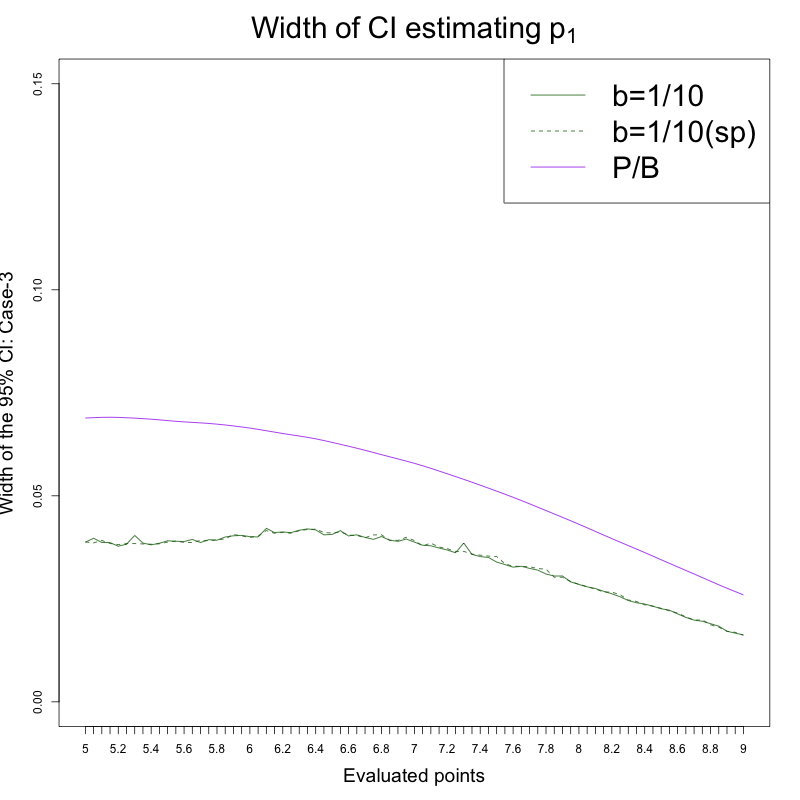}}
	\end{subfigure}
	\hspace{0.45cm}
	\begin{subfigure}[$p_1$; C3; $n=8000$]
		{\label{wd8000MW1}\includegraphics[width=50mm]{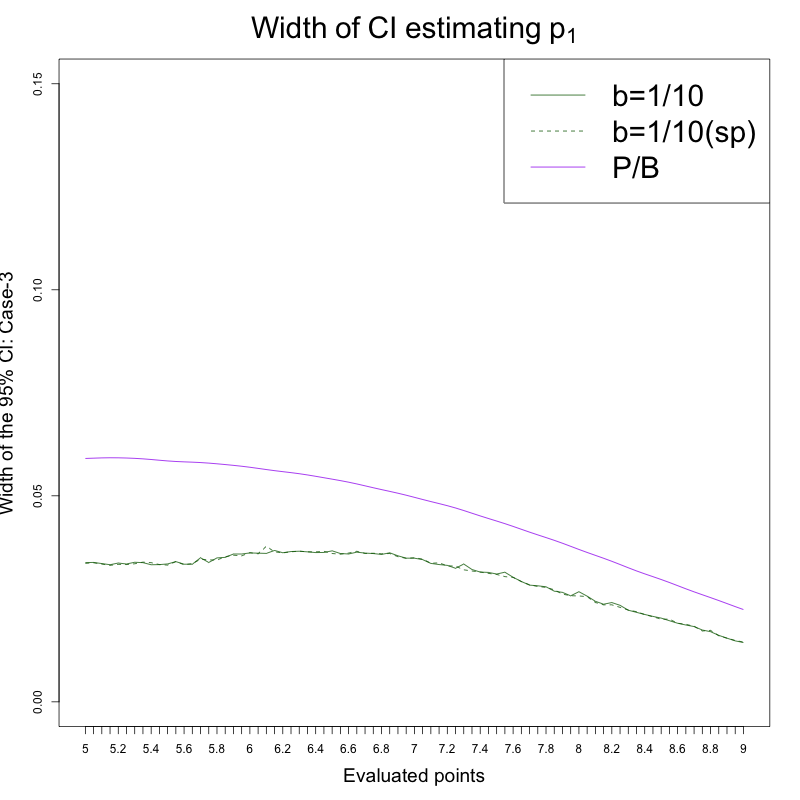}}
	\end{subfigure}
	\caption{\label{fig:wplots.case.3.p1}  Notational details can be found in Figure \ref{fig:wplots.case.1.p1}.}
\end{figure}

\begin{figure}
	\centering
	\begin{subfigure}[$p_0$; C1; $n=500$]{\label{wd500WW0}\includegraphics[width=50mm]{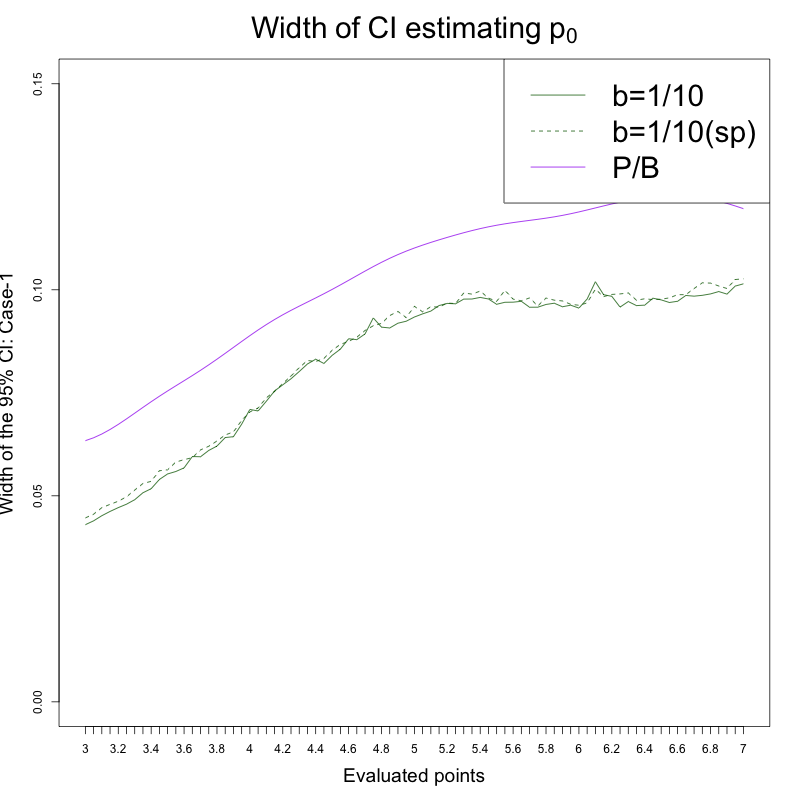}}
	\end{subfigure}
	\hspace{0.45cm}
	\begin{subfigure}[$p_0$; C1; $n=1000$]
		{\label{wd1000WW0}\includegraphics[width=50mm]{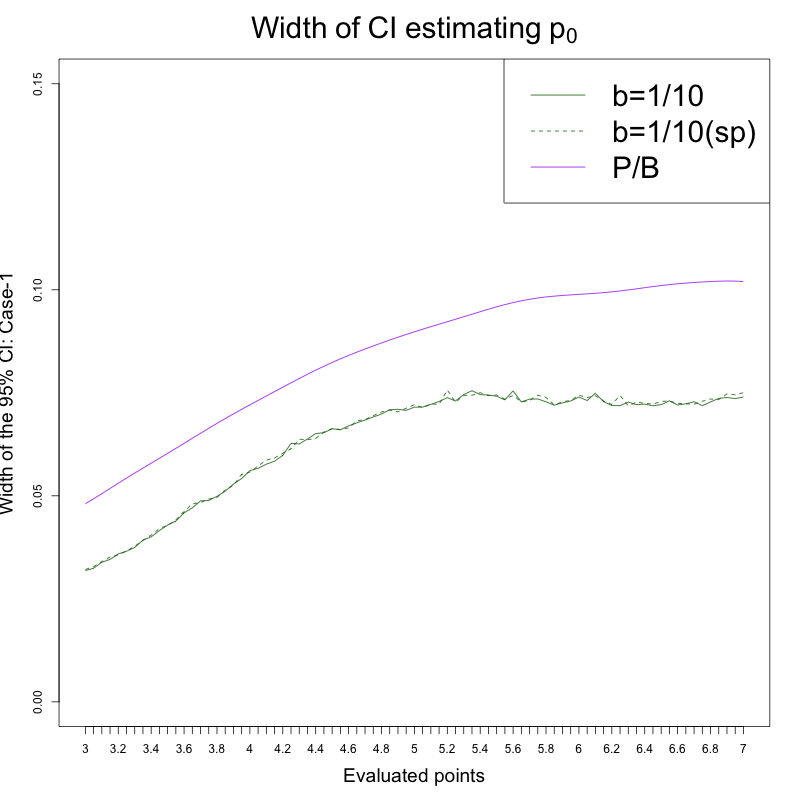}}
	\end{subfigure}\\
	\begin{subfigure}[$p_0$; C1; $n=2500$]
		{\label{wd2500WW0}\includegraphics[width=50mm]{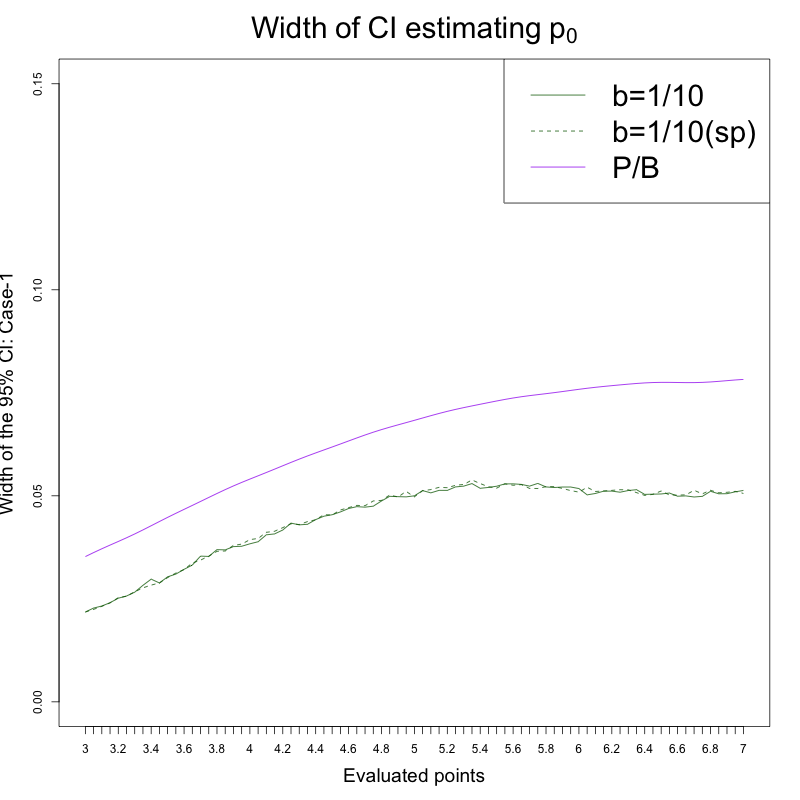}}
	\end{subfigure}
	\hspace{0.45cm}    
	\begin{subfigure}[$p_0$; C1; $n=4000$]{\label{wd4000WW0}\includegraphics[width=50mm]{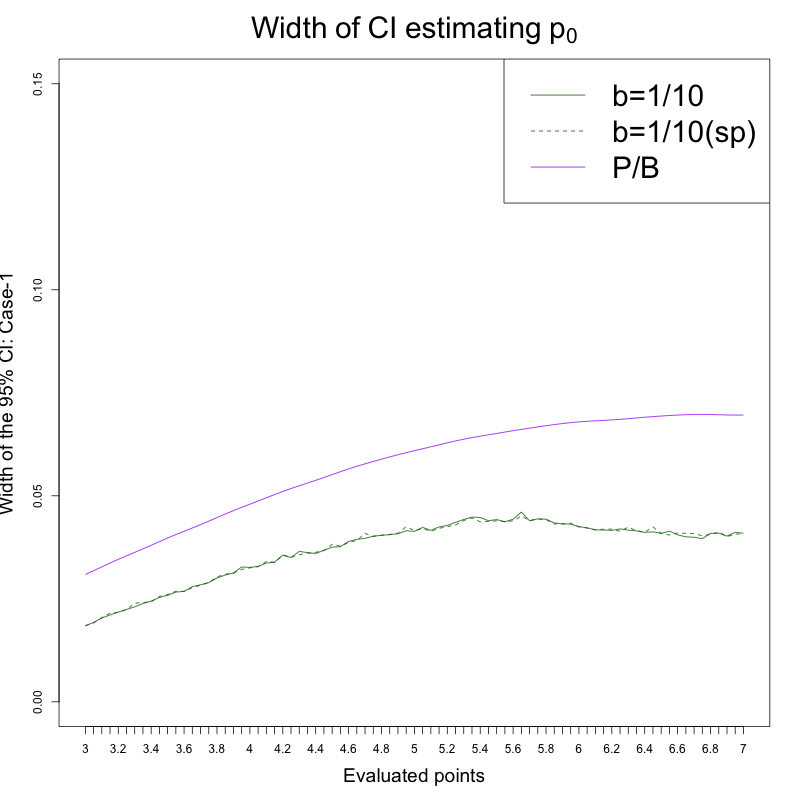}}
	\end{subfigure}\\
	\begin{subfigure}[$p_0$; C1; $n=6000$]
		{\label{wd6000WW0}\includegraphics[width=50mm]{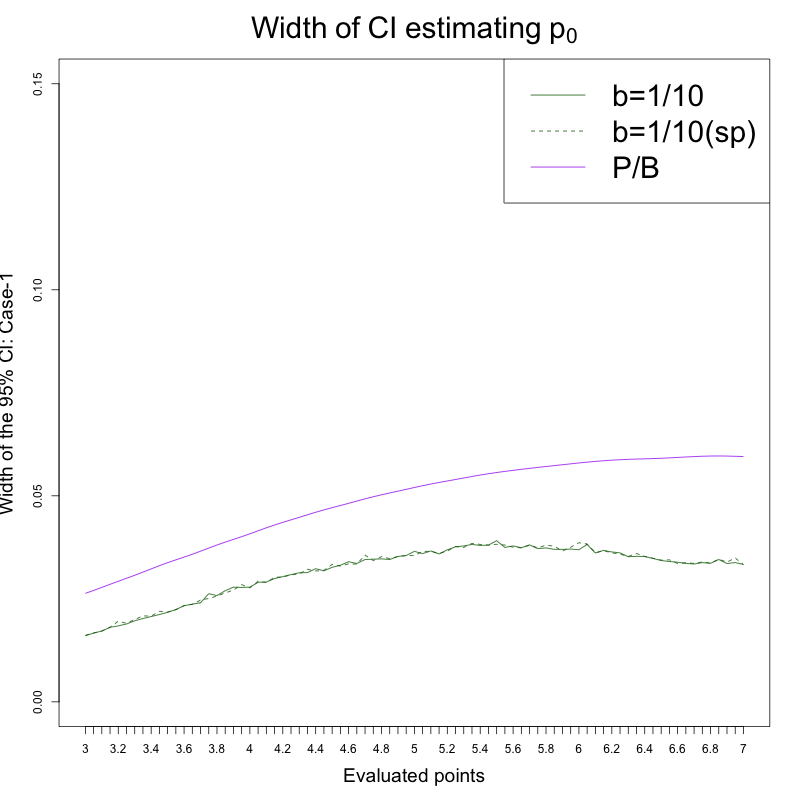}}
	\end{subfigure}
	\hspace{0.45cm}
	\begin{subfigure}[$p_0$; C1; $n=8000$]
		{\label{wd8000WW0}\includegraphics[width=50mm]{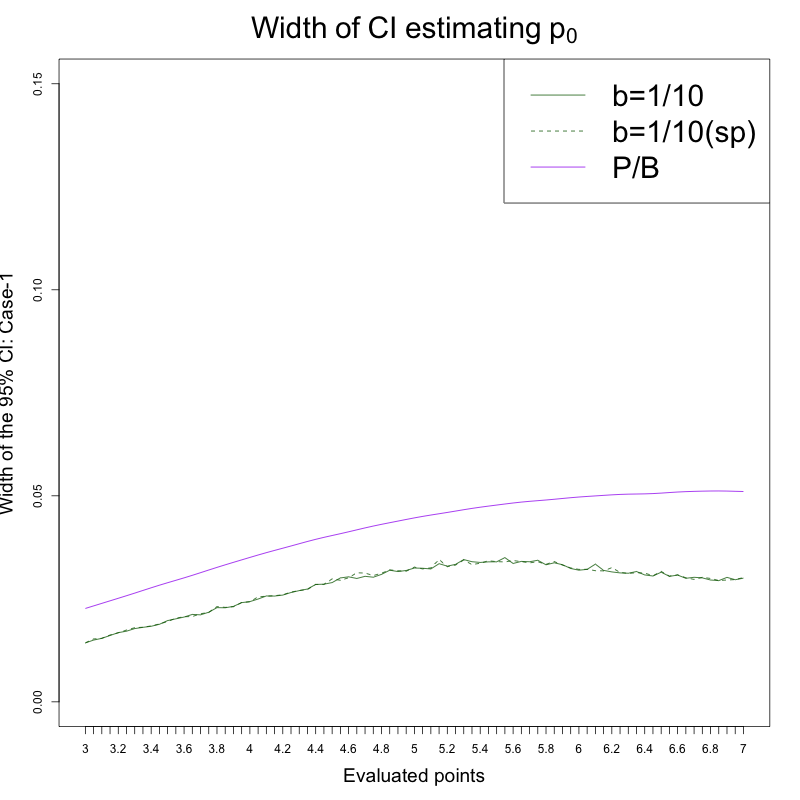}}
	\end{subfigure}
	\caption{\label{fig:wplots.case.1.p0} Notational details can be found in Figure \ref{fig:wplots.case.1.p1}.}
\end{figure}

\begin{figure}
	\centering
	\begin{subfigure}[$p_0$; C2; $n=500$]{\label{wd500WM0}\includegraphics[width=50mm]{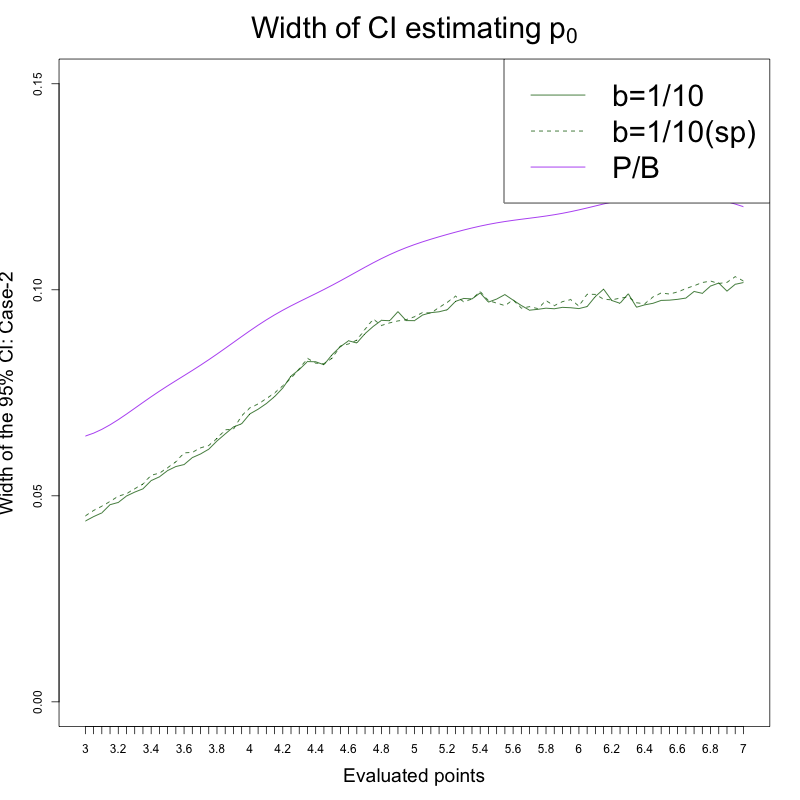}}
	\end{subfigure}
	\hspace{0.45cm}
	\begin{subfigure}[$p_0$; C2; $n=1000$]
		{\label{wd1000WM0}\includegraphics[width=50mm]{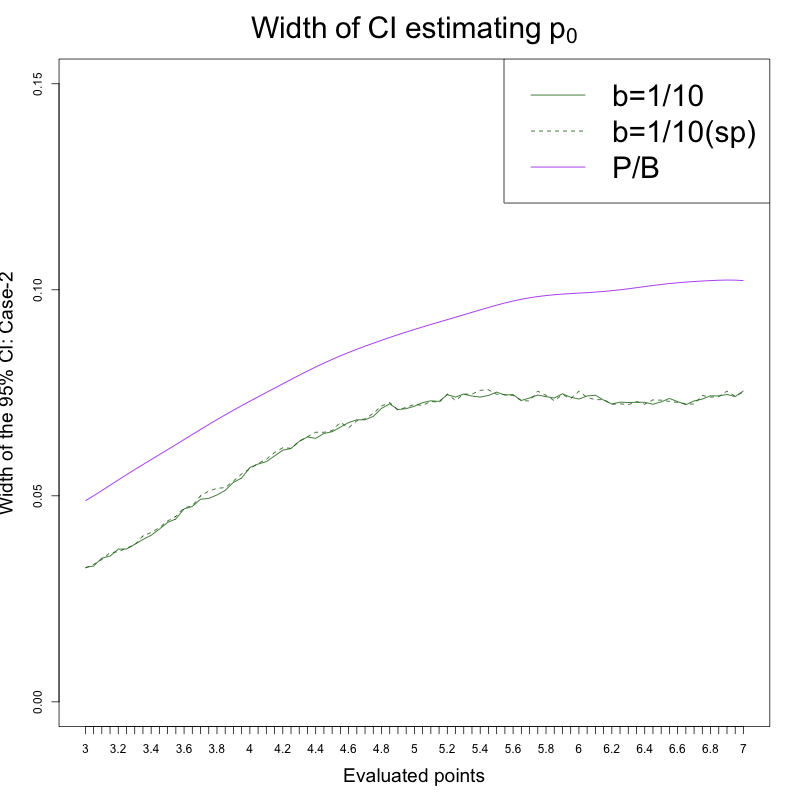}}
	\end{subfigure}\\
	\begin{subfigure}[$p_0$; C2; $n=2500$]
		{\label{wd2500WM0}\includegraphics[width=50mm]{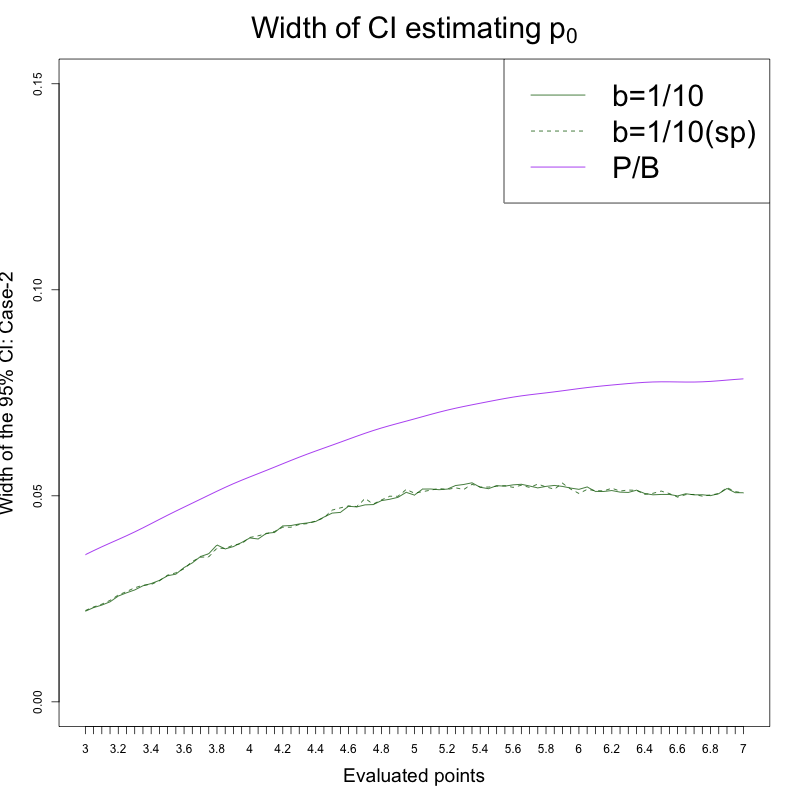}}
	\end{subfigure}
	\hspace{0.45cm}    
	\begin{subfigure}[$p_0$; C2; $n=4000$]{\label{wd4000WM0}\includegraphics[width=50mm]{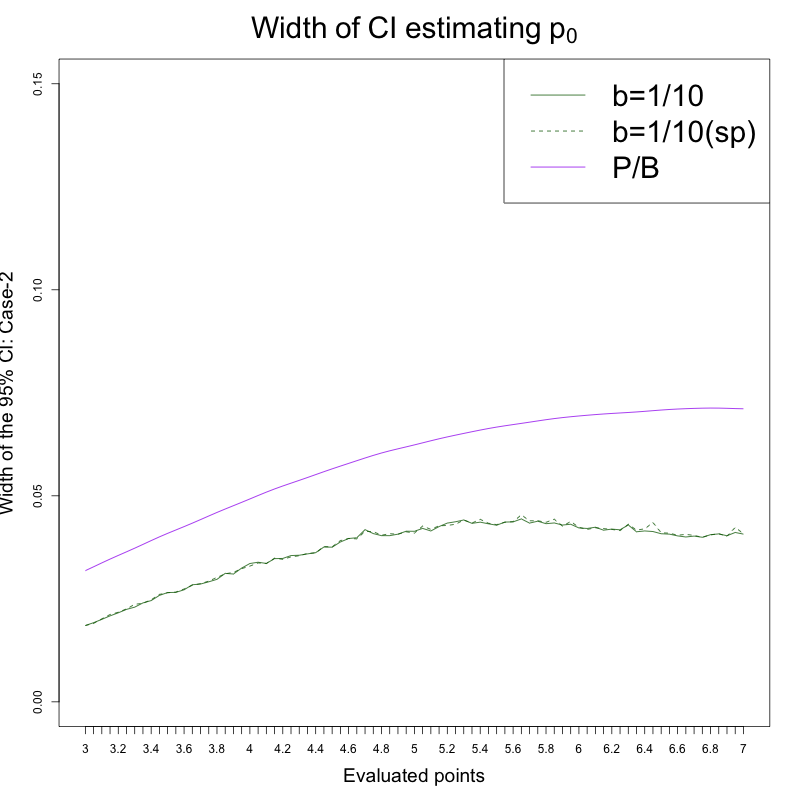}}
	\end{subfigure}\\
	\begin{subfigure}[$p_0$; C2; $n=6000$]
		{\label{wd6000WM0}\includegraphics[width=50mm]{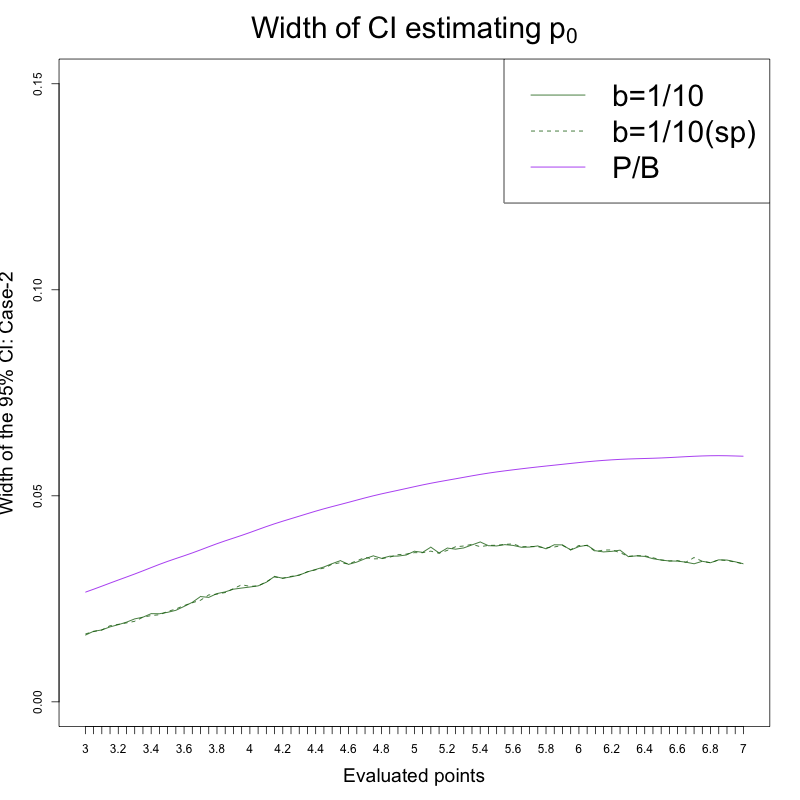}}
	\end{subfigure}
	\hspace{0.45cm}
	\begin{subfigure}[$p_0$; C2; $n=8000$]
		{\label{wd8000WM0}\includegraphics[width=50mm]{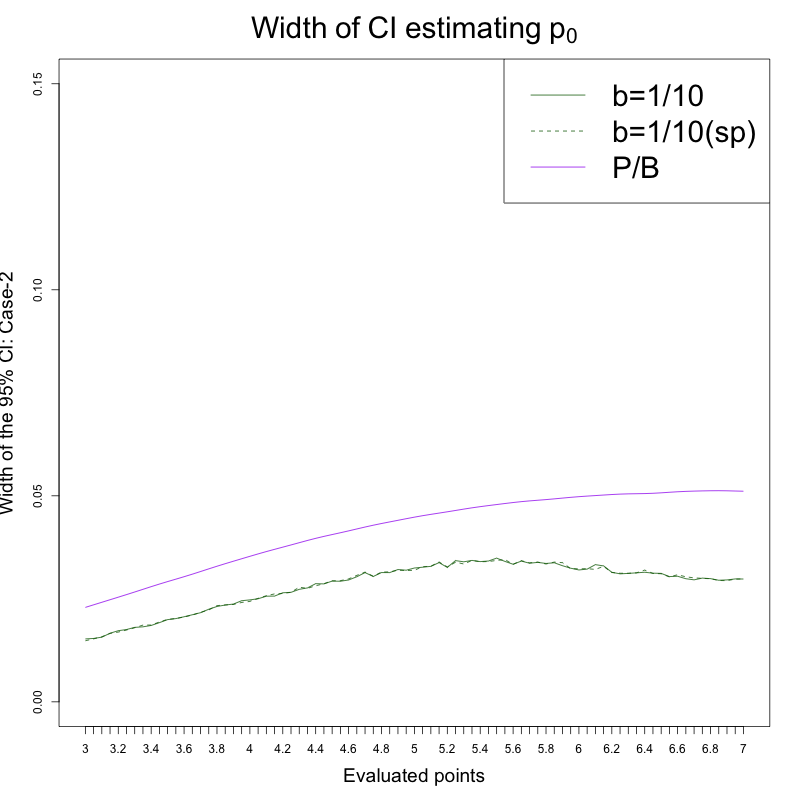}}
	\end{subfigure}
	\caption{\label{fig:wplots.case.2.p0} Notational details can be found in Figure \ref{fig:wplots.case.1.p1}.}
\end{figure}

\begin{figure}
	\centering
	\begin{subfigure}[$p_0$; C3; $n=500$]{\label{wd500MW0}\includegraphics[width=50mm]{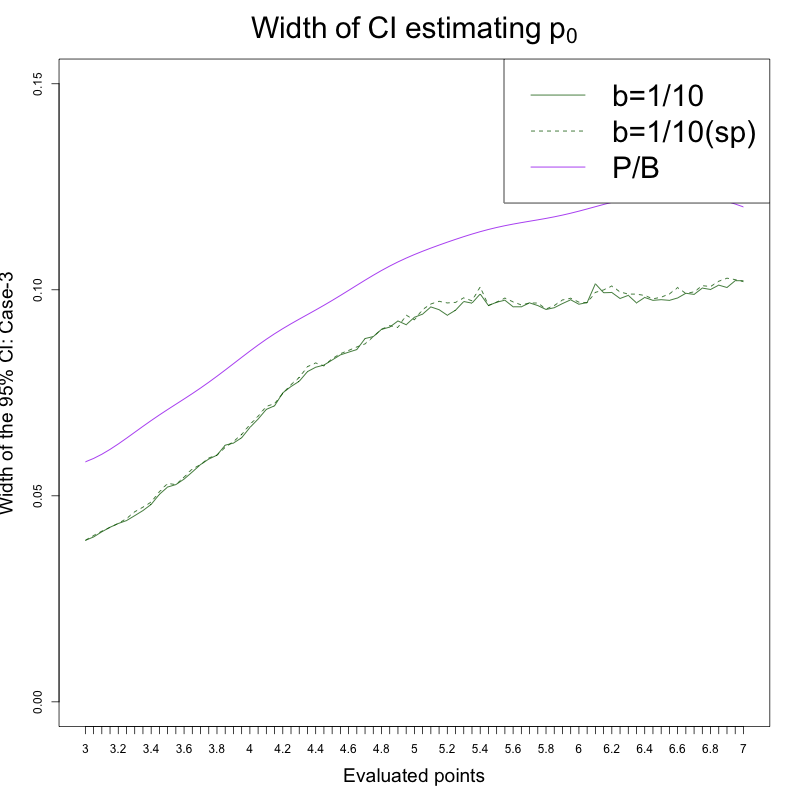}}
	\end{subfigure}
	\hspace{0.45cm}
	\begin{subfigure}[$p_0$; C3; $n=1000$]
		{\label{wd1000MW0}\includegraphics[width=50mm]{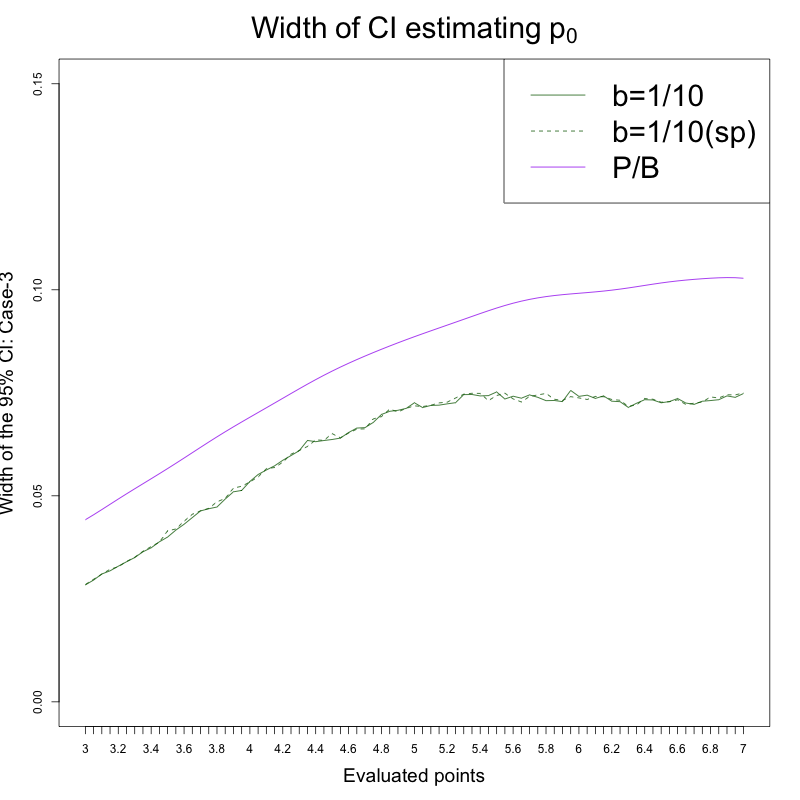}}
	\end{subfigure}\\
	\begin{subfigure}[$p_0$; C3; $n=2500$]
		{\label{wd2500MW0}\includegraphics[width=50mm]{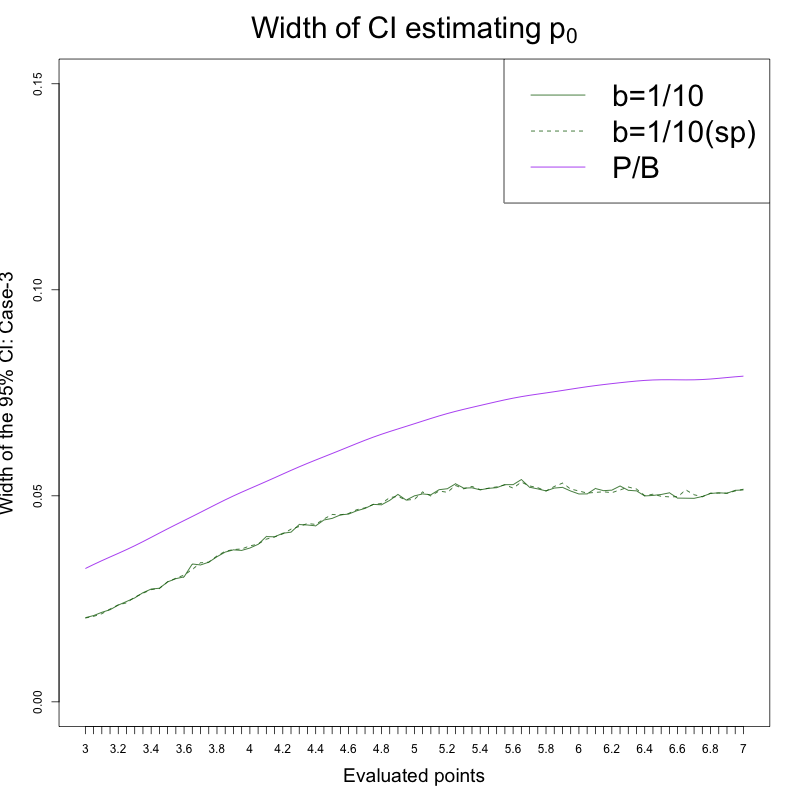}}
	\end{subfigure}
	\hspace{0.45cm}
	\begin{subfigure}[$p_0$; C3; $n=4000$]{\label{wd4000MW0}\includegraphics[width=50mm]{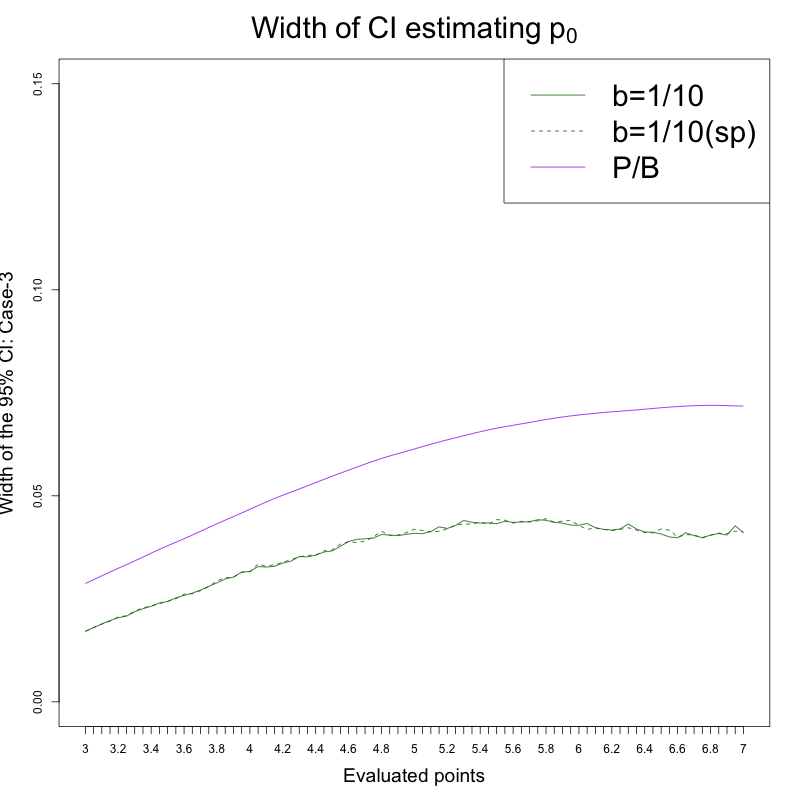}}
	\end{subfigure}\\
	\begin{subfigure}[$p_0$; C3; $n=6000$]
		{\label{wd6000MW0}\includegraphics[width=50mm]{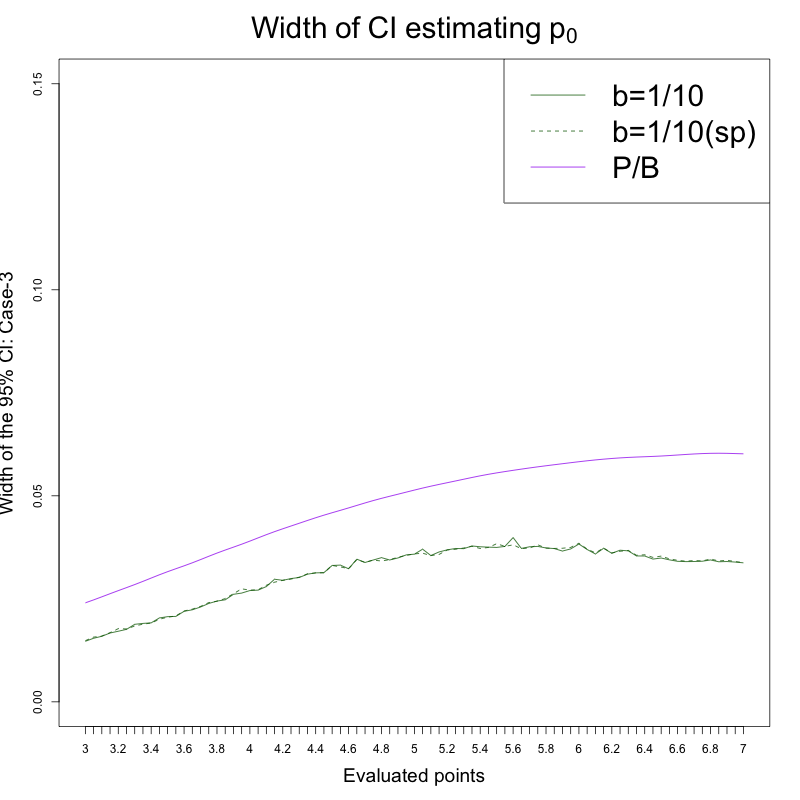}}
	\end{subfigure}
	\hspace{0.45cm}
	\begin{subfigure}[$p_0$; C3; $n=8000$]
		{\label{wd8000MW0}\includegraphics[width=50mm]{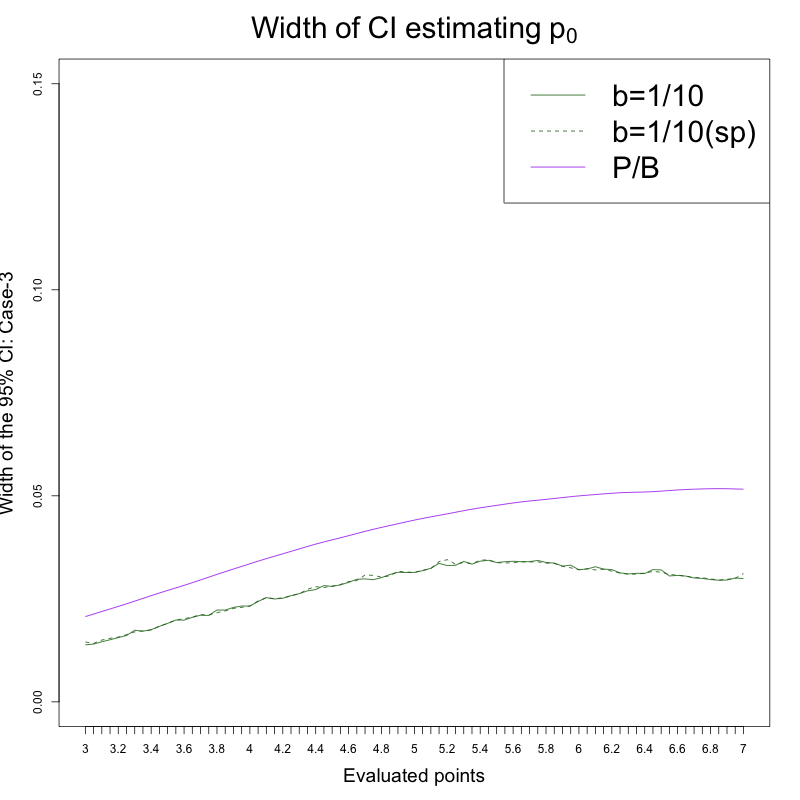}}
	\end{subfigure}
	\caption{\label{fig:wplots.case.3.p0}  Notational details can be found in Figure \ref{fig:wplots.case.1.p1}.}
\end{figure}

\begin{figure}
	\centering
	\begin{subfigure}[$p_1-p_0$; C1; $n=500$]{\label{cont500ww}\includegraphics[width=70mm]{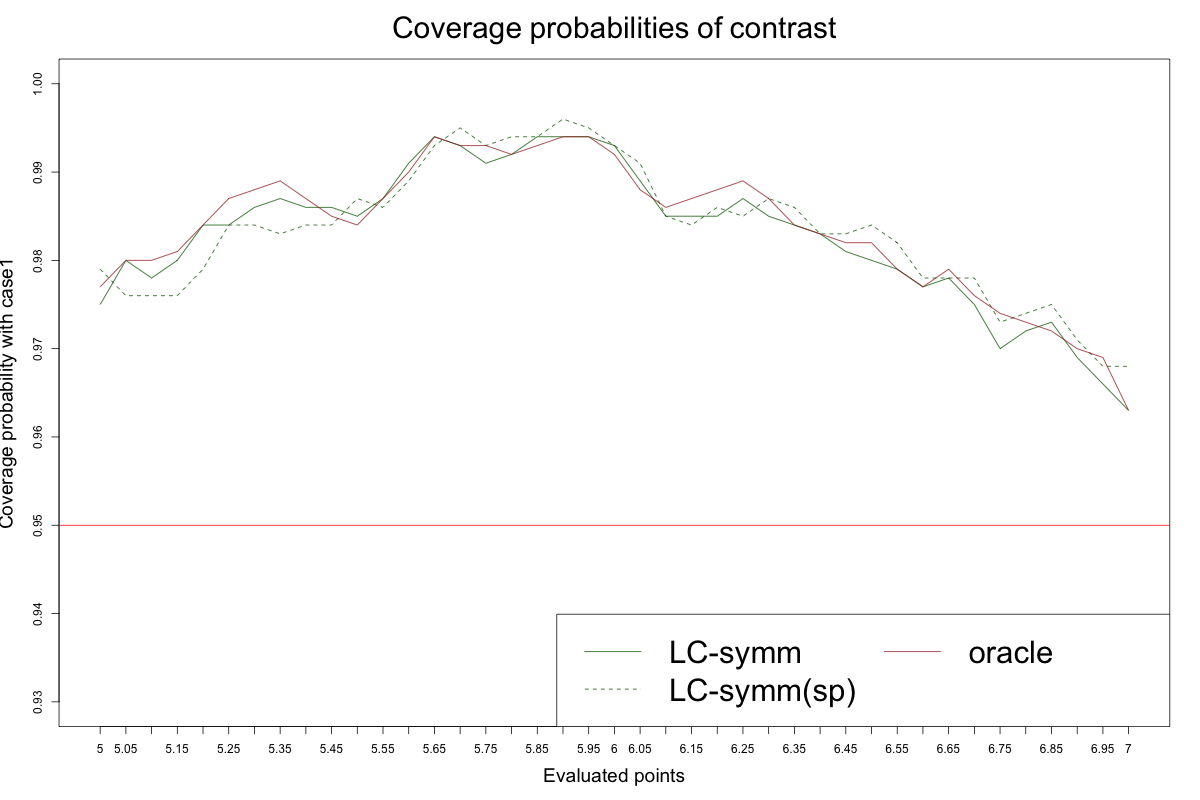}}
	\end{subfigure}
	\hspace{0.45cm}
	\begin{subfigure}[$p_1-p_0$; C1; $n=1000$]
		{\label{cont1000ww}\includegraphics[width=70mm]{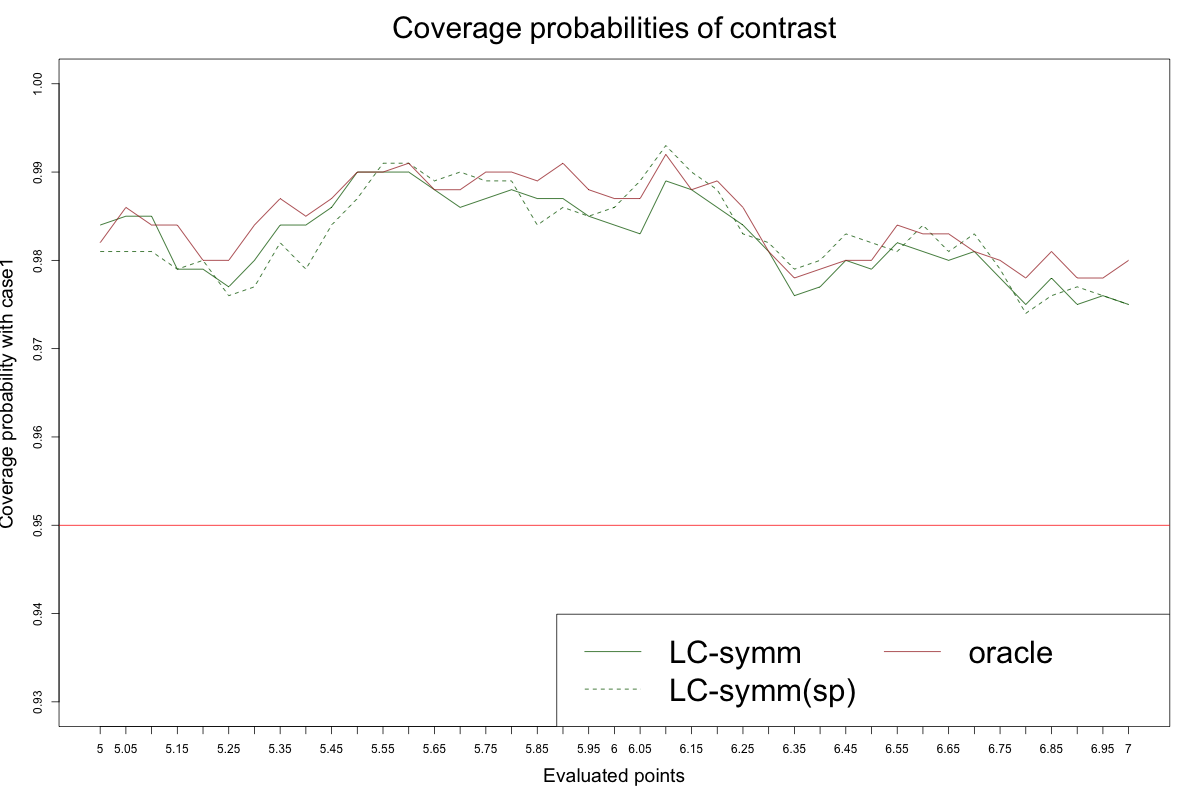}}
	\end{subfigure}\\
	\begin{subfigure}[$p_1-p_0$; C1; $n=2500$]
		{\label{cont2500ww}\includegraphics[width=70mm]{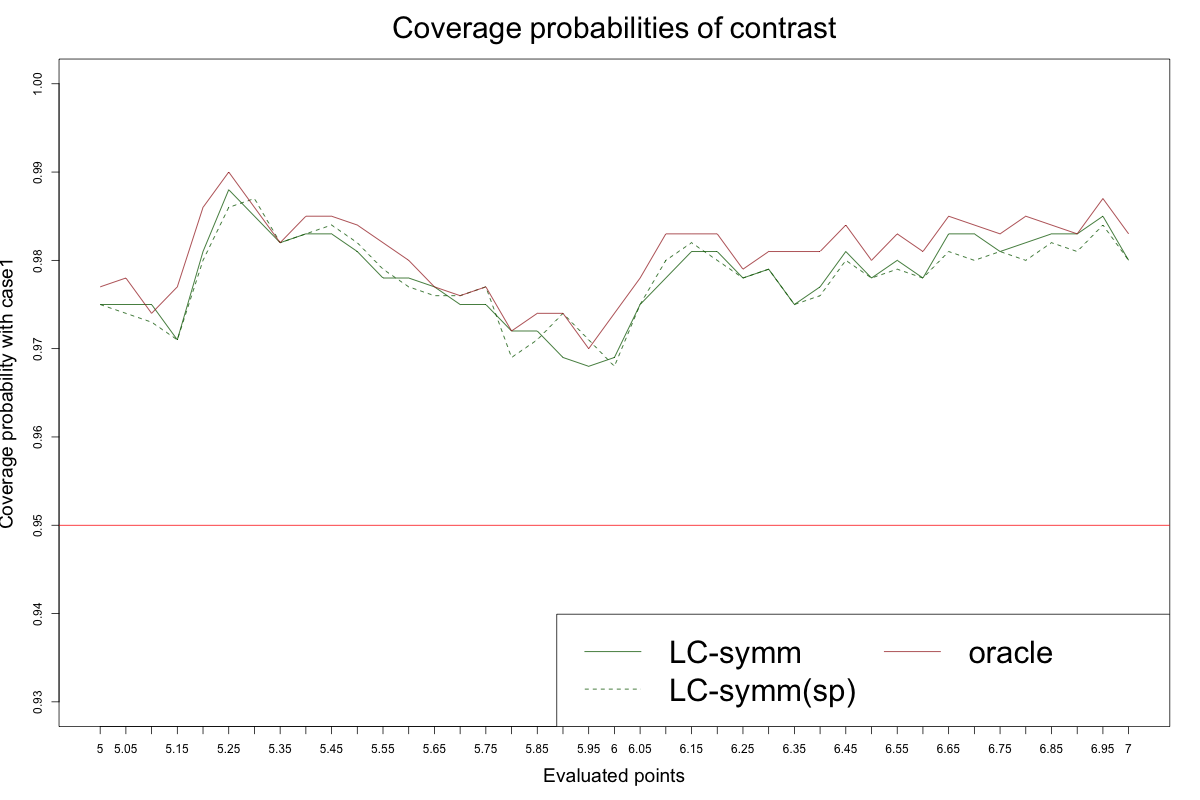}}
	\end{subfigure}
	\hspace{0.45cm}
	\begin{subfigure}[$p_1-p_0$; C1; $n=4000$]{\label{cont4000ww}\includegraphics[width=70mm]{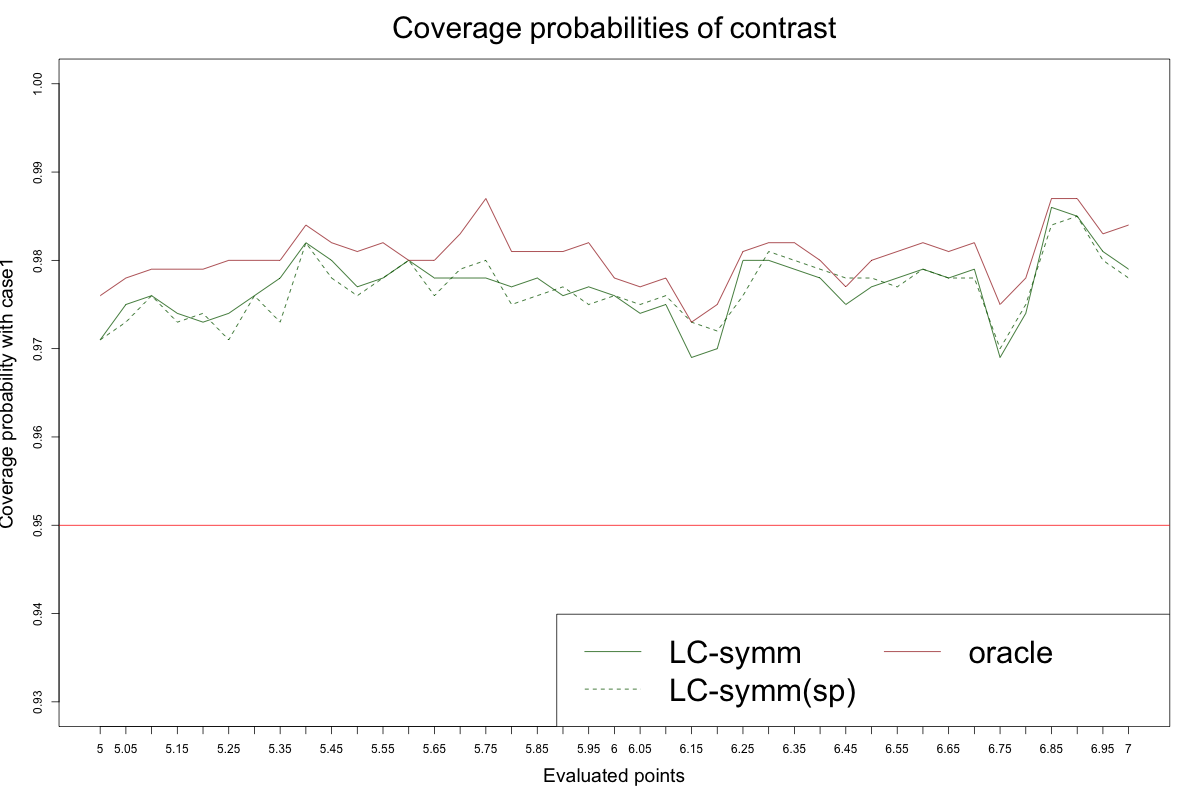}}
	\end{subfigure}\\
	\begin{subfigure}[$p_1-p_0$; C1; $n=6000$]
		{\label{cont6000ww}\includegraphics[width=70mm]{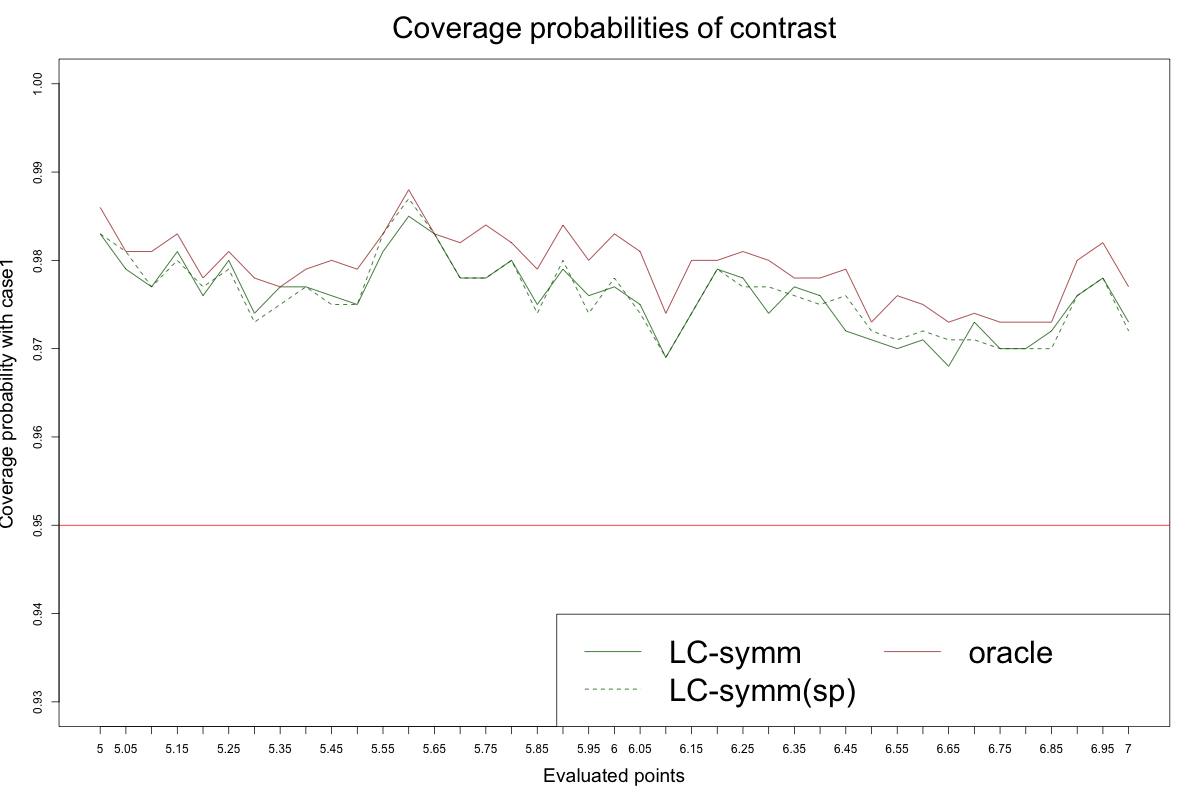}}
	\end{subfigure}
	\hspace{0.45cm}
	\begin{subfigure}[$p_1-p_0$; C1; $n=8000$]
		{\label{cont8000ww}\includegraphics[width=70mm]{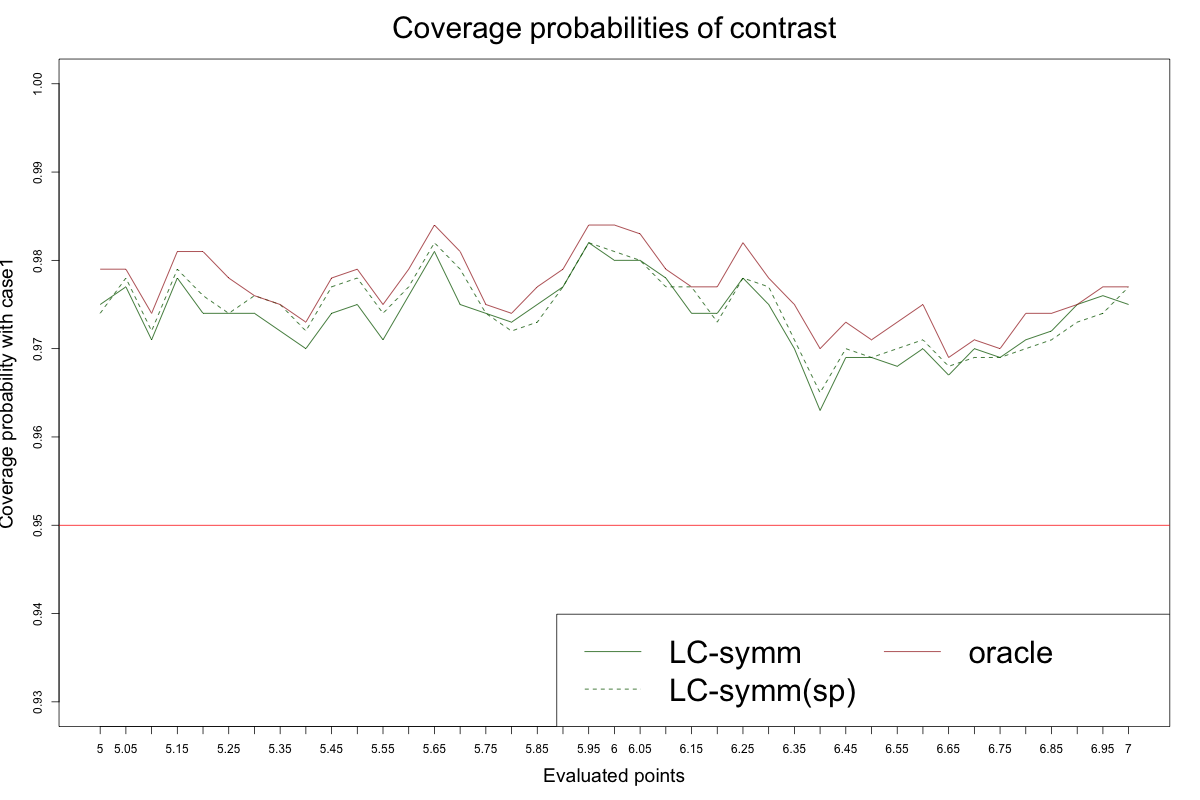}}
	\end{subfigure}
	\caption{\label{fig:contplots.ww} The above displays are coverage probabilities (for $p_1-p_0$ at 41 equally spaced point) with our proposed log-concave projection estimators' 95\% difference CI's with the suggested tuning parameter $b=1/10$ which is labeled as LC (see Section \ref{subsec:tuning.param}).
	For the Case 1 where both nuisance functions are well-specified, we also use true value of $\chi_{\theta_a}$ to construct the oracle 95\% difference CI which is labeled as oracle in the displays.
 Each subcaption describes the sample size, and each case of nuisance estimations (Case 1, 2, or 3 abbreviated to C1, C2, and C3, respectively).}
\end{figure}

\begin{figure}
	\centering
	\begin{subfigure}[$p_1-p_0$; C2; $n=500$]{\label{cont500wm}\includegraphics[width=70mm]{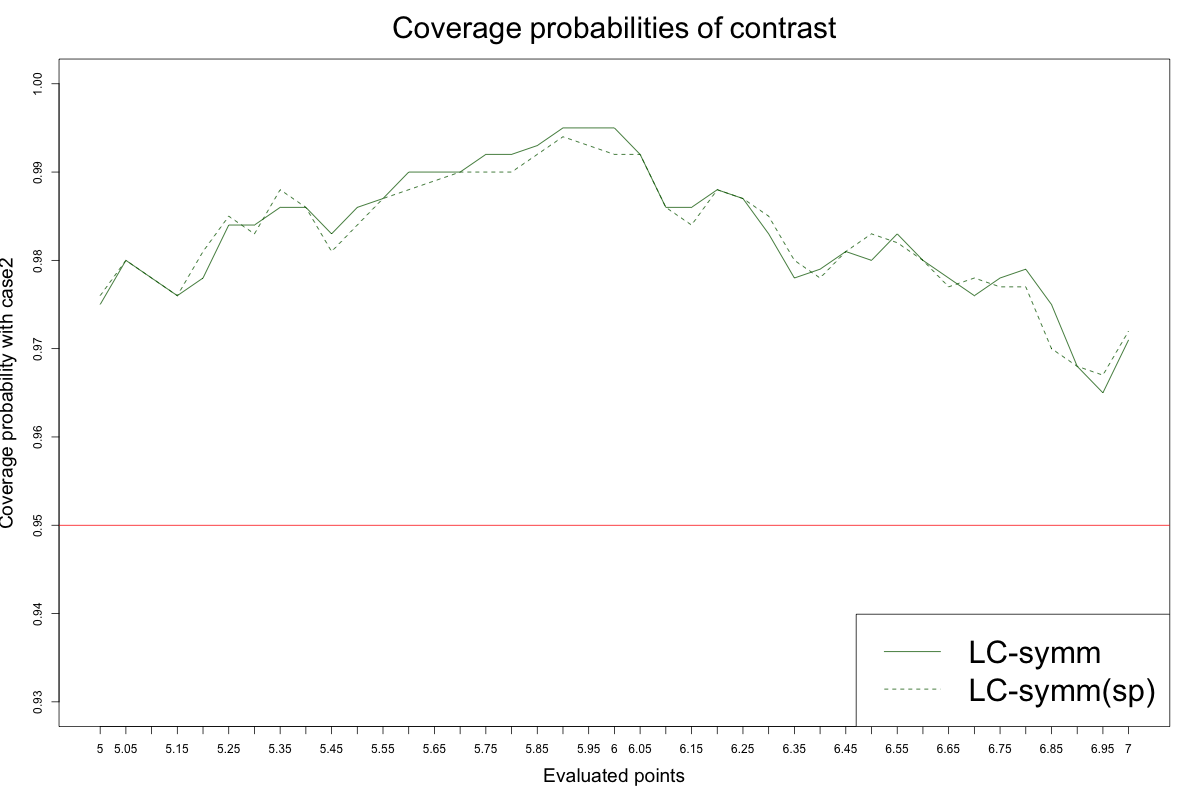}}
	\end{subfigure}
	\hspace{0.45cm}
	\begin{subfigure}[$p_1-p_0$; C2; $n=1000$]
		{\label{cont1000wm}\includegraphics[width=70mm]{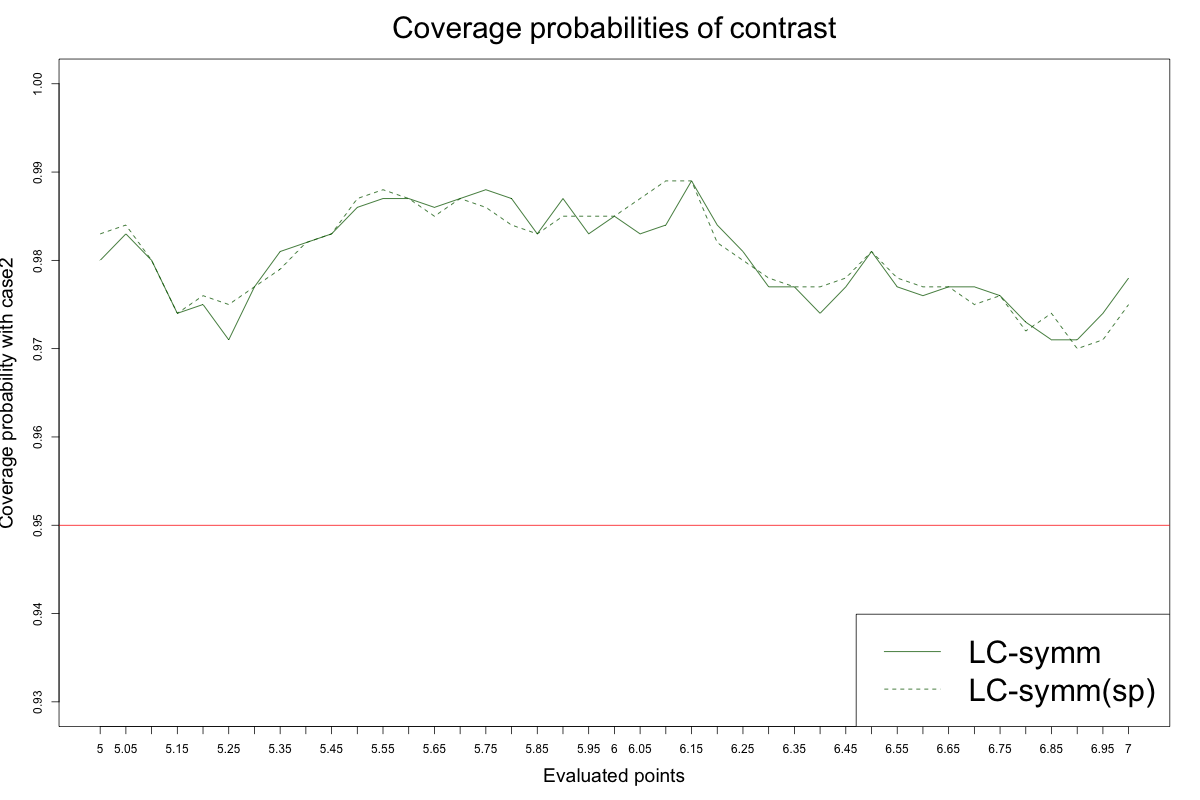}}
	\end{subfigure}\\
	\begin{subfigure}[$p_1-p_0$; C2; $n=2500$]
		{\label{cont2500wm}\includegraphics[width=70mm]{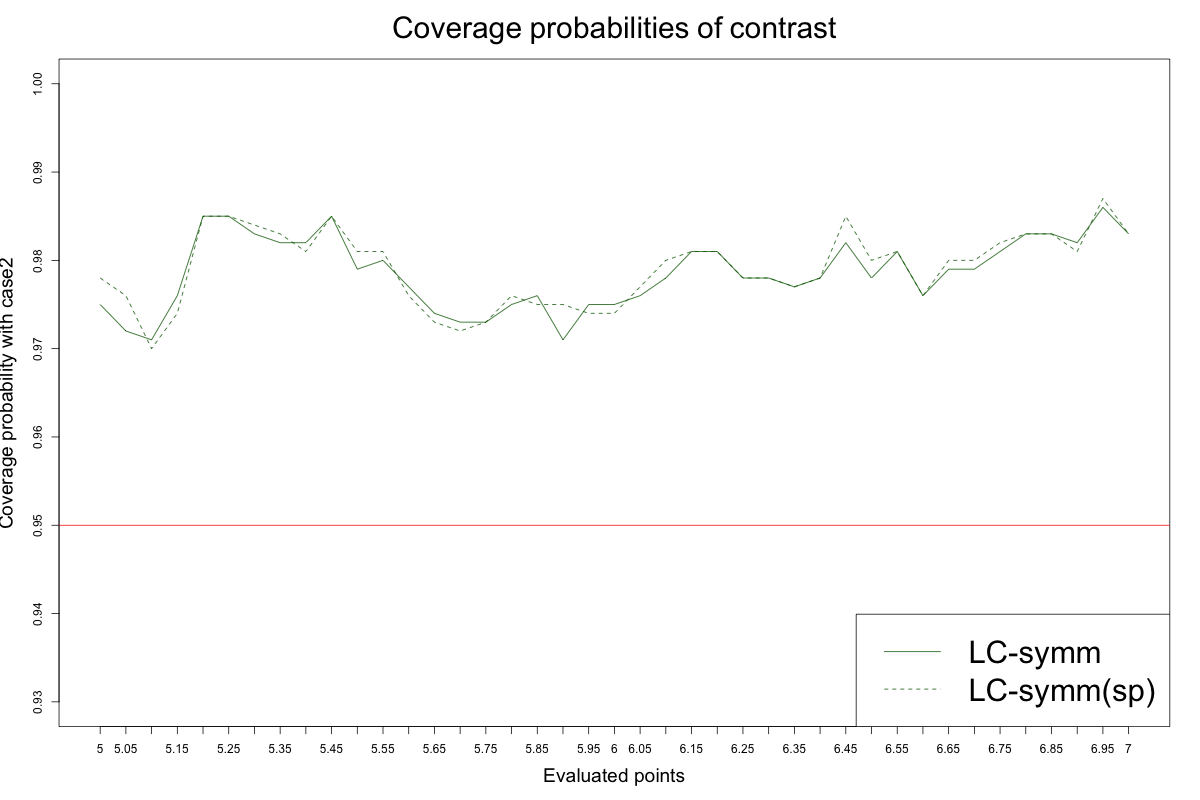}}
	\end{subfigure}
	\hspace{0.45cm}
	\begin{subfigure}[$p_1-p_0$; C2; $n=4000$]{\label{cont4000wm}\includegraphics[width=70mm]{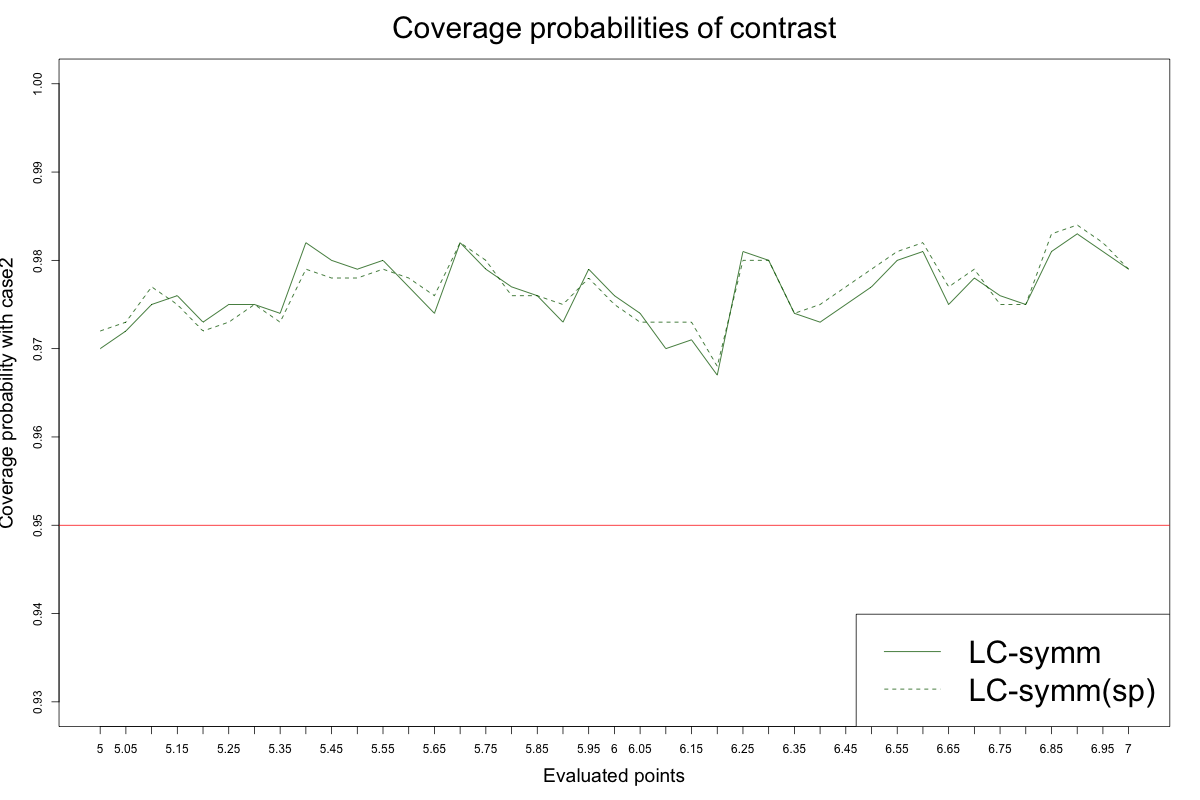}}
	\end{subfigure}\\
	\begin{subfigure}[$p_1-p_0$; C2; $n=6000$]
		{\label{cont6000wm}\includegraphics[width=70mm]{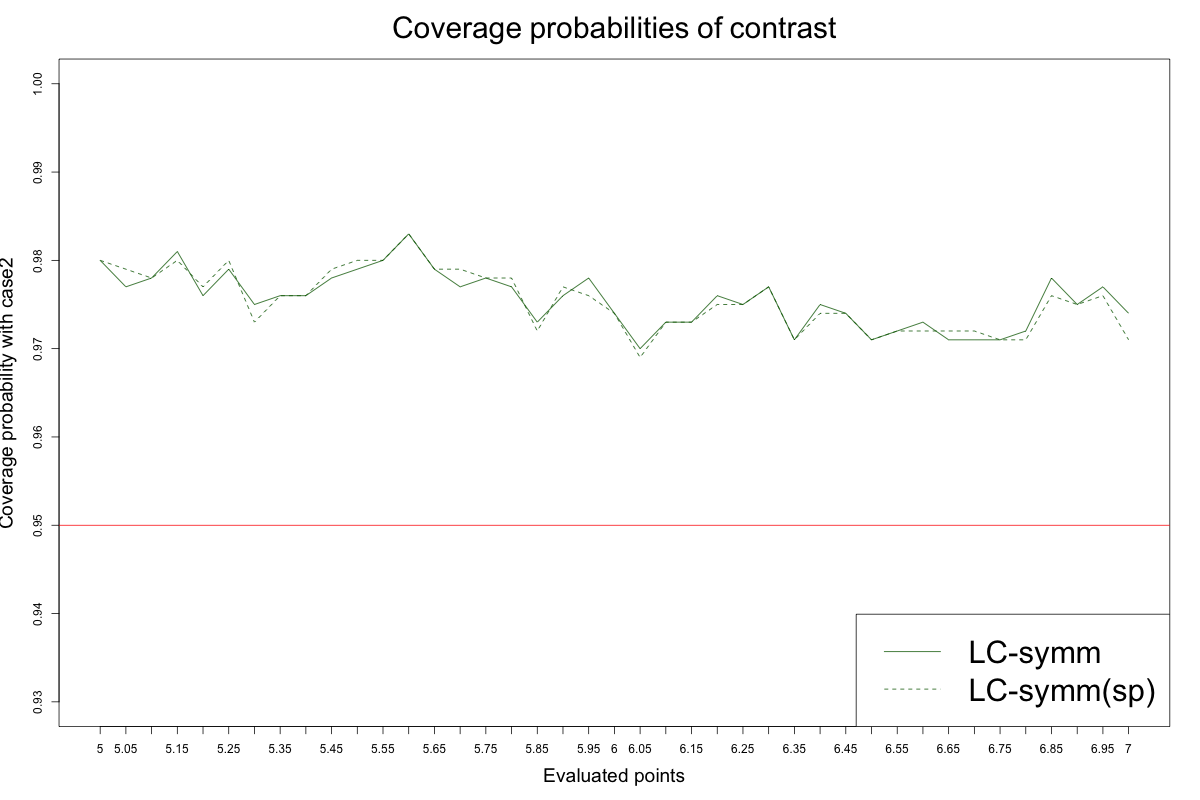}}
	\end{subfigure}
	\hspace{0.45cm}
	\begin{subfigure}[$p_1-p_0$; C2; $n=8000$]
		{\label{cont8000wm}\includegraphics[width=70mm]{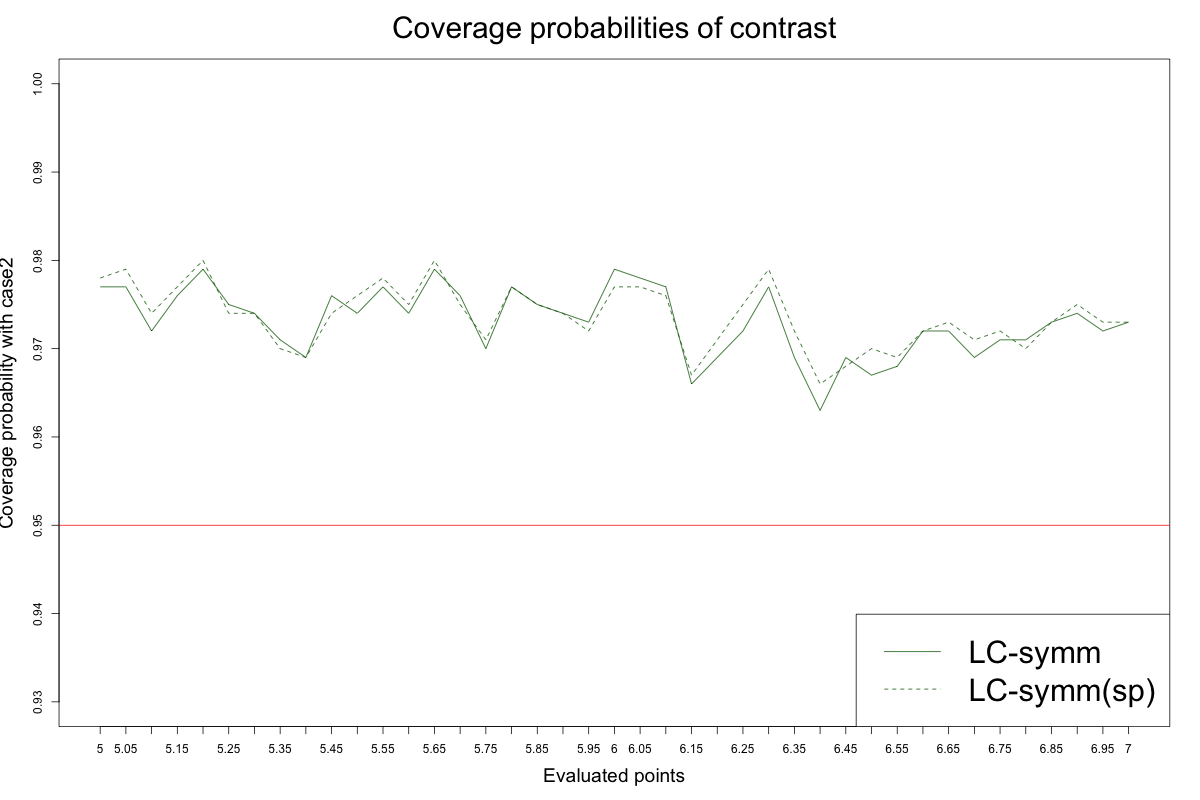}}
	\end{subfigure}
	\caption{\label{fig:contplots.wm} Notational details can be found in Figure \ref{fig:contplots.ww}.}
\end{figure}

\begin{figure}
	\centering
	\begin{subfigure}[$p_1-p_0$; C3; $n=500$]{\label{cont500mw}\includegraphics[width=70mm]{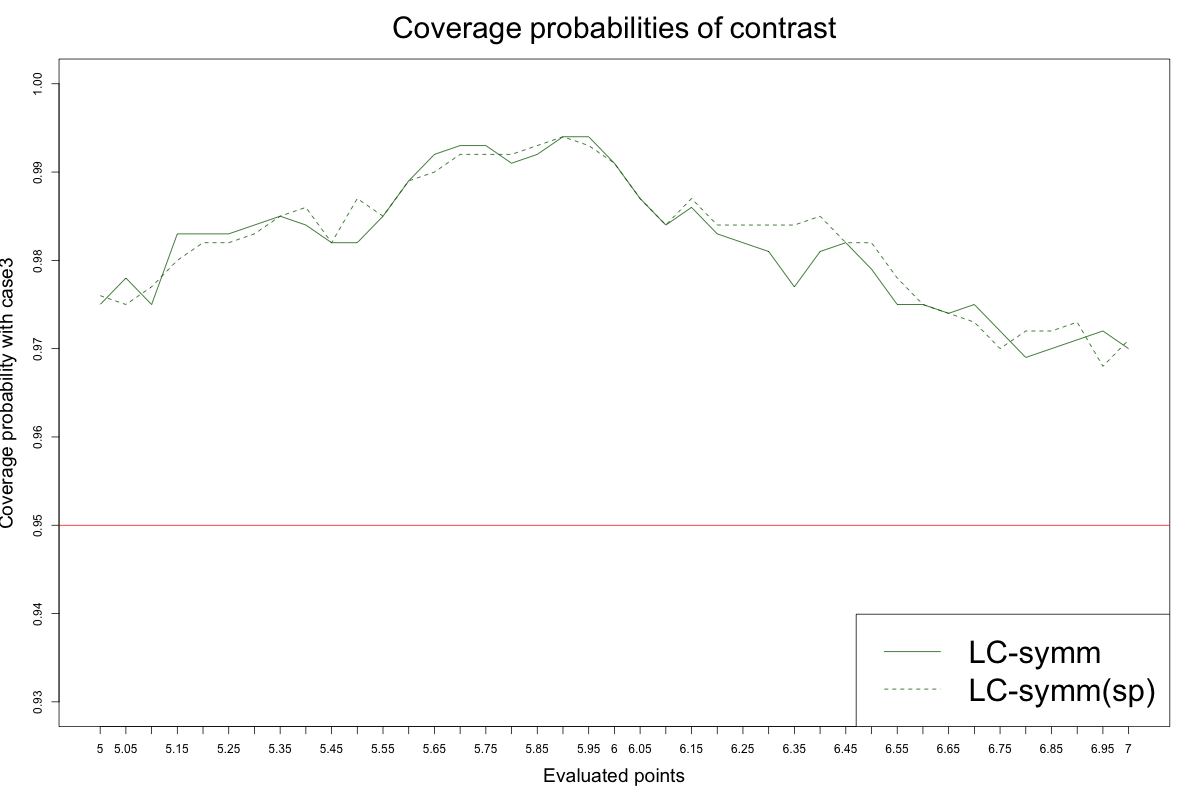}}
	\end{subfigure}
	\hspace{0.45cm}
	\begin{subfigure}[$p_1-p_0$; C3; $n=1000$]
		{\label{cont1000mw}\includegraphics[width=70mm]{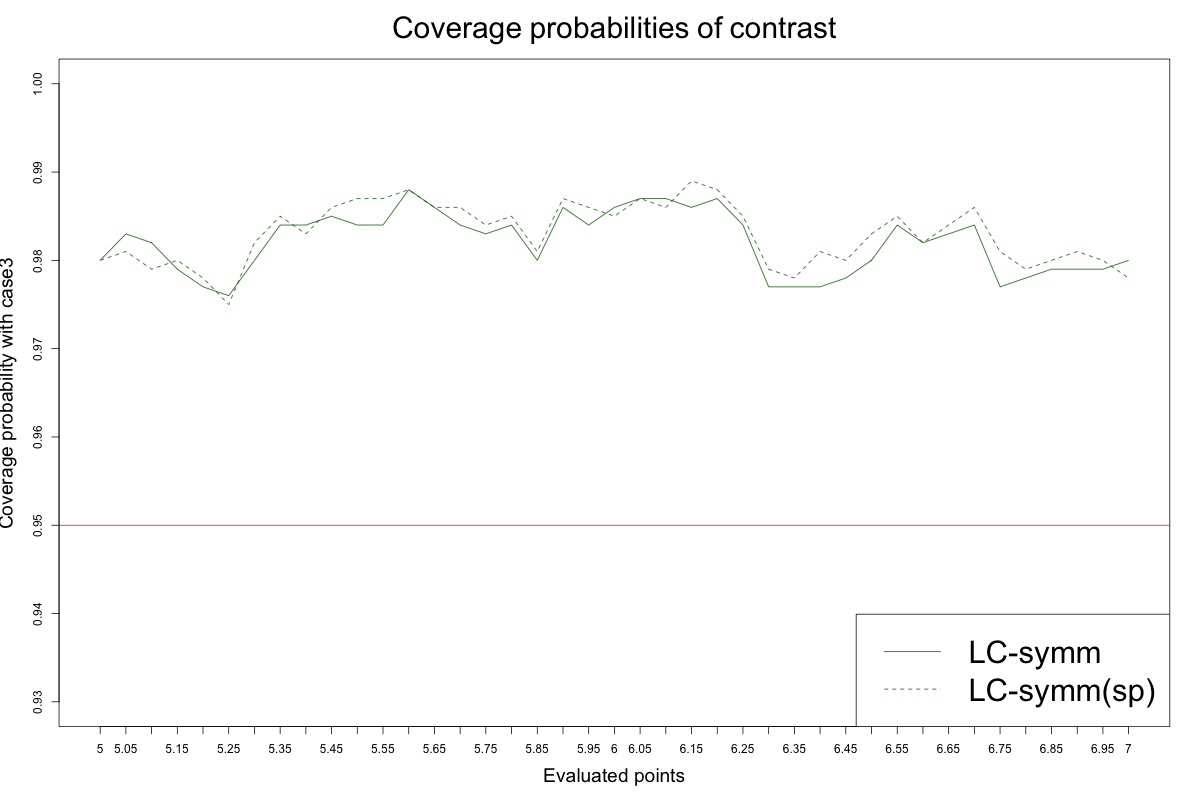}}
	\end{subfigure}\\
	\begin{subfigure}[$p_1-p_0$; C3; $n=2500$]
		{\label{cont2500mw}\includegraphics[width=70mm]{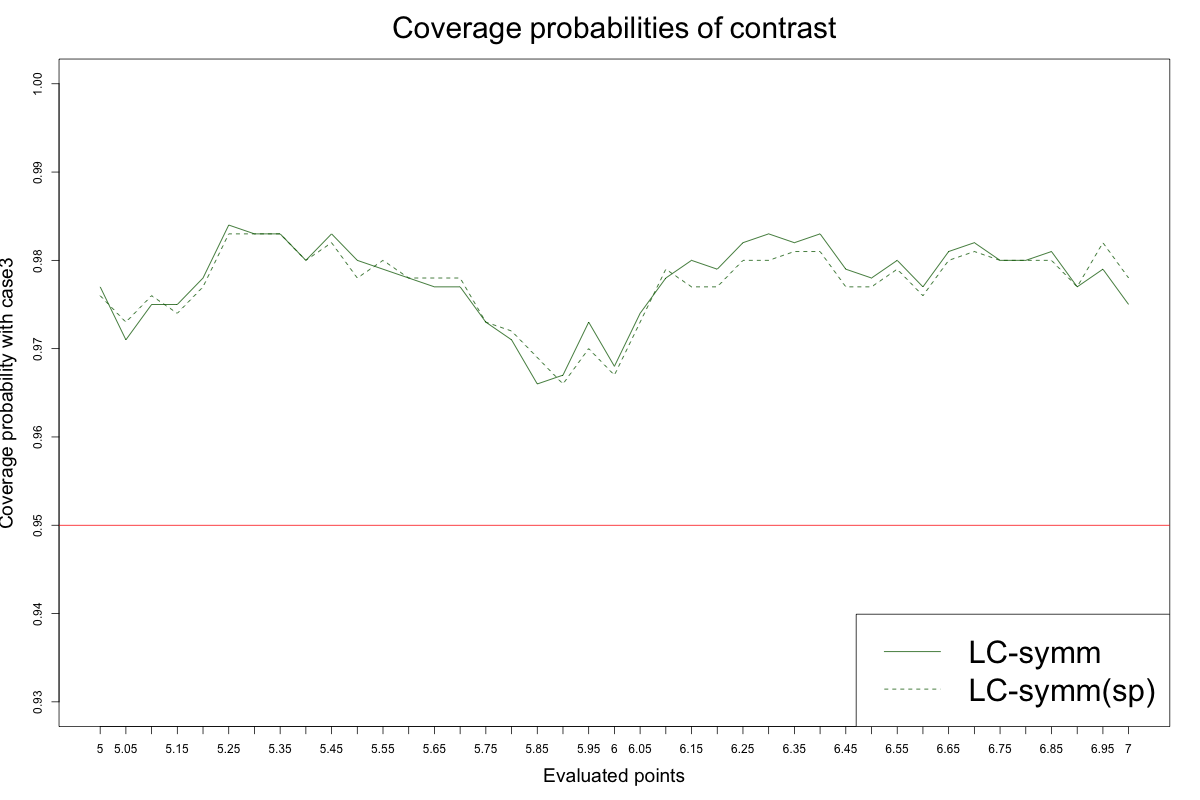}}
	\end{subfigure}
	\hspace{0.45cm}
	\begin{subfigure}[$p_1-p_0$; C3; $n=4000$]{\label{cont4000mw}\includegraphics[width=70mm]{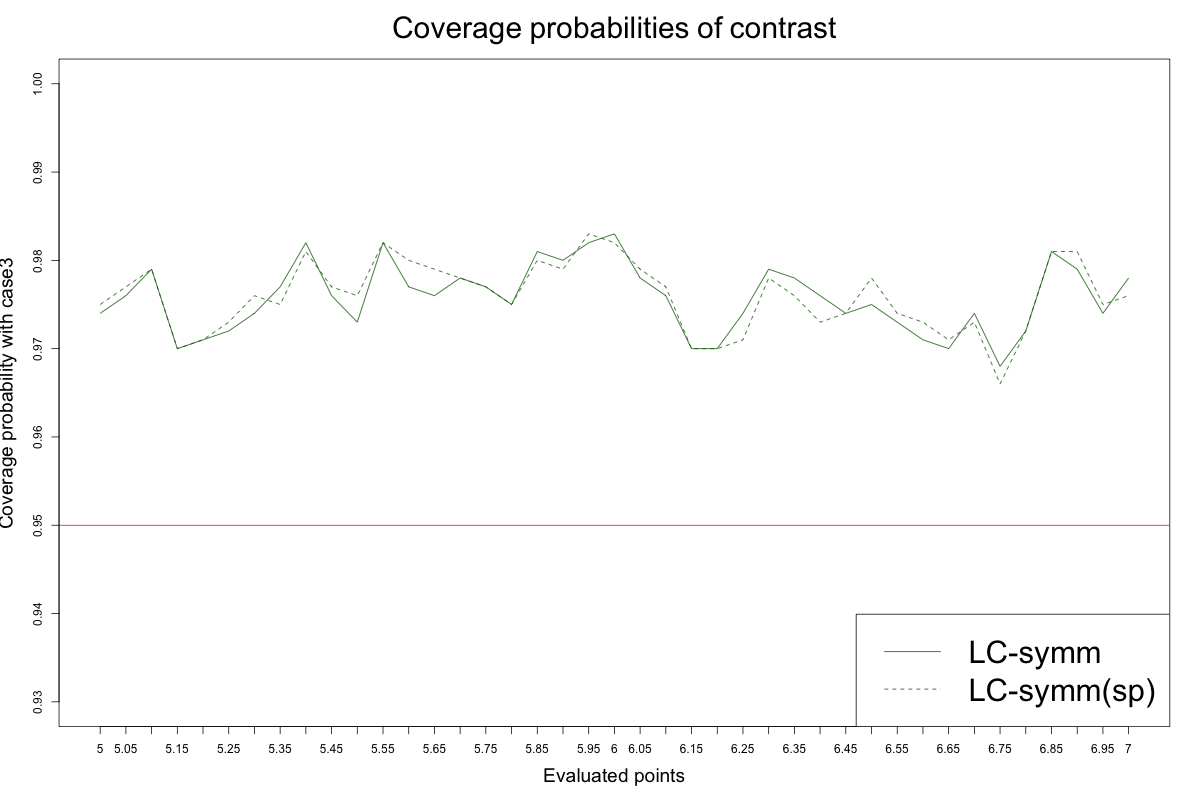}}
	\end{subfigure}\\
	\begin{subfigure}[$p_1-p_0$; C3; $n=6000$]
		{\label{cont6000mw}\includegraphics[width=70mm]{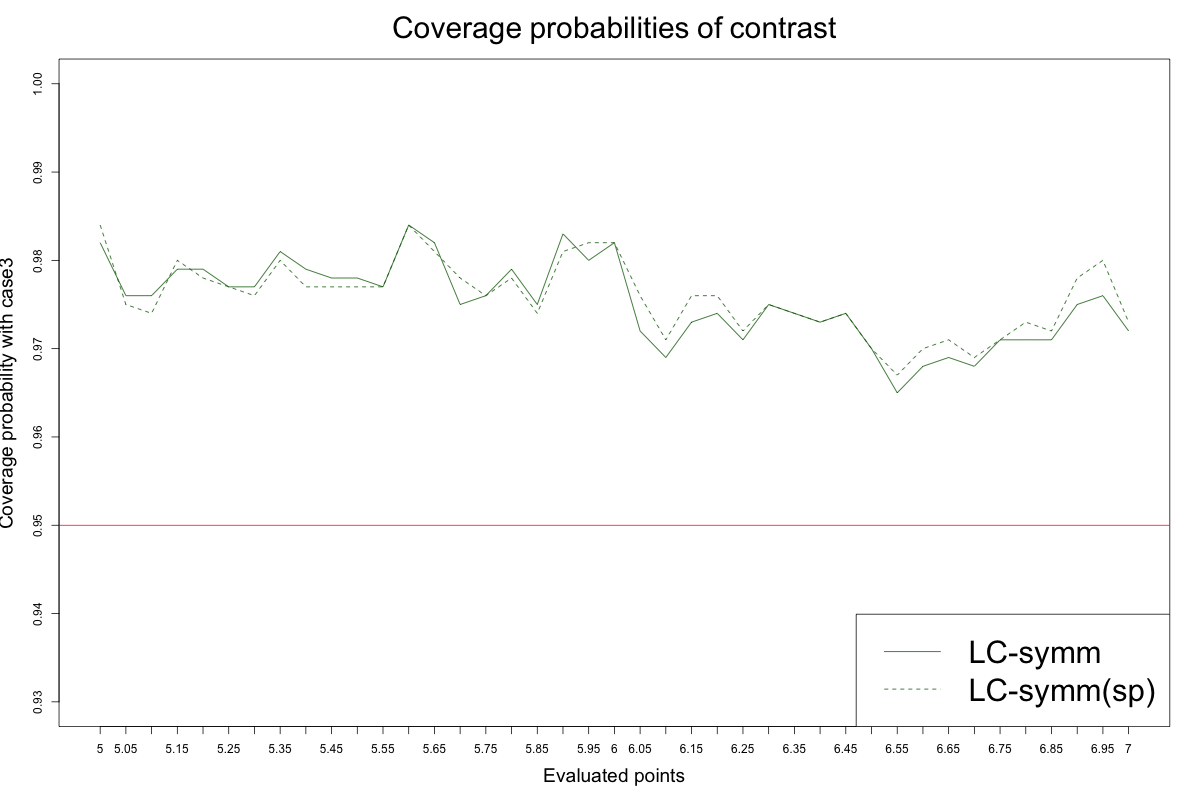}}
	\end{subfigure}
	\hspace{0.45cm}
	\begin{subfigure}[$p_1-p_0$; C3; $n=8000$]
		{\label{cont8000mw}\includegraphics[width=70mm]{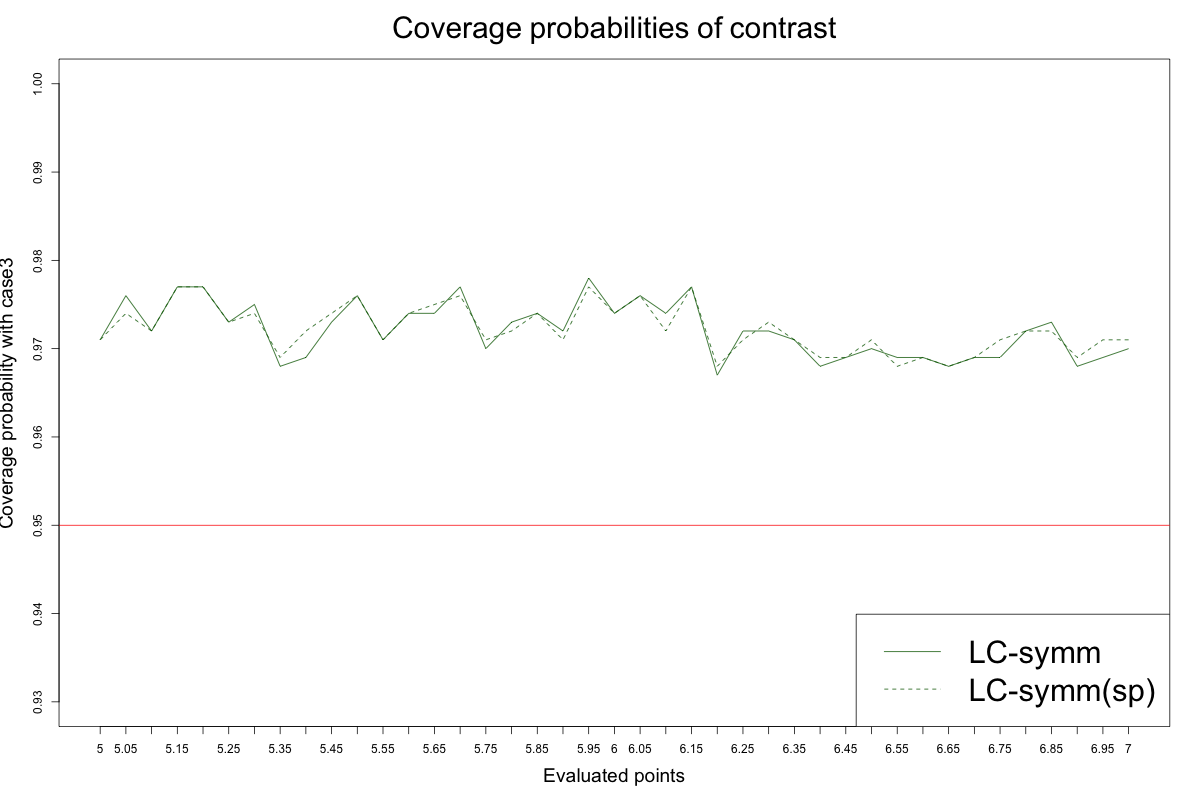}}
	\end{subfigure}
	\caption{\label{fig:contplots.mw}  Notational details can be found in Figure \ref{fig:contplots.ww}.}
\end{figure}

\begin{figure}
	\centering
	\begin{subfigure}[$\log(p_1/p_0)$; C1; $n=500$]{\label{cont500ww2}\includegraphics[width=70mm]{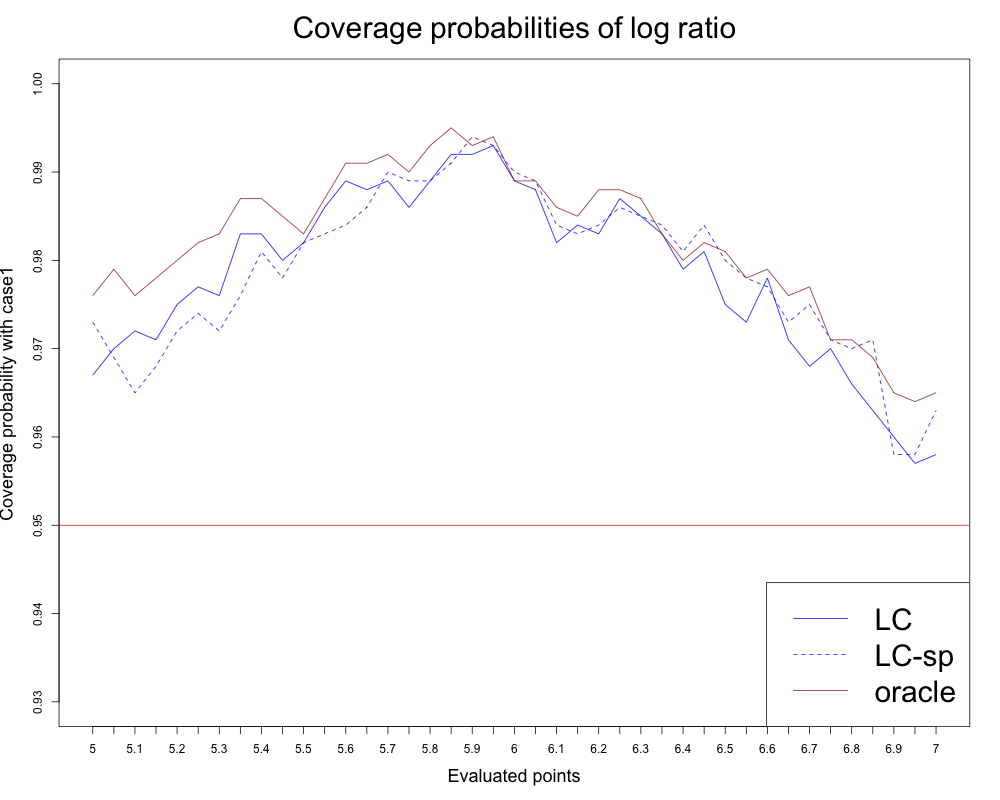}}
	\end{subfigure}
	\hspace{0.45cm}
	\begin{subfigure}[$\log(p_1/p_0)$; C1; $n=1000$]
		{\label{cont1000ww2}\includegraphics[width=70mm]{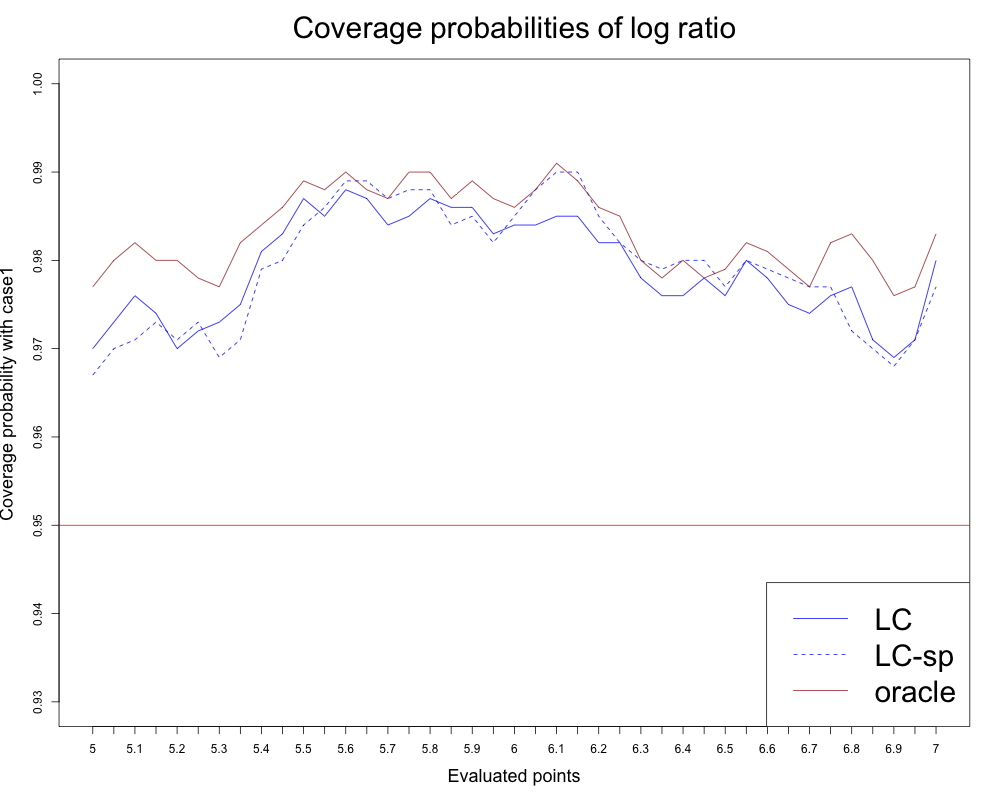}}
	\end{subfigure}\\
	\begin{subfigure}[$\log(p_1/p_0)$; C1; $n=2500$]
		{\label{cont2500ww2}\includegraphics[width=70mm]{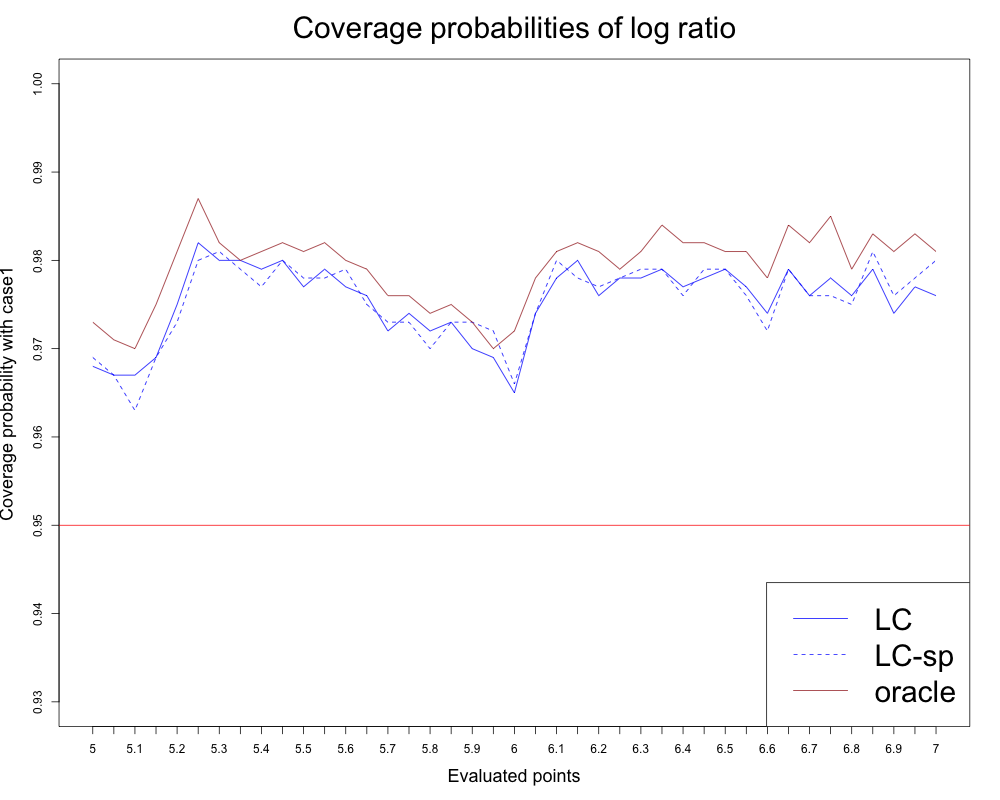}}
	\end{subfigure}
	\hspace{0.45cm}
	\begin{subfigure}[$\log(p_1/p_0)$; C1; $n=4000$]{\label{cont4000ww2}\includegraphics[width=70mm]{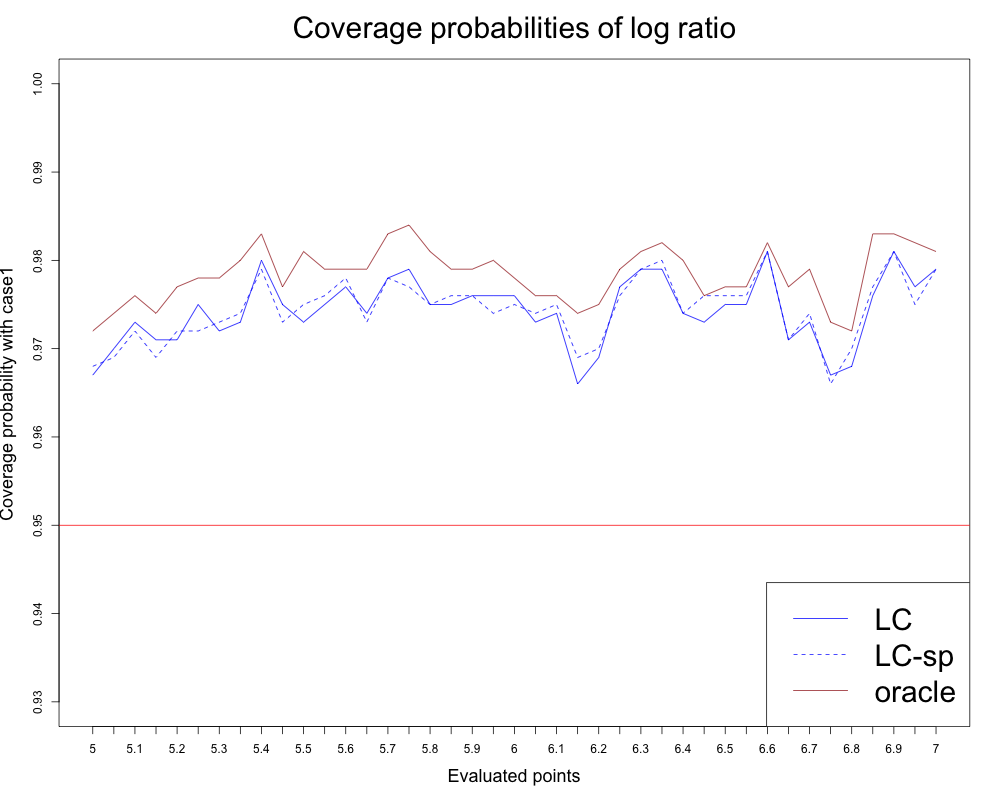}}
	\end{subfigure}\\
	\begin{subfigure}[$\log(p_1/p_0)$; C1; $n=6000$]
		{\label{cont6000ww2}\includegraphics[width=70mm]{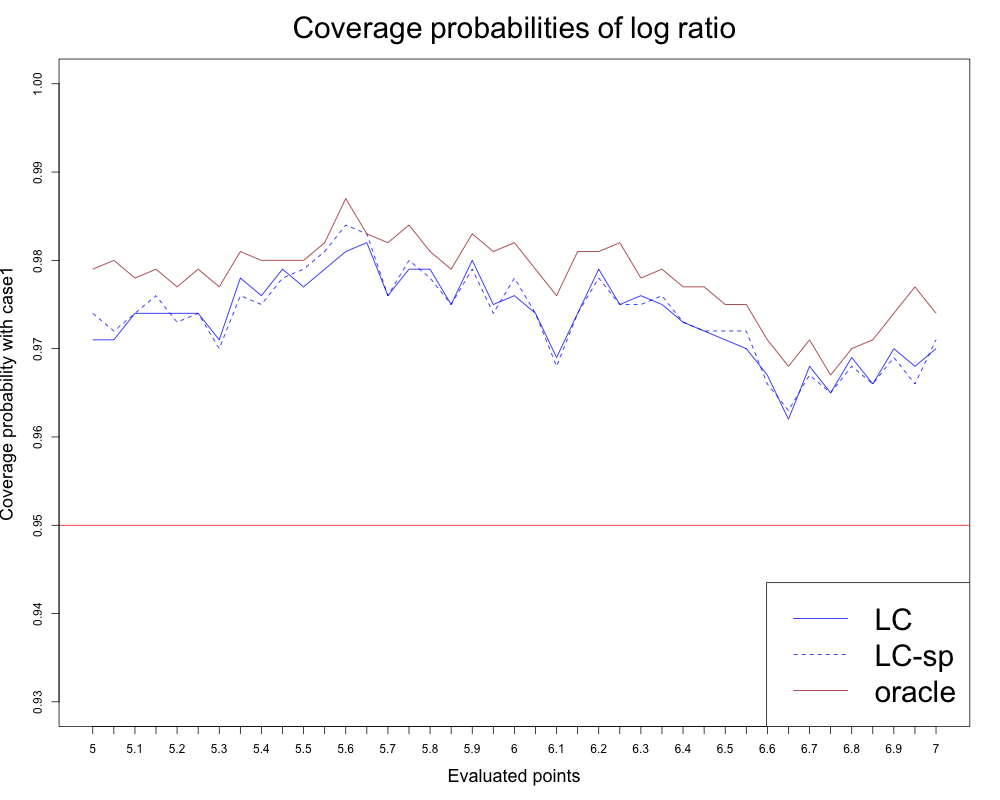}}
	\end{subfigure}
	\hspace{0.45cm}
	\begin{subfigure}[$\log(p_1/p_0)$; C1; $n=8000$]
		{\label{cont8000ww2}\includegraphics[width=70mm]{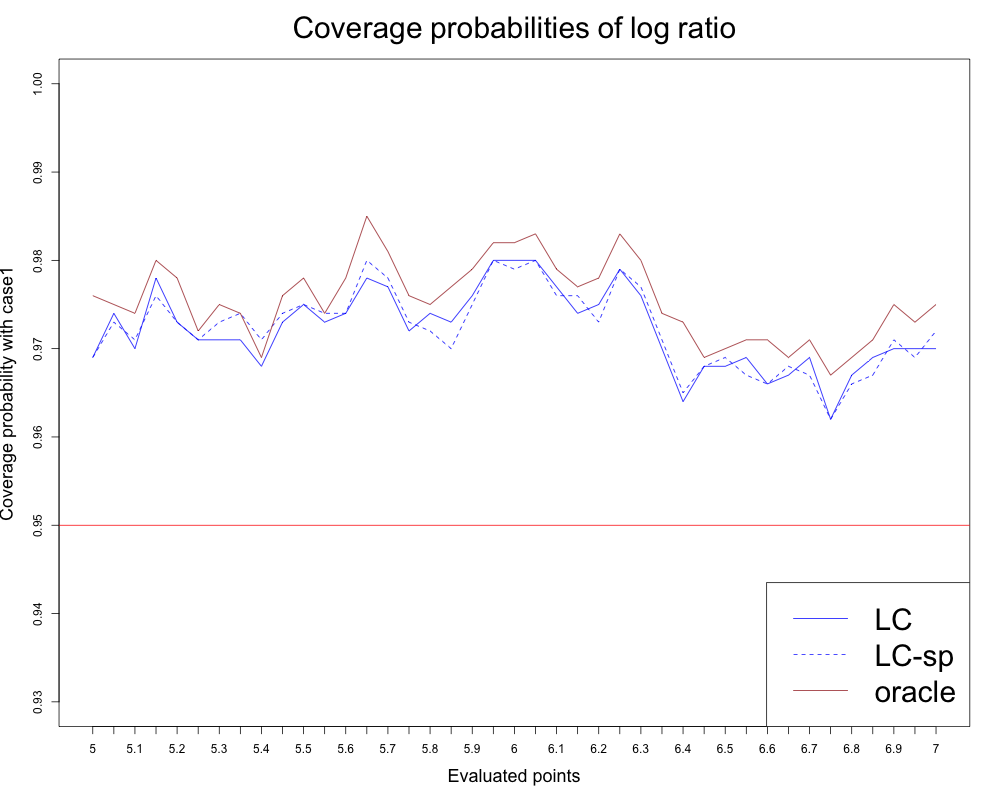}}
	\end{subfigure}
	\caption{\label{fig:contplot2s.ww} The above displays are coverage probabilities (for $\log(p_1/p_0)$ at 41 equally spaced point) with our proposed log-concave projection estimators' 95\% log-ratio CI's with the suggested tuning parameter $b=1/10$ which is labeled as LC (see Section \ref{subsec:tuning.param}).
	For the Case 1 where both nuisance functions are well-specified, we also use true value of $\chi_{\theta_a}$ to construct the oracle 95\% log-ratio CI which is labeled as oracle in the displays.
 Each subcaption describes the sample size, and each case of nuisance estimations (Case 1, 2, or 3 abbreviated to C1, C2, and C3, respectively).}
\end{figure}

\begin{figure}
	\centering
	\begin{subfigure}[$\log(p_1/p_0)$; C2; $n=500$]{\label{cont500wm2}\includegraphics[width=70mm]{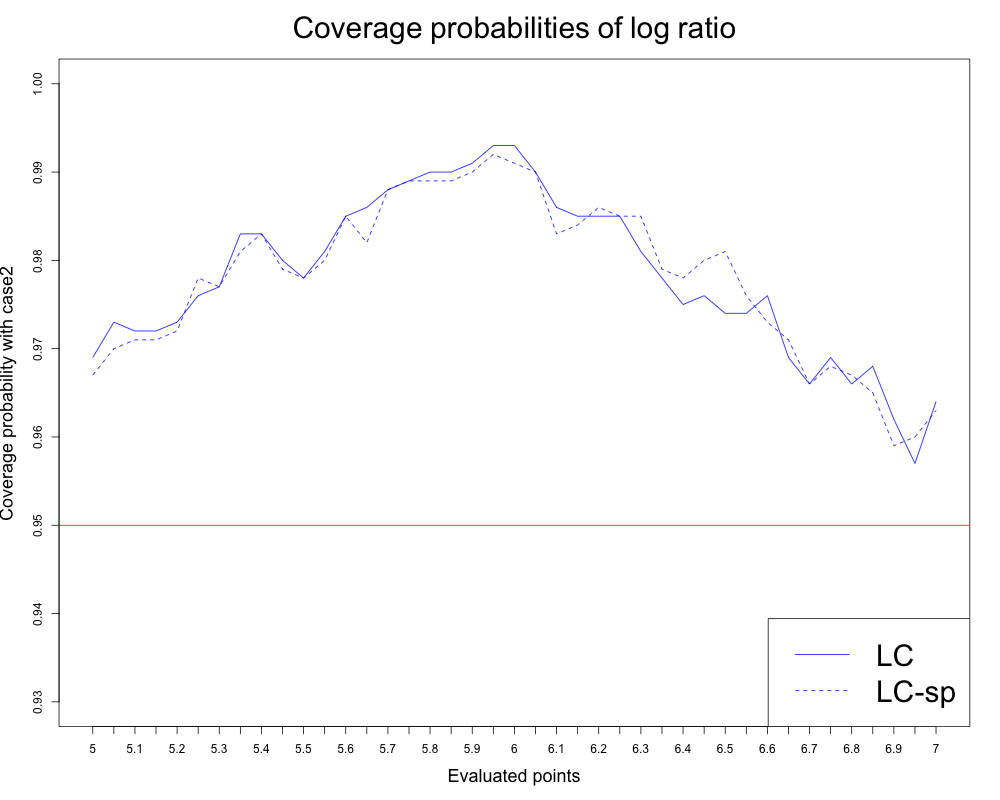}}
	\end{subfigure}
	\hspace{0.45cm}
	\begin{subfigure}[$\log(p_1/p_0)$; C2; $n=1000$]
		{\label{cont1000wm2}\includegraphics[width=70mm]{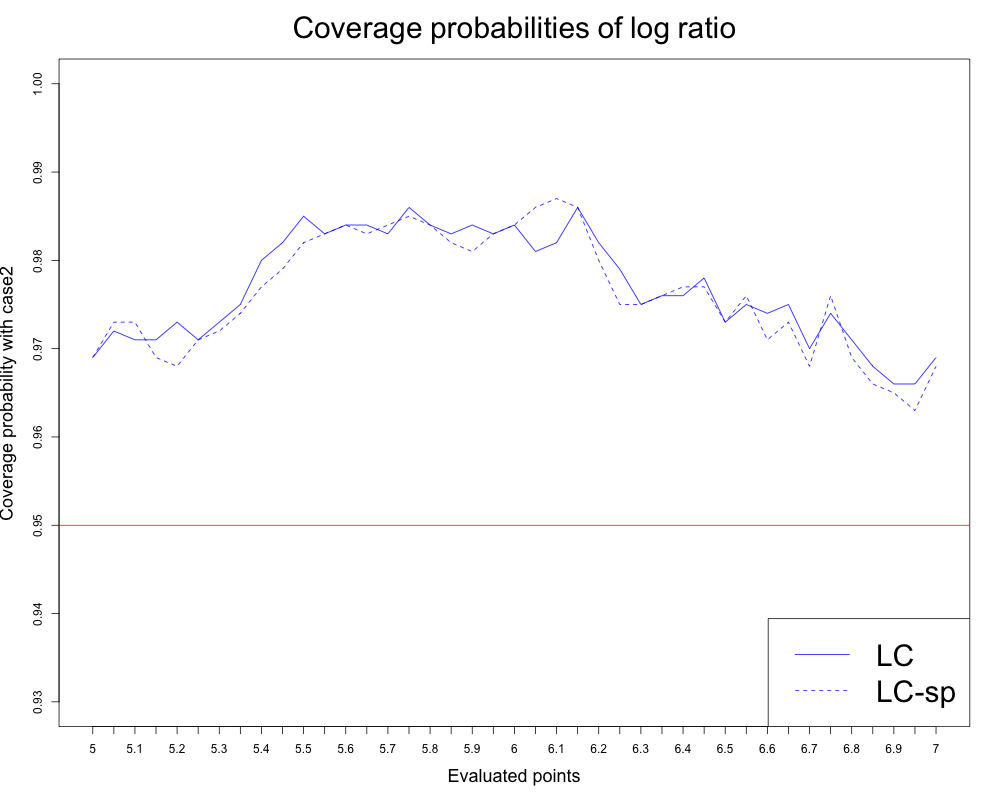}}
	\end{subfigure}\\
	\begin{subfigure}[$\log(p_1/p_0)$; C2; $n=2500$]
		{\label{cont2500wm2}\includegraphics[width=70mm]{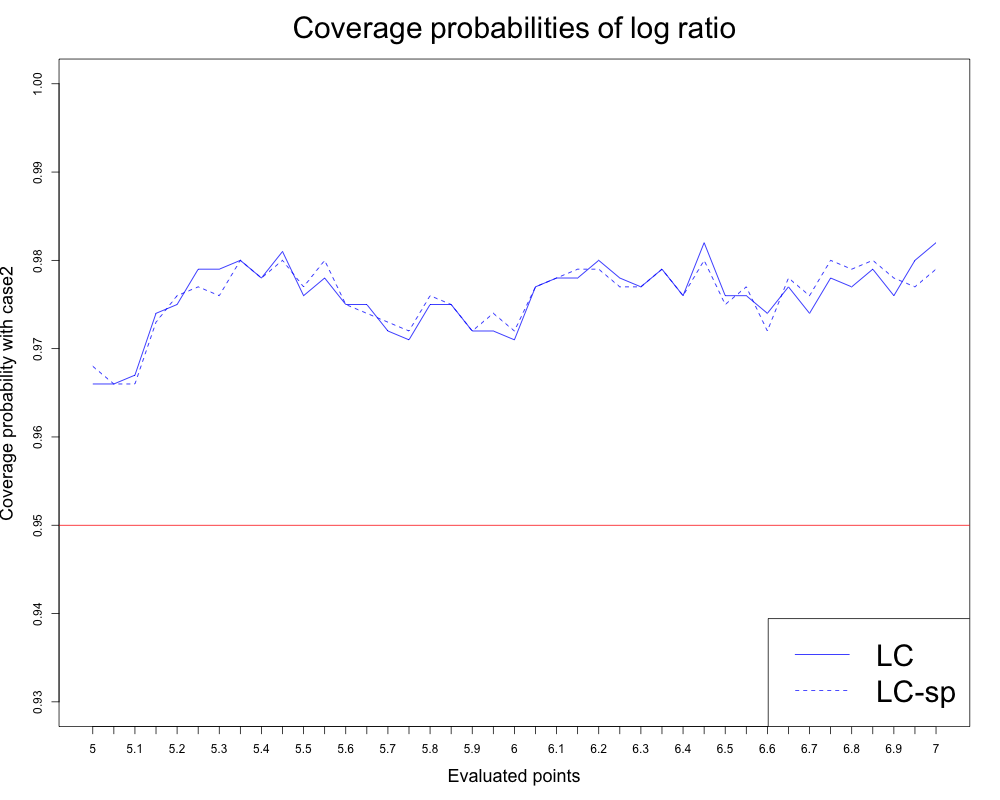}}
	\end{subfigure}
	\hspace{0.45cm}
	\begin{subfigure}[$\log(p_1/p_0)$; C2; $n=4000$]{\label{cont4000wm2}\includegraphics[width=70mm]{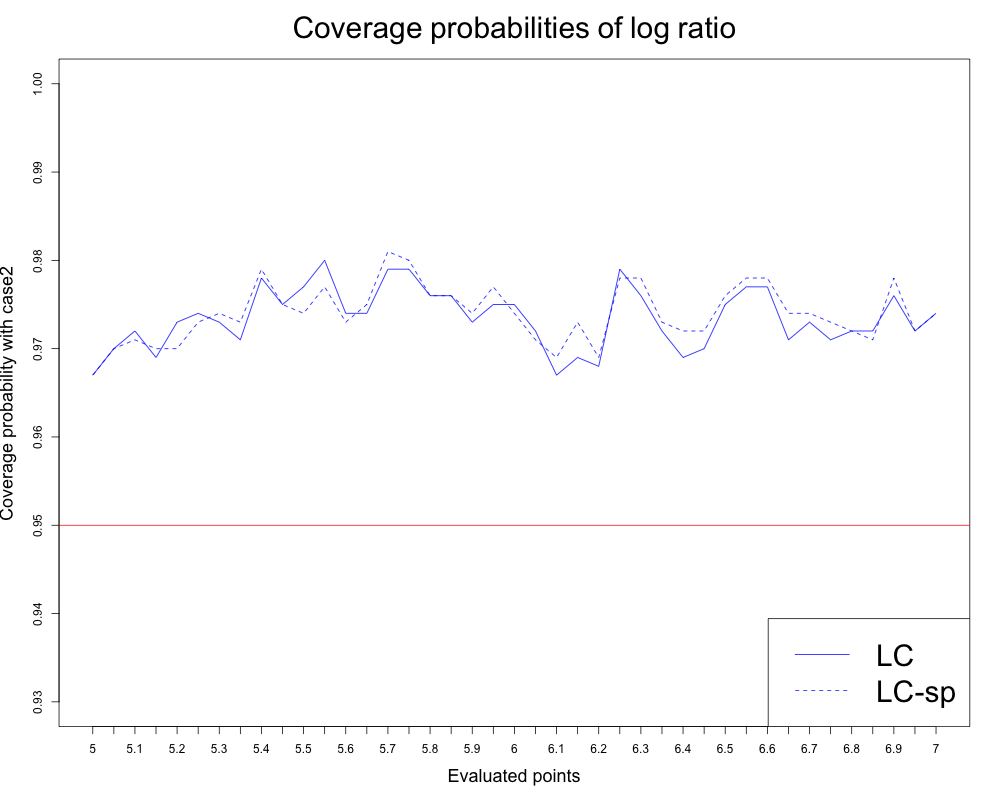}}
	\end{subfigure}\\
	\begin{subfigure}[$\log(p_1/p_0)$; C2; $n=6000$]
		{\label{cont6000wm2}\includegraphics[width=70mm]{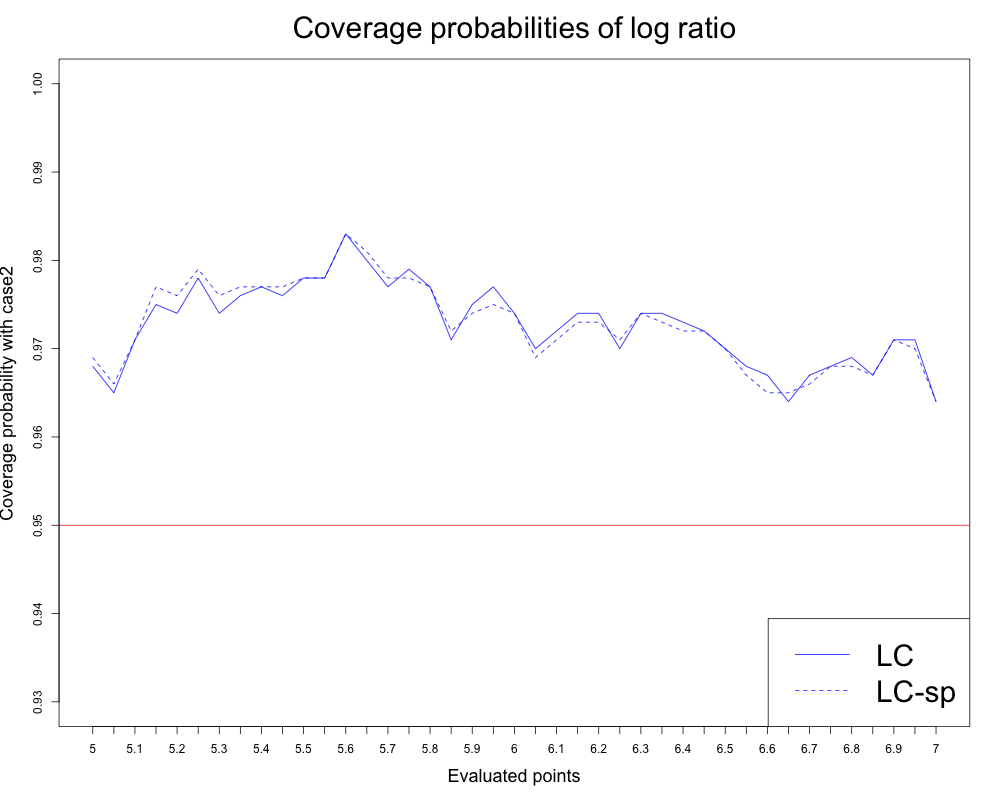}}
	\end{subfigure}
	\hspace{0.45cm}
	\begin{subfigure}[$\log(p_1/p_0)$; C2; $n=8000$]
		{\label{cont8000wm2}\includegraphics[width=70mm]{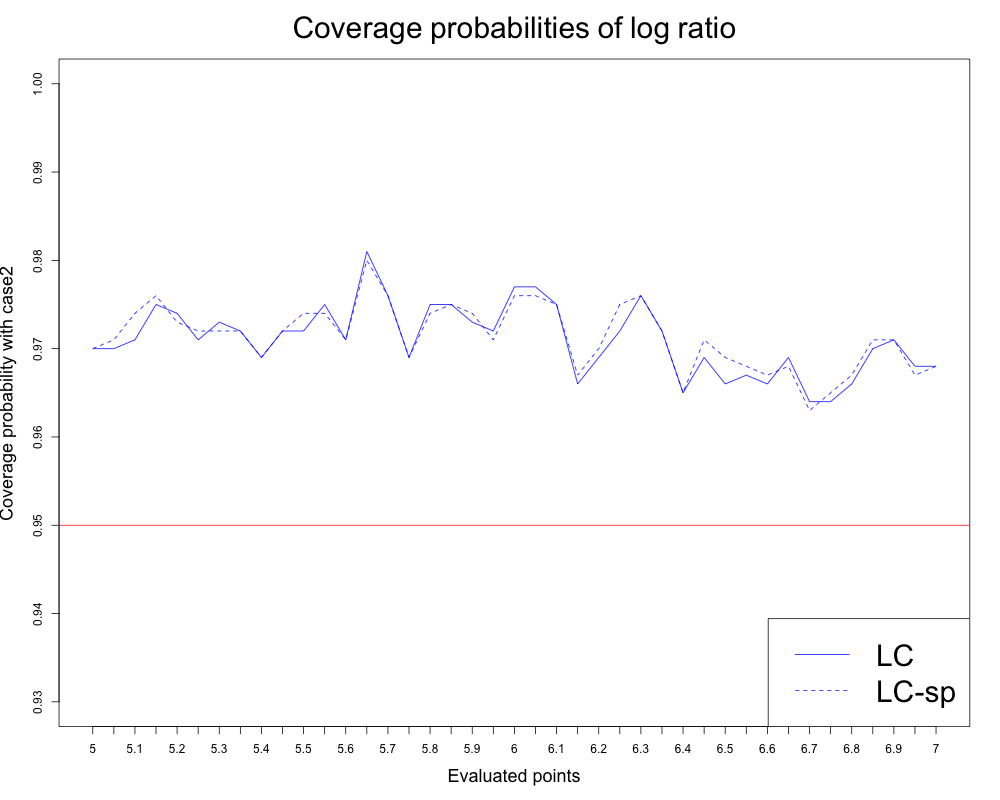}}
	\end{subfigure}
	\caption{\label{fig:contplot2s.wm} Notational details can be found in Figure \ref{fig:contplot2s.ww}.}
\end{figure}

\begin{figure}
	\centering
	\begin{subfigure}[$\log(p_1/p_0)$; C3; $n=500$]{\label{cont500mw2}\includegraphics[width=70mm]{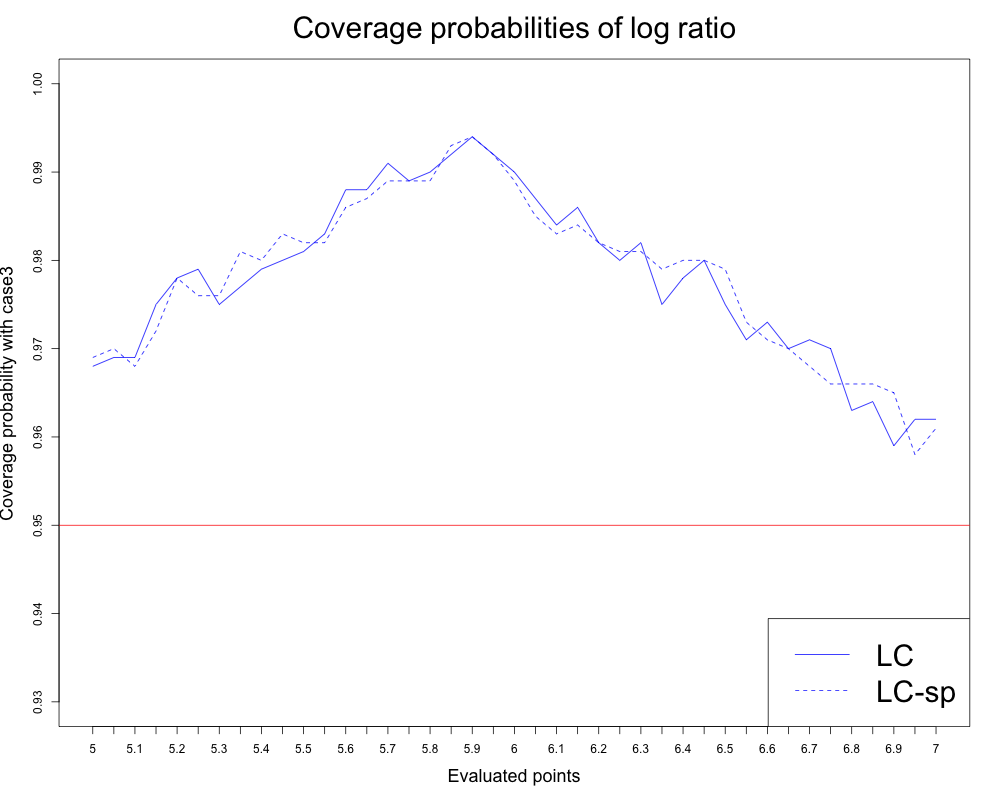}}
	\end{subfigure}
	\hspace{0.45cm}
	\begin{subfigure}[$\log(p_1/p_0)$; C3; $n=1000$]
		{\label{cont1000mw2}\includegraphics[width=70mm]{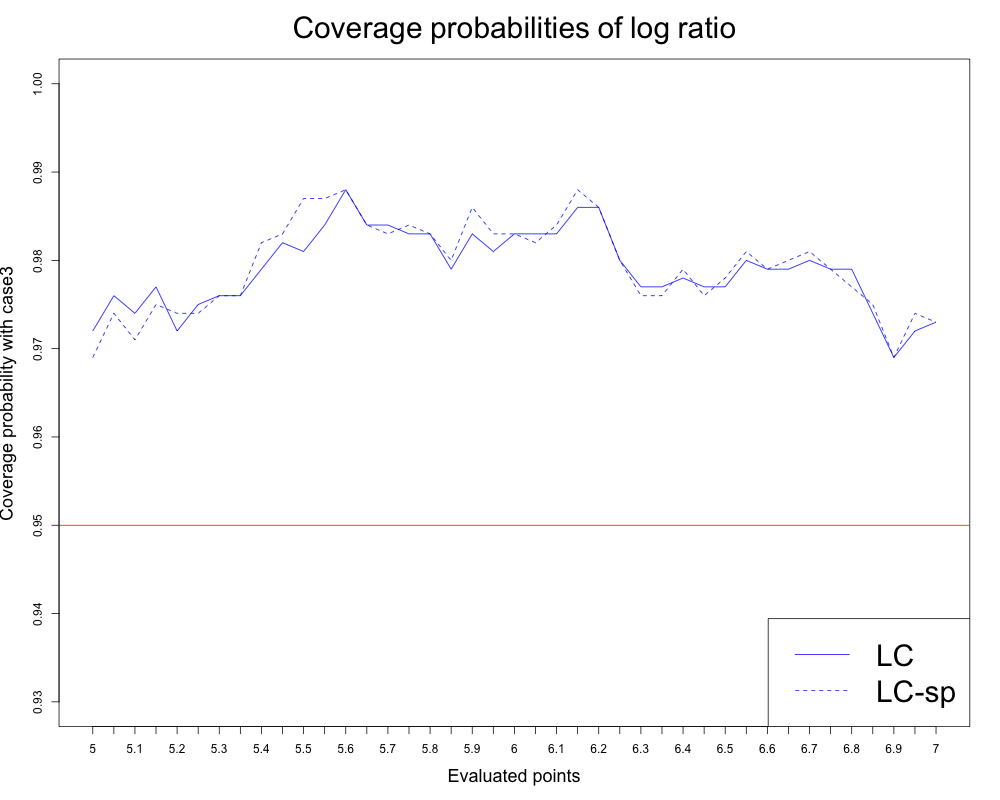}}
	\end{subfigure}\\
	\begin{subfigure}[$\log(p_1/p_0)$; C3; $n=2500$]
		{\label{cont2500mw2}\includegraphics[width=70mm]{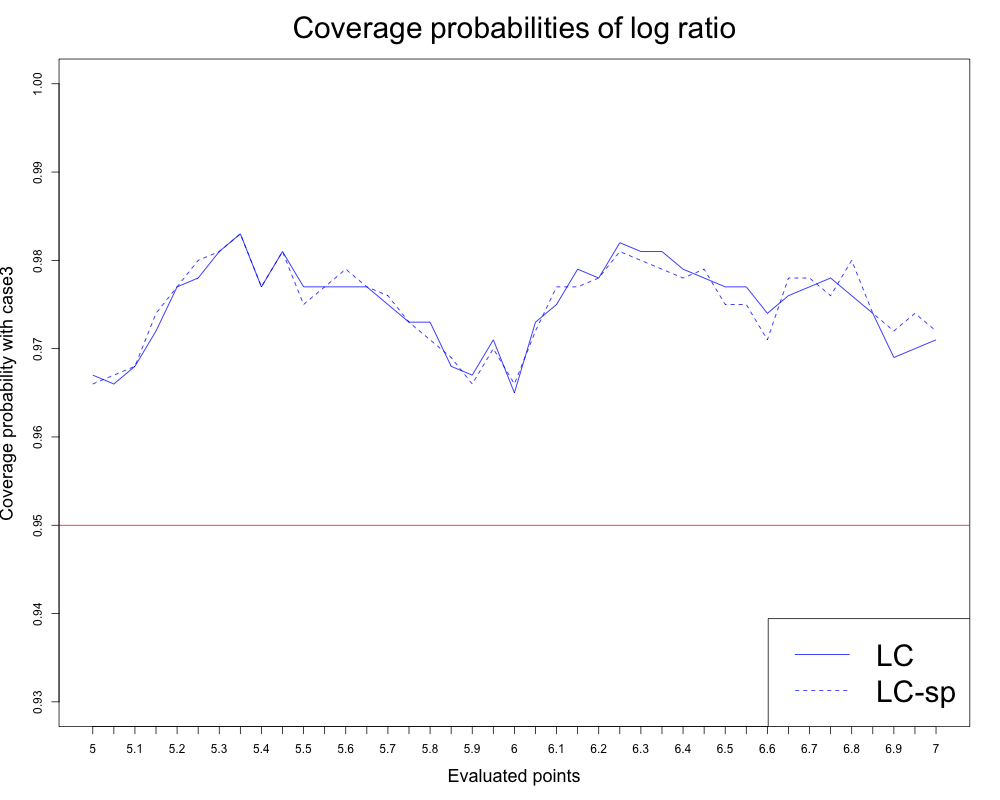}}
	\end{subfigure}
	\hspace{0.45cm}
	\begin{subfigure}[$\log(p_1/p_0)$; C3; $n=4000$]{\label{cont4000mw2}\includegraphics[width=70mm]{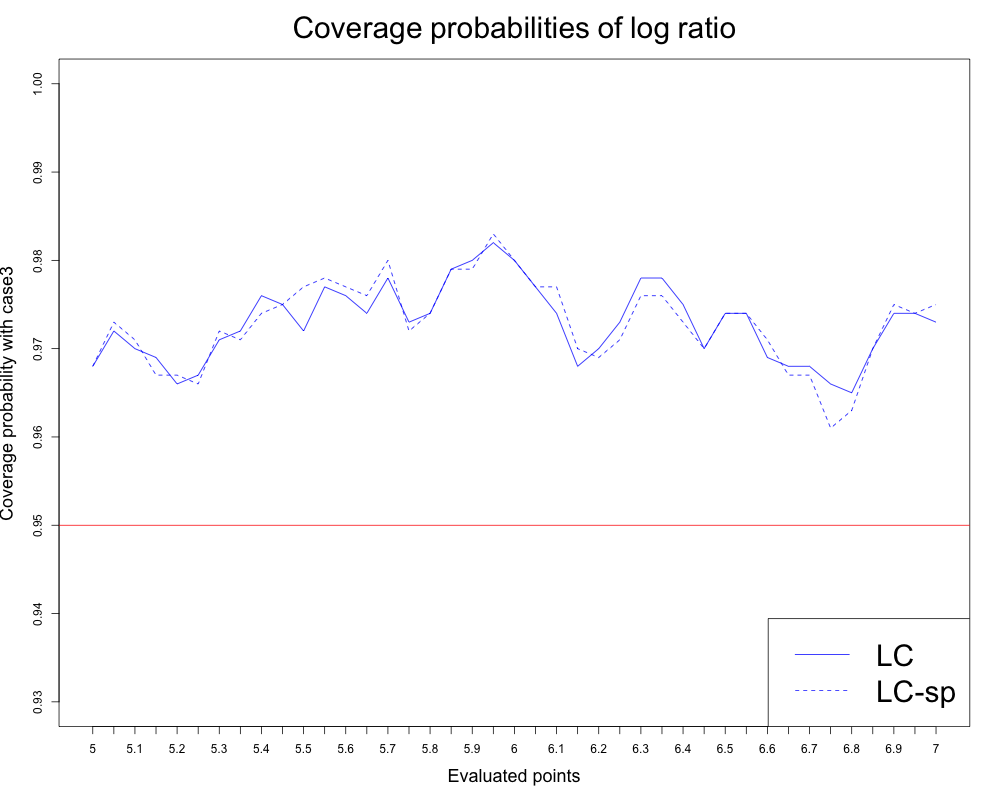}}
	\end{subfigure}\\
	\begin{subfigure}[$\log(p_1/p_0)$; C3; $n=6000$]
		{\label{cont6000mw2}\includegraphics[width=70mm]{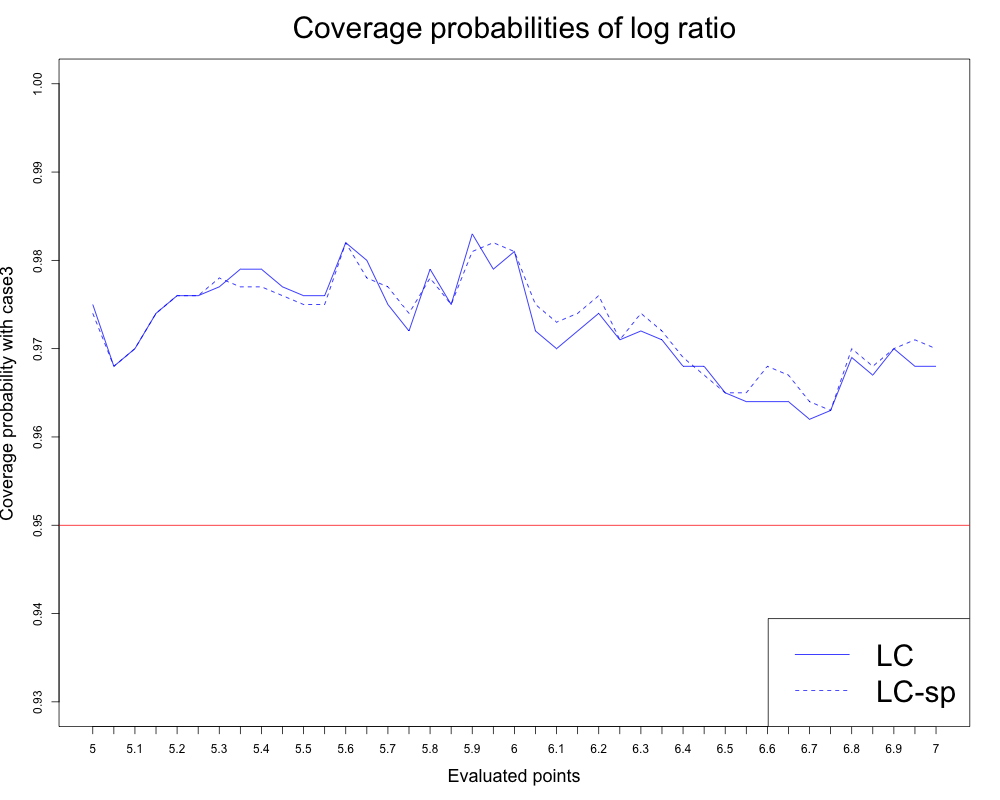}}
	\end{subfigure}
	\hspace{0.45cm}
	\begin{subfigure}[$\log(p_1/p_0)$; C3; $n=8000$]
		{\label{cont8000mw2}\includegraphics[width=70mm]{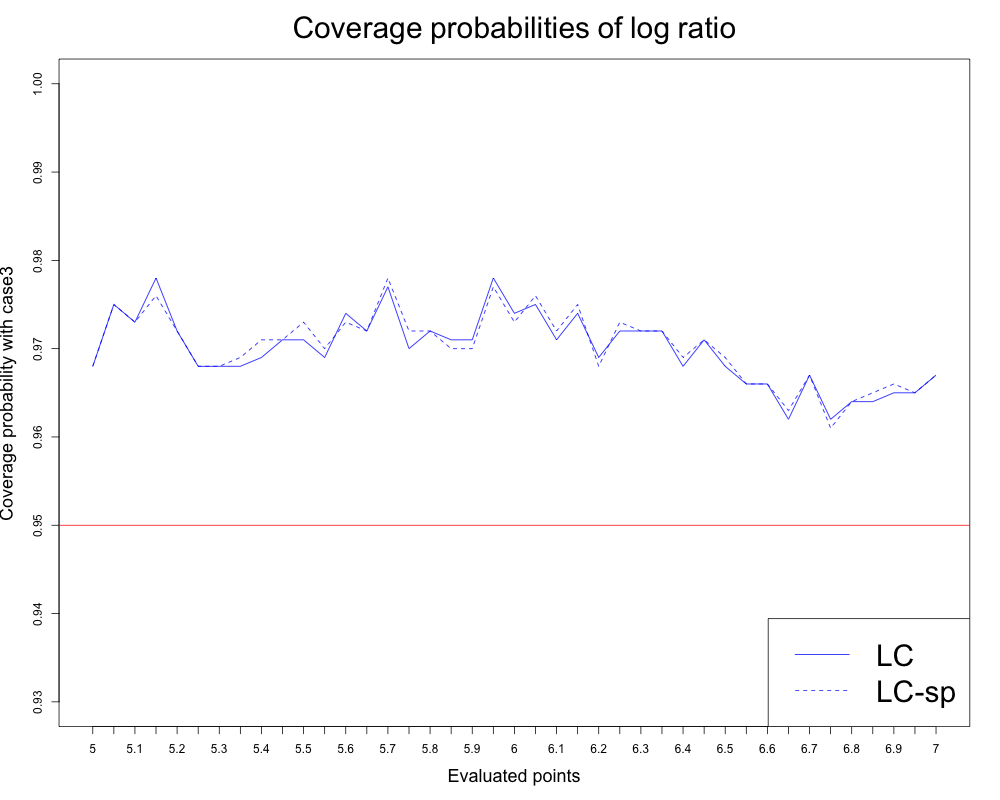}}
	\end{subfigure}
	\caption{\label{fig:contplot2s.mw}  Notational details can be found in Figure \ref{fig:contplot2s.ww}.}
\end{figure}

\clearpage
\bibliography{allref}


\end{document}